\numberwithin{equation}{section}
\newtheorem{theorem}[equation]{Theorem}
\newtheorem{lemma}[equation]{Lemma}
\newtheorem{claim}[equation]{Claim}
\newtheorem{observation}[equation]{Observation}
\newtheorem{invariant}[equation]{Invariant}
\newcommand{\cqed}{\ensuremath{\lhd}}
\newenvironment{claimproof}{\par
  \pushQED{\cqed}%
  \normalfont \topsep6\p@\@plus6\p@\relax
  \trivlist
  \item\relax
  {\itshape
    Proof of the claim\@addpunct{.}}\hspace\labelsep\ignorespaces
}{%
  \hfill\popQED\endtrivlist\@endpefalse
}
\theoremstyle{definition}
\newtheorem{definition}[theorem]{Definition}
\crefname{observation}{Observation}{Observations}
\crefname{claim}{Claim}{Claims}
\crefname{hypothesis}{Hypothesis}{Hypotheses}
\crefname{mainlemma}{Main lemma}{Main lemmas}
\crefname{invariant}{Invariant}{Invariants}
\newcommand{\mc}{\mathcal}
\newcommand{\incidences}{N^{\mathsf{inc}}}
\newcommand{\poly}{\mathsf{poly}}
\newcommand{\tw}{\mathsf{tw}}
\newcommand{\itw}{\mathsf{itw}}
\newcommand{\bd}{\mathsf{bd}}
\newcommand{\wl}{\mathsf{wl}}
\newcommand{\torso}{\mathsf{torso}}
\newcommand{\inter}{\mathsf{int}}
\newcommand{\adh}{\mathsf{adh}}
\newcommand{\adhsize}{\mathsf{adhsize}}
\newcommand{\chd}{\mathsf{chd}}
\newcommand{\rank}{\mathsf{rk}}
\newcommand{\cl}{\mathsf{cl}}
\newcommand{\desc}{\mathsf{desc}}
\newcommand{\anc}{\mathsf{anc}}
\newcommand{\trace}{\mathsf{trace}}
\newcommand{\depth}{\mathsf{depth}}
\newcommand{\Vint}{{V_{\mathsf{int}}}}
\newcommand{\expand}{\vartriangleright}
\newcommand{\chips}{\mathsf{chips}}
\newcommand{\vol}{\mathsf{vol}}
\newcommand{\bbst}{\mathcal{S}}
\newcommand{\opt}{\mathrm{OPT}}
\newcommand{\optTwMod}{\mathsf{tw}\text{-}\mathsf{mod}_{\eta}}
\newcommand{\bag}{\mathsf{bag}}
\newcommand{\edges}{\mathsf{edges}}
\newcommand{\EL}{\mathsf{EL}}
\newcommand{\cmso}{\mathrm{CMSO}_2}
\newcommand{\mso}{\mathrm{MSO}_2}
\newcommand{\init}{\mathsf{Init}}
\newcommand{\addEdge}{\mathsf{AddEdge}}
\newcommand{\deleteEdge}{\mathsf{DeleteEdge}}
\newcommand{\addHyperedge}{\mathsf{AddHyperedge}}
\newcommand{\deleteHyperedge}{\mathsf{DeleteHyperedge}}
\newcommand{\query}{\mathsf{Query}}
\newcommand{\addVertex}{\mathsf{AddVertex}}
\newcommand{\deleteVertex}{\mathsf{DeleteVertex}}
\newcommand{\merge}{\mathsf{Merge}}
\newcommand{\true}{\top}
\newcommand{\false}{\bot}
\newcommand{\OO}{\mathcal{O}} 
\newcommand{\Tc}{\mathcal{T}} 
\newcommand{\Lc}{\mathcal{L}} 
\newcommand{\Sc}{\mathcal{S}} 
\newcommand{\Hc}{\mathcal{H}} 
\newcommand{\Pc}{\mathcal{P}} 
\newcommand{\oracle}{\mathsf{O}}
\newcommand{\leaves}{\mathsf{L}}
\newcommand{\compl}[1]{\overline{#1}}
\newcommand{\N}{\mathbb{N}}
\newcommand{\ZO}{\mathbb{Z}_{\geq 0}}
\newcommand{\ZI}{\mathbb{Z}_{\geq 1}}
\newcommand{\labelSet}{\mathbf{\Lambda}}
\newcommand{\oplusb}{\oplus_b}
\newcommand{\oplusu}{\oplus_u}
\newcommand{\cTwMod}{\eta}
\newcommand{\protboundary}{\partial} 
\newcommand{\cSbdProt}{\alpha} 
\newcommand{\cSbdRootDeg}{\delta} 
\newcommand{\cTorsoVertices}{\delta_V} 
\newcommand{\cOldPdRoot}{d_{\ref{lem:modulator_implies_decomposition}}}
\newcommand{\cPdProt}{\omega} 
\newcommand{\cTorsoIntersect}{\sigma} 
\newcommand{\cNewSbdProt}{\omega} 
\newcommand{\run}{\rho}
\newcommand{\autom}{\mathcal{A}}
\newcommand{\cFinalProt}{\gamma}
\newcommand{\cUpperBoundSize}{\sigma}
\newcommand{\cUBTorsoOps}{\zeta}
\newcommand{\tuukkain}[1]{\todo[size=\normalsize,inline,color=green!40]{Tuukka: #1}}
\newcounter{thmcond}
\renewcommand{\thethmcond}{\arabic{thmcond}}
\begin{document}

\pagenumbering{Roman}

\hypersetup{pageanchor=false}
\title{Dynamic Meta-Kernelization\thanks{This work was supported by the VILLUM Foundation, Grant Number 54451, Basic Algorithms Research Copenhagen (BARC). T.K was supported by the
European Union under Marie Skłodowska-Curie Actions (MSCA), project no. 101206430.}}
\author{
Christian Bertram\thanks{University of Copenhagen, Denmark. \texttt{\{chbe@di.ku.dk, mvje@di.ku.dk, tuko@di.ku.dk\}}}
\and
Deborah Haun\thanks{Karlsruhe Institute of Technology, Germany. \texttt{deborah.haun@student.kit.edu}}
\and
Mads Vestergaard Jensen\addtocounter{footnote}{-2}\footnotemark
\and
Tuukka Korhonen\addtocounter{footnote}{-1}\footnotemark}
\date{}

\maketitle
\thispagestyle{empty}
\begin{abstract}
Kernelization studies polynomial-time preprocessing algorithms.
Over the last 20 years, the most celebrated positive results of the field have been linear kernels for classical NP-hard graph problems on sparse graph classes.
In this paper, we lift these results to the dynamic setting.

As the canonical example, Alber, Fellows, and Niedermeier~[J. ACM~2004] gave a linear kernel for dominating set on planar graphs.
We provide the following dynamic version of their kernel:
Our data structure is initialized with an $n$-vertex planar graph $G$ in $\OO(n \log n)$ amortized time, and, at initialization, outputs a planar graph $K$ with $\opt(K) = \opt(G)$ and $|K| = \OO(\opt(G))$, where $\opt(\cdot)$ denotes the size of a minimum dominating set.
The graph $G$ can be updated by insertions and deletions of edges and isolated vertices in $\OO(\log n)$ amortized time per update, under the promise that it remains planar.
After each update to $G$, the data structure outputs $\OO(1)$ updates to $K$, maintaining $\opt(K) = \opt(G)$, $|K| = \OO(\opt(G))$, and planarity of $K$.

Furthermore, we obtain similar dynamic kernelization algorithms for all problems satisfying certain conditions on (topological-)minor-free graph classes.
Besides kernelization, this directly implies new dynamic constant-approximation algorithms and improvements to dynamic FPT algorithms for such problems.

Our main technical contribution is a dynamic data structure for maintaining an approximately optimal \emph{protrusion decomposition} of a dynamic topological-minor-free graph.
Protrusion decompositions were introduced by Bodlaender, Fomin, Lokshtanov, Penninkx, Saurabh, and Thilikos~[J.~ACM~2016], and have since developed into a part of the core toolbox in kernelization and parameterized algorithms.
\end{abstract}
 \begin{textblock}{20}(0, 4.8)
 \includegraphics[width=100px]{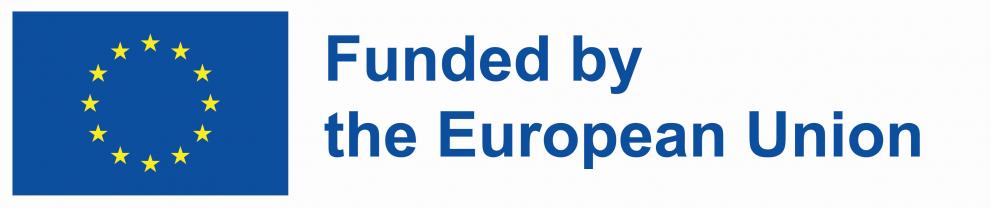}%
 \end{textblock}
\thispagestyle{empty}
\newpage
\pagenumbering{roman}
\setcounter{page}{1}
\setcounter{tocdepth}{2}
\tableofcontents
\newpage
\pagenumbering{arabic}
\hypersetup{pageanchor=true}
\clearpage
\setcounter{page}{1}
\section{Introduction}
\label{sec:intro}
The field of kernelization studies polynomial-time preprocessing algorithms for NP-hard problems.
In order to achieve worst-case performance guarantees in this setting, one must consider parameterized problems where the hardness of an instance is captured by a parameter.
A \emph{kernelization algorithm}, also called simply a \emph{kernel}, for a parameterized problem takes an input $(I,k)$, runs in time $\poly(|I|,k)$, and outputs an instance $(K, k')$ of the problem so that (1) $(K,k')$ is a yes-instance if and only if $(I,k)$ is a yes-instance, and (2) $|K|+k' \le f(k)$ for a computable function $f$~\cite{Downey_Fellows_1999}.

It is a classic observation that a problem has a kernel if and only if it is fixed-parameter tractable~\cite{DBLP:journals/apal/CaiCDF97}.
This motivates the notions of a \emph{polynomial kernel} and a \emph{linear kernel}, which restrict the function $f$ in the definition above to be a polynomial (resp. linear) function.
Lower bounds for kernels are known under the assumption that $\mathsf{coNP} \not\subseteq \mathsf{NP}/\mathsf{poly}$, showing that some natural problems, such as \textsc{Longest Path} parameterized by the length, do not (likely) have polynomial kernels despite being fixed-parameter tractable~\cite{Bodlaender_Downey_Fellows_Hermelin_2009,DBLP:journals/jcss/FortnowS11}.
Similarly, \textsc{Feedback Vertex Set} has a kernel with $\OO(k^2)$ edges~\cite{DBLP:journals/talg/Thomasse10}, but not with $\OO(k^{2-\varepsilon})$ edges for any $\varepsilon > 0$, unless $\mathsf{coNP} \subseteq \mathsf{NP}/\mathsf{poly}$~\cite{DBLP:journals/jacm/DellM14}. 
We refer the reader to~\cite{kernelization-book} for a recent textbook on kernelization.

In this paper, we design \emph{dynamic} linear kernels for graph problems on sparse graph classes.
This means designing data structures that support updating the input graph while simultaneously efficiently maintaining a kernel.
To be concrete about our contribution before diving deeper into the literature, let us start by stating a representative example of a result that we obtain.
It provides a dynamic linear kernel for \textsc{Dominating Set} (parameterized by the solution size) on planar graphs.

\begin{theorem}
\label{thm:introdomset}
There is a data structure that is initialized with a planar graph $G$ in $\OO(|G| \log |G|)$\footnote{We denote by $|G| = |V(G)|+|E(G)|$ the total number of vertices and edges of a graph $G$.} amortized time, supports updating $G$ via insertions and deletions of edges and isolated vertices in amortized $\OO(\log |G|)$ time per update under the promise that $G$ remains planar, and throughout maintains a graph $K$ so that 
\begin{enumerate}
\item $K$ is planar,
\item $\opt(K) = \opt(G)$, where $\opt(\cdot)$ denotes the size of a minimum dominating set, and
\item $|K| \le \OO(\opt(G))$.
\end{enumerate}
The data structure outputs $K$ at the initialization, and after each update it outputs at most $\OO(1)$ updates to $K$,  which consist of insertions and deletions of edges and isolated vertices.
\end{theorem}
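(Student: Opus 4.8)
The plan is to build the data structure in layers, mirroring the classical (static) linear kernel for \textsc{Dominating Set} on planar graphs but making every reduction rule dynamic and local. First I would set up the underlying dynamic bidimensionality/sparsity machinery: since $G$ is planar, a minimum dominating set has size $\Theta(\tw(G)^2)$ up to the standard grid-minor bounds only when $\tw$ is large, so I would instead directly maintain an \emph{approximately optimal protrusion decomposition} of $G$ using the paper's main technical contribution (the dynamic protrusion-decomposition data structure advertised in the abstract). This gives, at all times, a partition $V(G) = R \cup (Y_1 \cup \dots \cup Y_t)$ with $|R| = \OO(\opt(G))$, $t = \OO(\opt(G))$, each $G[Y_i \cup N(Y_i)]$ a protrusion of constant treewidth and constant boundary size, updated under edge/vertex insertions and deletions in $\OO(\log|G|)$ amortized time with $\OO(1)$ changes to the decomposition per update.

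The second layer is \emph{protrusion replacement}. Each protrusion $G[Y_i \cup N(Y_i)]$ has constant treewidth and a boundary of constant size, so by the standard finite-state / Myhill--Nerode argument for \textsc{Dominating Set} (expressible in $\cmso$), it is equivalent — for the purpose of how it interacts with the rest of the graph, tracking the finitely many ``states'' of partial dominating sets on the boundary — to one of finitely many constant-size \emph{representative} protrusions, plus a known additive constant contribution to $\opt$. I would precompute, for each possible boundary size and $\cmso$-type, such a representative (of size bounded by a function of the constants only), and then define $K$ to be $G[R]$ together with, for each part $Y_i$, a glued-in copy of the representative protrusion for the type of $G[Y_i \cup N(Y_i)]$, attached along $N(Y_i)$. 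Since $|R| = \OO(\opt(G))$, $t = \OO(\opt(G))$, and each representative has $\OO(1)$ size, we get $|K| = \OO(\opt(G))$; correctness of $\opt(K) = \opt(G)$ follows from the protrusion-replacement lemma applied part by part (the additive constants cancel by construction if we use the ``$\opt$-preserving'' variant, or we carry a running offset — but the theorem statement wants exact equality, so I would use representatives engineered to contribute exactly the same as the original via a padding gadget, as is standard). Planarity of $K$ is maintained because each representative is a planar graph glued along a boundary that was a planar separator in $G$; one must check the representatives can be chosen planar (they can, since the original protrusions are planar and the finite-state classes can be witnessed by planar members — alternatively embed the constant-size gadget in a face).

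For the dynamic maintenance: when an update hits $G$, the dynamic protrusion decomposition changes in $\OO(1)$ places — a bounded number of parts $Y_i$ are created, destroyed, or have their protrusion or boundary modified, and $R$ changes by $\OO(1)$ vertices. For each affected part I recompute its $\cmso$-type in $\OO(1)$ time (constant treewidth + constant boundary, via a precomputed automaton / table lookup, or by maintaining a bottom-up dynamic-programming annotation), look up the new representative, and output the $\OO(1)$ edits to $K$ that swap the old glued gadget for the new one and adjust $R$. The $\OO(n\log n)$ initialization and $\OO(\log n)$ amortized update bounds are inherited directly from the protrusion-decomposition data structure plus $\OO(1)$ work per update for the type recomputation and kernel patching; the only genuinely new bookkeeping is a dictionary mapping parts of the decomposition to their currently glued gadget in $K$, which costs $\OO(1)$ per update to maintain.

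The main obstacle I anticipate is \emph{not} the protrusion replacement itself — that is classical and finite-state — but rather ensuring that the dynamic protrusion decomposition changes by only $\OO(1)$ per update \emph{and} that these changes are surgically local, i.e. that a single edge insertion does not force a global re-partitioning. This is exactly what the paper isolates as its main technical contribution, so here I would simply invoke it as a black box with the stated interface (amortized $\OO(\log n)$, $\OO(1)$ structural changes, constant treewidth and adhesion). A secondary subtlety is forcing \emph{exact} equality $\opt(K) = \opt(G)$ rather than equality up to a known additive constant: the clean fix is to route the additive offset of each replaced protrusion into a small ``mandatory'' gadget inside the representative (e.g. pendant structures that force a prescribed number of solution vertices), so that summing over the $\OO(\opt(G))$ parts the offsets telescope to zero; verifying this gadget construction is planar and has the claimed finite-state behaviour is routine but must be done carefully. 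Everything else — the $\OO(1)$-update output to $K$, the logarithmic update time, the $n\log n$ initialization — then follows by composing the black-box decomposition data structure with a constant-time-per-affected-part replacement routine.
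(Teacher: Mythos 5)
Your overall architecture is the one the paper uses: invoke the dynamic protrusion-decomposition data structure as a black box, run a tree-decomposition automaton that computes for each protrusion its finite-integer-index representative (together with a transposition constant), and glue constant-size representatives onto the root bag to form $K$. The one genuine gap is in your fix for forcing exact equality $\opt(K)=\opt(G)$.

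You propose to ``route the additive offset of each replaced protrusion into a small mandatory gadget inside the representative.'' This does not work: the per-protrusion offset $\Delta_i = \Delta_\Pi(G_i,\varphi(G_i))$ is not bounded by a constant (it can be of order $|V(G_i)|$), so a gadget encoding it would not be constant size, and worse, a single update to $G$ can change an individual $\Delta_i$ by a lot. Concretely, when the decomposition rebalances, an $\OO(1)$ number of root-children are destroyed and created, but the protrusions under the new children may be arbitrary rearrangements of the old ones; two offsets $(1000,0)$ can be merged into a new single offset $1001$. Any per-part gadget scheme would then have to delete and insert $\Omega(1000)$ vertices in $K$ for that one update, breaking the required $\OO(1)$-changes-to-$K$ guarantee. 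Your phrase ``the offsets telescope to zero'' is also not right --- the sum of the offsets equals the global shift $\Delta$, which is generally nonzero and of order $\opt(G)$.

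The paper instead maintains a single global integer $\Delta = \sum_i\Delta_i$ alongside $K$ (so $\opt_\Pi(K)+\Delta = \opt_\Pi(G)$), and only afterwards absorbs it into $K$ for problems like \textsc{Dominating Set} by maintaining a single pool of $\Delta$ additional isolated vertices. The key observation that makes this an $\OO(1)$-changes-per-update scheme is indirect: because $K$ changes by $\OO(1)$ vertices and edges per update, $\opt(K)$ changes by $\OO(1)$; because a single edge/vertex insertion or deletion changes $\opt(G)$ by $\OO(1)$; hence $\Delta = \opt(G)-\opt(K)$ changes by $\OO(1)$ per update, even though individual $\Delta_i$ can swing wildly. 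This is exactly the cancellation your per-part scheme fails to exploit. (A minor further point: the paper ensures $K\in\mc G$ not by arguing the representatives themselves are planar, but by quotienting jointly by $\equiv_{\mc G}$ and $\equiv_\Pi$ so that replacement provably preserves membership in $\mc G$ for every possible complement; your ``embed the gadget in a face'' remark is on the right track but not the clean argument.)
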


\Cref{thm:introdomset} follows from our meta-theorem \Cref{thm:kernelization}, which provides similar results for any pair of a problem and a graph class that satisfy certain conditions.

The reader may notice that the kernel of \Cref{thm:introdomset} does not quite match the definition we provided a couple of paragraphs earlier, but is in fact slightly stronger, working for all values $k$ of the parameter simultaneously.
We also highlight the fact that the graph $K$ is updated by (worst-case) $\OO(1)$ updates per update to $G$, which is a non-trivial result under any polynomial running time.
It enables efficient chaining of \Cref{thm:introdomset} with other dynamic data structures.
Another observation is that the data structure of \Cref{thm:introdomset} also maintains a constant approximation to the minimum dominating set size due to the inequalities $\opt(G) \le |K| \le \OO(\opt(G))$.

Let us then discuss the literature of linear kernels on sparse graph classes before stating our results in the full generality.

\paragraph{Linear kernels on sparse graph classes.}
Among the most influential results in kernelization are linear kernels for NP-hard graph problems parameterized by solution size on sparse graph classes.
The first such result was by Alber, Fellows, and Niedermeier~\cite{Alber_Fellows_Niedermeier_2004}, who gave a linear kernel for \textsc{Dominating Set} on planar graphs with running time $\OO(n^3)$.
We note that even the existence of any kernel for \textsc{Dominating Set} on planar graphs is non-trivial and perhaps surprising, as on general graphs \textsc{Dominating Set} is $\mathsf{W[2]}$-complete~\cite{DBLP:journals/siamcomp/DowneyF95} and thus has no kernel under the standard assumption $\mathsf{FPT} \neq \mathsf{W[2]}$.

The result of Alber et al. led to a flurry of linear kernels on planar graphs, for example, for \textsc{Feedback Vertex Set}~\cite{DBLP:conf/iwpec/BodlaenderP08}, \textsc{Cycle Packing}~\cite{DBLP:conf/isaac/BodlaenderPT08}, \textsc{Induced Matching}~\cite{DBLP:conf/faw/MoserS07,DBLP:journals/jcss/KanjPSX11}, \textsc{Full-Degree Spanning Tree}~\cite{DBLP:conf/iwpec/GuoNW06}, and \textsc{Connected Dominating Set}~\cite{DBLP:journals/tcs/LokshtanovMS11} (see also~\cite{DBLP:journals/siamcomp/ChenFKX07}).
Guo and Niedermeier~\cite{DBLP:conf/icalp/GuoN07} gave a framework capturing some of the techniques for kernels on planar graphs, obtaining linear kernels also for \textsc{Connected Vertex Cover}, \textsc{Edge Dominating Set}, \textsc{Triangle Packing}, and \textsc{Efficient Dominating Set}.
Fomin and Thilikos generalized the planar \textsc{Dominating Set} kernel to graphs of bounded Euler genus~\cite{DBLP:conf/icalp/FominT04}.

In 2009, all of the aforementioned results were subsumed by a general meta-theorem of Bodlaender, Fomin, Lokshtanov, Penninkx, Saurabh, and Thilikos~\cite{DBLP:conf/focs/BodlaenderFLPST09,Bodlaender_Fomin_Lokshtanov_Penninkx_Saurabh_Thilikos_2016}.
The theorem states that all problems on graphs of bounded Euler genus that (\refstepcounter{thmcond}\label{intro:quasi-coverable}\thethmcond) are ``quasi-coverable'', and (\refstepcounter{thmcond}\label{intro:fii}\thethmcond) have ``finite integer index'' (FII) have a linear kernel.
The appendix of~\cite{Bodlaender_Fomin_Lokshtanov_Penninkx_Saurabh_Thilikos_2016} mentions~31 such problems.

The condition~\Cref{intro:quasi-coverable} is a technical statement that uses distances on a surface.
Its modern equivalent~\cite{Fomin_Lokshtanov_Saurabh_Thilikos_2020} formulation would be that all instances $G$ of the problem admit a treewidth-$\cTwMod$-modulator $X \subseteq V(G)$ of size $|X| \le \OO(\opt(G))$ for a constant $\cTwMod$, i.e., a set $X$ such that $\tw(G \setminus X) \le \cTwMod$.\footnote{We denote the treewidth of a graph $G$ by $\tw(G)$. For the definition of treewidth we refer the reader to \Cref{sec:def_decompositions}, and for a comprehensive introduction, see~\cite[Chapter~14]{kernelization-book}.}
The condition~\Cref{intro:fii}, i.e. FII, means roughly speaking that the problem has a dynamic programming algorithm on graphs of bounded treewidth, such that the integer values on the dynamic programming table are well-behaved.

The main technique introduced by~\cite{Bodlaender_Fomin_Lokshtanov_Penninkx_Saurabh_Thilikos_2016} is \emph{protrusion replacement}.
A \emph{$c$-protrusion} in a graph $G$ is a set $P \subseteq V(G)$ so that $\tw(G[P]) \le c$ and $|\protboundary P| \le c$, where $\protboundary P$ denotes the vertices in $P$ that have a neighbor outside of $P$.
The parameter $c$ should be seen as a constant that depends on the problem and the graph class.
Protrusion replacement means the operation of replacing a protrusion $P$ by a small gadget, while maintaining that the optimum value of the problem stays the same, up to a shifting constant that can be computed while replacing the protrusion.
Having FII implies that protrusions can be replaced by constant-size gadgets~\cite{Bodlaender_Fomin_Lokshtanov_Penninkx_Saurabh_Thilikos_2016}.

The algorithm of~\cite{Bodlaender_Fomin_Lokshtanov_Penninkx_Saurabh_Thilikos_2016} works by replacing protrusions by constant-size gadgets until the graph has no more large enough protrusions to make progress.
To argue that this results in a kernel, they use \emph{protrusion decompositions}.
A $(k,c)$-protrusion decomposition of a graph $G$ is a pair $(T,\bag)$, where $T$ is a tree rooted at a node $r$ and $\bag \colon V(T) \to 2^{V(G)}$ is a function that satisfies
\begin{enumerate}
\item for all $uv \in E(G)$ there is $t \in V(T)$ with $u,v \in \bag(t)$,
\item for all $v \in V(G)$, the set $\{t \in V(T) \mid v \in \bag(t)\}$ is non-empty and connected in $T$,
\item $|\bag(r)| \le k$ and the degree of $r$ is $\le k$, and
\item $|\bag(t)| \le c$ for all $t \in V(T) \setminus \{r\}$.
\end{enumerate}
In other words, a $(k,c)$-protrusion decomposition is a tree decomposition $(T,\bag)$, where the root-bag has size and degree $\le k$ and other bags size $\le c$.
The subtrees rooted at the children of the root form $c$-protrusions, and thus, a graph that has a $(k,c)$-protrusion decomposition, but whose $c$-protrusions have size $\OO(1)$, has size $\OO(k)$.
Bodlaender et al. showed that the condition~\Cref{intro:quasi-coverable} on graphs of bounded genus implies the existence of an $(\OO(\opt(G)), \OO(1))$-protrusion decomposition, implying that replacing protrusions results in a linear kernel.

The machinery of protrusion replacement and protrusion decompositions was later used for even more general kernelization meta-theorems~\cite{Kim_Langer_Paul_Reidl_Rossmanith_Sau_Sikdar_2015,Fomin_Lokshtanov_Saurabh_Thilikos_2020}, for other applications in kernelization~\cite{Fomin_Lokshtanov_Misra_Saurabh_2012,DBLP:conf/soda/FominLST12,Fomin_Lokshtanov_Misra_Ramanujan_Saurabh_2015,Fomin_Lokshtanov_Misra_Philip_Saurabh_2016,Dabrowski_Golovach_vantHof_Paulusma_Thilikos_2017,Gajarský_Hliněný_Obdržálek_2017,Fomin_Lokshtanov_Saurabh_Thilikos_2018,Kim_Serna_Thilikos_2019,DBLP:journals/siamdm/GiannopoulouPRT21,DBLP:journals/siamcomp/JansenW25,Lokshtanov_Ramanujan_Saurabh_Sharma_Zehavi_2025,DBLP:conf/icalp/GahlawatRZ25}, and for applications in parameterized and approximation algorithms outside of kernelization~\cite{Fomin_Lokshtanov_Misra_Saurabh_2012,Joret_Paul_Sau_Saurabh_Thomassé_2014,Kim_Langer_Paul_Reidl_Rossmanith_Sau_Sikdar_2015,Kim_Paul_Philip_2015,Chatzidimitriou_Raymond_Sau_Thilikos_2017,DBLP:journals/siamcomp/BasteST23,Golovach_Stamoulis_Thilikos_2023}.

The meta-theorem of Fomin, Lokshtanov, Saurabh, and Thilikos~\cite{Fomin_Lokshtanov_Saurabh_Thilikos_2020} states that (1) all problems that are ``$\cmso$-definable'', ``linear-separable'', and ``minor-bidimensional'' admit linear kernels on minor-free graphs, and (2) all problems that are ``$\cmso$-definable'', ``linear-separable'', and ``contraction-bidimensional'' admit linear kernels on apex-minor-free graphs.
The problems for which (1) applies include for example \textsc{Cycle Packing} and \textsc{Feedback Vertex Set}.
The problems for which (2) applies include additionally \textsc{($r$-)Dominating Set}, \textsc{Connected Vertex Cover}, and \textsc{$r$-Scattered Set}.
The meta-theorem of Kim, Langer, Paul, Reidl, Rossmanith, Sau, and Sikdar~\cite{Kim_Langer_Paul_Reidl_Rossmanith_Sau_Sikdar_2015} states that all problems that are ``linearly treewidth-bounding'' and have FII have linear kernels on topological-minor-free graphs.
These include for example \textsc{Feedback Vertex Set}, \textsc{Chordal Vertex Deletion}, and \textsc{Edge Dominating Set}.

The algorithms of both~\cite{Fomin_Lokshtanov_Saurabh_Thilikos_2020} and~\cite{Kim_Langer_Paul_Reidl_Rossmanith_Sau_Sikdar_2015} follow the same idea as the algorithm of~\cite{Bodlaender_Fomin_Lokshtanov_Penninkx_Saurabh_Thilikos_2016}: They use FII and employ protrusion replacement, and then argue via treewidth modulators that each $G$ has an $(\OO(\opt(G)), \OO(1))$-protrusion decomposition, certifying that protrusion replacement results in a linear kernel.
These algorithms, as well as other purely protrusion-replacement based kernels, can be implemented in linear time by the linear-time protrusion-replacement routine of Fomin, Lokshtanov, Misra, Ramanujan, and Saurabh~\cite{Fomin_Lokshtanov_Misra_Ramanujan_Saurabh_2015} (see also~\cite{Fomin_Lokshtanov_Misra_Saurabh_2012,kernelization-book}).

\paragraph{Our contribution.}
The main technical contribution of this paper is to provide a dynamic version of the protrusion replacement technique, resulting in dynamic versions of protrusion-based kernels.
While there are some previous works on dynamic kernelization~\cite{Iwata_Oka_2014,DBLP:conf/soda/ChitnisCHM15,Alman_Mnich_Williams_2020,Bannach_Heinrich_Reischuk_Tantau_2022,DBLP:conf/isaac/AnCJJL0SSS24}, there are no previous dynamic algorithms based on protrusions nor dynamic versions of the above discussed linear kernels for sparse graph classes.

We provide a dynamic algorithm that maintains a protrusion decomposition of a topological-minor-free graph, so that the first parameter of the decomposition is linear in the minimum size of a treewidth modulator.
We use $\OO_{\bar{p}}(\cdot)$ to denote the $\OO$-notation that ignores factors that depend on a tuple of parameters $\bar{p}$ and are computable given $\bar{p}$.
We denote by $\optTwMod(G)$ the size of a smallest set $X \subseteq V(G)$ so that $\tw(G \setminus X) \le \cTwMod$.
We state here a version of our main theorem omitting certain technical details.
The version with the details is stated as \Cref{thm:technical_main}.

\begin{restatable}{theorem}{maintheorem}\label{theo:main}
There is a data structure that is initialized with a graph $H$, integer $\cTwMod$, and an $H$-topological-minor-free graph $G$.
It supports updating $G$ via insertions and deletions of edges and isolated vertices, under the promise that $G$ remains $H$-topological-minor-free.
It maintains an $(\OO_{H,\cTwMod}(\optTwMod(G)), \OO_{H,\cTwMod}(1))$-protrusion decomposition $\Tc$ of $G$.
The amortized running time of the initialization is $\OO_{H,\cTwMod}(|G| \log |G|)$, and the amortized update time is $\OO_{H,\cTwMod}(\log |G|)$.

Furthermore, the data structure provides infrastructure for maintaining dynamic programming procedures on the subtrees of $\Tc$ rooted at the children of the root.
Each update to $G$ causes updates to only $\OO_{H,\cTwMod}(1)$ such subtrees, and changes the root-bag only by at most $\OO_{H,\cTwMod}(1)$.
\end{restatable}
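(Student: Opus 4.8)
The plan is to split the argument into a \emph{static} structural part and a \emph{dynamic} maintenance part, linked by a single interface object: an approximately minimum treewidth-$\cTwMod$-modulator $X$ together with a bounded-width tree decomposition of $G - X$.

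\textbf{Static part.} First I would establish \Cref{lem:modulator_implies_decomposition}: if $G$ is $H$-topological-minor-free and $X \subseteq V(G)$ has $\tw(G - X) \le \cTwMod$, then $G$ admits an $(\OO_{H,\cTwMod}(|X|), \OO_{H,\cTwMod}(1))$-protrusion decomposition computable in linear time from $X$ and a width-$\cTwMod$ tree decomposition of $G - X$. The construction takes a rooted width-$\cTwMod$ tree decomposition $(T',\bag')$ of $G - X$ and pulls towards the root all parts that ``see'' $X$: the root bag becomes $X$ together with the bags along the Steiner subtree spanned by the nodes whose bag meets $N(X)$. Two facts about $H$-topological-minor-free graphs keep this in check. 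First, such graphs have bounded degeneracy, hence $\OO_H(|V(G)|)$ edges; combined with \Cref{lem:many_torso_vertices} this bounds the number of ``branching'' bags of the Steiner subtree, and thus the size of the root bag, by $\OO_{H,\cTwMod}(|X|)$. Second, \Cref{lem:finding_mergeable_root_children} lets me repeatedly merge sibling subtrees that attach to the root through a small adhesion, lowering the root degree to $\OO_{H,\cTwMod}(|X|)$ while keeping every child subtree a protrusion of width and boundary $\OO_{H,\cTwMod}(1)$.

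\textbf{Dynamic part.} Under updates to $G$ I would maintain three coupled objects: (1) a treewidth-$\cTwMod$-modulator $X$ with $|X| = \OO_{H,\cTwMod}(\optTwMod(G))$; (2) a width-$\OO_{H,\cTwMod}(1)$ tree decomposition of $G - X$ stored so as to support local edits and the FII dynamic programs on its subtrees; and (3) the protrusion decomposition $\Tc$ obtained from (1) and (2) by the static construction, maintained lazily so that each update rewrites only $\OO_{H,\cTwMod}(1)$ bags and disturbs only $\OO_{H,\cTwMod}(1)$ child subtrees. For (1) I would exploit sparsity: an edge or vertex update is local, so it can create an obstruction to ``$\tw \le \cTwMod$'' only near its endpoints, which I kill by adding $\OO(1)$ vertices to $X$; periodically I recompute a genuinely small modulator on a touched region using a static constant-factor approximation for treewidth-$\cTwMod$-deletion on $H$-topological-minor-free graphs, charging the cost to the updates that dirtied the region. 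A potential function comparing $|X|$ with $\Theta(\optTwMod(G))$ then bounds both the amortized time and the total number of changes to $X$. For (2) I would use (or build) a dynamic data structure maintaining bounded-width tree decompositions of sparse graphs under edge insertions and deletions in $\OO_{H,\cTwMod}(\log |G|)$ amortized time, exposed so that a single $\OO(1)$-vertex change to $X$ or a single edge update inside $G - X$ disturbs only $\OO_{H,\cTwMod}(1)$ of its subtrees; re-running the static construction restricted to the disturbed region then produces the $\OO_{H,\cTwMod}(1)$ output updates to $\Tc$.

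\textbf{Main obstacle.} The crux is to obtain amortized $\OO_{H,\cTwMod}(\log |G|)$ update time \emph{and} the strong ``$\OO_{H,\cTwMod}(1)$ output updates per input update'' guarantee at the same time. A single update can in principle force a global restructuring of $X$, and hence of $\Tc$ (for instance an edge deletion that renders a previously necessary large modulator wasteful). The plan is to perform such restructurings lazily and in small increments spread over many subsequent updates, proving via the potential argument that the amortized cost stays $\OO(\log |G|)$ and that only $\OO(1)$ bags of the currently exposed $\Tc$ change per step, while never breaking the invariants that $X$ is a valid treewidth-$\cTwMod$-modulator, that $|X| = \OO(\optTwMod(G))$, and that $\Tc$ is a valid $(\OO(\optTwMod(G)), \OO(1))$-protrusion decomposition. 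Keeping this lazy rebuilding consistent enough that the FII dynamic programs on the protrusions remain correct throughout is the most delicate bookkeeping, and I expect it to occupy the bulk of the paper.
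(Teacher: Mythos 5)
Your blueprint — maintain an explicit modulator $X$ of size $\OO(\optTwMod(G))$, maintain a bounded-width tree decomposition of $G-X$, and derive the protrusion decomposition from the two — is essentially the classical \emph{static} construction lifted to the dynamic setting. The paper deliberately does \emph{not} take this route, and the reason is the first genuine gap in your plan: you never actually maintain $|X| \le \OO(\optTwMod(G))$. Adding $\OO(1)$ vertices to $X$ per update keeps it a valid modulator, but it does nothing to shrink $X$ when $\optTwMod(G)$ decreases — and under edge deletions $\optTwMod(G)$ can drop from $\Theta(n)$ to $0$. Your proposed remedy, ``periodically recompute a genuinely small modulator on a touched region, charging to the updates that dirtied the region,'' does not close this: the redundancy created by a deletion is not confined to any touched region (a single edge deletion elsewhere can make a far-away part of $X$ wasteful), and your potential-function argument needs to compare $|X|$ against $\Theta(\optTwMod(G))$, a quantity you have no dynamic handle on without already having solved the problem. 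Dynamically maintaining an $\OO(1)$-approximate treewidth-$\cTwMod$-modulator is, to the best of current knowledge, an open problem, and the paper's main technical idea is precisely to bypass it.

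The paper instead maintains a downwards well-linked superbranch decomposition directly and never computes a modulator. The modulator appears only in the \emph{analysis} of \Cref{mergeable-children}: one shows that \emph{if} the root degree is too large relative to $\optTwMod(G)$, then there must exist a small ``mergeable'' subset of root-children (found via the local-search structure of \Cref{lem:dynamic_local_search}), and one repeatedly merges such subsets until none exist. Correctness follows from the existence of a hypothetical optimal protrusion decomposition, not from explicitly maintaining one. This inversion — shrinking the root-bag directly, instead of building it up around a modulator — is what makes the $\OO(\log|G|)$ amortized update time and the $\OO(1)$-output-changes guarantee attainable. Your version would additionally need to argue that a change of $\OO(1)$ vertices in $X$ or one edit to the decomposition of $G-X$ disturbs only $\OO(1)$ protrusions of the derived decomposition, which your Steiner-subtree construction does not give you for free (the Steiner subtree of $N(X)$ can change globally when $X$ changes).

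A smaller but telling point: \Cref{lem:many_torso_vertices} is a statement about the torso of a node of a \emph{superbranch decomposition} under downwards well-linkedness, not a counting lemma about branching bags of a Steiner subtree; it does not apply to your static construction as you invoke it. The lemma you label \Cref{lem:finding_mergeable_root_children} does not exist; the relevant existence lemma is \Cref{mergeable-children}, and it presupposes the superbranch-decomposition framework you did not set up.
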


The second paragraph of the statement of \Cref{theo:main} is an informal explanation of the technical details that are needed for the application to dynamic kernelization.

The parameters of the protrusion decomposition of \Cref{theo:main} are optimal by $\OO_{H,\cTwMod}(1)$-factors, since having a $(k,\cTwMod)$-protrusion decomposition implies that $\optTwMod(G) \le k$.
Furthermore, topological-minor-free graph classes are the most general subgraph-closed graph classes where a linear relation between treewidth modulators and protrusion decompositions holds (see \Cref{thm:topoltight}), so the restriction to topological-minor-free graphs is justified.

To the best of our knowledge, the initialization algorithm\footnote{The initialization algorithm simply inserts vertices and edges one-by-one into the data structure.} of \Cref{theo:main} is the first algorithm to explicitly compute a protrusion decomposition with approximately optimal parameters in near-linear time, without a given treewidth-modulator.
Previously, an $\OO_{H,\cTwMod}(|G|^2)$ time algorithm was given by Kim, Serna, and Thilikos~\cite{Kim_Serna_Thilikos_2019} (in the same setting of $H$-topological-minor-free graphs and $\OO_{H,\cTwMod}(1)$-approximation).
However, most of the previous kernels based on protrusion replacement do not use the decomposition explicitly, but instead employ iterative protrusion replacement, which can be implemented in linear time~\cite{Fomin_Lokshtanov_Misra_Ramanujan_Saurabh_2015}.

The data structure of \Cref{theo:main} can be seen as a generalization of the recent dynamic treewidth data structure of Korhonen~\cite{Korhonen_2025}.
Applying it to graphs of treewidth $\le \cTwMod$ (which exclude $K_{\cTwMod+2}$ as a topological-minor and have $\optTwMod(G) = 0$) recovers the result of~\cite{Korhonen_2025}, but with a significantly higher dependency on $\cTwMod$ in both the running time and the width.
This implies that the factor $\log |G|$ in the amortized update time is optimal, as it is required (unconditionally) even for maintaining dynamic forests~\cite{DBLP:journals/siamcomp/PatrascuD06}.

\paragraph{Dynamic meta-kernelization.}
We then present our kernelization meta-theorem, which provides dynamic kernelization algorithms similar to \Cref{thm:introdomset} for a large class of problems.
For that, we need a couple of definitions.

By $\cmso$ we mean the Counting Monadic Second-Order logic on graphs (see e.g.~\cite[Section~14.5]{kernelization-book}).
We say that a graph class $\mc G$ is $\cmso$-definable if there is a $\cmso$-sentence $\Phi$ so that $G \models \Phi$ if and only if $G \in \mc G$.
Most of the natural graph classes are $\cmso$-definable.
A graph class $\mc G$ excludes a topological minor if there exists a graph $H$ so that no graph in $\mc G$ contains $H$ as a topological minor.
Classes excluding a topological minor include classes excluding a minor, such as the planar graphs, but also the graphs of bounded degree.

We consider parameterized graph problems $\Pi$ that are either minimization or maximization problems, and denote by $\opt_{\Pi}(G)$ the smallest (resp. largest) value $k$ so that $(G,k)$ is a yes-instance.\footnote{For the purpose of stating \Cref{thm:kernelization}, let us assume that such $k$ exists. A more precise form of \Cref{thm:kernelization} is given as the pair of \Cref{lem:kernelization_tw_mod,lem:kernelization_tw_mod:v2}.}
A problem $\Pi$ is \emph{linearly treewidth-bounding} on a graph class $\mc G$ if there is a constant $\cTwMod$ so that for all $G \in \mc G$ we have $\optTwMod(G) \le \OO(\opt_{\Pi}(G))$.
For example, for \textsc{Dominating Set} on planar graphs we can take $\cTwMod = 2$~\cite{kernelization-book}.
The statement of \Cref{thm:kernelization} uses also the definition of FII, which we already introduced informally, and which will be defined in \Cref{sec:def_fpt}.

\begin{restatable}{theorem}{kernelization}\label{thm:kernelization}
Let $\mc G$ be a $\cmso$-definable graph class that excludes a topological minor, and $\Pi$ a parameterized graph problem that is linearly treewidth-bounding on $\mc G$ and has FII.

There is a data structure that is initialized with a graph $G \in \mc G$ in $\OO(|G| \log |G|)$ time, supports updating $G$ via insertions and deletions of edges and isolated vertices in amortized $\OO(\log |G|)$ time per update under the promise that $G$ remains in $\mc G$, and throughout maintains a graph $K$ and a non-negative integer $\Delta$ so that
\begin{enumerate}
\item $K \in \mc G$,
\item $\opt_{\Pi}(K) + \Delta = \opt_{\Pi}(G)$, and
\item $|K| \le \OO(\opt_{\Pi}(G))$.
\end{enumerate}
The data structure outputs $(K,\Delta)$ at the initialization, and after each update it outputs at most $\OO(1)$ updates to $(K,\Delta)$, which consist of insertions and deletions of edges and isolated vertices to/from $K$, and (unbounded) changes of $\Delta$.
\end{restatable}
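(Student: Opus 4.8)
The plan is to run the dynamic protrusion-decomposition data structure of \Cref{theo:main} and implement, on top of it, a dynamic version of the classical protrusion-replacement argument of \cite{Bodlaender_Fomin_Lokshtanov_Penninkx_Saurabh_Thilikos_2016,Kim_Langer_Paul_Reidl_Rossmanith_Sau_Sikdar_2015}. Since the theorem only asserts the existence of a data structure for the fixed pair $(\mc G,\Pi)$, I would hard-code a graph $H$ that $\mc G$ excludes as a topological minor and a constant $\cTwMod$ witnessing that $\Pi$ is linearly treewidth-bounding on $\mc G$; all constants below then depend only on $(\mc G,\Pi)$, so $\OO_{H,\cTwMod}(1)=\OO(1)$. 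Instantiating \Cref{theo:main} on $(H,\cTwMod,G)$ and using $\optTwMod(G)\le\OO(\opt_\Pi(G))$, we maintain an $(\OO(\opt_\Pi(G)),\OO(1))$-protrusion decomposition $\Tc=(T,\bag)$ of $G$ with root $r$. For each child $v$ of $r$, let $G_v$ be the boundaried graph induced by the vertices occurring in the subtree $T_v$ rooted at $v$, with boundary the adhesion $\bag(T_v)\cap\bag(r)$, which by connectedness of tree decompositions is contained in $\bag(v)$ and hence has size $\OO(1)$; then $\tw(G_v)\le\OO(1)$, distinct $G_v,G_{v'}$ share only boundary vertices and have no edges between their interiors, and $G$ equals $G[\bag(r)]$ with all the $G_v$ glued onto $\bag(r)$ along their boundaries.

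Next I would attach, through the dynamic-programming infrastructure promised by the second paragraph of \Cref{theo:main}, to each protrusion subtree $T_v$ a finite-state automaton that reads the tree decomposition of $G_v$ and outputs (i) the finite-integer-index type of $G_v$ (for its boundary size), refined by the $\cmso$-type of $G_v$ up to the quantifier rank of the sentence $\Phi$ defining $\mc G$, and (ii) the additive offset $\mu_v\in\ZZ$ of $G_v$ relative to a fixed constant-size representative $R_v$ of that type, i.e.\ the integer with $\opt_\Pi(G'\oplus G_v)=\opt_\Pi(G'\oplus R_v)+\mu_v$ for all compatible boundaried $G'$; both are readable from a well-behaved dynamic-programming table by FII of $\Pi$, and the automaton has finitely many states by FII, $\cmso$-definability of $\mc G$, and the treewidth bound. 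Since each update to $G$ changes only $\OO(1)$ subtrees and the root bag by $\OO(1)$ (again by \Cref{theo:main}), the relevant automaton outputs can be re-read after an update within the amortized time bounds of \Cref{theo:main}.

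From this data I define the maintained pair: $K$ is $G[\bag(r)]$ together with, for each child $v$ of $r$, a fresh copy of $R_v$ glued onto $\bag(r)$ along the boundary of $G_v$, and $\Delta:=\sum_v\mu_v$. Then $|K|\le|\bag(r)|+|E(G[\bag(r)])|+\sum_v|R_v|=\OO(\opt_\Pi(G))$, using that $\mc G$ is topological-minor-free, hence sparse, so $|E(G[\bag(r)])|=\OO(|\bag(r)|)$, and that $r$ has $\OO(\opt_\Pi(G))$ children each contributing $\OO(1)$ vertices and edges. We have $K\in\mc G$ because each $R_v$ has the same $\cmso$-type up to the rank of $\Phi$ as $G_v$ and, by a composition (Feferman--Vaught) argument for $\cmso$ on bounded-treewidth graphs, the $\cmso$-type of a graph obtained by gluing bounded-treewidth pieces along bounded boundaries is determined by the types of the pieces, so $K\models\Phi\iff G\models\Phi$, and $G\models\Phi$; here the representatives must be chosen $\mc G$-realizable. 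Finally $\opt_\Pi(K)+\Delta=\opt_\Pi(G)$ follows by telescoping the identity defining $\mu_v$ over the children of $r$ one at a time: replacing $G_v$ by $R_v$ leaves every other $G_{v'}$ untouched, hence of unchanged type (their interiors are non-adjacent), and shifts $\opt_\Pi$ by exactly $\mu_v$. For the dynamic maintenance, an update to $G$ changes $\OO(1)$ of the $R_v$ (from the $\OO(1)$ affected subtrees) and $\OO(1)$ vertices/edges of $G[\bag(r)]$; mirroring these into $K$ costs $\OO(1)$ insertions and deletions of constant-size gadgets together with boundary reconnections, and $\Delta$ is corrected by $\sum(\mu_v^{\mathrm{new}}-\mu_v^{\mathrm{old}})$, an unbounded change to one integer, giving $\OO(1)$ updates to $(K,\Delta)$ per update, within the stated time; at initialization $|K|=\OO(|G|)$, so it can be output in $\OO(|G|\log|G|)$ time. (When $\opt_\Pi(G)$ does not exist, one argues with the second, more careful form of the theorem.)

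The hard part is the correctness of $K$, namely packaging FII of $\Pi$ together with $\cmso$-definability of $\mc G$ into a single finite classification of boundaried graphs of bounded treewidth such that every $\mc G$-realizable class has a constant-size representative that is itself $\mc G$-realizable and realizes both the right offset and the right $\cmso$-type. This is exactly the delicate point already present in the static meta-kernelization theorems \cite{Bodlaender_Fomin_Lokshtanov_Penninkx_Saurabh_Thilikos_2016,Kim_Langer_Paul_Reidl_Rossmanith_Sau_Sikdar_2015}, and I would adapt their machinery of ``$\mc G$-boundaried graphs'' and canonical protrusion replacement. A secondary technicality is verifying that simultaneous replacement of all protrusions shifts $\opt_\Pi$ by precisely $\sum_v\mu_v$ — handled by the telescoping above, which uses that the interiors of distinct $G_v$ are non-adjacent — and that $\OO(1)$ changes to the root bag are correctly propagated to the adhesions, and hence automaton inputs, of the incident subtrees, which is already subsumed by the ``$\OO(1)$ affected subtrees'' guarantee of \Cref{theo:main}.
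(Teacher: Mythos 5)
Your proposal is correct and follows essentially the same route as the paper: instantiate \Cref{theo:main} (via the treewidth-bounding constant $\cTwMod$), attach a tree-decomposition automaton on each protrusion that computes a representative and a transposition constant with respect to the product equivalence of FII and $\cmso$-equivalence over $\mc G$, replace each root-child boundaried graph by its representative, sum the offsets to get $\Delta$, and propagate the $\OO(1)$ affected subtrees dynamically. The only detail worth nailing down explicitly is that the representatives must be chosen \emph{progressive} (i.e.\ minimizing the shift in every direction), which is what guarantees each $\mu_v\ge 0$ and hence $\Delta\ge 0$ as the theorem asserts; your appeal to the existing protrusion-replacement machinery implicitly covers this, but it should be stated rather than left to ``a fixed constant-size representative.''
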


Now, for example \Cref{thm:introdomset} follows by taking $\mc G$ to be the class of planar graphs, $\Pi$ the \textsc{Dominating Set} problem, and using the known fact that \textsc{Dominating Set} is linearly treewidth-bounding on planar graphs~\cite{Fomin_Lokshtanov_Saurabh_Thilikos_2020}.
More generally, we could replace $\mc G$ by any apex-minor-free $\cmso$-definable graph class~\cite{Fomin_Lokshtanov_Saurabh_Thilikos_2020}, for example, the graphs of Euler genus at most $10$.

For most of the natural problems such as \textsc{Dominating Set}, we can get rid of the shifting-constant $\Delta$ by encoding it in $K$ via gadgets.
In particular, from the fact that $K$ changes by at most $\OO(1)$ updates per update to $G$, it follows that the value of $\Delta$ also changes by at most $\OO(1)$, so we can within the same bounds maintain a set of $\Delta$ additional isolated vertices in $K$.

The algorithm of \Cref{thm:kernelization} applies to all problems to which the meta-theorems of Fomin et al.~\cite[Theorem~1.1]{Fomin_Lokshtanov_Saurabh_Thilikos_2020} and Kim et al.~\cite[Theorem~1]{Kim_Langer_Paul_Reidl_Rossmanith_Sau_Sikdar_2015} apply.
This also encompasses the results of Bodlaender et al. about linear kernels~\cite[Theorem~1.3]{Bodlaender_Fomin_Lokshtanov_Penninkx_Saurabh_Thilikos_2016}.
We give a list of concrete problems to which \Cref{thm:kernelization} applies in \Cref{sec:concl}.

The amortized update time $\OO(\log |G|)$ in \Cref{thm:kernelization} is optimal for some problems captured by it.
In particular, the result of P{\u{a}}tra{\c{s}}cu and Demaine~\cite{DBLP:journals/siamcomp/PatrascuD06} implies that the problem of maintaining whether a planar graph is a forest or contains a cycle requires $\OO(\log |G|)$ (amortized and randomized) update time, unconditionally.
This implies that $\OO(\log |G|)$ update time is required for any problem $\Pi$ on planar graphs whose optimum value is a constant $c$ if and only if $G$ is a forest, for example, \textsc{Cycle Packing} or \textsc{Feedback Vertex Set}.

The statement of \Cref{thm:kernelization} is non-constructive, in particular, it is not clear how the problem $\Pi$ would even be described.
The proof is also inherently non-constructive because of the non-constructive nature of the definition of FII and the related protrusion-replacement machinery.
However, it can be made constructive for large classes of concrete problems by using the techniques introduced by Garnero, Paul, Sau, and Thilikos~\cite{Garnero_Paul_Sau_Thilikos_2015,Garnero_Paul_Sau_Thilikos_2019}.

\paragraph{Applications outside of kernelization.}
\Cref{thm:kernelization} has some direct applications outside of kernelization.
As the first example, we observe that it improves the update times of many dynamic parameterized algorithms from $2^{\OO(\sqrt{k})} \log n$ to $2^{\OO(\sqrt{k})} + \OO(\log n)$.
In particular, Korhonen~\cite{Korhonen_2025} observed that his dynamic treewidth data structure gives a dynamic algorithm for maintaining whether an $n$-vertex planar graph contains a dominating set of size $\le k$ in $2^{\OO(\sqrt{k})} \log n$ amortized update time.
Plugging in \Cref{thm:kernelization} (in fact, \Cref{thm:introdomset}), directly improves this to an amortized update time of $2^{\OO(\sqrt{k})} + \OO(\log n)$.
Similar effect happens to a large class of parameterized problems on (topological-)minor-free graph classes.

Another direct application is to dynamic approximation algorithms.
We observe that if a problem $\Pi$ satisfies that $\opt_{\Pi}(G) \le \OO(|G|)$ for all $G$, then the data structure of \Cref{thm:kernelization} directly maintains a constant-factor approximation of $\opt_{\Pi}(G)$, because of the inequalities $\opt_{\Pi}(G) \le \OO(|K|)+\Delta \le \OO(\opt_{\Pi}(G))$.
To the best of our knowledge, this is the first dynamic constant-factor approximation algorithm for many problems captured by \Cref{thm:kernelization}, for example, for \textsc{Feedback Vertex Set} on planar graphs.

Previously, Korhonen, Nadara, Pilipczuk, and Sokołowski~\cite{Korhonen_Nadara_Pilipczuk_Sokołowski_2024} gave dynamic $(1+\varepsilon)$-approximation algorithms for \textsc{Weighted Independent Set} on apex-minor-free graphs and \textsc{Weighted Dominating Set} on bounded-degree minor-free graphs.
They achieve $f(\varepsilon) \cdot n^{o(1)}$ amortized update time by implementing a dynamic version of the Baker's scheme~\cite{DBLP:journals/jacm/Baker94}.
Dynamic approximation algorithms are well-studied for \textsc{Matching} and \textsc{Vertex Cover}, see e.g.~\cite{DBLP:conf/focs/Solomon16,Bhattacharya_Henzinger_Italiano_2018,DBLP:journals/jacm/BhattacharyaKSW24}.

\paragraph{Our techniques.}
Our main technical contribution is \Cref{theo:main}, which implies \Cref{thm:kernelization} via a dynamic implementation of the known protrusion replacement machinery.
The basic blueprint behind the data structure of \Cref{theo:main} is inspired by the dynamic treewidth data structure of Korhonen~\cite{Korhonen_2025}.
In particular, the subtrees of bounded treewidth are maintained by the same routine as in~\cite{Korhonen_2025}, and the whole protrusion decomposition satisfies the same key invariant of ``downwards well-linkedness'' as the tree decomposition maintained by~\cite{Korhonen_2025}.
The difference is that now the root-bag of the decomposition can have large size, and in particular, its size should be maintained to be approximately the same as the minimum treewidth-$\cTwMod$-modulator.

It is easy to incorporate edge insertions/deletions to this structure so that they increase the size of the root-bag by a constant in each operation.
Therefore, the main challenges are to:
\begin{enumerate}
\item show that if the root-bag grows too large compared to the optimum treewidth-$\cTwMod$-modulator, then it can be reduced by chopping off a small part to the bounded-treewidth subtrees, and\label{intro:ov:enum11}
\item implement this chopping within $\OO(\log n)$ amortized update time and $\OO(1)$ changes to the root per update.\label{intro:ov:enum12}
\end{enumerate}
Both of the parts (\ref{intro:ov:enum11}) and (\ref{intro:ov:enum12}) are non-trivial.
The part (\ref{intro:ov:enum11}) yields an essentially new type of an algorithm for constructing a protrusion decomposition, which constructs a protrusion decomposition by iteratively making the root-bag smaller, unlike the previous algorithms, which proceed by starting with a treewidth-$\cTwMod$-modulator and then making the root-bag a superset of it.
The proof of (\ref{intro:ov:enum11}) uses graph-theoretical techniques, requiring topological-minor-freeness and heavily relying on the ``downwards well-linkedness'' of the decomposition.
The part (\ref{intro:ov:enum12}) requires a careful dynamic implementation of a local search procedure for finding large sets of vertices with small neighborhoods.

\paragraph{Related work on dynamic kernelization.}
So far we omitted the discussion on previous dynamic kernelization algorithms, so let us review them here.
Let us focus only on dynamic polynomial kernels, and omit the larger body of work on dynamic FPT algorithms (see e.g.~\cite{DBLP:conf/wg/Bodlaender93a,DBLP:conf/wads/DvorakT13,Dvořák_Kupec_Tůma_2014,Alman_Mnich_Williams_2020,DBLP:conf/soda/ChenCDFHNPPSWZ21,Olkowski_Pilipczuk_Rychlicki_Węgrzycki_Zych-Pawlewicz_2023,Korhonen_Majewski_Nadara_Pilipczuk_Sokołowski_2023,Majewski_Pilipczuk_Zych-Pawlewicz_2024,Korhonen_2025,DBLP:journals/toct/MajewskiPS25}).

Iwata and Oka~\cite{Iwata_Oka_2014} gave dynamic kernels for \textsc{Vertex Cover} and \textsc{Cluster Vertex Deletion}.
The kernel for \textsc{Vertex Cover} has $\OO(k^2)$ update time and size $\OO(k^2)$, while the kernel for \textsc{Cluster Vertex Deletion} has $\OO(k^8 \log n)$ update time and size $\OO(k^5)$.

Alman, Mnich, and Vassilevska Williams~\cite{Alman_Mnich_Williams_2020} improved the update time of the dynamic vertex cover kernelization to $\OO(k)$ worst-case and $\OO(1)$ amortized.
They also gave dynamic polynomial kernels for \textsc{$d$-Hitting Set}, \textsc{Edge Dominating Set}, and \textsc{Point Line Cover}.
The kernel for \textsc{Edge Dominating Set} has update time $\OO(1)$ and size $\OO(k^2)$, while the kernels for \textsc{$d$-Hitting Set} and \textsc{Point Line Cover} have update times and sizes $\poly(k)$.

An improved dynamic kernel for \textsc{$d$-Hitting Set}, along with a dynamic kernel for \textsc{Set Packing}, was given by Bannach, Heinrich, Reischuk, and Tantau~\cite{Bannach_Heinrich_Reischuk_Tantau_2022}.
An, Cho, Jang, Jung, Lee, Oh, Shin, Shin, and Song~\cite{DBLP:conf/isaac/AnCJJL0SSS24} designed dynamic kernels on unit disk graphs for \textsc{Vertex Cover}, \textsc{Triangle Hitting Set}, \textsc{Feedback Vertex Set}, and \textsc{Cycle Packing}.

\paragraph{Organization of the paper.}
The rest of the paper is organized as follows.
We start by presenting informal sketches of the proofs of \Cref{theo:main} and \Cref{thm:kernelization} in \Cref{sec:overview}.
We present the proofs in detail in \Cref{sec:preliminaries,sec:existence,sec:local_search,sec:splay,sec:main_data_structure,sec:kernelization}.
In particular, we start by presenting definitions and preliminaries in \Cref{sec:preliminaries}.
Then, the proof of the part (\ref{intro:ov:enum11}) discussed above is presented in \Cref{sec:existence}, and the main part of the proof of (\ref{intro:ov:enum12}) in \Cref{sec:local_search}.
\Cref{sec:splay} is dedicated to lifting the dynamic treewidth data structure of~\cite{Korhonen_2025} to maintain the bounded-treewidth parts of the protrusion decomposition.
Then, in \Cref{sec:main_data_structure} we combine the material of \Cref{sec:existence,sec:local_search,sec:splay} to finish the proof of \Cref{theo:main}.
We prove \Cref{thm:kernelization} in \Cref{sec:kernelization}.
We conclude with additional remarks and discussion in \Cref{sec:concl}.
\section{Overview}
\label{sec:overview}
We first sketch a proof of \Cref{theo:main}, i.e., explain how we maintain an approximately optimal protrusion decomposition of a topological-minor-free graph.
Afterwards, we describe how this data structure can be used for dynamic kernelization, sketching the proof of \Cref{thm:kernelization}.

\subsection{Dynamic protrusion decomposition}
We consider a dynamic $H$-topological-minor-free graph $G$ and a parameter $\cTwMod$.
Our goal is to maintain, under insertions and deletions of edges and isolated vertices, a $(\OO_{H,\cTwMod}(\optTwMod(G)),\OO_{H,\cTwMod}(1))$-protrusion decomposition of $G$.
This is a rooted tree decomposition $(T,\bag)$, where the root-bag $r \in V(T)$ has size and degree $\OO_{H,\cTwMod}(\optTwMod(G))$, and other bags have size $\OO_{H,\cTwMod}(1)$.

We start by describing in \Cref{subsubsec:overview:basic} the definitions and ideas that are based on those of~\cite{Korhonen_2025} (many of which in turn originate from~\cite{Korhonen_2024}, and further from~\cite{Robertson_Seymour_1991}).
The more novel parts of our algorithm are described in \Cref{subsubsec:overview:novel}.

\subsubsection{The framework and basic operations}
\label{subsubsec:overview:basic}

\paragraph{Downwards well-linked superbranch decompositions.}
The key idea of the dynamic treewidth algorithm of~\cite{Korhonen_2025} is to maintain a structure called \emph{downwards well-linked superbranch decomposition}.
We do the same in this paper, although naturally with different constraints to reflect that the superbranch decomposition should correspond to a protrusion decomposition instead of a tree decomposition of bounded width.

A \emph{superbranch decomposition} of a graph $G$ is a pair $\Tc = (T,\Lc)$, where $T$ is a rooted tree, in which every non-leaf node has at least two children, and $\Lc$ is a bijection that maps every leaf of $T$ to an edge of $G$.
This is similar to the classic definition of a branch decomposition~\cite{Robertson_Seymour_1991}, but allowing nodes of degree higher than three.
For a node $t \in V(T)$, we denote by $\Lc[t] \subseteq E(G)$ the set of edges of $G$ that are associated by $\Lc$ with leaves in the subtree of $T$ below $t$.

The \emph{boundary} of an edge set $A \subseteq E(G)$, denoted by $\bd(A) \subseteq V(G)$, is the set of vertices that are incident to edges in both $A$ and $E(G) \setminus A$.
We denote the size of the boundary by $\lambda(A) = |\bd(A)|$.
The function $\lambda \colon 2^{E(G)} \to \mathbb{Z}_{\ge 0}$ is \emph{symmetric and submodular}, meaning that (1) $\lambda(A) = \lambda(E(G) \setminus A)$ and (2) $\lambda(A \cup B) + \lambda(A \cap B) \le \lambda(A)+\lambda(B)$ for all $A,B \subseteq E(G)$~\cite{Robertson_Seymour_1991}.

A set $A \subseteq E(G)$ of edges is \emph{well-linked} if for every bipartition $(A_1,A_2)$ of $A$ it holds that $\lambda(A_1) \geq \lambda(A)$ or $\lambda(A_2) \geq \lambda(A)$ (this is called \emph{robust} in~\cite{Robertson_Seymour_1991}).
The \emph{well-linked number} $\wl(A)$ of a set $A \subseteq E(G)$ is the maximum of $\lambda(A')$ over well-linked subsets $A' \subseteq A$.
A superbranch decomposition is \emph{downwards well-linked} if for every node $t \in V(T)$, the set $\Lc[t]$ is well-linked.

The intuitive reason why well-linkedness is a powerful notion in our context is the following three properties~\cite{Korhonen_2024,Korhonen_2025}:
\begin{enumerate}
\item If $A \subseteq E(G)$ is a well-linked set and $B \subseteq E(G)$, then either $\lambda(B \cup A) \le \lambda(B)$ or $\lambda(B \setminus A) \le \lambda(B)$. In particular, well-linked sets are perfectly ``uncrossable''.
\item Every set $A \subseteq E(G)$ can be partitioned into at most $2^{\lambda(A)}$ well-linked subsets.
\item For all $A \subseteq E(G)$, it holds that $\wl(A) = \Theta(\tw(G[A]))$.
\end{enumerate}
There is also a fourth powerful property, called the ``transitivity'' of well-linkedness in~\cite{Korhonen_2024,Korhonen_2025}, but which requires more complex definitions to state, so we omit it for now.

\paragraph{From superbranch to protrusion decompositions.}
There is a natural way to relate superbranch decompositions to tree decompositions, for which we use the following definition of an \emph{adhesion}.
For an edge $tp \in E(T)$ of a superbranch decomposition between a node $t$ and its parent $p$, the adhesion at $tp$ is the set $\adh(tp) = \bd(\Lc[t])$.
Now, if $(T,\Lc)$ is a superbranch decomposition of a graph $G$, we construct a function $\bag \colon V(T) \to 2^{V(G)}$ so that $(T,\bag)$ is a tree decomposition of $G$ as follows.
For a leaf-node $\ell$ with $\Lc(\ell) = uv$, we set $\bag(\ell) = \{u,v\}$, and for a non-leaf-node $t$, we set $\bag(t) = \bigcup_{s \in N(t)} \adh(st)$, where $N(t)$ denotes the neighbors of~$t$.
(This does not quite work if $G$ contains isolated vertices, but let us ignore that for now.)
We observe that $(T,\bag)$ is an $(\OO_{H,\cTwMod}(\optTwMod(G)),\OO_{H,\cTwMod}(1))$-protrusion decomposition of $G$ if
\begin{enumerate}[label=\alph*., ref=\alph*]
\item every adhesion has size $\OO_{H,\cTwMod}(1)$,\label{overview:cond1:item1}
\item the root-node has degree $\OO_{H,\cTwMod}(\optTwMod(G))$, and\label{overview:cond1:item2}
\item every non-root-node has degree $\OO_{H,\cTwMod}(1)$.\label{overview:cond1:item3}
\end{enumerate}

This follows from the fact that the size of $\bag(t)$ is bounded by the product of the degree of $t$ and the maximum adhesion size.
Instead of maintaining the conditions of \Cref{overview:cond1:item1,overview:cond1:item2,overview:cond1:item3} directly, our goal is to maintain a superbranch decomposition $\Tc = (T,\Lc)$ satisfying that

\begin{enumerate}
\item $\Tc$ is downwards well-linked,\label{overview:cond2:item1}
\item for all non-root $t \in V(T)$, $\wl(\Lc[t]) \le \OO_{H,\cTwMod}(1)$,\label{overview:cond2:item2}
\item the root-node has degree $\OO_{H,\cTwMod}(\optTwMod(G))$,\label{overview:cond2:item3}
\item every non-root-node has degree $\OO_{H,\cTwMod}(1)$, and\label{overview:cond2:item4}
\item $T$ has depth $\OO_{H,\cTwMod}(\log |G|)$.\label{overview:cond2:item5}
\end{enumerate}

The conditions of \Cref{overview:cond2:item1,overview:cond2:item2} imply the condition of \Cref{overview:cond1:item1} by the definition of well-linked number and downwards well-linkedness.
Thus, any protrusion decomposition satisfying \Cref{overview:cond2:item1,overview:cond2:item2,overview:cond2:item3,overview:cond2:item4} also satisfies \Cref{overview:cond1:item1,overview:cond1:item2,overview:cond1:item3}, and thus corresponds to a $(\OO_{H,\cTwMod}(\optTwMod(G)),\OO_{H,\cTwMod}(1))$-protrusion decomposition.
The logarithmic-depth requirement of \Cref{overview:cond2:item5} is for efficient dynamic maintenance of the superbranch decomposition.

\paragraph{Basic maintenance of the superbranch decomposition.}
The idea of our data structure is that we first implement the procedures for inserting and deleting edges in an ``easy'' way that increases the degree of the root by $\OO_{H,\cTwMod}(1)$ per operation but maintains the other invariants.
Then, the hard part of our data structure is a procedure that decreases the degree of the root whenever it is too large compared to $\optTwMod(G)$.
This procedure will be run after every update to keep the degree of the root controlled.

We postpone the hard part to \Cref{subsubsec:overview:novel}, and start here with the ``easy'' procedure for inserting and deleting edges.
At this point, we need to reveal the technical detail that the superbranch decomposition is not actually a superbranch decomposition of $G$, but of the hypergraph $\Hc(G)$ that has vertex set $V(\Hc(G)) = V(G)$, and in addition to the normal edges $\{u,v\}$ for all $uv \in E(G)$, contains singleton edges $\{v\}$ for all $v \in V(G)$.
Our definitions extend naturally to hypergraphs, and later we will also use hypergraphs with edges of size more than $2$.

We use the tree-rotation techniques from~\cite{Korhonen_2025} for maintaining the subtrees below the root.
A key subroutine that we implement with those techniques is a procedure that ``rotates up'' a specified set of leaves of the decomposition.
In particular, given a set $A$ of hyperedges with $|A| = \OO(1)$, the procedure uses tree rotations to transform $(T,\Lc)$ so that 
\begin{enumerate}
\item the leaves corresponding to $A$ become children of the root,
\item the degree of the root increases by $\OO_{H,\cTwMod}(1)$, and
\item all other invariants are maintained.
\end{enumerate}
This subroutine runs in $\OO_{H,\cTwMod}(\log |G|)$ amortized time.

Now, to insert an edge between vertices $u$ and $v$, we use the subroutine with $A = \{\{u\}, \{v\}\}$, and observe that after $\{u\}$ and $\{v\}$ are children of the root, inserting the hyperedge $\{u,v\}$ while maintaining the invariants (but increasing the root-degree by $1$) is trivial, by simply adding a leaf corresponding to it as another child of the root.
Similarly, to delete an edge $uv$, we use the procedure with $A = \{\{u,v\}, \{u\}, \{v\}\}$, after which the deletion becomes similarly straightforward.
In both cases, the degree of the root increases by $\OO_{H,\cTwMod}(1)$ because of the rotating-up subroutine.

The reason why it is essential that also $\{u\}$ and $\{v\}$ are children of the root while inserting/deleting $\{u,v\}$ is that this ensures that the boundaries $\bd(\Lc[t])$ stay unchanged in the decomposition.
Otherwise, the boundaries $\bd(\Lc[t])$ could abruptly change globally throughout the decomposition, potentially ruining downwards well-linkedness.

\subsubsection{Controlling the root degree}
\label{subsubsec:overview:novel}
Let us then turn to the hard part of our data structure, namely, the procedure for controlling the degree of the root.
We will decrease the degree of the root by finding a set of at least $2$ but at most $\OO_{H,\cTwMod}(1)$ subtrees rooted at children of the root, and combining them into one subtree rooted at a new child of the root.
This combination procedure will be done in a straightforward manner, in particular, by simply adding a new child and moving the subtrees to be rooted under that child instead of the root.

Denote the children of the root corresponding to such subtrees by $C = \{c_1, c_2, \ldots, c_h\}$, and denote by $\Lc[C] = \bigcup_{c_i \in C} \Lc[c_i]$ the set of hyperedges corresponding to leaves below $C$.
We need that 
\begin{enumerate}
\item $|C| \ge 2$ so that the degree of the root actually decreases,
\item $|C| \le \OO_{H,\cTwMod}(1)$ so that the degree of the new node is bounded,
\item $\Lc[C]$ is well-linked so that the decomposition stays downwards well-linked, and
\item $\wl(\Lc[C]) \le \OO_{H,\cTwMod}(1)$ so that the well-linked number of the subtrees stays bounded.
\end{enumerate}

Furthermore, the bounds hidden by $\OO_{H,\cTwMod}(\cdot)$ above should not depend on the current parameters of the decomposition, but only on the original parameters $H$ and $\cTwMod$, in order to not make the parameters of the decomposition gradually worse throughout updates.
Decreasing the degree by combining such a set of children $C$ maintains all other invariants of the decomposition except may increase the depth by one because of the new subtree.
However, the new subtree can be balanced in amortized $\OO_{H,\cTwMod}(\log |G|)$ time with the techniques from~\cite{Korhonen_2025}.

It follows that now, the two main challenges are to show that
\begin{enumerate}
\item whenever the degree of the root is too large compared to $\optTwMod(G)$, such a set of children $C$ exists, and
\item in that case, we can also find $C$ efficiently.
\end{enumerate}

For both of these two challenges, the key definition we need is that of the \emph{torso hypergraph}.
Let $t$ be a node of a superbranch decomposition.
The torso of $t$, denoted by $\torso(t)$, is the hypergraph with the vertex set $V(\torso(t)) = \bigcup_{s \in N(t)} \adh(st)$, and having the hyperedge $\adh(st)$ for every $s \in N(t)$.
Note that the same hyperedge can occur multiple times if $\adh(s_1 t) = \adh(s_2 t)$ for $s_1 \neq s_2$, and we indeed treat $\torso(t)$ as a ``multi''-hypergraph, thinking about the hyperedge set of $\torso(t)$ as consisting of labels $e_s$ for all $s \in N(t)$, together with a mapping that maps each label $e_s$ to the set $\adh(st)$.

Let $r$ be the root node and $A \subseteq E(\torso(r))$ a subset of hyperedges of its torso, which naturally corresponds to a set of children $C_A$ of the root.
We denote by $A \triangleright \Tc = \Lc[C_A] \subseteq E(\Hc(G))$ the set of hyperedges corresponding to the leaves under $C_A$.
The ``transitivity of well-linkedness'' from~\cite{Korhonen_2024,Korhonen_2025} tells that $A \triangleright \Tc$ is well-linked in $\Hc(G)$ if and only if $A$ is well-linked in $\torso(r)$.
Now, the task of finding the required set of children $C$ translates into finding a set $A \subseteq E(\torso(r))$ so that
\begin{enumerate}
\item $2 \le |A| \le \OO_{H,\cTwMod}(1)$,\label{overview:cond6:item1}
\item $A$ is well-linked in $\torso(r)$, and\label{overview:cond6:item2}
\item $\wl(A \triangleright \Tc) \le \OO_{H,\cTwMod}(1)$.\label{overview:cond6:item3}
\end{enumerate}

We can in fact simplify these three conditions even further.
The combination of \Cref{overview:cond6:item2,overview:cond6:item3} implies that $\lambda(A)$ should be bounded by $\OO_{H,\cTwMod}(1)$.
By using the fact that any set $A \subseteq E(\torso(r))$ can be partitioned into at most $2^{\lambda(A)}$ well-linked sets, we can relax the condition that $A$ is well-linked into a condition that $A$ is large enough compared to $2^{\lambda(A)}$ -- then a postprocessing routine can find a well-linked subset of $A$.
We end up with the conditions

\begin{enumerate}
\item $2^{\lambda(A)} < |A| \le \OO_{H,\cTwMod}(1)$ and
\item $\wl(A \triangleright \Tc) \le \OO_{H,\cTwMod}(1)$.
\end{enumerate}

\paragraph{Existence of a mergeable subset.}
Let us then get to the proof that such a set $A \subseteq E(\torso(r))$ exists if the root degree is too large.
Note that the root degree equals $|E(\torso(r))|$.

The first ingredient is a proof that $|E(\torso(r))| \le \OO_{H,\cTwMod}(|V(\torso(r))|)$.
This uses the $H$-topological-minor-freeness of $G$ and the downwards well-linkedness, in particular, if $\torso(r)$ would be too dense, we could realize this density as a topological minor via downwards well-linkedness.
Another case is that $\torso(r)$ contains a lot of hyperedges with exactly the same vertex set, but in that case the desired conclusion is easy to achieve by just selecting a subset of them.
Therefore, we can assume that $V(\torso(r))$ is large compared to $\optTwMod(G)$.

Now, we consider an imaginary ``optimal'' protrusion decomposition of $G$.
By~\cite{Kim_Langer_Paul_Reidl_Rossmanith_Sau_Sikdar_2015}, $G$ indeed has an $(\OO_{H,\cTwMod}(\optTwMod(G)), \OO_{H,\cTwMod}(1))$-protrusion decomposition $\Tc^* = (T^*, \bag^*)$, rooted at a node $r^*$.
If $V(\torso(r))$ is larger than $\Omega_{H,\cTwMod}(|\bag^*(r^*)| + \Delta(r^*)) = \Omega_{H,\cTwMod}(\optTwMod(G))$\footnote{Where $\Delta(r^*)$ denotes the degree of $r^*$.}, then there exists a child $c^*$ of $r^*$, so that the subtree of $\Tc^*$ rooted at $c^*$ contains at least $\delta$ vertices from $V(\torso(r))$, and has treewidth $\omega$, where $\delta$ and $\omega$ are parameters in $\OO_{H,\cTwMod}(1)$ with $\omega << \delta$.
By choosing an appropriate subtree under $c^*$, we indeed obtain a set $B \subseteq E(\Hc(G))$ so that

\begin{enumerate}
\item $\delta \le |V(B) \cap V(\torso(r))| \le 3 \cdot \delta$,\footnote{Where $V(B) \subseteq V(G)$ denotes all vertices incident to $B$.}
\item $\lambda(B) \le \omega$, and
\item $\wl(B) \le \omega$,
\end{enumerate}
where $\delta$ is set to be roughly $2^{\Theta(\omega)}$.

The set $B$ would be perfect for us if it would be ``uncrossed'' with the sets $\Lc[c]$ for the children $c$ of the root $r$ of our superbranch decomposition, that is, if either $\Lc[c] \subseteq B$ or $\Lc[c] \cap B = \emptyset$ would hold for all $c$.
This would allow us to construct the desired set $A \subseteq E(\torso(r))$ by taking $e_c \in A$ whenever $\Lc[c] \subseteq B$.

Now the natural idea is to use the downwards well-linkedness to uncross the set $B$ with the sets $\Lc[c]$.
It implies that either $\lambda(B \setminus \Lc[c]) \le \lambda(B)$ or $\lambda(B \cup \Lc[c]) \le \lambda(B)$, so one can always either include or exclude $\Lc[c]$ without increasing $\lambda(B)$.
Furthermore, we show that doing this for all children $c$ does not drastically affect the quantity $|V(B) \cap V(\torso(r))|$, as long as it is large enough compared to $\lambda(B)$.

However, this uncrossing idea has one issue that makes the proof much more complicated: The quantity $\wl(B)$ can increase if we set $B \coloneqq B \cup \Lc[c]$.
On the surface this seems to not be a big issue since $\wl(\Lc[c])$ is anyway bounded by $\OO_{H,\cTwMod}(1)$, but here we run into the issue that the ``new parameters'' of the decomposition should not depend on the ``old parameters'' of the decomposition, or otherwise they would stack up throughout many operations.
In particular, this uncrossing would work fine if we could assume that $\wl(\Lc[c])$ is small enough compared to the other parameters of the decomposition.

We fix this issue by ensuring that this type of uncrossing can indeed only happen if $\wl(\Lc[c])$ is really small.
In particular, we observe that $\wl(\Lc[c])$ can be larger than $\OO(\cTwMod)$ only for at most $\optTwMod(G)$ many subtrees of our decomposition, as the modulator must hit all such subtrees.
By tuning the existence proof of protrusion decompositions from~\cite{Kim_Langer_Paul_Reidl_Rossmanith_Sau_Sikdar_2015}, we can ensure that the set $\bd(\Lc[c])$ is contained in $\bag^*(r^*)$ for all such subtrees, and therefore by choosing $B$ according to this tuned decomposition $(T^*, \bag^*)$, $V(B)$ can intersect such $\bd(\Lc[c])$ only in $\bd(B)$.
With this, we can ensure that the uncrossing of type $B \coloneqq B \cup \Lc[c]$ never happens for such children $c$.

Thus, skipping many tedious technical details here, the uncrossing idea works out in the end to prove the existence of the desired set $A \subseteq E(\torso(r))$.

\paragraph{Finding a mergeable subset.}
The next challenge is to find a set $A \subseteq E(\torso(r))$ with (1) $2^{\lambda(A)} < |A| \le \OO_{H,\cTwMod}(1)$ and (2) $\wl(A \triangleright \Tc) \le \OO_{H,\cTwMod}(1)$ whenever such a set exists, in $\OO_{H,\cTwMod}(\log |G|)$ time.
We start by observing that the proof above gives us some slack for approximation, in particular, it is enough to be able to either return such a set, or to conclude that there is no such set with (1) replaced by (1') $2^{2 \cdot \lambda(A)} < |A| \le \OO_{H,\cTwMod}(1)$ and (2) replaced by a smaller bound on the well-linked number.
In particular, instead of bound the well-linked number of $A \triangleright \Tc$, we bound $\lambda(A)$ and the \emph{internal treewidth} of $A \expand \Tc$, that is $\tw(G[V(A \expand \Tc) \setminus \bd(A)])$.
This is approximately equivalent to bounding the well-linked number.

Let us start by discussing how to even check whether a given set $A$ satisfies these properties.
In our data structure, we maintain $\torso(r)$ explicitly, so (1) is easy enough to check in $\OO_{H,\cTwMod}(1)$ time.
However, checking (2) is not easy, as the internal treewidth involves information not captured by $\torso(r)$.
At this point, we note that thanks to the techniques from~\cite{Korhonen_2025}, we are able to maintain dynamic programming procedures on the subtrees of $\Tc$ below the root.
In particular, since such subtrees correspond to tree decompositions of bounded width, we can use powerful bounded-treewidth machinery to capture information about them.
This is used for the application of our data structure for kernelization, but we also use it here to figure out the internal treewidth of $A \triangleright \Tc$.
In particular, we maintain a modified version of the Bodlaender-Kloks dynamic programming procedure for computing treewidth~\cite{Bodlaender_Kloks_1996} on the subtrees, and when given $A$, combine the information from the subtrees corresponding to $A$ to compute the internal treewidth of $A \triangleright \Tc$.

Now we know how to efficiently check if a given set $A$ is suitable, but still, there are too many candidates to be brute-forced.
Let us sketch an $\OO_{H,\cTwMod}(|E(\torso(r))|)$ time algorithm.
A set $B \subseteq E(\torso(r))$ is \emph{internally connected} if all vertices in $V(B)$ can reach each other by paths through hyperedges in $B$ but not containing vertices from $\bd(B)$ as internal vertices.
Internally connected sets are useful because for parameters $k$ and $s$ and a hyperedge $e$, there are at most $s^k$ internally connected sets $B$ with $|B| \le s$, $\lambda(B) \le k$, and $e \in B$, and they can be listed via a local-search procedure in $s^{\OO(k)}$ time.

Now, a suitable set $A$ can be uniquely partitioned into internally connected sets $A_1, \ldots, A_h$, so that $\bd(A_i) \subseteq \bd(A)$.
By possibly shrinking $|A|$ by a factor of $2^{\lambda(A)}$, which is fine by the slack for approximation, we can assume that $\bd(A_i) = \bd(A)$ for all $A_i$.
We also observe that the internal treewidth of $A \triangleright \Tc$ is equal to the maximum internal treewidth of $A_i \triangleright \Tc$.
Now, this type of $A$ is easy enough to find in $\OO_{H,\cTwMod}(|E(\torso(r))|)$ time by first enumerating all internally connected sets with specified size and boundary size, checking their internal treewidth, and then grouping them by their boundary.
Furthermore, by using the fact that each hyperedge participates in only $\OO_{H,\cTwMod}(1)$ such internally connected sets, we can design a dynamic implementation of this algorithm, that works in $\OO_{H,\cTwMod}(\log |E(\torso(r))|)$ time per update to $\torso(r)$.

\subsection{Dynamic kernelization}
The application of our data structure to kernelization follows the high-level idea of protrusion replacement, which was pioneered by Bodlaender et al.~\cite{Bodlaender_Fomin_Lokshtanov_Penninkx_Saurabh_Thilikos_2016}, and afterwards used in dozens of kernelization algorithms.
However, our implementation of protrusion replacement is non-standard in that we explicitly compute a protrusion decomposition and replace each of its protrusions in one shot.
The typical protrusion replacement implementations work by repeatedly replacing constant-size protrusions, without actually computing a protrusion decomposition.
Nevertheless, this approach of explicitly computing a protrusion decomposition has been used before by Kim, Serna, and Thilikos~\cite{Kim_Serna_Thilikos_2019}, who needed it in the context of kernelization for counting problems.

Let $\mc G$ be a $\cmso$-definable graph class that excludes a topological minor $H$, and $\Pi$ a problem that has FII and is linearly treewidth-bounding on $\mc G$.
For example, we can let $\mc G$ be the planar graphs and $\Pi$ the \textsc{Dominating Set} problem.
Because $\Pi$ is linearly treewidth-bounding in $\mc G$, there exists $\cTwMod$ so that $\optTwMod(G) \le \OO(\opt_{\Pi}(G))$ for all $G \in \mc G$.
Now, it suffices to give a kernelization algorithm parameterized by the parameter $\optTwMod(G)$ instead of $\opt_{\Pi}(G)$.
For this, we apply the data structure of \Cref{theo:main}, initialized with the parameters $H$ and $\cTwMod$.

Let $(T,\Lc)$ be the superbranch decomposition maintained by the data structure, corresponding to a $(\OO_{H,\cTwMod}(\optTwMod(G)), \OO_{H,\cTwMod}(1))$-protrusion decomposition $(T,\bag)$.
For kernelization, it is in fact more instructive to think about $(T,\Lc)$ than about $(T,\bag)$.
For each child $c$ of the root $r$ of $T$, we consider the \emph{boundaried graph} $G_c$, having vertex set $V(G_c) = V(\Lc[c])$, i.e., the union of the hyperedges in $\Lc[c]$, and edge set $E(G_c)$ being the edges of $G$ in $\Lc[c]$, and boundary $\bd(\Lc[c])$.

Now, the idea is to replace the boundaried graph $G_c$ by a \emph{representative} of bounded size.
In particular, we replace $G_c$ by a boundaried graph $R_c$ with the same boundary as $G_c$, but which has bounded size (bounded by a function of the size of the boundary, the problem $\Pi$, and the class $\mc G$).
If we denote by $G'$ the graph obtained after this replacement, we want that $\opt_{\Pi}(G') = \opt_{\Pi}(G) - \Delta_c$, for a non-negative \emph{shifting constant} $\Delta_c$ that can be computed from $G_c$.
We also want that $G' \in \mc G$ if $G \in \mc G$.

The known properties of FII and $\cmso$ imply that such a representative $R_c$ indeed exists (see e.g.~\cite{Bodlaender_Fomin_Lokshtanov_Penninkx_Saurabh_Thilikos_2016}), and can be computed by dynamic programming on the tree decomposition of $G_c$, together with the shifting constant $\Delta_c$.
This dynamic programming can indeed be maintained on the bounded-treewidth subtrees of $(T,\Lc)$ by the dynamic treewidth data structure, so with our data structure we can maintain the representatives $R_c$ and shifting constants $\Delta_c$ for all children $c$ of the root.

Now, the kernel is obtained simply as the union of the boundaried graphs $R_c$ (which may overlap at the boundary vertices).
As these graphs have constant size, and there are at most $\OO_{H,\cTwMod}(\optTwMod(G))$ of them, the size of the kernel is $\OO_{H,\cTwMod}(\optTwMod(G))$.
The final shifting constant $\Delta$ is obtained as the sum of the individual shifting constants $\Delta_c$.

This kernel can be maintained with logarithmic update time and with a constant number of changes to it thanks to the properties guaranteed by the data structure of \Cref{theo:main}.
In particular, it guarantees that one update to the graph causes updates only in a constant number of the subtrees rooted at the children $c$ of the root, so the representatives $R_c$ and shifting constants $\Delta_c$ need to be updated only for them.
Both of them are re-computed in amortized logarithmic update time thanks to the dynamic treewidth data structure we maintain on the subtrees.
\section{Preliminaries}
\label{sec:preliminaries}
In this section, we introduce the definitions and prove some preliminary results. We start with some basic notations.
For two integers $a$ and $b$, we denote by $[a,b]$ the set of all integers $i$ with $a \leq i \leq b$, and by $[a]$ the set $[1,a]$. For a function $f \colon X \to Y$ and a set $Z \subseteq X$, we denote by $f\restriction_Z \colon Z \to Y$ the restriction of $f$ to $Z$.
For a set $S$, we denote by $\binom{S}{2}$ the set of all subsets of $S$ of size exactly two.
The $\OO_{\bar{p}}(\cdot)$-notation, for a tuple of parameters $\bar{p}$, hides factors that depend on $\bar{p}$ and are computable given it.

\subsection{(Hyper)graphs and trees}

\paragraph{Graphs.} In this paper, all graphs are finite, undirected, and, unless otherwise stated, simple. For a graph $G$, we denote by $V(G)$ the set of vertices and by $E(G) \subseteq \binom{V(G)}{2}$ the set of edges. We also denote an edge $e = \{u,v\}$ by $uv$. For a set $A \subseteq E(G)$ of edges, we denote by $V(A) = \bigcup_{uv \in A} \{u,v\}$ the union of their endpoints, and use $V(\{e\}) \eqqcolon V(e)$.
The \emph{size} of a graph $G$ is $|G| = |V(G)|+|E(G)|$.

A \emph{boundaried graph} $(G,B,\Lambda)$ is a graph $G$ together with a set $B \subseteq V(G)$, called the \emph{boundary} of $G$, consisting of distinguished \emph{boundary vertices}, and an injective labeling $\Lambda \colon B \to \ZI$. Slightly abusing the notation, we sometimes refer to $G$ as the \emph{boundaried graph} and then denote by $\bd(G)$ the set of boundary vertices. The \emph{label set} of a boundaried graph with boundary $B$ is denoted by $\labelSet(G) = \{\Lambda(v) \colon v \in B\}$. A graph $G$ is said to be \emph{$t$-boundaried} if $\labelSet(G) \subseteq [t]$.
We denote by $\mc F$ the set of all boundaried graphs, by
$\mc F_I$ the class of all boundaried graphs with label set $I$ (for $I \subseteq \ZI$), and by $\mc F_{\subseteq I}$ the set $\bigcup_{I' \subseteq I} \mc F_{I'}$.

\paragraph{Minors and minor-free graphs.}

For a graph $G$ and an edge $e = uv \in E(G)$, we denote by $G/e$ the graph obtained from $G$ by \emph{contracting} the edge $e$, that is, by identifying $u$ and $v$ and, if necessary, removing loops and multiple edges. A graph $H$ is a \emph{contraction} of $G$ if it can be obtained from $G$ via zero or more edge contractions. A graph $H$ is a \emph{minor} of $G$ if it is a contraction of a subgraph of $G$. If each contracted edge has at least one endpoint with degree at most two, $H$ is called a \emph{topological minor}.
A graph $G$ is called \emph{$H$-(topological-)minor-free} if $H$ is not a (topological) minor of $G$.
A graph class $\mc G$ is called \emph{$H$-(topological-)minor-free} if every graph $G \in \mc G$ is $H$-(topological-)minor-free, and just \emph{(topological-)minor-free} if such $H$ exists.
In these cases can also say that $G$/$\mc G$ \emph{excludes} $H$ as a (topological) minor.

A particularly interesting subclass of minor-free graphs is the class of apex-minor-free graphs.
A graph $G$ is called an \emph{apex graph} if there exists a vertex $v \in V(G)$ such that $G-v$ is planar. A graph class $\mc G$ is called \emph{apex-minor-free} if there exists an apex graph $H$ such that $\mc G$ is $H$-minor-free.
For example, planar graphs are apex-minor-free, because they exclude $K_5$, which is an apex graph, because $K_4$ is a planar graph.
More generally, graphs of bounded Euler genus are apex-minor-free (see e.g.~\cite{DBLP:journals/algorithmica/Eppstein00}).

\paragraph{Hypergraphs.} Following the definitions of~\cite{Korhonen_2025}, a hypergraph $G$ consists of a set of vertices $V(G)$, a set of hyperedges $E(G)$, and a mapping $V \colon E(G) \to 2^{V(G)}$ that associates each hyperedge with a set of vertices.
For a set $A$ of hyperedges, we denote by $V(A) = \bigcup_{e \in A} V(e)$ the union of their vertex sets.
The \emph{size} of a hypergraph $G$ is $|G| = |V(G)| + \sum_{e \in E(G)} (|V(e)|+1)$.
The \emph{rank} of a hyperedge $e$ is $|V(e)|$, and the \emph{rank} $\rank(G)$ of a hypergraph $G$ is the maximum rank over all its hyperedges. We allow distinct hyperedges $e_1, e_2 \in E(G)$ with $V(e_1) = V(e_2)$, and call the \emph{multiplicity} of an hyperedge $e$ the number of hyperedges $e' \in E(G)$ with $V(e') = V(e)$. This counts $e$ itself, so the multiplicity of a hyperedge is always at least one. The \emph{multiplicity} of a hypergraph is then the maximum multiplicity over all its hyperedges. For a set $E \subseteq E(G)$, we also say the \emph{multiplicity} of $E$ is the maximum multiplicity of all hyperedges in $E$.

The \emph{primal graph} $\Pc(G)$ of a hypergraph $G$ is the graph with vertex set $V(\Pc(G)) = V(G)$ that has an edge between two vertices $u,v \in V(\Pc(G))$ if and only if there exists a hyperedge $e \in E(G)$ with both $u$ and $v$ in $V(e)$. For a graph $G$, the \emph{support hypergraph} $\Hc(G)$ is the hypergraph with vertex set $V(\Hc(G)) = V(G)$ and hyperedge set $E(\Hc(G)) = V(G) \cup E(G)$ such that for every vertex $v \in V(G)$, $V(v) = \{v\}$, and for every edge $uv \in E(G)$, $V(uv) = \{u,v\}$. Note that $\Pc(\Hc(G)) = G$, $\Hc(G)$ has rank at most two and multiplicity one, and $|\Hc(G)| \le \OO(|G|)$.

We call a (hyper)graph \emph{empty} if it contains no vertices and no (hyper)edges.
Given a vertex $v \in V(G)$ of a (hyper)graph $G$, we denote by $N(v)$ the set of neighbors of $v$, that is, all vertices $u\neq v$ for which there exists a (hyper)edge $e\in E(G)$ with $u,v\in V(e)$.
A vertex $v$ is \emph{isolated} if $N(v) = \emptyset$.
Denote by $N[v] = N(v) \cup \{v\}$ the closed neighborhood of $v$, and by $\incidences(v)$ the set of (hyper)edges that are incident to $v$.
For a set $A \subseteq V(G)$ of vertices or $A \subseteq E(G)$ of (hyper)edges of a (hyper)graph $G$, we denote by $G[A]$ the sub(hyper)graph of $G$ induced by $A$. More precisely, when $A \subseteq E(G)$, we define $G[A]$ as the graph with vertex set $V(A)$ and (hyper)edge set $A$.
We use $G \setminus A = G[V(G) \setminus A]$ or $G \setminus A = G[E(G) \setminus A]$, respectively.
Similarly, for a set $A$ of vertices not in $V(G)$ or a set $A$ of (hyper)edges not in $E(G)$ but with endpoints in $V(G)$, we denote by $G \cup A$ the (hyper)graph obtained by adding $A$ to $V(G)$ or $E(G)$.
For two (hyper)graphs $G_1, G_2$, we denote by $G_1 \cup G_2$ the (hyper)graph $G_1 \cup G_2 = (V(G_1) \cup V(G_2), E(G_1) \cup E(G_2))$.

For a set $A \subseteq E(G)$ of (hyper)edges of a (hyper)graph $G$, we denote its complement by $\overline A = E(G) \setminus A$. Then, we denote by $\bd(A) = V(A) \cap V(\overline A)$ the set of \emph{boundary vertices}, by $\lambda(A) = |\bd(A)|$ the number of boundary vertices, and by $\inter(A) = V(A) \setminus \bd(A) = V(A) \setminus V(\overline A)$ the set of \emph{internal vertices} of $A$. As proven by, for example,~\cite{Robertson_Seymour_1991}, the function $\lambda \colon 2^{E(G)} \to \ZO$ is a \emph{symmetric submodular function}, meaning that
\begin{itemize}
    \item $\lambda(A \cup B) + \lambda(A \cap B) \leq \lambda(A) + \lambda(B)$ for all $A,B \in E(G)$ (submodularity) and
    \item $\lambda(A) = \lambda(\overline A)$ for all $A \subseteq E(G)$ (symmetry).
\end{itemize}

Following the definition of~\cite{Korhonen_2024}, a non-empty set $A \subseteq E(G)$ of edges in a hypergraph $G$ is called \emph{internally connected} if there is no \emph{bipartition} $(B_1,B_2)$ of $A$, i.e. $B_1 \cup B_2 = A$ and $B_1 \cap B_2 = \emptyset$, into two non-empty sets $B_1$ and $B_2$ with $\bd(B_i) \subseteq \bd(A)$ for both $i \in [2]$.
A non-empty set $A$ of hyperedges is called \emph{internally disconnected} if it is not internally connected. An \emph{internal component} of a set of hyperedges $A' \subseteq E(G)$ is then defined as an inclusion-maximal internally connected subset $A \subseteq A'$.
Alternatively, one can think of internal connectivity as follows: a set $A$ is internally connected if either it consists of a single hyperedge, or $\Pc(G)[\inter(A)]$ is non-empty, connected, and for every $e \in A$ it holds that $V(e) \setminus \bd(A) \neq \emptyset$.

\paragraph{Trees.}
A \emph{tree} is an acyclic connected graph.
To better distinguish from graphs, we sometimes call the vertices of a tree \emph{nodes}.
A \emph{rooted} tree is a tree $T$, together with a \emph{root} $r \in V(T)$.
For the following definitions, let $T$ be a tree rooted at $r$, and $t \in V(T)$ a node.

The \emph{root path} of $t$ is the unique path between $t$ and $r$.
If $t \neq r$, the \emph{parent} of $t$ is the unique node adjacent to $t$ on the root path of $t$.
A node is a \emph{child} of $t$ if it is adjacent to $t$ and not on the root path of $t$.
We denote by $\chd(t)$ the set of children of $t$, and by $\Delta(t) = |\chd(t)|$ the number of children of $t$.

For a set $X\subseteq V(T)$, we denote by $\Delta(X)=\max_{t'\in T} \Delta(t')$ the maximum number of children of any node in $X$.
We let $\Delta(T)=\Delta(V(T))$, and say that $T$ is \emph{binary} if $\Delta(T) \leq 2$.

The \emph{depth} of $t$ is the number of edges on the root path of $t$.
In particular, the depth of the root is always zero.
The \emph{depth} of $T$, denoted by $\depth(T)$, is the maximum depth over all nodes of $T$.

A \emph{leaf} is a node with no children, and an \emph{internal node} is a node that is not a leaf. 
We denote by $\leaves(T)$ the set of leaves of $T$ and by $\Vint(T) = V(T) \setminus \leaves(T)$ the set of all internal nodes of $T$.
We denote by $\cl(t)=\chd(t)\cap L(T)$ the set of \emph{leaf-children} of $t$, and for a node $\ell \in \cl(t)$ we refer to $\ell$ as a \emph{leaf-child} of $t$.

A node $t$ is an \emph{ancestor} of a node $s$, and $s$ is a \emph{descendant} of $t$, if $t$ is on the root path of $s$.
In particular, every node is both an ancestor and a descendant of itself.
The set of ancestors of a node $t$ is denoted by $\anc(t)$, and the set of descendants be $\desc(t)$.
For a set $X \subseteq V(T)$, we denote by $\anc(X) = \bigcup_{t \in X} \anc(t)$ the set of all ancestors of vertices in $X$.
A \emph{prefix} of a rooted tree is a set $P \subseteq V(T)$ with $P = \anc(P)$, i.e., a connected set of nodes that contains the root.

\subsection{(Hyper)graph decompositions}\label{sec:def_decompositions}

We consider different types of decompsitions of (hyper)graphs.

\paragraph{Tree decompositions.}

A \emph{tree decomposition} of a graph $G$ is a pair $\Tc = (T, \bag)$, where $T$ is a tree and $\bag \colon V(T) \to 2^{V(G)}$ is a function mapping each node $x$ of $T$ to a \emph{bag} of vertices such that:
\begin{itemize}
    \item $V(G) = \bigcup_{x \in V(T)} \bag(x)$.
    \item For every edge $uv \in E(G)$, there exists a node $x \in V(T)$ such that $\bag(x)$ contains both $u$ and $v$.
    \item For every vertex $v \in V(G)$, the set $\{x \in V(T) \mid v \in \bag(x)\}$ induces a connected subtree of $T$.
\end{itemize}
The \emph{width} of a tree decomposition $\Tc$ is the maximum size of a bag minus one. The \emph{treewidth} $\tw(G)$ of a graph $G$ is the minimum width of a tree decomposition of $G$. For a hypergraph $G$, the \emph{treewidth} $\tw(G)$ of $G$ is defined as the treewidth of the primal graph of $G$, i.e., $\tw(G) = \tw(\Pc(G))$. For a set $A$ of (hyper)edges of a (hyper)graph $G$, we define the \emph{treewidth} $\tw(A) = \tw(G[A])$, and the \emph{internal treewidth} $\itw(A) = \tw(G[\inter(A)])$. Similarly, for a boundaried graph $G$, we define the \emph{internal treewidth} to be $\itw(G) = \tw(G \setminus \bd(G))$.

For a graph $G$ and an integer $\cTwMod$, a set $X \subseteq V(G)$ is called a \emph{treewidth-$\cTwMod$-modulator} if $\tw(G \setminus X) \leq \cTwMod$.
We denote the size of a smallest treewidth-$\cTwMod$-modulator of $G$ by $\optTwMod(G)$.

A \emph{rooted tree decomposition} is a tree decomposition $\Tc = (T,\bag)$, where the tree $T$ is a rooted tree. For a rooted tree decomposition $\Tc = (T,\bag)$, given a node $t$ and its parent $p$, we call the set $\adh(tp) = \bag(t) \cap \bag(p)$ the \emph{adhesion} of the edge $tp$. 
An \emph{annotated tree decomposition} of a graph $G$ is a triple $(T,\bag,\edges)$ so that
\begin{itemize}
    \item $T$ is a binary tree,
    \item $\bag \colon V(T) \to 2^{V(G)}$ is a function so that $(T,\bag)$ is a tree decomposition of $G$, and
    \item $\edges \colon V(T) \to 2^{E(G)}$ is a function that maps each node $t \in V(T)$ to the set $\edges(t)$ that contains the edges $uv \in E(G)$, for which $t$ is the unique smallest-depth node with $u,v \in \bag(t)$.
\end{itemize}

Note that $G$ and $(T,\bag)$ define the annotated tree decomposition $(T,\bag,\edges)$ uniquely, and conversely, $(T,\bag,\edges)$ defines $G$.
Given a set $A \subseteq V(T)$, the restriction of $(T,\bag,\edges)$ to $A$, denoted by $(T,\bag,\edges)\restriction_A$, is the tuple $(T[A],\bag\restriction_A,\edges\restriction_A)$, where $\bag\restriction_A$ and $\edges\restriction_A$ are the restrictions of $\bag$ and $\edges$ to $A$, respectively.

Let $G$ be a graph together with an annotated tree decomposition $\Tc = (T, \bag,\edges)$ of width $t-1$. For a node $x \in V(T)$, we denote by $T_x$ the subtree of $T$ rooted at $x$, $V_x = \bigcup_{y \in \desc(x)} \bag(y)$, $E_x = \bigcup_{y \in \desc(x)} \edges(y)$, and $G_x = (V_x,E_x)$.
Then, the triple $(G_x,\bag(x),\Lambda)$ is a $t$-boundaried graph for any injective labeling $\Lambda \colon \bag(x) \to [t]$.
Given a boundaried graph $(G,B,\Lambda)$, a \emph{boundaried tree decomposition} of $(G,B,\Lambda)$ is a rooted tree decomposition, where $B$ is contained in the root bag.

A rooted tree decomposition $\Tc = (T,\bag)$ of a graph $G$ is \emph{nice} if every node $t \in V(T)$ is one of the following types:
\begin{itemize}
    \item \textbf{Leaf:} $t$ is a leaf of $T$,
    \item \textbf{Forget:} $t$ has exactly one child $s$ and there is a vertex $v \in V(G)$ (called the \emph{forgotten vertex}) such that $\bag(t) = \bag(s) \setminus \{v\}$,
    \item \textbf{Introduce:} $t$ has exactly one child $s$ and there is a vertex $v \in V(G)$ (called the \emph{introduced vertex}) such that $\bag(t) = \bag(s) \cup \{v\}$, or
    \item \textbf{Join:} $t$ has exactly two children $s_1,s_2$ and $\bag(t) = \bag(s_1) = \bag(s_2)$.
\end{itemize}

\paragraph{Protrusion decompositions.}
For integers $k$ and $c$, a \emph{$(k,c)$-protrusion decomposition} is a rooted tree decomposition $\Tc = (T,\bag)$ such that
\begin{itemize}
    \item $|\bag(r)| \leq k$,
    \item $\Delta(r) \leq k$, and
    \item $|\bag(t)| \leq c$ for every non-root node $t \in V(T)\setminus\{r\}$.
\end{itemize}
For each root child $t \in \chd(r)$, the tree decomposition $(T_t, \bag\restriction_{V(T_t)})$ is called a \emph{protrusion}.



\paragraph{Well-linkedness.} Let $G$ be a hypergraph. A set $A \subseteq E(G)$ of hyperedges is called \emph{well-linked} if for every bipartition $(A_1,A_2)$ of $A$, it holds that $\lambda(A_1) \geq \lambda(A)$ or $\lambda(A_2) \geq \lambda(A)$. Equivalently, a set $A \subseteq E(G)$ is well-linked if for any two sets $B_1,B_2 \subseteq \bd(A)$ of the same size, possibly overlapping, there are $|B_1| = |B_2|$ vertex-disjoint paths in $\Pc(G[A])$ between $B_1$ and $B_2$.
The \emph{well-linked number} $\wl(G)$ of a hypergraph $G$ (or of a set $E \subseteq E(G)$) is the largest integer $k$ so that there is a well-linked set $A \subseteq E(G)$ (or $A \subseteq E$) with $\lambda(A) = k$.

\begin{lemma}[\cite{Robertson_Seymour_1995}]\label{lem:well_linked_number_treewidth}
    For every graph $G$, $\wl(\Hc(G)) \leq 3 \cdot (\tw(G) + 1)$.
\end{lemma}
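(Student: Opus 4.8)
The plan is to reduce to the classical relationship between well-linked \emph{vertex} sets and treewidth, exploiting the equivalent formulation of well-linkedness recalled just above.

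First I would fix a well-linked set $A \subseteq E(\Hc(G))$ of hyperedges with $\lambda(A) = \wl(\Hc(G)) =: k$, and put $Z := \bd(A)$, so $|Z| = k$. Since $\Hc(G)$ has rank at most two, its singleton hyperedges contribute no primal edges, so the primal graph $H := \Pc(\Hc(G)[A])$ has vertex set $V(A) \subseteq V(G)$ and edge set $\{uv \in E(G) : uv \in A\} \subseteq E(G)$; in particular $H$ is a subgraph of $G$, so $\tw(H) \le \tw(G)$, and it suffices to prove $k \le 3(\tw(H)+1)$. By the equivalent characterization of well-linkedness, $Z$ is a well-linked set of $k$ vertices of $H$, i.e.\ any two equal-size $B_1,B_2 \subseteq Z$ are joined by $|B_1|$ vertex-disjoint paths in $H$.

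The heart of the argument is then the classical fact of Robertson and Seymour~\cite{Robertson_Seymour_1995} that a graph admitting a well-linked set of $k$ vertices has treewidth at least $\lceil k/3 \rceil - 1$. I would derive it by building, from $Z$, a tangle of order $\lceil k/3 \rceil$ in $H$: orient every separation $(C_1,C_2)$ of $H$ of order less than $\lceil k/3 \rceil$ towards the unique side that contains more than $k/3$ of the vertices of $Z$. This is well-defined, because a separation of order $s < k/3$ cannot have $s+1$ vertices of $Z$ strictly on both sides -- Menger's theorem would then contradict the $(s+1)$-fold linkage of $Z$ -- whereas the two strict sides jointly hold more than $2k/3$ of $Z$, so exactly one of them exceeds $k/3$; the same path-counting verifies the tangle axioms, in particular that three ``small'' sides of low-order separations cannot cover $V(H)$. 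The duality between tangles and branchwidth, together with $\mathrm{bw}(H) \le \tw(H)+1$, then yields $\tw(H) \ge \lceil k/3\rceil - 1$. (A more elementary bramble -- the connected vertex sets meeting $Z$ in more than $2k/3$ vertices are pairwise touching -- also works, but only with a worse multiplicative constant, so some care is needed here.)

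Finally, $\tw(H) \ge \lceil k/3\rceil - 1$ rearranges to $\lceil k/3\rceil \le \tw(H)+1$, whence $k \le 3\lceil k/3\rceil \le 3(\tw(H)+1) \le 3(\tw(G)+1)$, as desired. I expect the main obstacle to be extracting exactly the constant $3$: naive balanced-separator arguments applied to a single ``heavy'' side of $Z$ lose a factor, so one must orient consistently toward the heavy side rather than split it, and verifying the tangle axioms from well-linkedness is the one genuinely non-routine ingredient -- although for the present purposes it suffices to invoke~\cite{Robertson_Seymour_1995} directly.
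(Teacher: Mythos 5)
The paper does not actually supply a proof of this lemma; it simply cites Robertson and Seymour~\cite{Robertson_Seymour_1995}, so there is no internal argument to compare against. Your sketch is a sound way to cash out that citation: the reduction from edge well-linkedness of $A$ in $\Hc(G)$ to vertex well-linkedness of $Z = \bd(A)$ in the primal subgraph $H = \Pc(\Hc(G)[A]) \subseteq G$ is exactly right (singleton hyperedges indeed contribute no primal edges, so $\tw(H) \le \tw(G)$, and the paper's stated equivalent characterization of well-linkedness gives the linkage property of $Z$ in $H$ directly), and invoking the classical well-linked-set-to-treewidth bound with constant $3$ finishes it.

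One small wrinkle in the tangle sketch: you claim ``exactly one of the two strict sides exceeds $k/3$ of $Z$.'' What the Menger argument actually gives is that for a separation of order $s < \lceil k/3\rceil$, at most one strict side can carry more than $s$ vertices of $Z$; combined with the count $\ge k-s$ on the strict sides this forces the heavy strict side above $k - 2s > k/3$ and the light one below $s < k/3$, so your conclusion is correct but it comes out of the ``$>s$ on one side only'' dichotomy rather than directly from the ``$>k/3$'' threshold (which in particular does \emph{not} cleanly partition the non-strict sides $C_1$ and $C_2$, since the light side plus the separator can exceed $k/3$). This does not affect the plan, since you ultimately defer the tangle axioms to~\cite{Robertson_Seymour_1995} anyway; just be aware that the cleaner invariant to orient by is ``more than $s$ strict vertices of $Z$,'' not ``more than $k/3$.''
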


\paragraph{Superbranch decompositions.}

A \emph{superbranch decomposition} of a hypergraph $G$ is a pair $\Tc = (T,\Lc)$, where $T$ is a tree, whose every internal node has degree at least three and $\Lc \colon L(T) \to E(G)$ is a bijection from the leaves of $T$ to the edges of $G$. A \emph{rooted superbranch decomposition} is a superbranch decomposition $\Tc = (T,\Lc)$, where $T$ is a rooted tree. For a node $t \in V(T)$ of a rooted superbranch decomposition $\Tc = (T,\Lc)$ of a hypergraph $G$, we denote by $\Lc[t] \subseteq E(G)$ the set of hyperedges that are mapped to leaves in $\leaves(T)$. For each edge $tp$, where $p$ is the parent of $t$, we call the set $\adh(tp) \coloneq \bd(\Lc[t])$ the \emph{adhesion} at $tp$. The maximum size of an adhesion of $\Tc$ is denoted by $\adhsize(\Tc)$. We say that $\Tc$ has \emph{adhesion size} $c$ if $\adhsize(\Tc) \leq c$. A rooted superbranch decomposition of a hypergraph $G$ is \emph{downwards well-linked} if for every node $t \in V(T)$ the set $\Lc[t] \subseteq E(G)$ is well-linked in $G$. As $\adh(tp) = \bd(\Lc[t])$ for each node $t$ with parent $p$, we have $\adhsize(\Tc) \leq \wl(G)$.

For a rooted superbranch decomposition $\Tc = (T,\Lc)$ or a rooted tree decomposition $\Tc = (T,\bag)$, and an internal node $t \in \Vint(T)$, we define the \emph{torso} of $t$ to be the labeled hypergraph $\torso(t)$ with
\begin{itemize}
    \item $E(\torso(t)) = \{e_s \mid st \in E(T)\}$,
    \item $V(e_s) = \adh(st)$ for each $e_s \in E(\torso(t))$, and
    \item $V(\torso(t)) = \bigcup\limits_{e_s \in E(\torso(t))} V(e_s)$.
\end{itemize}

We say that the hyperedge $e_s \in E(\torso(t))$ \emph{corresponds} to the respective node $s$.
Consider a set $A \subseteq E(\torso(t))$ that does not contain the hyperedge $e_p$ corresponding to the edge $tp \in E(T)$, where $p$ is the parent of $t$. Then, $A$ corresponds to a set of children $\mathcal{C}_A = \{c \in \chd(t) \mid e_c \in A\}$ of $t$, which in turn corresponds to a set $\bigcup_{c \in \mathcal{C}_A} \Lc[c]$ of hyperedges of $G$. We denote this set by $A \expand \Tc$. As shown by~\cite{Korhonen_2025}, each well-linked set in the torso of a node of a superbranch decomposition of a hypergraph $G$ corresponds to a well-linked set in $G$.

\begin{restatable}{lemma}{wellLinkedTrans}\label{lem:well_linked_torso}
    Let $G$ be a hypergraph and $\Tc = (T,\Lc)$ a rooted superbranch decomposition of $G$ with root $r$. Let also $t \in V_{\inter}(T)$ be a node, so that $\Lc[c]$ is well-linked for every child $c$ of $t$. Let $C \subseteq E(\torso(t))$ be the set of hyperedges corresponding to the children of $t$. Then, a set $A \subseteq C$ is well-linked in $\torso(t)$ if and only if $A \expand \Tc$ is well-linked in $G$.
\end{restatable}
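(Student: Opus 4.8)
The plan is to prove the two directions of the equivalence separately, and in both directions to reduce the statement about $G$ to a statement about the torso by carefully tracking what the boundary $\bd(\cdot)$ of a subset of hyperedges looks like in each setting. The crucial structural fact, which should be extracted first as an auxiliary observation, is a correspondence between $\bd$ in the torso and $\bd$ in $G$: for any $A \subseteq C$, writing $B = A \expand \Tc = \bigcup_{c \in \mathcal C_A} \Lc[c]$, we want $\bd_{\torso(t)}(A) = \bd_G(B)$, or at least that these two sets are closely enough related (both equal to the set of vertices appearing in $V(A)$ that also appear in adhesions $\adh(st)$ for $s \notin \mathcal C_A$, i.e.\ in $V(\torso(t) \setminus A)$). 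This identity uses that $\Lc[t] = \bigcup_{c \in \chd(t)} \Lc[c]$ partitions the hyperedges below $t$, that $V(e_c) = \adh(ct) = \bd(\Lc[c])$ by definition of the torso, and that a vertex lies in $\bd_G(B)$ iff it is incident both to some hyperedge inside some $\Lc[c]$ with $c \in \mathcal C_A$ and to some hyperedge outside $B$ — and such an ``outside'' hyperedge can be routed (using downwards well-linkedness / well-linkedness of each $\Lc[c]$) through the adhesions. This is essentially the ``transitivity of well-linkedness'' alluded to in the overview, and it is the place where the hypothesis that $\Lc[c]$ is well-linked for every child $c$ is used: without it, the boundary of $B$ inside $G$ could fail to be captured by the adhesions.

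Given the boundary identity, the first direction — if $A$ is well-linked in $\torso(t)$ then $A \expand \Tc$ is well-linked in $G$ — goes as follows. Take any bipartition $(B_1, B_2)$ of $B = A \expand \Tc$. I want to find a bipartition $(A_1, A_2)$ of $A$ with $\lambda_{\torso(t)}(A_i) \le \lambda_G(B_i)$ for $i \in \{1,2\}$ (after possibly ``rounding'' $B_i$ to respect the partition of $B$ into the pieces $\Lc[c]$). The natural idea is: first use submodularity together with the well-linkedness of each individual $\Lc[c]$ to argue that we may assume $(B_1,B_2)$ refines the partition $\{\Lc[c] : c \in \mathcal C_A\}$, i.e.\ each $\Lc[c]$ lies entirely in $B_1$ or entirely in $B_2$ — replacing $B_1$ by $B_1 \cup \Lc[c]$ or $B_1 \setminus \Lc[c]$ never increases $\max(\lambda(B_1),\lambda(B_2))$ by the uncrossing property of well-linked sets quoted in the overview. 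Once $(B_1,B_2)$ respects the pieces, it induces a genuine bipartition $(A_1,A_2)$ of $A$, and the boundary identity (applied to $A_1$ and to $A_2$) gives $\lambda_{\torso(t)}(A_i) = \lambda_G(B_i)$. Since $A$ is well-linked in $\torso(t)$, one of these is $\ge \lambda_{\torso(t)}(A) = \lambda_G(B)$, which is exactly what well-linkedness of $B$ in $G$ demands.

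The converse direction — if $A \expand \Tc$ is well-linked in $G$ then $A$ is well-linked in $\torso(t)$ — is the easier one and I would do it by contraposition: given a bipartition $(A_1,A_2)$ of $A$ witnessing that $A$ is not well-linked in $\torso(t)$, i.e.\ $\lambda_{\torso(t)}(A_1) < \lambda_{\torso(t)}(A)$ and $\lambda_{\torso(t)}(A_2) < \lambda_{\torso(t)}(A)$, set $B_i = A_i \expand \Tc$. This is automatically a bipartition of $B$ respecting the pieces, so the boundary identity gives $\lambda_G(B_i) = \lambda_{\torso(t)}(A_i) < \lambda_{\torso(t)}(A) = \lambda_G(B)$ for both $i$, contradicting well-linkedness of $B$ in $G$; no uncrossing is needed here. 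The main obstacle throughout is establishing the boundary identity $\bd_{\torso(t)}(A) = \bd_G(A \expand \Tc)$ cleanly and handling the bookkeeping around vertices that lie in several adhesions or in $\bd(\Lc[t])$ itself (these behave uniformly since they are boundary on both sides), and making the ``may assume the bipartition respects the pieces'' reduction fully rigorous via submodularity; once that is in place, both implications are short. I would also note that this is precisely the statement proved (in a possibly more general form) in~\cite{Korhonen_2025}, so an alternative is to cite it directly, but the self-contained argument above is the one I would write out.
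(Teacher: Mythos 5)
Your approach is genuinely different from the paper's. The paper proves this lemma by a short reduction to Lemma~4.3 of~\cite{Korhonen_2025}: it adds a dummy rank-zero hyperedge $e_\perp$ together with a dummy leaf-child of the root, turning the rooted superbranch decomposition into an ``$e_\perp$-rooted'' one, observes that this changes neither well-linkedness in $G$ nor the torsos of non-root nodes, and then invokes the cited transitivity lemma verbatim. You instead outline a self-contained direct proof built on a boundary identity plus uncrossing. The plan is sound and is essentially how the cited lemma is proved; it trades an external citation for more work. Two points, however, need correction before this writes out cleanly.

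First, a conceptual misattribution: you locate the use of ``$\Lc[c]$ is well-linked for every child $c$'' inside the boundary identity $\bd_{\torso(t)}(A) = \bd_G(A \expand \Tc)$, claiming that without it ``the boundary of $B$ inside $G$ could fail to be captured by the adhesions.'' In fact this identity holds for an arbitrary superbranch decomposition, with no well-linkedness hypothesis at all: it follows purely from the definitions, using disjointness of the sets $\Lc[c]$, the fact that $V(e_c) = \adh(ct) = \bd(\Lc[c])$, and the observation that a vertex lying in both $V(\Lc[c])$ and $V(\Lc[c'])$ for $c \neq c'$ necessarily lies in $\bd(\Lc[c]) \cap \bd(\Lc[c'])$. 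The well-linkedness of the children is used only in the uncrossing step of the forward direction; your converse direction, which avoids uncrossing, correctly uses no well-linkedness of the children at all, and this is worth stating explicitly rather than leaving the hypothesis attributed to the wrong place.

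Second, and this is the genuine gap: the uncrossing step as written does not follow from what you cite. You assert that replacing $B_1$ by $B_1 \cup \Lc[c]$ or by $B_1 \setminus \Lc[c]$ ``never increases $\max(\lambda(B_1),\lambda(B_2))$ by the uncrossing property of well-linked sets quoted in the overview,'' but that property (the paper's Lemma~4.4) is one-sided: it controls only the side you explicitly union or subtract $\Lc[c]$ from, saying nothing about what happens to the complementary part $B_2 \setminus \Lc[c]$ or $B_2 \cup \Lc[c]$. What you actually need is a two-sided uncrossing: with $W = \Lc[c]$ well-linked and $(B_1,B_2)$ a bipartition of $B \supseteq W$, one of the two uncrossings decreases (or preserves) \emph{both} $\lambda(B_1)$ and $\lambda(B_2)$. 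This is true but needs its own short argument: since $(W\cap B_1, W\cap B_2)$ bipartitions $W$, well-linkedness gives, say, $\lambda(W\cap B_1)\ge\lambda(W)$; submodularity on $(B_1,W)$ then yields $\lambda(B_1\cup W)\le\lambda(B_1)+\lambda(W)-\lambda(W\cap B_1)\le\lambda(B_1)$, and submodularity on $(\overline{B_2},W)$ together with symmetry and $\overline{B_2}\cap W = W\cap B_1$ yields $\lambda(B_2\setminus W)\le\lambda(B_2)+\lambda(W)-\lambda(W\cap B_1)\le\lambda(B_2)$. You must supply this derivation (or an equivalent); citing the one-sided lemma alone leaves a hole in the forward direction.
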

\begin{proof}
    This is a straightforward generalization of~\cite[Lemma 4.3]{Korhonen_2025}. For the sake of completeness, the full proof can be found in \Cref{sec:missingproofs}.
\end{proof}



We observe that there is a natural connection between superbranch and tree decompositions.
Let $G$ be a hypergraph and let $\Tc = (T, \Lc)$ be a superbranch decomposition of $\Hc(G)$ rooted at $r \in V(T)$. Let $\tilde \Tc = (\tilde T, \bag)$ be a rooted tree decomposition of $G$ with root $r \in V(\tilde T)$ (with a slight abuse of notation, we denote some nodes of $T$ and $\tilde T$ with the same symbol, to simplify things later).
We say that $\tilde \Tc$ \emph{corresponds} to $\Tc$ if
\begin{itemize}
    \item $\bag(r) = V(\torso(r))$,
    \item $\chd_\Tc(r) = \chd_{\tilde \Tc}(r)$, and
    \item for each $t \in \chd(r)$, we have $V(\Lc[t]) = \bigcup_{t' \in \desc_{\tilde \Tc}(t)} \bag(t')$.
\end{itemize}

A protrusion decomposition $\Tc = (T,\bag)$ of a graph $G$ with root $r$ is called \emph{normal} if for every hyperedge $e \in E(\Hc(G))$ there is a node $t \in V(T) \setminus \{r\}$ with $V(e) \subseteq \bag(t)$.
We now show that every protrusion decomposition that corresponds to a superbranch decomposition is normal.

\begin{lemma}\label{lem:corresponding-is-normal}
    Let $G$ be a graph and let $\tilde \Tc = (\tilde T, \bag)$ be a protrusion decomposition of $G$ corresponding to a superbranch decomposition $\Tc = (T, \Lc)$ of $\Hc(G)$.
    Then $\tilde \Tc$ is normal.
\end{lemma}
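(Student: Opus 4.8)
The plan is to fix an arbitrary hyperedge $e \in E(\Hc(G))$ and exhibit a node $t \in V(\tilde T) \setminus \{r\}$ with $V(e) \subseteq \bag(t)$. The first step is to use the superbranch decomposition $\Tc$ to locate the child of $r$ ``responsible'' for $e$. Since the correspondence requires $\torso(r)$ to be defined, $r$ is an internal node of $T$; hence the unique leaf $\ell_e$ of $T$ with $\Lc(\ell_e) = e$ is distinct from $r$, and it lies in $\desc_\Tc(c)$ for a unique child $c \in \chd_\Tc(r)$, so that $e \in \Lc[c]$. By the definition of correspondence, $c$ is also a child of $r$ in $\tilde T$, and
\[
V(e) \;\subseteq\; V(\Lc[c]) \;=\; \bigcup_{t' \in \desc_{\tilde\Tc}(c)} \bag(t');
\]
moreover every $t'$ in this union is a descendant of the child $c$ of $r$, hence satisfies $t' \neq r$.

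I would then finish by a short case analysis on $|V(e)|$, recalling that $\Hc(G)$ has rank at most two. If $V(e) = \{v\}$ is a singleton, which is exactly the case of a vertex-hyperedge, then the displayed inclusion already yields some $t' \in \desc_{\tilde\Tc}(c)$ with $v \in \bag(t')$, and $t' \neq r$ finishes this case. Otherwise $V(e) = \{u,v\}$ with $u \neq v$, so $e$ is an edge of $G$. By the edge-coverage axiom of the tree decomposition $\tilde\Tc$ there is a node $x \in V(\tilde T)$ with $u,v \in \bag(x)$; if $x \neq r$ we take $t := x$ and are done. If $x = r$, then $u,v \in \bag(r)$, while the displayed equality gives $t_u', t_v' \in \desc_{\tilde\Tc}(c)$ with $u \in \bag(t_u')$ and $v \in \bag(t_v')$. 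The set of nodes of $\tilde T$ whose bag contains $u$ is a connected subtree of $\tilde T$ containing both $r$ and $t_u'$; since the child $c$ of $r$ lies on the unique $r$--$t_u'$ path in $\tilde T$, this subtree contains $c$, i.e.\ $u \in \bag(c)$. The same argument gives $v \in \bag(c)$, so $V(e) \subseteq \bag(c)$, and since $c \neq r$ this completes the proof.

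The step I expect to require the most care is this last one: the correspondence guarantees only that $V(e)$ is \emph{covered by the union} of the bags strictly below $c$, not that it lies in a single bag, and bridging this gap genuinely relies on the tree-decomposition axioms of $\tilde\Tc$ --- the edge-coverage axiom to obtain one bag containing $\{u,v\}$, and the connectivity axiom to push $u$ and $v$ into $\bag(c)$ --- rather than on any property of $\Tc$ alone. The remaining steps --- reading off the child $c$ from the leaf $\ell_e$, transferring it to $\tilde T$ via $\chd_\Tc(r) = \chd_{\tilde\Tc}(r)$, and disposing of the singleton hyperedges --- are routine.
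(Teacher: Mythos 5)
Your proof is correct and follows essentially the same route as the paper's: locate the child of the root whose leaf-set contains $e$, transfer it to $\tilde T$ via the correspondence, and use the connectivity axiom to push the endpoints of $e$ into that child's bag. The only cosmetic difference is that you split explicitly on $|V(e)| \in \{1,2\}$, whereas the paper handles both cases together by first reducing to the sub-case $V(e) \subseteq \bag(\tilde r)$; the underlying argument is identical.
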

\begin{proof}
    Let $r$ be the root of $T$ and $\tilde r$ be the root of $\tilde T$.
    Let $e \in E(\Hc(G))$.
    We wish to show that $V(e) \subseteq \bag(\tilde t)$ for some $\tilde t \in V(\tilde T)\setminus\{\tilde r\}$.
    Since $\tilde \Tc$ is a tree decomposition, we are done unless $V(e) \subseteq \bag(\tilde r)$, so assume this.
    Since $r$ is not a leaf, we have $e \in \Lc[t]$ for some $t \in \chd(r)$.
    In particular, $V(e) \subseteq V(\Lc[t])$.
    Since $\tilde \Tc$ corresponds to $\Tc$, we have $V(\Lc[t]) = \bigcup_{\tilde t' \in \desc(\tilde t)} \bag(\tilde t')$ for some $\tilde t \in \chd(\tilde r)$.
    So $V(e) \subseteq \bag(\tilde t')$ for some $\tilde t' \in \desc(\tilde t)$.
    For each $v \in V(e)$, the subgraph of $\tilde T$ induced by the vertices with bags containing $v$ is connected, and contains both $\tilde r$ and a node in $\desc(\tilde t)$, so in particular, it contains $\tilde t$.
    So $V(e) \subseteq \bag(\tilde t)$.
\end{proof}

\paragraph{Representation of objects.} 
Graphs are represented by the adjacency list format, with each edge $uv$ appearing in both the list of $u$ and the list of $v$, and having a pointer to its other appearance.
Note that we can insert a vertex in $\OO(1)$ time, insert an edge in $\OO(1)$ time (given that we know it does not already exist), delete an edge in $\OO(1)$ time given a pointer to it, and delete an isolated vertex in $\OO(1)$ time.
A hypergraph $G$ is represented as a bipartite graph with bipartition $(V(G), E(G))$, and there is an edge between $v\in V(G)$ and $e\in E(G)$ if and only if $v \in V(e)$. A tree is represented as a graph, and a rooted tree also contains a global pointer to the root, and each non-root node $t$ stores a pointer to the edge $tp$ where $p$ is the parent of $t$.

A superbranch decomposition $\Tc=(T,\Lc)$ of a hypergraph $G$ is represented as follows: 
\begin{itemize}
    \item a representation of $T$,
    \item for each $\ell\in L(T)$ a pointer from $\ell$ to $\Lc(\ell)$, 
    \item for each $e\in E(G)$ a pointer from $e$ to $\Lc^{-1}(e)$,
    \item for each node $t\in V(T)$, the number of $|L[t]|$ of leaf descendants of it,
    \item for each internal node $t\in \Vint$, a representation of $\torso(t)$, where additionally each hyperedge $e_s\in E(\torso(t))$ corresponding to an edge $st\in E(T)$ stores a pointer to $st$, and $st$ stores a pointer to $e_s$.
\end{itemize}

An annotated tree decomposition $\Tc=(T, \bag, \edges)$ rooted at $r$ is represented as follows:
\begin{itemize}
    \item a representation of $T$,
    \item for each $t\in V(T)\setminus \{r\}$, the set $\bag(t)$ and $\edges(t)$ are each stored as a linked list to which $t$ contains pointers to, and
    \item for the root $r$, the set $\edges(r)$ and $\bag(t)$ are each stored as a balanced binary search tree containing all edges $e\in \edges(r)$ and vertices $v\in \bag(r)$, respectively, which $r$ contains pointers to.
\end{itemize}

We now describe an algorithm that, given a set of hyperedges $C\subseteq E(G)$, computes the vertex set $V(C)$ and the boundary $\bd(C)$.
\begin{lemma}\label{lem:compute-boundary}
    Let $G$ be a hypergraph of rank $r$ whose representation is stored. 
    There is an algorithm that given a set $C\subseteq E(G)$, in time $\OO(|C| \cdot r)$ returns the sets $V(C)$ and $\bd(C)$. 
\end{lemma}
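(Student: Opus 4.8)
The plan is to work directly with the stored bipartite incidence representation of $G$, using $\OO(1)$-size auxiliary flags on vertices and hyperedges that are cleaned up before the call returns. I assume $C$ is given as a list of (pointers to) hyperedges, so that it can be traversed in $\OO(|C|)$ time. First I would compute $V(C)$: iterate over every $e \in C$ and every $v \in V(e)$ (both directly available from the representation, and there are at most $r$ vertices per hyperedge since $\rank(G) \le r$), and whenever a vertex $v$ is encountered that has not yet been flagged, flag it and append it to an output list representing $V(C)$. This visits at most $|C|\cdot r$ incidence pairs, runs in $\OO(|C|\cdot r)$ time, and yields $V(C)$ with no duplicates.

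Next I would flag every hyperedge of $C$ (call the flag $\mathsf{inC}$), taking $\OO(|C|)$ time, and then decide membership in $\bd(C)$ one vertex at a time. Recall that $v \in \bd(C)$ iff $v \in V(C)$ and $v$ is incident to at least one hyperedge of $\overline{C} = E(G)\setminus C$. So for each $v$ in the computed list for $V(C)$, I scan the incident-hyperedge list of $v$ (this is exactly $\incidences(v)$, which is $v$'s adjacency list in the bipartite representation), \emph{stopping as soon as} I meet a hyperedge $e'$ with $\mathsf{inC}[e'] = \false$: in that case $v \in \bd(C)$ and I append it to the output list for $\bd(C)$; if the whole list is exhausted with every incident hyperedge flagged, then $v \notin \bd(C)$. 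Finally I unflag the vertices of $V(C)$ and the hyperedges of $C$ and return the two output lists. Correctness is immediate from the definitions of $V(C)$ and $\bd(C)$.

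The only nontrivial point is the running time of the $\bd(C)$ step, and this is also where the main obstacle lies: scanning \emph{all} incident hyperedges of every vertex of $V(C)$ would cost $\sum_{v\in V(C)}|\incidences(v)|$, which is not bounded by $\OO(|C|\cdot r)$ in general, so aborting each scan at the first hyperedge outside $C$ is essential. With that abort rule, the scan performed for a fixed $v \in V(C)$ touches at most $|\incidences(v)\cap C|$ hyperedges of $C$ (all entries scanned before the stopping point lie in $C$) plus at most one hyperedge of $\overline{C}$. Summing over $v \in V(C)$, the number of $C$-hyperedge touches is $\sum_{v\in V(C)}|\incidences(v)\cap C| = \sum_{e\in C}|V(e)| \le |C|\cdot r$ (each incidence pair $(v,e)$ with $e\in C$ is counted once, and $V(e)\subseteq V(C)$), while the number of $\overline{C}$-hyperedge touches is at most $|V(C)| \le |C|\cdot r$. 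Hence this step, and therefore the whole algorithm including the final flag cleanup, runs in $\OO(|C|\cdot r)$ time.
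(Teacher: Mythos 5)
Your proposal is correct and follows the same two-phase algorithm as the paper (flag-and-collect to build $V(C)$, then mark $C$ and scan each $v\in V(C)$'s incidence list with early termination at the first unmarked hyperedge). Your accounting of the second phase — charging scanned $C$-incidences to pairs $(v,e)$ with $e\in C$, giving $\sum_{e\in C}|V(e)|\le |C|r$, plus one $\overline{C}$-touch per vertex of $V(C)$ — is in fact spelled out more explicitly than the paper's terse remark that ``every edge incident to a marked vertex is touched at most once.''
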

\begin{proof}
    We start by computing a linked-list representation of $V(C)$.
    For each hyperedge $e\in C$, we iterate over all vertices $v\in V(e)$ and add $v$ to $V(C)$, if it has not already been added.
    The running time is at most $\OO(|C| \cdot r)$.
    
    To compute $\bd(C)$, we do the following.
    In the bipartite representation of $G$, we mark all vertices corresponding to edges $e\in C$. Then, for every vertex $v\in V(C)$, we check if $v$ has an unmarked neighbor and if so, include $v$ in $\bd(C)$. 
    Every edge incident to a marked vertex is touched at most once, and therefore the running time is at most $\OO(|C| \cdot r)$.   
\end{proof}

\paragraph{Basic (hyper)graph operations.} For representing manipulations of (hyper)graphs, we introduce a set of operations, called \emph{basic (hyper)graph operations}, defined as follows.
\begin{itemize}
    \item $\addVertex$: Given a new vertex $v\notin V(G)$, add $v$ to $V(G)$ and return a pointer to it, in $\OO(1)$ time.
    \item $\deleteVertex$: Given a pointer to an isolated vertex $v\in V(G)$, remove $v$ from $V(G)$ in $\OO(1)$ time.
    \item $\addEdge / \addHyperedge$: For a new (hyper)edge $e\notin E(G)$ with $V(e)\subseteq V(G)$, given pointers to each vertex $v\in V(e)$, add $e$ to $E(G)$ and return a pointer to it, in $\OO(|V(e)|)$ time. 
    \item $\deleteEdge / \deleteHyperedge$: Given a pointer to a (hyper)edge $e\in E(G)$, remove $e$ from $E(G)$, in $\OO(|V(e)|)$ time.
\end{itemize}

If $G$ and $G'$ are the (hyper)graphs before and after one basic (hyper)graph operation $o$, we say that $o$ \emph{transforms} $G$ into $G'$.
A \emph{sequence of operations} is a finite sequence $\mc C = c_1 c_2 \dots c_k$, where each $c_i$ specifies a basic (hyper)graph operation together with the information required to perform it: For adding a vertex, the vertex $v$ is stored, for deleting a vertex, the pointer to vertex $v$ is stored, for adding a (hyper)edge, the edge $e$ is stored, and for deleting a (hyper)edge, the list of pointers to each vertex $v\in V(e)$ is stored. Applying $\mc C$ to a (hyper)graph $G$ produces a sequence of (hyper)graphs $G=G^{(0)},G^{(1)},\dots,G^{(k)}$. The \emph{length} of $\mc C$ is $|\mc C|=k$.
The \emph{size} of a sequence $\|{\mc C}\|$ is the total information stored: A vertex operation has size $1$, and a (hyper)edge operation has size $|V(e)|+1$. This also means that, given a sequence $\mc C$ of operations, it can be performed in $\OO(\|{\mc C}\|)$ time. 
Note that $|\mc C|\leq \|\mc C\|$.


%
%
%
%
%

\paragraph{Tree decomposition automata.}
We will now informally define tree decomposition automata for annotated tree decompositions.
For more formal definitions, see~\Cref{sec:automata}.
The idea of a tree decomposition automaton is to assign a state to each node of an annotated tree decomposition, in a way where the state of a node depends only on the immediate properties of the node and its children, as well as the states of its children.
In particular, one should think of a tree decomposition automaton as describing a bottom-up dynamic programming algorithm on tree decompositions.

More concretely, a \emph{tree decomposition automaton} of \emph{width} $\ell$ is a tuple $(Q, F, \iota, \delta)$, where $Q$ is a set of \emph{states} (containing the \emph{null state} $\bot$), $F \subseteq Q$ is a set of \emph{accepting states}, $\iota$ is an \emph{initial mapping} assigning a state $\iota(G)$ to each boundaried graph $G$ with $|V(G)| \leq \ell+1$, and $\delta$ is a \emph{transition mapping} computing the state of a node $x$ given $\bag(x)$, $\bag(y)$ for every $y \in \chd(x)$, $\edges(x)$, and the states of the children of $x$.
The state of a leaf $l \in L(T)$ is given by $\iota(G_l)$.
Let $\Tc = (T,\bag,\edges)$ be an annotated tree decomposition of width at most $\ell$. 
The \emph{run} $\run_\autom^\Tc \colon V(T) \to Q$ of $\autom$ on $\Tc$ is the map from nodes to their states.
The initial mapping $\iota$ and the transition mapping $\delta$ are to be computable, whereas the run $\run_\autom^\Tc$ merely recalls states already computed using $\iota$ and $\delta$.
Note that we can compute the full run by using the initial mapping $\iota$ on each leaf of $\Tc$ and then using the transition mapping $\delta$ in a bottom-up fashion.
We say that $\autom$ has \emph{evaluation time} $\tau$ if the functions $\iota$ and $\delta$ can be evaluated in time $\tau$, and it can be decided whether a state $q \in Q$ is in $F$ in time $\tau$.

\subsection{Parameterized graph problems and kernelization}\label{sec:def_fpt}

In this section, we introduce all the necessary definitions around parameterized graph problems and kernelization, mostly following~\cite{kernelization-book}.
A \emph{parameterized graph problem} is a problem $\Pi$ with instances $(G,k)$, where $G$ is a graph and $k \in \mathbb{Z}$ an integer such that one of the following holds:
\begin{itemize}
    \item For all $G, k$ with $k < 0$: $(G,k) \in \Pi$.
    \item For all $G, k$ with $k < 0$: $(G,k) \notin \Pi$.
\end{itemize}
Naturally, for maximization problems, we will always consider the first case, and for minimization problems, the second case.
We restrict ourselves to optimization problems that are parameterized by the solution size.
For this, we denote by $\opt_{\Pi}(G)$ the size of an optimal solution for $\Pi$ on $G$, i.e., 
\begin{align*}
    \opt_\Pi(G) = \begin{cases}
    \min\{k \mid (G,k) \in \Pi\} & \text{if such a } k \text{ exists}\\
    +\infty & \text{otherwise}
    \end{cases}
\end{align*}
if $\Pi$ is a minimization problem, and
\begin{align*}
    \opt_\Pi(G) = \begin{cases}
    \max\{k \mid (G,k) \in \Pi\} & \text{if such a } k \text{ exists}\\
    -\infty & \text{otherwise}
    \end{cases}
\end{align*}
if $\Pi$ is a maximization problem.

A parameterized problem is \emph{linearly treewidth-bounding} on a class of graphs $\mc G$ if there are constants $\cTwMod$ and $c$ so that for all $G \in \mc G$ it holds that $\tw(G) \le \OO(\optTwMod(G))$.


\paragraph{FPT and kernelization.}

An parameterized graph problem $\Pi$ is called \emph{fixed-parameter tractable} if there exists an algorithm $\mc A$ (called \emph{fixed-parameter algorithm}) that, given a graph $G$ and an integer $k$, correctly decides if $(G,k) \in \Pi$ in time $f(k) \cdot n^{\OO(1)}$, where $f$ is a computable function. The complexity class of all fixed-parameter tractable problems is called \emph{FPT}. The most fundamental way of finding fixed-parameter algorithms is via kernelization algorithms.

A \emph{kernelization algorithm} for a parameterized graph problem $\Pi$ is an algorithm $\mc A$ that, given a graph $G$, outputs in polynomial time a graph $K$ together with a non-positive integer $\Delta$ such that for every integer $k$, $(G,k) \in \Pi$ if and only if $(K,k+\Delta) \in \Pi$. Moreover, there exists a computable function $g \colon \mathbb{N} \to \mathbb{N}$ such that $|K| \leq g(k)$. The output $(K,\Delta)$ of the algorithm $\mc A$ is called the \emph{kernel}. If $g(k) \le k^{\OO(1)}$ or $g(k) \le \OO(k)$, $K$ (or $(K,\Delta)$) is called a \emph{polynomial} or \emph{linear kernel}, respectively.

\paragraph{(Counting) monadic second order logic.} The syntax of \emph{monadic second order logic} ($\mso$) of graphs consists of variables for vertices, edges, sets of vertices, and sets of edges, Boolean connectives $\land, \lor , \neg, \rightarrow, \leftrightarrow$, and quantifiers $\exists, \forall$ that can be applied to all four types of variables. Furthermore, the following atomic formulas are included:
\begin{itemize}
    \item $x \doteq y$, where $x$ and $y$ are variables of the same type;
    \item $x \in X$, where $x$ is a vertex (edge) variable and $X$ a vertex (edge) set variable;
    \item $\mathsf{adj}(x,y)$, where $x$ and $y$ are vertex variables, and the interpretation is that $x$ and $y$ are adjacent;
    \item $\mathsf{inc}(x,y)$, where $x$ is a vertex variable and $y$ an edge variable, and the interpretation is that $x$ and $y$ are incident.
\end{itemize}
Additionally, $\mso$ can be extended to \emph{counting monadic second order logic} ($\cmso$) by adding the atomic formulas $\mathsf{card}_{q,r}$ for $0 \leq q \leq r$, $r \geq 2$, which are interpreted as $\mathsf{card}_{q,r}(S) = \true$ if and only if $|S| \equiv q$ $\pmod r$. A graph class $\mc G$ is called \emph{$\cmso$-definable} if there is a $\cmso$-formula $\phi$ such that $G \in \mc G$ if and only if $G \models \phi$. For a detailed introduction to $\cmso$, we refer to~\cite{Courcelle_1990,Courcelle_1992,Courcelle_1997}.

\paragraph{Finite integer index.}

In our dynamic kernelization algorithm, we want to replace boundaried subgraphs by ``equivalent'' graphs. For this, the notion of finite integer index plays a central role.
Let $X = (G_X, B_X, \Lambda_X)$, $Y = (G_Y, B_Y, \Lambda_Y)$ be two boundaried graphs. We define $X \oplusu Y$ as the (not boundaried) graph obtained by taking the disjoint union of $G_X$ and $G_Y$ and then identifying boundary vertices of $G_X$ and $G_Y$ that share the same label. If $X$ and $Y$ have the same label set, we further define $X \oplusb Y = (X \oplusu Y,B,\Lambda)$ to be the boundaried graph, where the boundary $B$ contains exactly the vertices that have been identified in the gluing process, and the labeling $\Lambda$ is the corresponding labeling from $X$ (or equivalently, $Y$). We observe that $\oplusu$ is commutative and $\oplusb$ is commutative and associative. If $G = X \oplusu Y$ and $Y'$ is another boundaried graph, we say that $G' = X \oplusu Y'$ is the graph obtained from $G$ by \emph{replacing} $Y$ with $Y'$.

For a parameterized graph problem $\Pi$ and two boundaried graphs $G_1,G_2 \in \mc F$, we say $G_1 \equiv_\Pi G_2$ if and only if $\labelSet(G_1) = \labelSet(G_2)$ and there exists a \emph{transposition constant} $\Delta \coloneq \Delta_{\Pi}(G_1,G_2) \in \mathbb{Z}$ such that
\begin{equation*}
    \forall (F,k) \in \mc F \times \mathbb{Z}: (F \oplusu G_1,k) \in \Pi \Leftrightarrow (F \oplusu G_2, k + \Delta) \in \Pi.
\end{equation*}
We remark that $\equiv_\Pi$ is indeed an equivalence relation (see~\cite[Exercise 16.2]{kernelization-book}). 

The goal of protrusion replacement is to replace a large protrusion with an equivalent small graph. To do this efficiently, we need to bound the number of equivalence classes of $\equiv_\Pi$. Unfortunately, this is not possible, as it is easy to see that two boundaried graphs with different label sets belong to different equivalence classes. However, for some problems $\Pi$, the number of equivalence classes might be bounded if we restrict $\equiv_\Pi$ to boundaried graphs with the same label set $I \subseteq \ZI$.
A parameterized problem $\Pi$ has \emph{finite integer index} (\emph{FII}) if for every finite $I \subseteq \ZI$, the number of equivalence classes of $\equiv_{\Pi}$ that are subsets of $\mc F_I$ is finite.

\section{Existence of mergeable root children}\label{sec:existence}

When the degree of the root of the superbranch decomposition grows too large, we need to reduce it without creating another high-degree node and without violating the downwards well-linkedness. That is, we need to find a set of root-children, identified by their corresponding hyperedges in the torso, that we can merge into a downwards well-linked subtree with small treewidth and adhesion size.
The goal of this section is to prove \Cref{mergeable-children}, which asserts the existence of such a set of root-children.
Then, in the subsequent \Cref{sec:local_search}, we will describe how this set can be found efficiently.

\begin{restatable}{lemma}{MergeableChildren}\label{mergeable-children}
    Let $H$ be a graph and $\cTwMod$, $\cSbdProt$ integers. There are integers $\cNewSbdProt \coloneq \cNewSbdProt(H,\cTwMod)$ and $\cSbdRootDeg \coloneq \cSbdRootDeg(H,\cTwMod,\cSbdProt)$ such that for every $k$ the following holds:
    Let $G$ be an $H$-topological-minor-free graph that has a treewidth-$\cTwMod$-modulator of size $k$.
    Let $\Tc = (T, \Lc)$ be a downwards well-linked superbranch decomposition of $\Hc(G)$ rooted at $r$ and with adhesion size $\cSbdProt$. 
    If $\Delta(r) \geq \cSbdRootDeg \cdot k$, then there exists a subset of hyperedges $B \subseteq E(\torso(r))$ such that
    \begin{enumerate}
        \item\label{ite:boundary} $\lambda(B) \leq \cNewSbdProt$,
        \item\label{ite:treewidht} $\itw(B \expand \Tc) \leq \cNewSbdProt$,
        \item\label{ite:size} $2^{\cNewSbdProt+2} \leq |B| \leq \OO_{H,\cTwMod}(\cSbdProt)$, and
        \item\label{ite:internal_components} for every internal component $B'$ of $B$, we have $\bd(B') = \bd(B)$.
    \end{enumerate}
    Furthermore, the functions $\cNewSbdProt(H,\cTwMod)$ and $\cSbdRootDeg(H,\cTwMod,\cSbdProt)$ are computable.
\end{restatable}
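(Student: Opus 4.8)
\emph{Proof strategy.} Write $N := |V(\torso(r))|$, and recall $\Delta(r) = |E(\torso(r))|$. The plan is to split on $N$ versus $k$: choosing $\cSbdRootDeg$ large enough (relative to the other constants, which themselves depend only on $H,\cTwMod$, or on $H,\cTwMod,\cSbdProt$) fixes a threshold $c_1 k$, with $c_1 = c_1(H,\cTwMod,\cSbdProt)$, so that both of the following go through. If $N \le c_1 k$, then $\torso(r)$ contains a huge bundle of parallel hyperedges and we read off $B$ directly. If $N > c_1 k$, we compare $V(\torso(r))$ against an idealized protrusion decomposition of $G$ built from the modulator, extract a structured edge set $B_0 \subseteq E(\Hc(G))$ from one of its protrusions, uncross $B_0$ against the subtrees $\Lc[c]$ hanging below $r$ in $\Tc$, and finally translate and normalize the result into the desired $B \subseteq E(\torso(r))$.

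\textbf{Small torso ($N \le c_1 k$).} A density argument --- the by-now standard observation that $H$-topological-minor-freeness together with downwards well-linkedness forbids $\torso(r)$ from realizing a large clique as a topological minor, in the style of~\cite{Korhonen_2024,Korhonen_2025} --- bounds the number of \emph{distinct} hyperedge vertex-sets occurring in $\torso(r)$ by $\OO_{H,\cTwMod}(N) \le \OO_{H,\cTwMod}(c_1 k)$. At most $k$ children $c$ of $r$ satisfy $\itw(\Lc[c]) > \cTwMod$, since their interiors are pairwise disjoint and each must be hit by the size-$k$ modulator; so among the $\ge \Delta(r) - k \ge (\cSbdRootDeg-1)k$ remaining (``light'') children, pigeonholing over $\OO_{H,\cTwMod}(c_1 k)$ vertex-sets produces --- for $c_1$ small enough relative to $\cSbdRootDeg$ --- a vertex set $S$ shared by $\ge 2^{\cNewSbdProt+2}$ light children $c_1',c_2',\dots$ with $\bd(\Lc[c_i']) = S$. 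Here $|S| < |V(H)|$: otherwise $\binom{|V(H)|}{2}$ of these well-linked pieces route internally disjoint paths witnessing $K_{|V(H)|} \supseteq H$ as a topological minor. Taking $B = \{e_{c_i'}\}_i$ in $\torso(r)$: condition~(\ref{ite:boundary}) follows from $\lambda(B) \le |S| < |V(H)| \le \cNewSbdProt$; condition~(\ref{ite:treewidht}) from the fact that $\inter(B \expand \Tc)$ splits into the pairwise non-interacting sets $\inter(\Lc[c_i'])$ plus a subset of $S$, so $\itw(B\expand\Tc) \le \max_i \itw(\Lc[c_i']) + |S| \le \cTwMod + |V(H)| \le \cNewSbdProt$; condition~(\ref{ite:size}) by keeping exactly $2^{\cNewSbdProt+2}$ of the $c_i'$; and condition~(\ref{ite:internal_components}) because each of these hyperedges is its own internal component with boundary exactly $\bd(B)$ (discard the fewer than $|V(H)|$ copies whose vertex set escapes $\bd(B)$).

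\textbf{Large torso ($N > c_1 k$).} I would invoke a \emph{tuned} form of the protrusion-decomposition existence theorem for $H$-topological-minor-free graphs of~\cite{Kim_Langer_Paul_Reidl_Rossmanith_Sau_Sikdar_2015}, isolated as an auxiliary lemma: given the modulator $X$, it yields a protrusion decomposition $\Tc^\ast = (T^\ast,\bag^\ast)$ of $G$ rooted at $r^\ast$, with $|\bag^\ast(r^\ast)|, \Delta(r^\ast) \le \OO_{H,\cTwMod}(\cSbdProt k)$ and all non-root bags of size $\OO_{H,\cTwMod}(1)$, and additionally with $\bd(\Lc[c]) \subseteq \bag^\ast(r^\ast)$ for each of the $\le k$ children $c$ of $r$ (in $\Tc$) with $\itw(\Lc[c]) > \cTwMod$ --- this costs only $\OO(\cSbdProt)$ extra root vertices per such child. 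Since $c_1$ is chosen large enough, the $\ge c_1 k$ vertices of $V(\torso(r))$, minus the $\OO_{H,\cTwMod}(\cSbdProt k)$ that lie in $\bag^\ast(r^\ast)$, are distributed over the $\OO_{H,\cTwMod}(\cSbdProt k)$ protrusions of $\Tc^\ast$, so some protrusion contains $\ge D$ of them, with $D = D(H,\cTwMod,\cSbdProt)$ as large as we need. Walking down that protrusion's bounded-width tree decomposition in the usual balanced-separator fashion produces $B_0 \subseteq E(\Hc(G))$ with $D \le |V(B_0) \cap V(\torso(r))| \le 3D$, $\lambda(B_0) = \OO_{H,\cTwMod}(1)$, $\tw(G[B_0]) = \OO_{H,\cTwMod}(1)$ (hence $\wl(B_0) = \OO_{H,\cTwMod}(1)$ by \Cref{lem:well_linked_number_treewidth}), and --- because $B_0$ lies inside a single protrusion of $\Tc^\ast$ --- $\inter(B_0) \cap \bag^\ast(r^\ast) = \emptyset$, so $V(B_0)$ meets $\bd(\Lc[c])$ only inside $\bd(B_0)$ for every $\itw$-heavy child $c$. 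I would state this packaging as a separate lemma.

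\textbf{Uncrossing, translation, normalization.} Start from $B := B_0$; while some child $c$ of $r$ has $\emptyset \ne \Lc[c] \cap B \subsetneq \Lc[c]$, replace $B$ by whichever of $B \setminus \Lc[c]$ and $B \cup \Lc[c]$ has $\lambda$ at most $\lambda(B)$ --- at least one does, by well-linkedness of $\Lc[c]$ --- always preferring $B \setminus \Lc[c]$. Using that $B$ meets the boundary of each $\itw$-heavy child only in $\bd(B)$, the removal option stays available for those children, so every \emph{absorbed} child is $\itw$-light with $\itw(\Lc[c]), \lambda(\Lc[c]) = \OO(\cTwMod)$. Through the process $\lambda(B)$ is nonincreasing, and $|V(B) \cap V(\torso(r))|$ changes by at most $\lambda(B)$ per step over the $\OO_{H,\cTwMod}(\cSbdProt)$ steps, so it stays $\ge D/2$. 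The main obstacle I anticipate is keeping $\itw(B) = \OO_{H,\cTwMod}(1)$ rather than letting it creep up additively with the number of absorbed children (which would make the bound depend on the current parameters of $\Tc$); the intended fix is to carry along a single bounded-width tree decomposition of $G[\inter(B)]$ whose adhesions stay inside $\bd(B)$, attaching each absorbed well-linked, $\itw$-light, small-boundary piece through a set of size $\le \lambda(\Lc[c])$, so the width stays bounded by $\lambda(B)$ plus the maximum absorbed-piece width --- a quantitative (roughly quadratic in $\OO_{H,\cTwMod}(1)$) bound that fixes $\cNewSbdProt$. When the process halts, $B$ is uncrossed against every $\Lc[c]$, so with $B^{\torso} := \{e_c \in E(\torso(r)) : \Lc[c] \subseteq B\}$ we have $B^{\torso} \expand \Tc = B$; condition~(\ref{ite:treewidht}) is then immediate, $\lambda(B^{\torso}) \le \lambda(B) = \OO_{H,\cTwMod}(1)$ gives condition~(\ref{ite:boundary}), and since the rank-$\le\cSbdProt$ hyperedges of $B^{\torso}$ cover at least $|V(B)\cap V(\torso(r))| - \lambda(B) \ge D/2 - \OO_{H,\cTwMod}(1)$ vertices of $V(\torso(r))$ while $|B^{\torso}| \le |V(B) \cap V(\torso(r))| \le 3D$, an appropriate choice of $D$ gives $2^{\cNewSbdProt+2} \le |B^{\torso}| \le \OO_{H,\cTwMod}(\cSbdProt)$, i.e.\ condition~(\ref{ite:size}). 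Finally, for condition~(\ref{ite:internal_components}) I would split $B^{\torso}$ into internal components, group them by their (at most $2^{\lambda(B^{\torso})} \le 2^{\cNewSbdProt}$ distinct) boundaries inside $\bd(B^{\torso})$, keep the largest group --- losing at most a factor $2^{\cNewSbdProt}$ in the size, which the slack in $D$ absorbs --- and check by a short submodularity argument that the kept group has all its internal components with boundary equal to its own, and that passing to it raises $\lambda$ and $\itw(\cdot \expand \Tc)$ by at most an additive $\lambda(B^{\torso})$ (which is why $\cNewSbdProt$ is set a little above the uncrossing bound). Tracing the constants through all steps yields computable $\cNewSbdProt(H,\cTwMod)$ and $\cSbdRootDeg(H,\cTwMod,\cSbdProt)$.
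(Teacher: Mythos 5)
Your high-level architecture mirrors the paper's closely: split on few vs.\ many torso vertices, dispatch the small-torso case by a density\,$+$\,pigeonhole\,$+$\,parallel-hyperedge argument (which is essentially the paper's \Cref{torso-vertex-bound-if-unique-edges-topological,lem:large_boundary_duplicates}), dispatch the large-torso case by extracting a bounded-width, bounded-boundary set $B_0$ from a protrusion of a \emph{tuned} protrusion decomposition (this is \Cref{lem:before_uncrossing}, with the tuning in \Cref{exists-prot-decomp}), uncross against the $\Lc[c]$'s, translate to the torso, and group internal components by boundary at the end. Those parts are fine.

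The genuine gap is in the uncrossing analysis. You assert that ``$|V(B) \cap V(\torso(r))|$ changes by at most $\lambda(B)$ per step over the $\OO_{H,\cTwMod}(\cSbdProt)$ steps'', but neither bound holds. Each step where $\Lc[c]$ is toggled can move $|V(B)\cap V(\torso(r))|$ by up to $|\bd(\Lc[c])| \le \cSbdProt$, not $\lambda(B)$; and the number of children crossing $B_0$ is not $\OO_{H,\cTwMod}(\cSbdProt)$ but can be as large as $|B_0|$, which is unbounded (only $|\inter(B_0)\cap V(\torso(r))|$, $\lambda(B_0)$, and $\tw(B_0)$ are bounded --- not $|V(B_0)|$ or $|B_0|$). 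So taken literally, your potential-counting argument lets $|V(B)\cap V(\torso(r))|$ drift arbitrarily and $\itw(B)$ accumulate over an unbounded number of absorbed children. The paper closes exactly this hole by a three-way case split: (1) the $\itw$-heavy children are always excludable with no change to any of $\lambda$, $\itw$, or $|\inter(C)\cap V(\torso(r))|$ (this relies on the tuning you already flagged); (2) the children with $\inter(\Lc[c])\cap\bd(C)\neq\emptyset$ number at most $\lambda(C)=\OO_{H,\cTwMod}(1)$, so their total damage is $\OO_{H,\cTwMod}(\cSbdProt)$; and (3) for the remaining --- potentially unboundedly many --- children, \Cref{lem:subsetofB} forces every internal component of $\Lc[c]$ to be entirely in or out of $C$, which makes the exclude option completely free, while the include option strictly decreases $\lambda(C)$ by the amount by which $\itw$ and $|\inter\cap V(\torso(r))|$ grow, so the cumulative damage is at most the initial $\lambda(C)$. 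Your ``always prefer exclude'' heuristic is the right instinct, but without partitioning children into these three classes (and in particular without the internal-component observation of \Cref{lem:subsetofB} and the observation that only $\lambda(C)$-many children can have interior meeting $\bd(C)$), you cannot show the exclude steps are free, and you cannot bound the number of harmful include steps. Your fallback plan of ``carrying along a bounded-width tree decomposition of $G[\inter(B)]$'' is a plausible alternative device for the $\itw$ creep but is left underspecified, and it does not by itself address the drift of $|V(B)\cap V(\torso(r))|$; you would still need the step-(3) mechanism (or something equivalent) to show the bulk of the uncrossing is inert.
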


The rest of this section consists of the proof of \Cref{mergeable-children}, divided into two main parts.
These parts correspond to the two reasons for the degree $\Delta(r)$ to be high: Either $\torso(r)$ has many vertices, or there are hyperedges with high multiplicity in $\torso(r)$.
We start with the first case in \Cref{sec:many_torso_vertices}, and continue with the second in \Cref{sec:unique_or_duplicates}.

\subsection{Many torso vertices}
\label{sec:many_torso_vertices}
This section deals with the case of \Cref{mergeable-children} where there are many vertices in $\torso(r)$.
Specifically, this section is dedicated to the proof of the following lemma.

\begin{restatable}{lemma}{ManyTorsoVertices}\label{lem:many_torso_vertices}
    Let $H$ be a graph and $\cTwMod$, $\cSbdProt$ integers.
    There are integers $\cNewSbdProt \coloneq \cNewSbdProt(H,\cTwMod)$ and $\cTorsoVertices \coloneq \cTorsoVertices(H,\cTwMod,\cSbdProt)$ such that for every $k$ the following holds:
    Let $G$ be an $H$-topological-minor-free graph with a treewidth-$\cTwMod$-modulator $X$ of size $k$.
    Let $\Tc = (T, \Lc)$ be a downwards well-linked superbranch decomposition of $\Hc(G)$ with adhesion size $\cSbdProt$ and root $r$.
    If $|V(\torso(r))| \geq \cTorsoVertices \cdot k$, then there exists a subset of hyperedges $B \subseteq E(\torso(r))$ such that
    \begin{enumerate}
        \item $\lambda(B) \leq \cNewSbdProt$,
        \item $\itw(B \expand \Tc) \leq \cNewSbdProt$,
        \item $2^{2\cNewSbdProt+2} \le |B|$, and
        \item $|V(B)| \le \OO_{H,\cTwMod}(\cSbdProt)$.
    \end{enumerate}
    Furthermore, the functions $\cNewSbdProt(H,\cTwMod)$ and $\cTorsoVertices(H,\cTwMod,\cSbdProt)$ are computable.
\end{restatable}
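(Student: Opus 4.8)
We want to show that if $\torso(r)$ has many vertices (at least $\cTorsoVertices \cdot k$), then we can find a set $B \subseteq E(\torso(r))$ that is small in boundary ($\lambda(B) \le \cNewSbdProt$), small in internal treewidth ($\itw(B \expand \Tc) \le \cNewSbdProt$), but relatively large in the sense that $|B| \ge 2^{2\cNewSbdProt+2}$ while still $|V(B)| \le \OO_{H,\cTwMod}(\cSbdProt)$. The idea is to compare the (unknown) superbranch decomposition $\Tc$ against an \emph{auxiliary, ``optimal'' protrusion decomposition} of $G$: since $G$ is $H$-topological-minor-free with a treewidth-$\cTwMod$-modulator $X$ of size $k$, by the protrusion-decomposition machinery of Kim et al.\ (a version tuned so that the boundaries of high-well-linkedness subtrees live in the root bag) there is a $(\OO_{H,\cTwMod}(k), \OO_{H,\cTwMod}(1))$-protrusion decomposition $\Tc^* = (T^*, \bag^*)$ of $G$ rooted at $r^*$. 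Now the $\OO_{H,\cTwMod}(k)$ vertices of $V(\torso(r))$ must be distributed among the $\OO_{H,\cTwMod}(k)$ protrusions of $\Tc^*$ (plus the root bag $\bag^*(r^*)$, which is itself of size $\OO_{H,\cTwMod}(k)$). Choosing $\cTorsoVertices$ large enough and averaging, at least one protrusion — i.e.\ a subtree $T^*_{c^*}$ below a child $c^*$ of $r^*$ — contains at least some large number $\delta = \delta(H,\cTwMod)$ of vertices of $V(\torso(r))$, and this subtree has treewidth $\le \omega = \OO_{H,\cTwMod}(1)$ with $\omega \ll \delta$.

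\textbf{From the protrusion to a candidate edge set.} Within $T^*_{c^*}$, I would pick an appropriate node $t^*$ so that the edge set $B_0 = E^*_{t^*} \subseteq E(\Hc(G))$ (the support-hyperedges below $t^*$) satisfies $\delta \le |V(B_0) \cap V(\torso(r))| \le 3\delta$ (possible by a standard ``cut the subtree where the count first exceeds $\delta$'' argument, using that bags along $T^*$ have size $\le \omega$, so the count jumps by at most $2\omega \le \delta$ at each step), together with $\lambda(B_0) \le \omega$ and $\wl(B_0) \le \Theta(\omega)$ (the latter since $B_0$ lives inside a width-$\omega$ tree decomposition, via \Cref{lem:well_linked_number_treewidth}). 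This $B_0$ is an edge set of the \emph{support hypergraph} $\Hc(G)$, not of $\torso(r)$; the point of well-linkedness (\Cref{lem:well_linked_torso}) is that to turn $B_0$ into a set $B \subseteq E(\torso(r))$ with $B \expand \Tc$ ``essentially equal'' to $B_0$, it suffices to arrange that $B_0$ is \emph{uncrossed} with every $\Lc[c]$ for $c$ a child of the root $r$ of $\Tc$: then we can take $e_c \in B$ exactly when $\Lc[c] \subseteq B_0$.

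\textbf{The uncrossing argument (the main obstacle).} This is the hard part. Using downwards well-linkedness of $\Tc$, each $\Lc[c]$ is well-linked in $\Hc(G)$, so by the ``uncrossability'' property (property~1 of well-linked sets stated in the overview) either $\lambda(B_0 \cup \Lc[c]) \le \lambda(B_0)$ or $\lambda(B_0 \setminus \Lc[c]) \le \lambda(B_0)$; iterating over all children $c$, maintaining the running set $B$ and always choosing the side that does not increase $\lambda$, yields an uncrossed set with $\lambda(B) \le \omega \le \cNewSbdProt$. Two things must be controlled throughout. First, $|V(B) \cap V(\torso(r))|$ must not collapse: I would show that each ``exclude'' step ($B := B \setminus \Lc[c]$) removes from $V(B) \cap V(\torso(r))$ only vertices that lie in $\bd(B) \cup \bd(\Lc[c]) \subseteq \bd(B_0)$-ish sets of bounded size, and each ``include'' step only adds vertices, so after all $\le |E(\torso(r))|$ steps the quantity stays $\ge \delta - \OO(\omega \cdot (\text{number of excludes}))$ — here one needs a more careful bookkeeping (as the overview admits, ``does not drastically affect'' the count as long as it is large compared to $\lambda(B_0)$); making $\delta$ sufficiently large compared to $2^{\Theta(\omega)}$ absorbs the loss, and yields $|V(B) \cap V(\torso(r))| \ge 2^{2\cNewSbdProt+2}$, hence $|B| \ge |V(B) \cap V(\torso(r))| / \rank(\torso(r)) \ge 2^{2\cNewSbdProt+2}$ after an appropriate recalibration (or one counts hyperedges directly). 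Second, $\itw(B \expand \Tc)$ must stay bounded: the dangerous operation is $B := B \cup \Lc[c]$, which can raise the well-linked number / internal treewidth by $\wl(\Lc[c])$, and we are \emph{not} allowed to let the new bound depend on the old adhesion size $\cSbdProt$. The fix — this is the subtle part borrowed from the overview — is to use the \emph{tuned} decomposition $\Tc^*$, where $\bd(\Lc[c])$ is forced into $\bag^*(r^*)$ whenever $\wl(\Lc[c])$ exceeds $\OO(\cTwMod)$ (there are at most $k$ such children, so the modulator $X$ can be used to guarantee this). Choosing $B_0$ from $\Tc^*$ so that $V(B_0)$ avoids $\bag^*(r^*)$ except inside $\bd(B_0)$, the ``include'' step $B := B \cup \Lc[c]$ can only ever be forced for children $c$ with $\wl(\Lc[c]) \le \OO(\cTwMod)$, so after all steps $\itw(B \expand \Tc) \le \omega + \OO(\cTwMod) \le \cNewSbdProt$, depending only on $H$ and $\cTwMod$. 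Finally, $|V(B)| \le |V(B_0)| + \OO(\text{boundary sizes}) \le 3\delta + \OO(\cSbdProt) = \OO_{H,\cTwMod}(\cSbdProt)$ after accounting for the at most $\cSbdProt$-size adhesions of the children added during ``include'' steps. I would set $\cNewSbdProt(H,\cTwMod) = \Theta(\cTwMod + \cPdProt)$ (concretely of the shape $2\cPdProt + (\cPdProt+1)^2 + 1$ as hinted in the constant table), $\delta = 2^{\Theta(\cNewSbdProt)}$, and $\cTorsoVertices = \delta \cdot \OO_{H,\cTwMod}(1)$, and check computability of all of these, which is routine given that the underlying lemma of Kim et al.\ is effective.
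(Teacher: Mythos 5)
Your approach matches the paper's: find a candidate set via the tuned Kim et al.\ protrusion decomposition (boundaries of high-internal-treewidth children of the root of $\Tc$ forced into $\bag^*(r^*)$), then uncross against the root-children of $\Tc$. The paper factors this into \Cref{lem:before_uncrossing} (the ``before uncrossing'' set $C$) followed by the uncrossing; your ``$B_0$'' plays exactly the role of $C$.

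However, the uncrossing step as you describe it has a real gap. You write that after ``all $\le |E(\torso(r))|$ steps the quantity stays $\ge \delta - \OO(\omega \cdot (\text{number of excludes}))$'' and then propose to absorb the loss by choosing $\delta$ large relative to $2^{\Theta(\omega)}$. This does not work: the number of exclude steps is a priori $\Theta(|E(\torso(r))|)$, which can be arbitrarily large compared to any function of $H$ and $\cTwMod$, so no constant choice of $\delta$ absorbs the loss. The missing idea is a case analysis that shows \emph{almost all} uncrossing steps are actually lossless for the quantity $|\inter(C) \cap V(\torso(r))|$. Concretely, the paper's Step~1 (crossing children $s$ with $\itw(\Lc[s]) > \cTwMod$) is lossless precisely because the tuned decomposition forces $\inter(C) \cap \bd(\Lc[s]) = \emptyset$; its Step~3 (crossing children with $\inter(\Lc[s]) \cap \bd(C) = \emptyset$) is lossless in the exclude direction because $\bd(\mc A) \subseteq \bd(C)$, and only gains in the include direction; and the only lossy case, Step~2, is constrained to children with $\inter(\Lc[s]) \cap \bd(C) \neq \emptyset$ — of which there are at most $\lambda(C) \leq \cPdProt + 1 = \OO_{H,\cTwMod}(1)$ since the interiors $\inter(\Lc[s])$ are pairwise disjoint. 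This bound on the number of lossy steps (not a large $\delta$) is what makes the argument go through, and also what yields the paper's $\itw$ bound of the form $2\cPdProt + (\cPdProt+1)^2 + 1$ rather than the $\omega + \OO(\cTwMod)$ you state (the $(\cPdProt+1)^2$ term comes from $\lambda(C)$-many Step~2 includes each adding $\lambda(C)$). Without this decomposition into cases, your argument does not close.

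A smaller inaccuracy: in Step~3's include direction, vertices do move from $\bd(C)$ into $\inter(C)$, increasing $|\inter(C) \cap V(\torso(r))|$ and $\itw(C)$; the number of such vertices is bounded by the \emph{decrease} in $\lambda(C)$, and these decreases telescope to at most $\lambda(C)$ in total. This bookkeeping, too, is absent from your sketch but is necessary to get a constant bound.
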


We split the proof of \Cref{lem:many_torso_vertices} into two parts. First, in \Cref{sec:before_uncrossing} we show that we can find a subset $C \subseteq E(\Hc(G))$ of hyperedges of the support hypergraph of $G$, instead of a subset of $E(\torso(r))$, but with similar properties as the desired set $B$ from \Cref{mergeable-children} (see \Cref{lem:before_uncrossing}). Then, in \Cref{sec:uncrossing} we show that we can ``uncross'' this set $C$, i.e., for every child $c$ of the root, we either include or exclude the entire set $\Lc[c]$ from $C$, roughly preserving our conditions on the boundary size and internal treewidth of $C$. Finally, the set $B$ is obtained by taking exactly those torso hyperedges that correspond to root children $c$, where $\Lc[c]$ is included in $C$.

\subsubsection{Before uncrossing}\label{sec:before_uncrossing}

The goal of this section is to prove the following \Cref{lem:before_uncrossing} saying that there exists a set $C \subseteq \Hc(G)$ with small boundary and internal treewidth, which we will uncross later in \Cref{sec:uncrossing} to obtain the set $B \subseteq E(\torso(r))$ for \Cref{lem:many_torso_vertices}.
The last requirement (\Cref{cond:high_tw_subtree}) is crucial to ensure that the treewidth does not grow too much during the uncrossing.

\begin{restatable}{lemma}{BeforeUncrossing}\label{lem:before_uncrossing}
    Let $H$ be a graph and $\cTwMod$, $\cSbdProt$ integers.
    There are integers $\cPdProt \coloneq \cPdProt(H,\cTwMod)$ and $\cTorsoVertices \coloneq \cTorsoVertices(H,\cTwMod, \cSbdProt)$ such that for every $\cTorsoIntersect \geq 3\cPdProt+1$ and every $k$ the following holds:
    Let $G$ be an $H$-topological-minor-free graph that has a treewidth-$\cTwMod$-modulator of size $k$. Let also $\Tc = (T,\Lc)$ be a downwards well-linked superbranch decomposition of $\Hc(G)$ rooted at $r$ and with adhesion size $\cSbdProt$.
    If $|V(\torso(r))| \geq \cTorsoVertices \cdot (\cTorsoIntersect + 1) \cdot k$, 
    then there exists a set of hyperedges $C \subseteq  E(\Hc(G))$, such that
    \begin{enumerate}
        \item\label{cond:size} $\cTorsoIntersect \leq |\inter(C) \cap V(\torso(r))| \leq 3 \cdot \cTorsoIntersect$,
        \item\label{cond:boundary} $\lambda(C) \leq \cPdProt + 1$,
        \item\label{cond:internal_treewidth} $\itw(C) \leq \cPdProt$, and
        \item\label{cond:high_tw_subtree} for every root-child $s \in \chd(r)$ with $\itw(\Lc[s]) > \cTwMod$, we have $\inter(C) \cap \bd(\Lc[s]) = \emptyset$.
    \end{enumerate}
    Furthermore, $\cNewSbdProt$ is computable when given $H$ and $\cTwMod$, and $\cSbdRootDeg$ is computable when given $H$, $\cTwMod$, and $\cSbdProt$.
\end{restatable}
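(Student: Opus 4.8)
The plan is to pit $\Tc$ against an ``imaginary optimal'' protrusion decomposition $\Tc^{\ast}$ of $G$, built from the modulator, and to take $C$ to be the set of support-hyperedges lying below one carefully chosen node of a protrusion of $\Tc^{\ast}$. We may assume $k \ge 1$: if $k = 0$ then $\tw(G) \le \cTwMod$ and the argument below goes through with a width-$\cTwMod$ tree decomposition of $G$ playing the role of $\Tc^{\ast}$. Fix a treewidth-$\cTwMod$-modulator $X$ of $G$ with $|X| = k$, and let $S := \{s \in \chd(r) : \itw(\Lc[s]) > \cTwMod\}$. The sets $\inter(\Lc[s])$ for $s \in \chd(r)$ are pairwise disjoint, and since $G[\inter(\Lc[s]) \setminus X]$ is a subgraph of $G \setminus X$ and hence has treewidth $\le \cTwMod$, every $s \in S$ satisfies $X \cap \inter(\Lc[s]) \neq \emptyset$; therefore $|S| \le k$. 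Consequently $X' := X \cup \bigcup_{s \in S}\bd(\Lc[s])$ is again a treewidth-$\cTwMod$-modulator, with $|X'| \le (1+\cSbdProt)\,k$ since every adhesion of $\Tc$ has size at most $\cSbdProt$.

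Next I would apply the modulator-to-protrusion-decomposition result (\Cref{lem:modulator_implies_decomposition}; the underlying construction is that of~\cite{Kim_Langer_Paul_Reidl_Rossmanith_Sau_Sikdar_2015}) to $X'$, obtaining a $(\cOldPdRoot\cdot|X'|,\;\cPdProt)$-protrusion decomposition $\Tc^{\ast} = (T^{\ast},\bag^{\ast})$ of $G$ rooted at $r^{\ast}$ with $X' \subseteq \bag^{\ast}(r^{\ast})$, where $\cPdProt := \cPdProt(H,\cTwMod)$ is the (computable) second parameter; I would view $\Tc^{\ast}$ also as a tree decomposition of $\Hc(G)$, and, after the standard bag-duplicating binarization applied inside each protrusion (which changes neither $\bag^{\ast}(r^{\ast})$ nor $\Delta(r^{\ast})$), assume each protrusion of $\Tc^{\ast}$ is binary. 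Put $K := \cOldPdRoot\cdot|X'|$, so $|\bag^{\ast}(r^{\ast})|, \Delta(r^{\ast}) \le K$, and set $\cTorsoVertices := 2\cOldPdRoot(1+\cSbdProt)$, computable from $H,\cTwMod,\cSbdProt$. For $e \in E(\Hc(G))$ let $\mathsf{node}(e)$ be the unique minimum-depth node of $T^{\ast}$ whose bag contains $V(e)$, and for $t \in V(T^{\ast})$ set $\phi(t) := |\{v \in V(\torso(r)) : \mathsf{node}(\{v\}) \in \desc(t)\}|$. This quantity is subtree-additive, with ``new mass'' at most $|\bag^{\ast}(t)| \le \cPdProt$ at each non-root node, and each vertex of $V(\torso(r)) \setminus \bag^{\ast}(r^{\ast})$ is counted by $\phi(c)$ for exactly one child $c$ of $r^{\ast}$. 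Since $|V(\torso(r))| \ge \cTorsoVertices(\cTorsoIntersect+1)k \ge 2K(\cTorsoIntersect+1)$, pigeonhole gives a child $c^{\ast}$ of $r^{\ast}$ with $\phi(c^{\ast}) \ge (|V(\torso(r))| - |\bag^{\ast}(r^{\ast})|)/\Delta(r^{\ast}) \ge \cTorsoIntersect + 1$.

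Then I would descend from $c^{\ast}$, at each step to a child maximizing $\phi$, stopping at the first node $t$ with $\phi(t) \le \Theta := 2\cTorsoIntersect + \cPdProt$. Because the protrusion is binary and the new mass per node is $\le \cPdProt$, if $t \neq c^{\ast}$ its parent has $\phi > \Theta$ and hence $\phi(t) \ge (\Theta - \cPdProt)/2 = \cTorsoIntersect$; in all cases $\cTorsoIntersect \le \phi(t) \le \Theta$ and $t \neq r^{\ast}$. I then take $C := \{e \in E(\Hc(G)) : \mathsf{node}(e) \in \desc(t)\}$, and the four conclusions follow essentially by bookkeeping: $\bd(C)$ is contained in the adhesion of $t$ to its parent, which has size $\le \cPdProt < \cPdProt+1$ (condition~(2)); the restriction of $\Tc^{\ast}$ to $\desc(t)$ is a tree decomposition of $\Hc(G)[V^{\ast}_t]$ of width at most $\cPdProt-1$, where $V^{\ast}_t := \bigcup_{t' \in \desc(t)}\bag^{\ast}(t')$, so $\itw(C) \le \tw(G[V^{\ast}_t]) \le \cPdProt$ (condition~(3)); every $v \in V(\torso(r))$ counted by $\phi(t)$ lies in $\inter(C)$ while any other $v \in \inter(C) \cap V(\torso(r))$ lies in the adhesion of $t$, so $\phi(t) \le |\inter(C) \cap V(\torso(r))| \le \phi(t) + \cPdProt$, and combined with $\cTorsoIntersect \le \phi(t) \le 2\cTorsoIntersect+\cPdProt$ and $2\cPdProt \le \cTorsoIntersect$ (from $\cTorsoIntersect \ge 3\cPdProt+1$) this gives $\cTorsoIntersect \le |\inter(C)\cap V(\torso(r))| \le 3\cTorsoIntersect$ (condition~(1)); and by connectivity of tree decompositions every vertex of $\bag^{\ast}(r^{\ast}) \cap V^{\ast}_t$ lies in the adhesion of $t$, hence in $\bd(C)$, so $\inter(C) \cap \bag^{\ast}(r^{\ast}) = \emptyset$, and since $\bd(\Lc[s]) \subseteq X' \subseteq \bag^{\ast}(r^{\ast})$ for every $s \in S$ we obtain condition~(4). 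Computability of $\cPdProt$ and $\cTorsoVertices$ is then immediate.

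The conceptual core — and the step I expect to demand the most care — is the ``tuning'' carried out in the first two paragraphs: the imaginary protrusion decomposition must be chosen so that the boundaries of all high-internal-treewidth root children of $\Tc$ are swallowed into its root bag, and the reason this costs only a factor $1+\cSbdProt$ in the modulator size is exactly that there are at most $k$ such children, each privately hitting the modulator. Everything afterwards — the pigeonhole pass to locate a heavy protrusion and the binary descent into the window $[\cTorsoIntersect, 3\cTorsoIntersect]$ — is routine, but the constants must be tracked so that the window $[\cTorsoIntersect,\; 2\cTorsoIntersect + 2\cPdProt]$ produced by the descent sits inside $[\cTorsoIntersect, 3\cTorsoIntersect]$, which is precisely where the hypothesis $\cTorsoIntersect \ge 3\cPdProt + 1$ is used.
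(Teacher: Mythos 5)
Your proposal follows essentially the same route as the paper's own proof: inflate $X$ to $X'$ by adding the boundaries of the high-internal-treewidth root children (at most $k$ of them, each costing $\le \cSbdProt$ vertices), invoke the modulator-to-protrusion-decomposition lemma to get $\Tc^*$ with $X' \subseteq \bag^*(r^*)$, pigeonhole into a heavy child, binarize and descend to a node $t$ whose $V(\torso(r))$-weight lands in the window $[\cTorsoIntersect, \Theta]$, then set $C$ to be the hyperedges below $t$; the paper's $C := E(\Hc(G_t)) \setminus \{\{v\} : v \in \bag^*(t)\}$ and your $C := \{e : \mathsf{node}(e) \in \desc(t)\}$ differ only in whether singletons that \emph{first} appear at $t$ are kept, and both satisfy all four conclusions. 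Two very small slips, neither fatal: in condition~(4) you write that every $v \in \bag^*(r^*) \cap V^*_t$ is ``in $\bd(C)$,'' but such a $v$ could fail to lie in $V(C)$ at all (e.g.\ an isolated vertex living only in the singleton $\{v\}$, whose $\mathsf{node}$ is $r^*$); the correct conclusion is simply $v \notin \inter(C)$, which is all you need, since $\{v\} \in \overline{C}$ forces $v \in V(\overline C)$ whenever $v \in V(C)$. Also, your $k=0$ patch via a width-$\cTwMod$ tree decomposition does not actually make the pigeonhole go through (the root degree is then a constant rather than $O(k)$, and $|V(\torso(r))|$ may be tiny), but the lemma is implicitly used only for $k \geq 1$ and the paper's own proof shares the same implicit assumption, so this is a defect of the statement rather than of your argument.
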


We will find the set $C$ within a subtree of an optimal protrusion decomposition, the existence of which is given by the following two lemmas. First, Fomin, Lokshtanov, Saurabh, and Thilikos~\cite{Fomin_Lokshtanov_Saurabh_Thilikos_2020} showed that every $H$-minor-free graph with a treewidth-$\cTwMod$-modulator of size $k$ has a $(\OO_{H,\cTwMod}(k), \OO_{H,\cTwMod}(1))$-protrusion decomposition. This result has been extended to $H$-topological-minor-free graphs by Kim, Langer, Paul, Reidl, Rossmanith, Sau, and Sikdar~\cite{Kim_Langer_Paul_Reidl_Rossmanith_Sau_Sikdar_2015}, even though it was only explicitly stated and proven later by Kim, Serna, and Thilikos~\cite{Kim_Serna_Thilikos_2019} (the proof appears only in the full version of~\cite{Kim_Serna_Thilikos_2019}).


\begin{lemma}[\cite{Fomin_Lokshtanov_Saurabh_Thilikos_2020,Kim_Langer_Paul_Reidl_Rossmanith_Sau_Sikdar_2015,Kim_Serna_Thilikos_2019}]\label{lem:modulator_implies_decomposition}
    Let $H$ be a graph and $\cTwMod$ an integer.
    For every $k$, every $H$-topological-minor-free graph $G$ with a treewidth-$\cTwMod$-modulator $X$ of size $k$ has an $(\OO_{H,\cTwMod}(k),\OO_{H,\cTwMod}(1))$-protrusion decomposition, whose root bag contains $X$.
\end{lemma}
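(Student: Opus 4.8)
The plan is to prove this classical statement — that a bounded-treewidth modulator induces a protrusion decomposition — essentially as in~\cite{Fomin_Lokshtanov_Saurabh_Thilikos_2020,Kim_Serna_Thilikos_2019}, isolating a single combinatorial estimate as the real content. Fix a nice rooted tree decomposition $(T,\bag_0)$ of $G \setminus X$ of width at most $\cTwMod$, and for $t \in V(T)$ let $V_t = \bigcup_{s \in \desc(t)} \bag_0(s)$, so that every vertex of $V_t \setminus \bag_0(t)$ occurs only in bags of the subtree of $T$ below $t$. For a constant $c = c(H,\cTwMod)$ to be fixed, let
\[
  Z \;=\; \bigl\{\, t \in V(T) \;:\; |N_G(V_t \setminus \bag_0(t)) \cap X| > c \,\bigr\}.
\]
Since a vertex lying strictly below $t$ also lies strictly below the parent of $t$, the set $Z$ is closed under taking ancestors, i.e.\ it is a prefix of $T$. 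Let $t_1,\dots,t_q$ be the roots of the connected components $T_{t_1},\dots,T_{t_q}$ of $T - Z$, let $B_i = \bag_0(t_i) \cup \bigl(N_G(V_{t_i}\setminus\bag_0(t_i)) \cap X\bigr)$, and put $Y = X \cup \bigcup_{t\in Z}\bag_0(t) \cup \bigcup_i \bag_0(t_i)$.

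I would then form the protrusion decomposition with root bag $\bag(r) = Y$, attaching below $r$, for each $i$, the tree decomposition obtained from $\bigl(T_{t_i}, \bag_0\restriction_{V(T_{t_i})}\bigr)$ by adding $B_i$ to every bag. Since $t_i \notin Z$ we have $|B_i| \le \cTwMod + 1 + c$, and as $V_{t_i}\setminus\bag_0(t_i) \subseteq V(G)\setminus X$ each such subtree has bags of size $\OO_{H,\cTwMod}(1)$. That the result is a valid tree decomposition of $G$ is a routine check: the pieces $V_{t_i}\setminus\bag_0(t_i)$ are pairwise non-adjacent in $G\setminus X$, every edge of $G$ with both endpoints in $Y$ lies in $\bag(r)$, every other edge has an endpoint in a unique piece $V_{t_i}\setminus\bag_0(t_i)$ and, if its other endpoint is in $Y$, that endpoint lies in $B_i$, and each vertex of $Y$ appears in $\bag(r)$ and in the child subtrees it borders while each remaining vertex appears only in its own subtree. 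Since $T$ is nice (hence of bounded degree) we have $q = \OO(|Z|)$, so $\bag(r)$ has size $\OO_{H,\cTwMod}(|Z|+k)$, the root has degree $q = \OO(|Z|)$, and all other bags have size $\OO_{H,\cTwMod}(1)$; moreover $X \subseteq Y = \bag(r)$. It therefore suffices to prove $|Z| = \OO_{H,\cTwMod}(k)$.

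This last estimate is the only place where exclusion of $H$ as a topological minor is used, and I expect it to be the main obstacle. For $x \in X$ the bags of $T$ meeting $N_G(x)$ span a subtree $T_x$, and $t \in Z$ means that $T_x$ reaches strictly below $t$ for more than $c$ indices $x$, so $|Z|$ is governed by how the subtrees $\{T_x\}_{x\in X}$ branch inside $T$. If $|Z|$ were super-linear in $k$ then, choosing $c$ large in terms of $H$ and $\cTwMod$, the connected subgraphs of $G$ realized by these $T_x$ together with the branching structure of $T$ would exhibit $H$ as a topological minor of $G$, a contradiction; the quantitative version uses that $H$-topological-minor-free graphs are $\OO_H(1)$-degenerate and contain only $\OO_H(|V(G)|)$ cliques to bound the number of nodes where many of the $T_x$ split, followed by a charging argument over $T$. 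The subtle point, and the reason this is harder than the minor-free case, is that excluding $H$ only as a \emph{topological} minor forbids contracting the components of $G\setminus X$, so the forbidden subdivision must be extracted inside those components via Steiner trees and suppression of degree-two vertices — this is exactly the delicate argument carried out in the full version of~\cite{Kim_Serna_Thilikos_2019}, and it is essentially tight at the level of topological-minor-free classes (cf.~\Cref{thm:topoltight}). Everything else is routine bookkeeping.
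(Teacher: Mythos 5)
Your reduction of the lemma to the single estimate $|Z| = \OO_{H,\cTwMod}(k)$ is the right overall shape, and the assembly of the protrusion decomposition from a small prefix $Z$ of $T$ is a correct skeleton (the paper itself treats this lemma as a black box, citing~\cite{Fomin_Lokshtanov_Saurabh_Thilikos_2020,Kim_Langer_Paul_Reidl_Rossmanith_Sau_Sikdar_2015,Kim_Serna_Thilikos_2019} and only remarking that the root-bag condition follows from inspecting~\cite{Kim_Langer_Paul_Reidl_Rossmanith_Sau_Sikdar_2015}, so a self-contained proof is a more ambitious route). However, the set $Z$ you define does not satisfy that estimate, and the failure has nothing to do with topological minors: your membership criterion is \emph{cumulative}, hence monotone along root-to-leaf paths (indeed this is exactly why you could observe that $Z$ is a prefix), so $|Z|$ is governed by the depth of $T$ rather than by $k$.

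Concretely, let $c$ be the threshold your proof fixes, let $k = c+1$, and let $G$ be the path $v_1 - v_2 - \cdots - v_n$ together with $k$ pendant vertices $x_1,\dots,x_k$ all attached to $v_n$; set $X = \{x_1,\dots,x_k\}$. Then $G$ is a tree, hence $K_4$-topological-minor-free, $G\setminus X$ is a path of treewidth $1$, and $X$ is a treewidth-$\cTwMod$-modulator of size $k$ for every $\cTwMod \ge 1$. For the path-shaped nice tree decomposition of $G\setminus X$ rooted at the $v_1$-end, every node $t$ except a bounded suffix near the leaf satisfies $v_n \in V_t \setminus \bag_0(t)$, so $N_G(V_t\setminus\bag_0(t))\cap X = X$ has size $k > c$, and $t\in Z$. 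Thus $|Z| = \Theta(n)$ with $n$ unbounded and $k$ fixed, so the protrusion decomposition your construction produces has root bag of size $\Theta(n)$. This also shows that your explanatory heuristic --- that $t\in Z$ forces the subtrees $T_x$ to branch, so super-linear $|Z|$ yields an $H$-subdivision --- cannot work as stated: in this example each $T_x$ is a single node and never branches. The constructions in the cited papers do not mark a threshold-prefix of $T$; they mark an LCA-closure-type set of nodes where the $T_x$'s branch or terminate, and charge those events against pairs of $X$-vertices using the sparsity facts you correctly identify (cf.~\Cref{lem:topminorfreesparse,lem:topminorfreecliques}). Until $Z$ is redefined along those lines, there is nothing for the topological-minor argument to attach to.
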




We remark that in the case of topological-minor-free graphs it is nowhere stated explicitly that the treewidth-$\cTwMod$-modulator $X$ is contained in the root bag of the protrusion decomposition. However, this follows directly from the algorithm~\cite[Algorithm 1]{Kim_Langer_Paul_Reidl_Rossmanith_Sau_Sikdar_2015} (see also Lemma 2 in the full version of~\cite{Kim_Serna_Thilikos_2019}).

We only want to merge root-children whose subtrees induce a small internal treewidth. To that end, we have one additional requirement on our ``optimal'' protrusion decomposition, in which we find the set $C$ for \Cref{lem:before_uncrossing}, namely, we want the root bag to contain the boundaries of all high-treewidth root-children of our current superbranch decomposition. The existence of such a protrusion decomposition is given by the following lemma.






\begin{lemma}\label{exists-prot-decomp}
    Let $H$ be a graph and $\cTwMod$, $\cSbdProt$ integers.
    For every $k$ the following holds: Let $G$ be an $H$-topological-minor-free graph with treewidth-$\cTwMod$-modulator $X$ of size $k$. 
    Let also $\Tc = (T, \Lc)$ be a downwards well-linked superbranch decomposition of $\Hc(G)$ with root $r$ and adhesion size~$\cSbdProt$.
    Let $Y = \{c \in \chd(r) \mid \itw(\mc \Lc[c]) > \cTwMod\}$ and $X' = X \cup \bigcup\limits_{c \in Y} \bd(\Lc[c]).$
    Then, there exists an $(\OO_{H,\cTwMod,\cSbdProt}(k), \OO_{H,\cTwMod}(1))$-protrusion decomposition of $G$ whose root bag contains $X'$.
\end{lemma}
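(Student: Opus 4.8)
The plan is to reduce to \Cref{lem:modulator_implies_decomposition} by exhibiting $X'$ as a treewidth-$\cTwMod$-modulator of $G$ of size $\OO_{\cSbdProt}(k)$. Granting this, applying \Cref{lem:modulator_implies_decomposition} with $X'$ in the role of the modulator produces an $(\OO_{H,\cTwMod}(|X'|),\OO_{H,\cTwMod}(1))$-protrusion decomposition whose root bag contains $X'$; since $|X'| = \OO_{\cSbdProt}(k)$, the first parameter is $\OO_{H,\cTwMod,\cSbdProt}(k)$, which is exactly what is claimed. That $X'$ is a treewidth-$\cTwMod$-modulator is immediate: $X' \supseteq X$, so $G \setminus X'$ is an induced subgraph of $G \setminus X$ and hence $\tw(G \setminus X') \le \tw(G \setminus X) \le \cTwMod$. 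So the only real work is bounding $|X'|$, which amounts to bounding $|Y|$.

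I would bound $|Y| \le k$ in two steps. First, the sets $\inter(\Lc[c])$ for $c \in \chd(r)$ are pairwise disjoint: for distinct children $c_1 \ne c_2$ the leaf-subtrees below them are disjoint, so $\Lc[c_2] \subseteq \overline{\Lc[c_1]}$ and therefore $V(\Lc[c_2]) \subseteq V(\overline{\Lc[c_1]})$, which means $\inter(\Lc[c_1]) = V(\Lc[c_1]) \setminus V(\overline{\Lc[c_1]})$ is disjoint from $V(\Lc[c_2]) \supseteq \inter(\Lc[c_2])$. Second, every $c \in Y$ satisfies $\inter(\Lc[c]) \cap X \ne \emptyset$: if instead $\inter(\Lc[c]) \subseteq V(G) \setminus X$, then $\Hc(G)[\inter(\Lc[c])]$ is an induced subhypergraph of $\Hc(G) \setminus X$, whose primal graph is $G \setminus X$ (using that $\Hc(G)$ has rank at most two), so by monotonicity of treewidth under induced subhypergraphs $\itw(\Lc[c]) = \tw(\Hc(G)[\inter(\Lc[c])]) \le \tw(G \setminus X) \le \cTwMod$, contradicting $c \in Y$. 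Hence $\{X \cap \inter(\Lc[c]) \mid c \in Y\}$ is a family of pairwise disjoint nonempty subsets of $X$, so $|Y| \le |X| = k$. Since each $\bd(\Lc[c]) = \adh(cr)$ has size at most $\cSbdProt$ by the adhesion-size assumption, we get $|X'| \le |X| + \sum_{c \in Y} |\bd(\Lc[c])| \le k + k \cdot \cSbdProt = (1+\cSbdProt)k$, finishing the reduction.

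The main obstacle is purely the bookkeeping in the middle step: translating ``disjoint subtrees of $\Tc$'' into ``disjoint internal vertex sets'' via the identity $\inter(A) = V(A) \setminus V(\overline A)$, and translating ``$\itw(\Lc[c]) > \cTwMod$'' into ``$X$ meets $\inter(\Lc[c])$'' via monotonicity of treewidth under taking induced subhypergraphs together with $\Pc(\Hc(G) \setminus X) = G \setminus X$. Once these are in place, the rest is the one-line reduction above. One degenerate case should be dispatched first: if $r$ is a leaf then $\chd(r) = \emptyset$, hence $Y = \emptyset$ and $X' = X$, and the claim is then exactly \Cref{lem:modulator_implies_decomposition}.
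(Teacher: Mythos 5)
Your proof is correct and follows essentially the same route as the paper's: show $X' \supseteq X$ is still a treewidth-$\cTwMod$-modulator, bound $|Y| \le k$ by combining disjointness of the sets $\inter(\Lc[c])$ across distinct root-children with the observation that every $c \in Y$ forces $\inter(\Lc[c]) \cap X \ne \emptyset$, conclude $|X'| \le k(1+\cSbdProt)$, and invoke \Cref{lem:modulator_implies_decomposition}. You spell out the two intermediate steps (disjointness via $\inter(A) = V(A)\setminus V(\overline A)$, and the hypergraph-to-primal-graph translation) in slightly more detail than the paper, but the argument is the same.
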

\begin{proof} 
    First, we remark that $X' \supseteq X$ is again a treewidth-$\cTwMod$-modulator of $G$. In the following, we first bound the size of $X'$, and then apply \Cref{lem:modulator_implies_decomposition} to obtain a protrusion decomposition of $G$.

    Let $s \in Y$ be a root-child with $\itw(\Lc[s]) > \cTwMod$. Since $X$ is a treewidth-$\cTwMod$-modulator, $G \setminus X$ has treewidth at most $\cTwMod$. Thus, the subgraph of $G$ induced by $\inter(\Lc[c])$ is not completely contained in $G \setminus X$, so $X \cap \inter(\Lc[c]) \neq \emptyset$.
    Moreover, for two distinct root-children $s_1,s_2 \in Y$, we have that $\inter(\Lc[s_1]) \cap \inter(\Lc[s_2]) = \emptyset$, and in particular, the vertices from $\inter(\Lc[s_1])$ and $\inter(\Lc[s_2])$ that are contained in $X$ are disjoint. Since $X$ has size $k$, it follows that there are at most $k$ such root-children $s$ with $\itw(\Lc[s]) > \cTwMod$, and thus $|Y| \leq k$ and $|X'| \leq k + k \cdot \cSbdProt$.
    
    Applying \Cref{lem:modulator_implies_decomposition}, $G$ has an $(\OO_{H,\cTwMod}(|X'|), \OO_{H,\cTwMod}(1))$-protrusion decomposition, that is, an $(\OO_{H,\cTwMod,\cSbdProt}(k), \OO_{H,\cTwMod}(1))$-protrusion decomposition, whose root bag contains $X'$.
\end{proof}

Now we have everything that we need to prove \Cref{lem:before_uncrossing}, which we re-state here before the proof.

\BeforeUncrossing*
\begin{proof}

    Let $H$ be a graph, $\cTwMod$, $\cSbdProt$, $k$ integers, and let $G$ be an $H$-topological-minor-free graph that has a treewidth-$\cTwMod$-modulator $X$ of size $k$. Let $\Tc = (T,\Lc)$ be a downwards well-linked superbranch decomposition of $\Hc(G)$ with adhesion size $\cSbdProt$ that is rooted at $r$. Let $Y$ and $X'$ be defined as in \Cref{exists-prot-decomp}, i.e., $Y = \{s \in \chd(r) \mid \itw(\Lc[s]) > \cTwMod\}$ and $X' = X \cup \bigcup\limits_{s \in Y} \bd(\Lc[s])$.
    Then, by \Cref{exists-prot-decomp}, there exist constants $c_1 \le \OO_{H,\cTwMod,\cSbdProt}(1)$ and $c_2 \le \OO_{H,\cTwMod}(1)$ such that $G$ has a $(c_1 \cdot k, c_2)$-protrusion decomposition $\Tc^* = (T^*,\bag^*)$ rooted at $r^*$ with $X' \subseteq \bag^*(r^*)$.
    We set $\cTorsoVertices = \cTorsoVertices(H,\cTwMod,\cSbdProt) \coloneq 2c_1$ and $\cPdProt = \cPdProt(H,\cTwMod) \coloneq c_2$.

    Given an integer $\cTorsoIntersect \geq 3\cPdProt + 1$, we show that if $|V(\torso(r))| \geq \cTorsoVertices \cdot (\cTorsoIntersect + 1) \cdot k$, then there exists a set $C \subseteq E(\Hc(G))$ that satisfies \Cref{cond:size,cond:boundary,cond:internal_treewidth,cond:high_tw_subtree}. First, we observe that since the bag of the root $r^*$ of the ``optimal'' protrusion decomposition $\Tc^*$ has size at most $c_1 \cdot k$, there are at most $c_1 \cdot k$ vertices from $V(\torso(r))$ in $\bag^*(r^*)$. By the pigeonhole principle, and since $r^*$ has at most $c_1 \cdot k$ children, there is a root-child $s \in \chd(r^*)$ such that $\inter(E(\Hc(G_s)))$ contains at least
    \[\frac{|V(\torso(r))| - c_1 \cdot k}{c_1 \cdot k}\geq \frac{\cTorsoVertices \cdot (\cTorsoIntersect + 1) \cdot k - c_1 \cdot k}{c_1 \cdot k} = \frac{2 \cdot c_1 \cdot (\cTorsoIntersect + 1) \cdot k - c_1 \cdot k}{c_1 \cdot k} = 2\cTorsoIntersect + 1 \geq \cTorsoIntersect + \cPdProt + 1\]
    vertices from $V(\torso(r))$.

    We can assume that every node $t \in V(T^*) \setminus \{r^*\}$ has at most two children. This can be achieved by replacing every node $t \in V(T^*) \setminus \{r^*\}$ with more than two children, say $s_1,s_2,\dots,s_q$, by a path $t_1,t_2,\dots,t_q$, where $\bag(t_i) = \bag(t)$, and then replacing the edge $s_i t$ by $s_i t_i$ for $i = 1,\dots,q$.
    Let $t \in V(T^*)$ such that $|\inter(E(\Hc(G_t))) \cap V(\torso(r))| \geq \cTorsoIntersect + \cPdProt + 1$ but for all (one or two) children $s \in \chd(t)$ we have $|\inter(E(\Hc(G_{s}))) \cap V(\torso(r))| \leq \cTorsoIntersect + \cPdProt$. We set $C \coloneq E(\Hc(G_t)) \setminus \{e \mid V(e) = \{v\} \subseteq \bag(t)\} \subseteq E(\Hc(G))$. Since removing the singleton edges $e$ with $V(e) = \{v\} \in \bag(t)$ from $E(\Hc(G_t))$ cannot increase the number of internal vertices and decreases it by at most $|\bag(t)| \leq \cPdProt + 1$, we have $\cTorsoIntersect \leq |\inter(E(\Hc(G_t))) \cap V(\torso(r))| - \cPdProt - 1 \leq |\inter(C) \cap V(\torso(r))| \leq |\inter(E(\Hc(G_t))) \cap V(\torso(r))| \leq 2(\cTorsoIntersect + \cPdProt) + \cPdProt + 1 \leq 3\cTorsoIntersect$, so \Cref{cond:size} is satisfied.
     
    Note that any boundary vertex $v \in \bd(C)$ is also in $\bag(t)$, so $\lambda(C) \leq \cPdProt + 1$, satisfying~\Cref{cond:boundary}.
    Furthermore, we have $\itw(C) \leq \tw(\Hc(G_t)) \leq \cPdProt$, satisfying~\Cref{cond:internal_treewidth}. Now, consider a vertex $x \in X'$ with $x \in V(C)$. Then, there is a node $t'$ in the subtree of $T^*$ rooted at $t$ such that $x \in \bag(t')$. Since $x \in \bag(r^*)$, it follows that $x \in \bag(t)$. Thus, the singleton edge $e_x \in E(\Hc(G))$ with $V(e_x) = \{x\}$ is not in $C$. Since $x \in V(C)$, it follows that $x \in \bd(C)$. Hence, $V(C) \cap X' \subseteq \bd(C) \cap X'$, so $\inter(C) \cap X' = \emptyset$.
    Since for every root-child $s$ with $\itw(\Lc[s]) \geq \cTwMod$, we have $\bd(\Lc[s]) \subseteq X'$, it holds that $\inter(C) \cap \bd(\Lc[s]) \subseteq \inter(C) \cap X' = \emptyset$, finally satisfying \Cref{cond:high_tw_subtree}.
\end{proof}

\subsubsection{Uncrossing}\label{sec:uncrossing}
The second part of the proof of \Cref{lem:many_torso_vertices} is to uncross the set $C$ from \Cref{lem:before_uncrossing}, i.e., for every root-child $s$, either include or exclude $\Lc[s]$ completely from $C$. During the uncrossing, we need to make sure that the treewidth and the boundary stay bounded, and the set does not become too small. The following lemma tells us that always one of the two options, either including or excluding a well-linked set $A \subseteq E(G)$ from a set $B \subseteq E(G)$, does not increase the size of the boundary.

\begin{lemma}\label{lem:uncrossing}
    Let $G$ be a hypergraph, $A \subseteq E(G)$ a well-linked set and $B \subseteq E(G)$. Then, either $\lambda(B \cup A) \leq \lambda(B)$ or $\lambda(B \setminus A) \leq \lambda(B)$.
\end{lemma}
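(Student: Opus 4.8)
The plan is to combine the submodularity and symmetry of $\lambda$ with the well-linkedness of $A$. The natural way to split $A$ relative to $B$ is into $A_1 = A \cap B$ and $A_2 = A \setminus B$. Since $A$ is well-linked, for this bipartition we get $\lambda(A \cap B) \geq \lambda(A)$ or $\lambda(A \setminus B) \geq \lambda(A)$ (and if one of the two parts is empty this holds trivially, since then the other part equals $A$). The two cases will yield the two alternative conclusions of the lemma.

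First I would record two consequences of submodularity. Applying submodularity to $B$ and $A$ directly gives
\[
  \lambda(A \cap B) + \lambda(A \cup B) \;\leq\; \lambda(A) + \lambda(B).
\]
Applying submodularity to $B$ and $\overline{A} = E(G) \setminus A$ gives $\lambda(B \cap \overline{A}) + \lambda(B \cup \overline{A}) \leq \lambda(B) + \lambda(\overline{A})$; using symmetry $\lambda(\overline{A}) = \lambda(A)$, the identity $B \cap \overline{A} = B \setminus A$, and the fact that the complement of $B \cup \overline{A}$ equals $\overline{B} \cap A = A \setminus B$ (so $\lambda(B \cup \overline{A}) = \lambda(A \setminus B)$ by symmetry), this becomes
\[
  \lambda(B \setminus A) + \lambda(A \setminus B) \;\leq\; \lambda(A) + \lambda(B).
\]

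I would then finish by a case distinction according to well-linkedness. If $\lambda(A \cap B) \geq \lambda(A)$, the first displayed inequality gives $\lambda(A) + \lambda(A \cup B) \leq \lambda(A) + \lambda(B)$, hence $\lambda(B \cup A) \leq \lambda(B)$, the first alternative. If instead $\lambda(A \setminus B) \geq \lambda(A)$, the second displayed inequality gives $\lambda(B \setminus A) + \lambda(A) \leq \lambda(A) + \lambda(B)$, hence $\lambda(B \setminus A) \leq \lambda(B)$, the second alternative. This covers all cases.

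I do not expect a genuine obstacle here; the only thing to be careful about is choosing the correct pairs of sets for submodularity and invoking symmetry in the right place, so that the quantities $\lambda(B \cup A)$ and $\lambda(B \setminus A)$ are exactly the ones controlled by the two cases of well-linkedness of $A$.
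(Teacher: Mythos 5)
Your proof is correct and takes essentially the same approach as the paper's: both rely on submodularity (applied twice) and symmetry of $\lambda$, together with well-linkedness of $A$ applied to the bipartition $(A\cap B,\, A\setminus B)$. The paper phrases it as a contrapositive chain (assume $\lambda(B\cup A)>\lambda(B)$, deduce $\lambda(B\setminus A)\le\lambda(B)$), whereas you lay it out as a symmetric two-case analysis, but the underlying argument is identical.
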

\begin{proof}
    Assume that $\lambda(B \cup A) > \lambda(B)$. We show that $\lambda(B \setminus A) \leq \lambda(B)$. First, by the submodularity of $\lambda$ we have $\lambda(B \cap A) < \lambda(A)$. Since $A$ is well-linked and $(B \cap A, \compl{B} \cap A)$ is a bipartition of $A$, it follows that $\lambda(\compl{B} \cap A) \geq \lambda(A)$. Then, again due to submodularity, $\lambda(\compl{B} \cup A) \leq \lambda(\compl{B})$, where $\lambda(\compl{B} \cup A) = \lambda(B \setminus A)$ and $\lambda(\compl{B}) = \lambda(B)$ both due to the symmetry of $\lambda$. Thus, $\lambda(B \setminus A) \leq \lambda(B)$.
\end{proof}

When excluding something from $C$, the internal treewidth $\itw(C)$ clearly does not increase.
When adding something to $C$, $\itw(C)$ could increase.
Nevertheless, we observe the internal treewidth increases by at most the smaller boundary size of the two sets.

\begin{lemma}\label{obs:itwmaxmin}
    Let $G$ be a hypergraph, $A,B\subseteq E(G)$. 
    Then $\itw(A\cup B)\leq \max(\itw(A),\itw(B)) + \min(\lambda(A),\lambda(B))$.
\end{lemma}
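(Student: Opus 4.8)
The statement to prove is $\itw(A\cup B)\le \max(\itw(A),\itw(B)) + \min(\lambda(A),\lambda(B))$, where $\itw(A) = \tw(G[\inter(A)])$. By symmetry of the roles of $A$ and $B$, I may assume without loss of generality that $\lambda(A)\le\lambda(B)$, so the target bound becomes $\itw(A\cup B)\le \max(\itw(A),\itw(B)) + \lambda(A)$. The natural approach is to build a tree decomposition of $\Pc(G)[\inter(A\cup B)]$ of the desired width by combining tree decompositions witnessing $\itw(A)$ and $\itw(B)$, after padding the $A$-side decomposition with the (at most $\lambda(A)$ many) boundary vertices of $A$.

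Let me sketch the key steps. First I would analyze where the vertices of $\inter(A\cup B)$ can lie. A vertex $v\in V(A\cup B)$ fails to be internal to $A\cup B$ exactly when it is incident to some hyperedge outside $A\cup B$. I would argue the partition: $\inter(A\cup B) \subseteq \inter(A) \cup \inter(B) \cup \bd(A)$. Indeed, take $v\in\inter(A\cup B)$. If $v\in\inter(A)$ or $v\in\inter(B)$ we are done; otherwise $v$ is incident to a hyperedge of $A\cup B$ but not internal to either, and I would check that such a $v$ must actually lie in $\bd(A)$ (it is incident to a hyperedge in one of the two sets and, not being internal to that set, must be incident to a hyperedge outside it — which, since it is internal to $A\cup B$, lies in the other set; a short case check pins $v$ down to $\bd(A)\cup\bd(B)$, and then I would note it suffices to cover it using $\bd(A)$ since $\lambda(A)\le\lambda(B)$; if it only lands in $\bd(B)\setminus\bd(A)$ one can instead pad with $\bd(B)$, but then one uses $\min$ — I need to be slightly careful and may prefer to simply pad the smaller-boundary side's decomposition with $\bd(A)$ and handle $\bd(B)$-only vertices as belonging to $\inter(B)$-adjacent structure, or re-examine which set has the smaller boundary). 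Actually the cleanest route: let $S$ be whichever of $\bd(A),\bd(B)$ is smaller, say $S=\bd(A)$; show $\inter(A\cup B)\subseteq \inter(A)\cup\inter(B)\cup S$ by the case analysis above.

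Second, let $(T_A,\bag_A)$ be a tree decomposition of $\Pc(G)[\inter(A)]$ of width $\itw(A)$, and $(T_B,\bag_B)$ one of $\Pc(G)[\inter(B)]$ of width $\itw(B)$. Form $(T_A,\bag_A')$ by setting $\bag_A'(x)=\bag_A(x)\cup (S\cap \inter(A\cup B))$ for every node $x$; this has width at most $\itw(A)+\lambda(A)$ and covers all vertices of $\inter(A)\cup S$ that lie in $\inter(A\cup B)$, as well as every edge of $\Pc(G)[\inter(A\cup B)]$ with both endpoints there. Then I would glue $T_A$ and $T_B$ by adding a single edge between an arbitrary node of each, and verify the three tree-decomposition axioms for the graph $\Pc(G)[\inter(A\cup B)]$: every vertex appears (using the containment from step one), every edge $uv$ of $\Pc(G)[\inter(A\cup B)]$ is covered — here I would observe that $u,v$ being adjacent in $\Pc(G)$ means they share a hyperedge $e$; since $u,v\in V(A\cup B)$, such $e$ lies in $A\cup B$, hence $e\in A$ or $e\in B$, so $V(e)\cap\inter(A\cup B)$ induces a clique living inside $\inter(A)\cup S$ or inside $\inter(B)$ (modulo the $S$-vertices, which are in every bag of $T_A$), giving a common bag — and connectivity of each vertex's bags, which holds within each piece and survives the gluing since $S$ is in all of $T_A$'s bags while $\inter(B)$-exclusive vertices stay in $T_B$. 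The resulting width is $\max(\itw(A)+\lambda(A),\itw(B)) \le \max(\itw(A),\itw(B))+\lambda(A)$, as desired.

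The main obstacle I anticipate is the bookkeeping in step one — precisely nailing down the containment $\inter(A\cup B)\subseteq\inter(A)\cup\inter(B)\cup S$ with the right choice of $S$, and making sure the edge-coverage argument in step two goes through for hyperedges straddling the two sides (a hyperedge $e\in A$ may have some vertices in $\inter(A)$, some in $\bd(A)\subseteq S$, and the clique it induces on $V(e)\cap\inter(A\cup B)$ must be realized in a single bag of the padded $T_A$). Everything else is the standard "glue two tree decompositions along an edge after padding one with a small set" maneuver.
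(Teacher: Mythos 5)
Your containment $\inter(A\cup B)\subseteq \inter(A)\cup\inter(B)\cup\bd(A)$ is correct, but the gluing step has a genuine gap. When $A \cap B \neq \emptyset$, the sets $\inter(A)$ and $\inter(B)$ can overlap: a vertex $v$ incident only to hyperedges in $A \cap B$ lies in $\inter(A)\cap\inter(B)$. Such a $v$ appears in some bags of $T_A$ and in some bags of $T_B$, and it is \emph{not} in your padding set $S=\bd(A)$ (because $\inter(A)\cap\bd(A)=\emptyset$), so after inserting a single arbitrary edge between $T_A$ and $T_B$ the occurrence set of $v$ is disconnected, and the result is not a tree decomposition. The degenerate instance $A=B$ already breaks the construction. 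Your remark ``$\inter(B)$-exclusive vertices stay in $T_B$'' silently assumes $\inter(A)\cap\inter(B)=\emptyset$, which is exactly what fails here.

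The paper's proof avoids this by first reducing to the case $A\cap B=\emptyset$ using (symmetric) submodularity of $\lambda$: from $\lambda(A\setminus B)+\lambda(B\setminus A)\le\lambda(A)+\lambda(B)$ one concludes that $\lambda(A\setminus B)\le\lambda(A)$ or $\lambda(B\setminus A)\le\lambda(B)$, so one may replace $A$ by $A\setminus B$ or $B$ by $B\setminus A$ without increasing any term on the right-hand side (one checks $\inter(A\setminus B)\subseteq\inter(A)$, hence $\itw(A\setminus B)\le\itw(A)$, and $(A\setminus B)\cup B=A\cup B$). Once $A$ and $B$ are disjoint, $\inter(A)\cap\inter(B)=\emptyset$ and $V(A)\cap V(B)\subseteq\bd(A)\cap\bd(B)$, so $|V(A)\cap V(B)|\le\min(\lambda(A),\lambda(B))$ and your pad-and-glue construction goes through verbatim. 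This disjointness reduction is the idea your proof is missing; without it the connectivity axiom can fail as above.
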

\begin{proof}
    It follows from the symmetry and submodularity of $\lambda$ that $\lambda(A \setminus B) + \lambda(B \setminus A) \le \lambda(A)+\lambda(B)$.
    Therefore, either $\lambda(A \setminus B) \le \lambda(A)$ or $\lambda(B \setminus A) \le \lambda(B)$.
    Thus, by either replacing $A$ by $A \setminus B$ or $B$ by $B \setminus A$, we can assume without loss of generality that $A$ and $B$ are disjoint.

    When $A$ and $B$ are disjoint, the inequality \[\itw(A \cup B) \le \max(\itw(A), \itw(B)) + |V(A) \cap V(B)| \le \max(\itw(A),\itw(B)) + \min(\lambda(A),\lambda(B))\] clearly holds.
%
\end{proof}

Lastly, the following lemma tells us that the set $C$ cannot cross internal components of a set $\Lc[s]$ arbitrarily.

\begin{lemma}\label{lem:subsetofB}
    Let $G$ be a hypergraph, $A\subseteq E(G)$ an internally connected set, and $B\subseteq E(G)$. If $\bd(B) \cap \inter(A) = \emptyset$ and $A \cap B \neq \emptyset$, then $A\subseteq B$.
\end{lemma}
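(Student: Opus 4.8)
The plan is to argue by contradiction. Suppose $A \not\subseteq B$. Together with the hypothesis $A \cap B \neq \emptyset$, this means that $(A \cap B,\ A \setminus B)$ is a bipartition of $A$ into two non-empty parts. I will show that $\bd(A \cap B) \subseteq \bd(A)$ and $\bd(A \setminus B) \subseteq \bd(A)$; this contradicts the assumption that $A$ is internally connected and finishes the proof.

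For the first inclusion, consider a vertex $v \in \bd(A \cap B)$. By definition of the boundary, $v$ lies in some hyperedge $e_1 \in A \cap B$ and in some hyperedge $e_2 \in E(G) \setminus (A \cap B)$. If $e_2 \notin A$, then $v \in V(A)$ via $e_1$ and $v \in V(\overline A)$ via $e_2$, so $v \in \bd(A)$ and we are done. Otherwise $e_2 \in A \setminus B$, so $v$ is incident to $e_1 \in B$ and to $e_2 \in E(G) \setminus B$, whence $v \in \bd(B)$. The hypothesis $\bd(B) \cap \inter(A) = \emptyset$ now gives $v \notin \inter(A)$, while $v \in V(A)$ since $v \in V(e_1)$ and $e_1 \in A$; as $\inter(A) = V(A) \setminus \bd(A)$, we conclude $v \in \bd(A)$. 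The inclusion $\bd(A \setminus B) \subseteq \bd(A)$ is obtained by the symmetric argument: for $v \in \bd(A \setminus B)$, pick witnessing hyperedges $e_1 \in A \setminus B$ and $e_2 \in E(G) \setminus (A \setminus B)$ with $v \in V(e_1) \cap V(e_2)$; if $e_2 \notin A$ then $v \in \bd(A)$ directly, and if $e_2 \in A \cap B$ then $v$ is incident to $e_1 \in E(G) \setminus B$ and $e_2 \in B$, so again $v \in \bd(B)$, which is disjoint from $\inter(A)$, forcing $v \in \bd(A)$.

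There is no substantial obstacle here beyond bookkeeping. The only point that requires care is to track which of the membership facts $v \in V(A)$, $v \in V(\overline A)$, $v \in V(B)$, $v \in V(\overline B)$ is available in each branch of the case analysis, and to invoke $\bd(B) \cap \inter(A) = \emptyset$ precisely in the branch where the witnessing hyperedge outside the chosen part still belongs to $A$. (Note that the degenerate case $|A| = 1$ cannot occur under the assumption $A \not\subseteq B$, since then $A \cap B = \emptyset$, contradicting the hypothesis, so the bipartition used above is always genuine.)
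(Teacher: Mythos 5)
Your proof is correct, and it takes a genuinely different route from the paper. You argue directly from the bipartition definition of internal connectivity: assuming $A \not\subseteq B$, the pair $(A \cap B,\ A \setminus B)$ is a nontrivial bipartition of $A$, and you verify by a short case analysis (tracking which of $V(A)$, $V(\overline A)$, $V(B)$, $V(\overline B)$ a witnessing vertex must lie in) that both parts have boundary inside $\bd(A)$, which is exactly what internal connectivity forbids. The paper instead invokes the alternative \emph{path} characterization of internal connectivity: it picks a hyperedge $e \in A \setminus B$, finds a vertex $v \in V(e) \cap \inter(A)$, first shows $\inter(A) \cap \inter(B) = \emptyset$ by walking a path in $\Pc(G)[\inter(A)]$ from a hypothetical $u \in \inter(A) \cap \inter(B)$ to $v$ and observing the path must cross $\bd(B)$, and then derives the contradiction by finding a vertex that would have to lie in both $\inter(A)$ and $\inter(B)$. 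Your version is more elementary: it never uses the primal-graph/path characterization (or the fact that hyperedges of a multi-edge internally connected set must contain internal vertices), only the raw bipartition definition plus the hypothesis $\bd(B) \cap \inter(A) = \emptyset$. The paper's route is perhaps more aligned with the geometric intuition used elsewhere in the section (e.g., \Cref{lem:outside_paths}), but for this particular lemma your argument is shorter and self-contained. Both handle the $|A| = 1$ degeneracy correctly.
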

\begin{proof}
    
    The case that $|A|=1$ is trivial, so assume $|A|>1$.
    Assume for the sake of contradiction that there exists an edge $e$ such that $e\in A$ and $e\notin B$. First, we show that then $\inter(A)\cap\inter(B) = \emptyset$.

    Suppose there exists a vertex $u \in \inter(A) \cap \inter(B)$. Since $A$ is internally connected and $|A|>1$, there exists a vertex $v \in V(e) \setminus \bd(A)$. Then, $v \in \inter(A)$ and we remark that $v \notin \inter(B)$, since $e \notin B$. As $A$ is internally connected, there is a path between $u$ and $v$ in $\Pc(G)[\inter(A)]$. Since $u\in \inter(B)$ and $v\notin \inter(B)$ there is a vertex $z$ on the path such that $z\in \bd(B)$. This is a contradiction to $\bd(B) \cap \inter(A) = \emptyset$.

    Thus, we have $\inter(A) \cap \inter(B) = \emptyset$.
    Let $e'\in A\cap B$ and $v\in V(e')$ with $v'\in\inter(A)$. Note that $v'\notin \bd(B)$, as otherwise $\bd(B) \cap \inter(A)\neq \emptyset$. But then $v'\in \inter(B)$, a contradiction.
\end{proof}

Now we are ready to finish the proof of \Cref{lem:many_torso_vertices}, which we re-state here.

\ManyTorsoVertices*
\begin{proof}
Let $\cPdProt' = \cPdProt'(H,\cTwMod)$ be the constant $\cPdProt(H,\cTwMod)$ of \Cref{lem:before_uncrossing}.
Here, we set $\cNewSbdProt = \cNewSbdProt(H,\cTwMod) \coloneq \max(\cTwMod + \cPdProt'+1, 2 \cdot \cPdProt' + (\cPdProt' + 1)^2 + 1)$.
Furthermore, we set $\cTorsoIntersect \coloneq \cSbdProt \cdot (\cPdProt' + 1) + \cSbdProt \cdot 2^{2\cNewSbdProt+2} \geq 3\cPdProt' + 1$.
By \Cref{lem:before_uncrossing}, there exists $\cTorsoVertices = \cTorsoVertices(H,\cTwMod,\cSbdProt,\cTorsoIntersect)$, so that for every $k$, the following holds:
Let $G$ be an $H$-topological-minor-free graph with a treewidth-$\cTwMod$-modulator of size $k$, and $\Tc = (T,\Lc)$ be a downwards well-linked superbranch decomposition of $\Hc(G)$ with $\adhsize(\Tc) = \cSbdProt$ and root $r$, where $|V(\torso(r))| \geq \cTorsoVertices \cdot k$. Then, there exists a set $C \subseteq E(\Hc(G))$ satisfying \Cref{cond:size,cond:boundary,cond:internal_treewidth,cond:high_tw_subtree}, i.e.,
\begin{enumerate}
    \item $\cTorsoIntersect \leq |\inter(C) \cap V(\torso(r))| \leq 3 \cdot \cTorsoIntersect$,
    \item $\lambda(C) \leq \cPdProt' + 1$,
    \item $\itw(C) \leq \cPdProt'$, and
    \item for every root-child $s \in \chd(r)$ with $\itw(\Lc[s]) > \cTwMod$, we have $\inter(C) \cap \bd(\Lc[s]) = \emptyset$.
\end{enumerate}

For each child $s$ of $r$ that does not satisfy $\Lc[s] \subseteq C$ or $\Lc[s] \cap C = \emptyset$, we will modify $C$ by either including or excluding $\Lc[s]$ completely from $C$.
Throughout this uncrossing process, we change the internal treewidth $\itw(C)$ and $|\inter(C) \cap V(\torso(r))|$ slightly so that we end up with
\begin{enumerate}
    \item\label{cond2:size} $\cTorsoIntersect - \cSbdProt \cdot (\cPdProt' + 1) = \cSbdProt \cdot 2^{2\cNewSbdProt+2} \leq |\inter(C) \cap V(\torso(r))| \leq 3 \cdot \cTorsoIntersect + \cSbdProt \cdot (\cPdProt' + 1) + \cPdProt' + 1 \leq \OO_{H,\cTwMod}(\cSbdProt)$, 
    \item $\lambda(C)\leq \cPdProt' + 1 \leq \cNewSbdProt$,
    \item $\itw(C)\leq \max(\cTwMod + \cPdProt'+1, 2 \cdot \cPdProt' + (\cPdProt' + 1)^2 + 1) \leq \cNewSbdProt$, and
    \item for each root-child $s \in \chd(r)$ either $\Lc[s] \subseteq C$ or $\Lc[s] \cap C = \emptyset$.
\end{enumerate}
Then, since each child  $s$ of $r$ satisfies $|V(\Lc[s]) \cap V(\torso(r))|=\lambda(\Lc[s])\leq \adhsize(\Tc) = \cSbdProt$, it follows from \Cref{cond2:size} that $C$ must contain the set $\Lc[s]$ for at least $2^{2\cNewSbdProt+2}$ children $s$ of $r$. Then, our desired set $B$ will be the set of hyperedges corresponding to exactly those children, i.e., we choose $B$ so that $B \expand \Tc = C$. Note that $|V(B)| = |V(C) \cap V(\torso(r))| \leq |\inter(C) \cap V(\torso(r))| + \lambda(C) \leq \OO_{H,\cTwMod}(\cSbdProt)$.

We then describe how to do the uncrossing while achieving the requirements.
We say that a child $s$ of $r$ is \emph{crossing} if neither $\Lc[s] \subseteq C$ nor $\Lc[s] \cap C = \emptyset$ holds.
By \emph{uncrossing} $s$, we mean setting either $C \coloneq C \cup \Lc[s]$ or $C \coloneq C \setminus \Lc[s]$.
Because the sets $\Lc[s]$ are pairwise disjoint, uncrossing $s$ does not affect whether other children $s'$ are crossing or not.
Therefore, we can analyze the process of uncrossing all crossing children $s$ by uncrossing them one-by-one, in an arbitrary order.
This uncrossing is done in three steps.

\paragraph{Step 1.} We begin with the crossing children $s$ with $\itw(\Lc[s]) > \cTwMod$.
For each such child, we exclude $\Lc[s]$ from $C$, that is, we set $C:=C\setminus \Lc[s]$.
Clearly $\inter(C)$ does not grow, and we still have $\itw(C)\leq \cPdProt'$. Recall that $V(\Lc[s]) \cap V(\torso(r))=\bd(\Lc[s])$ and  $\bd(\Lc[s]) \cap \inter(C) = \emptyset$ by \Cref{cond:high_tw_subtree}. Therefore, $\Lc[s]$ contains no hyperedge incident to a vertex in $\inter(C) \cap V(\torso(r))$, and thus, $|\inter(C) \cap V(\torso(r))|$ is preserved. Lastly, we will show that $\bd(C \setminus \Lc[s]) \subseteq \bd(C)$, implying that $\lambda(C \setminus \Lc[s]) \leq \lambda(C)$. For this, consider a vertex $v \in \bd(C \setminus \Lc[s])$ and suppose $v \notin \bd(C)$. Then $v \in \inter(C) \cap \bd(\Lc[s])$, which contradicts \Cref{cond:high_tw_subtree}. Thus, $\lambda(C)$ is preserved.


\begin{figure}[!t]
    \centering
    \includegraphics[page=1]{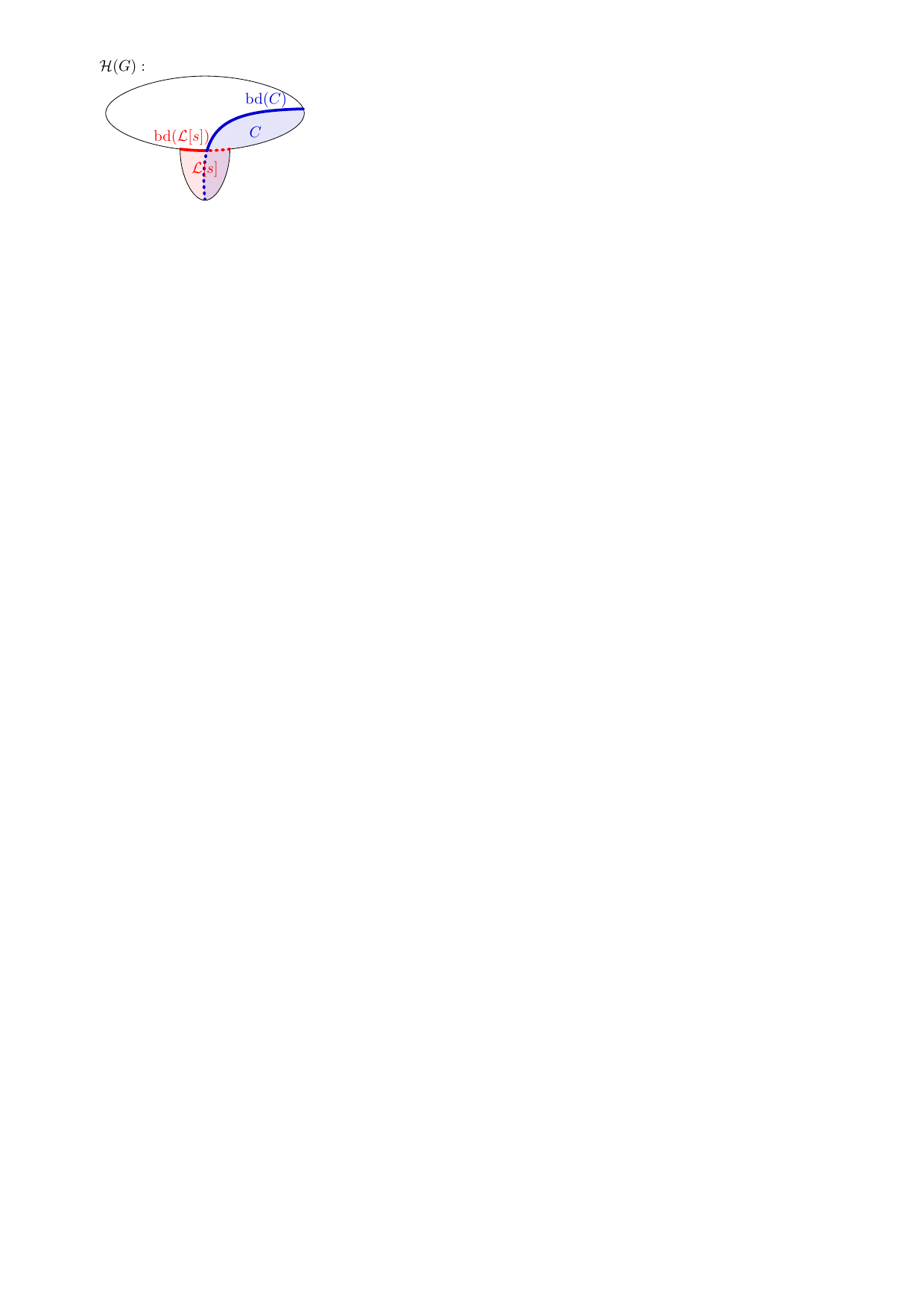}
    \caption{Step 2 of the uncrossing: We want to uncross $\Lc[s]$, where $s$ is a root-child with $\inter(\Lc[s]) \cap \bd(C) \neq \emptyset$. The boundary of $C \setminus \Lc[s]$ is the boundary of $C$ (blue) without the part that lies in $\Lc[s]$ (blue, dashed), that is, $\bd(C) \cap \inter(\Lc[s])$, combined with the boundary of $\Lc[s]$ that lies in $C$ (red, dashed), that is $\inter(C) \cap \bd(\Lc[s])$.}
    \label{fig:step2}
\end{figure}

\paragraph{Step 2.} We continue with the children $s$ of $r$ with $\inter(\Lc[s]) \cap \bd(C) \neq \emptyset$ (see \Cref{fig:step2}). Since for two different root children $s_1, s_2 \in \chd(r)$ the sets $\inter(\Lc[s_1])$ and $\inter(\Lc[s_2])$ are disjoint, we have at most $\lambda(C)$ such children. By \Cref{lem:uncrossing} we can either include or exclude $\Lc[s]$ from $C$ without increasing the boundary.
We show that, when applying \Cref{lem:uncrossing} to $\Lc[s]$ for one fixed child, $\itw(C)$ changes by at most $\lambda(C)$, and $|\inter(C) \cap V(\torso(r))|$ changes by at most $\cSbdProt$.
Since there are at most $\lambda(C) \leq \cPdProt' + 1$ such children, we have
$\cTorsoIntersect - \cSbdProt \cdot (\cPdProt' + 1) \leq |\inter(C) \cap V(\torso(r))| \leq 3 \cdot \cTorsoIntersect + \cSbdProt \cdot (\cPdProt' + 1)$ and $\itw(C) \leq \cPdProt' + (\cPdProt' + 1)^2$ after this step.

First, assume that we are in the first case of \Cref{lem:uncrossing}, i.e., $\lambda(C \cup \Lc[s]) \leq \lambda(C)$, and thus, we include $\Lc[s]$, i.e., we set $C \coloneq C \cup \Lc [t]$. By \Cref{obs:itwmaxmin}, we have $\itw(C \cup \Lc[t])\leq \max\{\itw(\Lc[t]), \itw(C)\} + \lambda(C)$. Furthermore, $|\inter(C)\cap V(\torso(r))|$ is clearly not reduced, but it could increase by at most $|V(\Lc[s]) \cap V(\torso(r))| \leq \cSbdProt$.

Now, assume that we exclude $\Lc[s]$, i.e., we set $C:=C\setminus \Lc [s]$. In this case, clearly we have $\itw(C \setminus \Lc[s]) \leq \itw(C)$, but $|\inter(C) \cap V(\torso(r))|$ might decrease.
The decrease is again bounded by $|V(\Lc[s]) \cap V(\torso(r))| \leq \cSbdProt$.

\begin{figure}[!t]
    \centering
    \begin{subfigure}[t]{0.32\textwidth}
        \includegraphics[page=2]{uncrossing.pdf}
        \caption{This case is not possible: $(\mc A, \mc A')$ is a bipartition of $\Lc[s]$ that contradicts the well-linkedness of $\Lc[s]$ ($\lambda(\mc A), \lambda(\mc A') < \lambda(\Lc[s])$).}
        \label{subfig:step3_impossible}
    \end{subfigure}
    \hfill
    \begin{subfigure}[t]{0.32\textwidth}
        \includegraphics[page=4]{uncrossing.pdf}
        \caption{If $\bd(\mc A) \subseteq \bd(\mc A') = \bd(\Lc[s])$, we exclude $\Lc[s]$ from $C$. In this case, $\lambda(C)$ and $|\inter(C) \cap V(\torso(r))|$ remain unchanged and $\itw(C)$ can only decrease.}
        \label{subfig:step3_exclude}
    \end{subfigure}
    \hfill
    \begin{subfigure}[t]{0.32\textwidth}
        \includegraphics[page=3]{uncrossing.pdf}
        \caption{If $\bd(\mc A') \subseteq \bd(\mc A) = \bd(\Lc[s])$, we include $\Lc[s]$ in $C$. In this case, vertices from $\bd(C)$ might become internal, which increases $|\inter(C) \cap V(\torso(r))|$ and $\itw(C)$.}
        \label{subfig:step3_include}
    \end{subfigure}
    \caption{Step 3 of the uncrossing: We want to uncross $\Lc[s]$, where $s$ is a root-child with $\inter(\Lc[s]) \cap \bd(C) = \emptyset$.
    $\mc A$ is the union of the internal components of $\Lc[s]$ that intersect $C$ (which implies $\mc A \subseteq C$ by \Cref{lem:subsetofB}), $\mc A'$ is the union of the remaining components. \Cref{subfig:step3_impossible} visualizes, why we always have $\bd(\mc A) \subseteq \bd(\mc A') = \bd(\Lc[s])$ or $\bd(\mc A') \subseteq \bd(\mc A) = \bd(\Lc[s])$. In the first case, where $\bd(\mc A) \subseteq \bd(\mc A')$ as visualized in \Cref{subfig:step3_exclude}, we exclude $\Lc[s]$ from $C$, while in the latter case, where $\bd(\mc A') \subseteq \bd(\mc A)$ as visualized in \Cref{subfig:step3_include}, we include $\Lc[s]$ in $C$.}
    \label{fig:step3}
\end{figure}

\paragraph{Step 3.} 
For this last step, we are left with the crossing children $s$ with $\inter(\Lc[s]) \cap \bd(C) = \emptyset$ and $\itw(\Lc[s])\leq \cTwMod$ (see \Cref{fig:step3}).
For a fixed crossing child $s$, let $A_1,\dots,A_\ell$ denote the internal components of $\Lc[s]$.
Let $\mc A = \bigcup_{i \colon A_i\cap C\neq\emptyset} A_i$ be the union of the internal components intersecting $C$ and $\mc A' = \Lc[s] \setminus \mc A$ the union of the remaining internal components.

By \Cref{lem:subsetofB} we have $A_i \subseteq C$ for every $A_i$ in $\mc A$, and thus $\mc A \subseteq C$.
This implies that $\mc A'$ is non-empty, as otherwise $\Lc[s] \subseteq C$.
By the well-linkedness of $\Lc[s]$, we either have $\bd(\mc A')\subseteq \bd(\mc A) = \bd(\Lc[s])$ or $\bd(\mc A) \subseteq \mc \bd(\mc A') = \bd(\Lc[s])$.
Otherwise, $(\mc A, \mc A')$ would be a bipartition of $\Lc[s]$ with both $\bd(\mc A), \bd(\mc A') \subsetneq \bd(\Lc[s])$, and thus $\lambda(\mc A), \lambda(\mc A') < \lambda(\Lc[s])$ (see \Cref{subfig:step3_impossible}).

If $\bd(\mc A) \subseteq \bd(\mc A')$ (see \Cref{subfig:step3_exclude}), we set $C \coloneq C \setminus \Lc[s] = C \setminus \mc A$.
In this case, every vertex $v \in \bd(\mc A)$ is incident to a hyperedge in $\mc A \subseteq C$ and to a hyperedge in $\mc A'$, where $\mc A' \cap C = \emptyset$, so $v \in \bd(C)$, and hence we have $\bd(\mc A)\subseteq \bd(C)$. 
Therefore, excluding $\mc A$ cannot change the boundary of $C$. 
Furthermore, $|\inter(C) \cap V(\torso(r))|$ and $\itw(C)$ are clearly preserved.

If $\bd(\mc A')\subseteq \bd(\mc A)$ (see \Cref{subfig:step3_include}), we set $C \coloneq C \cup \Lc[s] = C \cup \mc A'$.
With similar arguments as before, we have $\bd(\mc A')\subseteq \bd(C)$, so including $\Lc[s]$ cannot add new vertices to the boundary.
However, some boundary vertices can become internal, which can increase both $|\inter(C) \cap V(\torso(r))|$ and $\itw(C)$.
Specifically, let $D \subseteq \bd(C)$ be the boundary vertices that become internal when we add $\mc A'$ to $C$, and note that $D \subseteq \bd(\mc A') \subseteq \bd(\Lc[s])$.
We observe that $|\inter(C) \cap V(\torso(r))|$ increases by exactly $|D|$.
Furthermore, we observe that 
\[\itw(C \cup \Lc[s]) \le \max(\itw(C), \itw(\Lc[s])) + |D| \le \max(\itw(C), \cTwMod) + |D|.\]

To analyze the total increase of $\itw(C)$ and $|\inter(C) \cap V(\torso(r))|$ throughout consecutive uncrossings in step 3, we recall that $\lambda(C)$ never increases, and $|D|$ is bounded by the decrease of $\lambda(C)$.
Therefore, the sum of $|D|$ over all uncrossings is at most the initial value of $\lambda(C)$, i.e., at most $\cPdProt' + 1$.
It follows that, after all uncrossings in step 3, we have $\lambda(C) \leq \cPdProt' + 1$, $\itw(C) \leq \max(\cTwMod + \cPdProt'+1, 2 \cdot \cPdProt' + (\cPdProt' + 1)^2 + 1)$, and $\cTorsoIntersect - \cSbdProt \cdot (\cPdProt' + 1) \leq |\inter(C) \cap V(\torso(r))| \leq 3 \cdot \cTorsoIntersect + \cSbdProt \cdot (\cPdProt' + 1) + \cPdProt' + 1$, as promised.
\end{proof}

\subsection{Proof of \Cref{mergeable-children}}\label{sec:unique_or_duplicates}
In this section, we deal with the case where there are not many vertices in $\torso(r)$, but $\Delta(r)$ is still high, which will then complete the proof of \Cref{mergeable-children}. As we show in the following \Cref{torso-vertex-bound-if-unique-edges-topological}, few vertices but many edges in $\torso(r)$ implies that there is an edge with high multiplicity. Later, we argue that we can pick such a high-multiplicity edge in a way so that the set $B$ for \Cref{mergeable-children} with the required properties can be formed by taking only copies of this edge.

We first recall that $H$-topological-minor-free graphs have bounded average degree.

\begin{lemma}[\cite{Mader_1967}]
\label{lem:topminorfreesparse}
If $G$ is a $H$-topological-minor-free graph, then $|E(G)| \le \OO_H(|V(G)|)$.
\end{lemma}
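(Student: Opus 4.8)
The plan is to reduce the statement to the classical fact that graphs of sufficiently large average degree contain large topological cliques, applied with the complete graph $K_t$ where $t = |V(H)|$. The reduction from $H$ to $K_t$ is immediate: since $H$ is a subgraph of $K_t$, any graph that contains a subdivision of $K_t$ also contains a subdivision of $H$ — just restrict the branch vertices and the internally disjoint branch paths of the $K_t$-subdivision to those corresponding to $V(H)$ and $E(H)$ — and hence contains $H$ as a topological minor. This works even if $H$ is disconnected or has isolated vertices. So it suffices to show that an $H$-topological-minor-free graph cannot have too many edges relative to its vertex count.

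By a theorem of Mader~\cite{Mader_1967}, there is a computable function $g$ such that every graph of average degree at least $g(t)$ contains $K_t$ as a topological minor (in fact one may take $g(t) = \OO(t^2)$, but any computable bound suffices here). If one prefers the minimum-degree formulation of Mader's theorem, one first passes to a nonempty subgraph of minimum degree more than $|E(G)|/|V(G)|$ via the standard degeneracy argument, and notes that a topological minor of a subgraph is a topological minor of $G$. Now suppose towards a contradiction that $|E(G)| > \tfrac{1}{2}g(t)\cdot|V(G)|$. Then $G$ has average degree greater than $g(t)$, so $G$ contains a subdivision of $K_t$, and therefore $H$ as a topological minor — contradicting the hypothesis. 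Hence $|E(G)| \le \tfrac{1}{2}g(t)\cdot|V(G)| = \OO_H(|V(G)|)$, and the hidden constant depends only on $|V(H)|$ and is computable, as required.

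The only non-elementary ingredient is Mader's theorem itself, which is essentially a restatement of the claim that excluding a fixed topological minor forces linear edge density; the reduction $H \hookrightarrow K_{|V(H)|}$ and, if needed, the extraction of a high-minimum-degree subgraph are routine. Thus the main "obstacle" is simply that this lemma is a black-box consequence of Mader's theorem rather than something we would prove from scratch here.
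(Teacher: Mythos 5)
Your proof is correct and is exactly the standard derivation from Mader's theorem that the paper intends when it cites~\cite{Mader_1967} (the paper itself gives no proof, only the citation). The reduction $H \hookrightarrow K_{|V(H)|}$ and the degeneracy/average-degree argument are precisely the routine steps one uses to go from Mader's topological-clique theorem to linear edge density for $H$-topological-minor-free graphs.
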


It is an easy consequence of \Cref{lem:topminorfreesparse} that $H$-topological-minor-free graphs have a linear number of cliques.

\begin{lemma}
\label{lem:topminorfreecliques}
If $G$ is a $H$-topological-minor-free graph, then $G$ contains at most $\OO_H(|V(G)|)$ cliques.
\end{lemma}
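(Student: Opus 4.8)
The plan is to derive the bound from the fact that $H$-topological-minor-free graphs are $d$-degenerate for some constant $d = \OO_H(1)$, together with the standard observation that $d$-degenerate graphs contain only linearly many cliques.

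First I would observe that every subgraph of $G$ is again $H$-topological-minor-free, so by \Cref{lem:topminorfreesparse} there is a constant $d = \OO_H(1)$ such that every subgraph $G'$ of $G$ satisfies $|E(G')| \le d \cdot |V(G')|$. Hence every subgraph of $G$ has average degree at most $2d$, and therefore a vertex of degree at most $2d$, i.e. $G$ is $2d$-degenerate. Fix an ordering $v_1, v_2, \ldots, v_n$ of $V(G)$ obtained by repeatedly removing a vertex of minimum degree in the remaining graph, so that each $v_i$ has at most $2d$ neighbors among $\{v_{i+1}, \ldots, v_n\}$.

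Next I would charge each clique of $G$ to the vertex of smallest index that it contains. If a clique $K$ is charged to $v_i$, then all of $K \setminus \{v_i\}$ lies in $N(v_i) \cap \{v_{i+1}, \ldots, v_n\}$, a set of size at most $2d$; hence at most $2^{2d}$ cliques are charged to $v_i$. Summing over all $n$ vertices gives at most $n \cdot 2^{2d} = \OO_H(|V(G)|)$ cliques in total.

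This argument is essentially routine, so there is no real obstacle; the only point requiring a bit of care is choosing the direction of the degeneracy ordering so that the charging scheme (clique $\mapsto$ its smallest-index vertex) correctly bounds the per-vertex count by $2^{\OO_H(1)}$. (Whether or not single vertices and the empty set are counted as cliques affects the total by only an additive $\OO(|V(G)|)$ term, which is absorbed into the claimed bound.)
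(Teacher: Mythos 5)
Your proof is correct and takes essentially the same approach as the paper: the paper also argues via a minimum-degree vertex $v$ (degree at most $2c_H$ by \Cref{lem:topminorfreesparse}), bounds the cliques through $v$ by $2^{2c_H}$, and removes $v$, proceeding by induction; your explicit degeneracy ordering plus charging scheme is just the unrolled form of that induction.
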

\begin{proof}
By \Cref{lem:topminorfreesparse}, let $c_H \le \OO_H(1)$ be a constant so that $H$-topological-minor-free graphs have at most $c_H \cdot |V(G)|$ edges.
We prove that $H$-topological-minor-free graphs have at most $2^{2 \cdot c_H} \cdot |V(G)|$ cliques by induction on $|V(G)|$.
This obviously holds when $|V(G)| = 0$.
When $|V(G)| \ge 1$, there exists a vertex $v \in V(G)$ of degree $\le 2 \cdot c_H$.
There are at most $2^{2 \cdot c_H}$ cliques containing $v$.
By induction, $G \setminus \{v\}$ contains at most $2^{2 \cdot c_H} \cdot (|V(G)|-1)$ cliques, so in total there are at most $2^{2 \cdot c_H} \cdot |V(G)|$ cliques.
\end{proof}

We need the following lemma to construct topological minors based on a torso of a node of a superbranch decomposition.

\begin{lemma}\label{lem:outside_paths}
    Let $G$ be a graph, $\Tc = (T,\Lc)$ be a downwards well-linked superbranch decomposition of $\Hc(G)$, and $t \in V(T)$ a non-root node. For every two vertices $u,v \in \bd(\Lc[t])$, there exists a path between $u$ and $v$ whose all edges are in $\Lc[t]$ and intermediate vertices in $\inter(\Lc[t])$.
\end{lemma}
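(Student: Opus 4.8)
The plan is to invoke the path (Menger-type) characterization of well-linkedness recorded in the preliminaries. Throughout write $A := \Lc[t]$; since $\Tc$ is downwards well-linked, $A$ is a well-linked subset of $E(\Hc(G))$. If $u = v$ there is nothing to prove (take the trivial path), so assume $u \neq v$; recall $u,v \in \bd(A)$. A preliminary observation is that, since $\Hc(G)$ has rank at most two, the primal graph $\Pc(\Hc(G)[A])$ is precisely the graph with vertex set $V(A)$ and edge set $A \cap E(G)$: a hyperedge $e \in A$ induces the primal edge $\{a,b\}$ only when $V(e) = \{a,b\}$, i.e. $e$ is the edge $ab$ of $G$ and $ab \in A$, while singleton hyperedges of $A$ contribute only isolated vertices to the primal graph. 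Consequently, any path of $\Pc(\Hc(G)[A])$ is a path of $G$ all of whose edges lie in $A = \Lc[t]$ and all of whose vertices lie in $V(A)$.

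Next I would apply the characterization to the sets $B_1 := \bd(A)\setminus\{v\}$ and $B_2 := \bd(A)\setminus\{u\}$, which are subsets of $\bd(A)$ of equal size; write $Z := B_1 \cap B_2 = \bd(A)\setminus\{u,v\}$, so that $u \in B_1\setminus B_2$ and $v \in B_2 \setminus B_1$. Well-linkedness of $A$ yields $|B_1|$ pairwise vertex-disjoint paths in $\Pc(\Hc(G)[A])$ between $B_1$ and $B_2$; in the standard form each such path has one endpoint in $B_1$ and one in $B_2$, is internally disjoint from $B_1 \cup B_2$, and — as the paths are vertex-disjoint and there are $|B_1|$ of them — their endpoints in $B_1$ are exactly $B_1$ and those in $B_2$ are exactly $B_2$. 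Let $Q$ be the unique path of the family whose endpoint in $B_1$ is $u$. I claim $Q$ avoids $Z$: if some $z \in Z$ were a vertex of $Q$, then, since $z$ is also the $B_1$-endpoint of some path of the family and the paths are vertex-disjoint, that path must be $Q$ itself, forcing $z = u$, which is absurd as $u \notin Z$. In particular the $B_2$-endpoint of $Q$ does not lie in $Z$, hence equals $v$, so $Q$ is a path from $u$ to $v$.

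Finally, passing from $Q$ to a simple subpath from $u$ to $v$ if necessary (which only shrinks its vertex set), and translating via the preliminary observation, we obtain a path of $G$ whose edges all lie in $\Lc[t]$ and whose vertices all lie in $V(A)\setminus Z$. Since the path is simple with endpoints $u$ and $v$, every intermediate vertex is distinct from $u$ and $v$ and hence lies in $V(A)\setminus\bd(A) = \inter(\Lc[t])$, which is exactly what is claimed. The argument is short; the only points requiring a little care are the identification of $\Pc(\Hc(G)[A])$ with the subgraph of $G$ on edge set $A\cap E(G)$ (so that ``paths in the hypergraph'' are honest paths of $G$ with edges in $\Lc[t]$), and the bookkeeping with the vertex-disjoint linkage that singles out the path realizing a route from $u$ to $v$ avoiding $Z$; both are handled above, and I do not anticipate a genuine obstacle.
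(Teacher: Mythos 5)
Your proof is correct and takes essentially the same route as the paper's: invoking the path (Menger) characterization of well-linkedness with $B_1,B_2$ equal to $\bd(\Lc[t])$ with one of $u,v$ removed, observing that the $|B_1|$ disjoint paths match each vertex of $B_1\cap B_2$ to itself, and deducing that the path carrying $u$ must therefore avoid $B_1\cap B_2$ and end at $v$, so its interior lies in $\inter(\Lc[t])$. Your explicit identification of $\Pc(\Hc(G)[A])$ with the subgraph $(V(A),A\cap E(G))$ of $G$ is a helpful clarification that the paper leaves implicit.
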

\begin{proof}
    Let $B_1 = \bd(\Lc[t]) \setminus \{u\}$ and $B_2 = \bd(\Lc[t]) \setminus \{v\}$. Due to the well-linkedness of $\Lc[t]$, there are $|B_1| = |B_2|$ vertex-disjoint paths between $B_1$ and $B_2$ using only edges in $\Lc[t]$. The vertices in $B_1 \cap B_2 = \bd(\Lc[t]) \setminus \{u,v\}$ are forced to connect to themselves by one-vertex paths, so $u$ must connect to $v$ avoiding vertices in $B_1 \cap B_2$.
\end{proof}

We call such a path from \Cref{lem:outside_paths} a \emph{$t$-outside path}. We then apply \Cref{lem:topminorfreesparse,lem:topminorfreecliques,lem:outside_paths} to bound the number of hyperedges in multiplicity-one subsets of $E(\torso(r))$ by the number of their vertices.

\begin{lemma}\label{torso-vertex-bound-if-unique-edges-topological}
    Let $H$ be a graph, $G$ an $H$-topological-minor-free graph, and $\Tc = (T,\Lc)$ a downwards well-linked superbranch decomposition of $\Hc(G)$ with root $r$.
    Let also $E \subseteq E(\torso(r))$ be a set of hyperedges with multiplicity one.
    It holds that $|E| \leq \OO_H(|V(E)|)$.
\end{lemma}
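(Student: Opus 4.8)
The plan is to bound $|E|$ by counting edges or cliques in auxiliary $K_h$-topological-minor-free graphs built from $G$ via the outside paths of \Cref{lem:outside_paths}. First I would reduce to the case $H = K_h$, where $h := |V(H)|$: since $H$ is a subgraph of $K_h$, every $H$-topological-minor-free graph is $K_h$-topological-minor-free, and $\OO_H(\cdot) = \OO_h(\cdot)$. I would then record the structural facts used throughout: the sets $\Lc[s]$ over the root-children $s$ partition $E(\Hc(G))$, so the interiors $\inter(\Lc[s])$ are pairwise disjoint and disjoint from $V(\torso(r)) \supseteq V(E)$; moreover $V(e_s) = \bd(\Lc[s]) \subseteq V(E)$, and ``multiplicity one'' makes the map $e_s \mapsto V(e_s)$ injective on $E$, so $|E| = |\{V(e_s) : e_s \in E\}|$. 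Finally I would split $E$ according to the rank of the hyperedges; hyperedges of rank $\le 1$ have distinct vertex sets of size $\le 1$, so there are at most $|V(E)|+1$ of them.

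For hyperedges of rank exactly $2$: these have distinct $2$-element vertex sets, so they form a simple graph $G_2$ on $V(E)$. By \Cref{lem:outside_paths} each such $e_s$ realizes its edge $V(e_s)$ by an $s$-outside path inside $\Lc[s]$ whose internal vertices lie in $\inter(\Lc[s])$. Since these internal vertices are private to $s$ and avoid $V(E)$, the union of all these outside paths is a subdivision of $G_2$ contained in $G$; hence $G_2$ is $K_h$-topological-minor-free and \Cref{lem:topminorfreesparse} bounds the number of rank-$2$ hyperedges by $|E(G_2)| \le \OO_h(|V(E)|)$.

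The main case is hyperedges of rank $\ge 3$. Here \Cref{lem:outside_paths} shows that $\bd(\Lc[s])$ lies in a single connected component of $\Pc(\Hc(G))[\Lc[s]]$, and downwards well-linkedness is used to control the \emph{shape} of $\Lc[s]$ (one checks directly that ``path-like'' or ``spider-like'' edge sets fail well-linkedness once the boundary has $\ge 3$ vertices, so the well-linked boundary of each $\Lc[s]$ admits a genuinely branching internal routing, with routing vertices drawn from the private set $\inter(\Lc[s])$). The plan is to extract from each $\Lc[s]$ a small gadget — morally a subdivided clique on $\bd(\Lc[s])$, or at worst a subdivided claw centred inside $\inter(\Lc[s])$ — and take the union $\Gamma$ of all these gadgets. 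By the edge-disjointness of the $\Lc[s]$ and the privacy of the interiors, a subdivision of $\Gamma$ is a subgraph of $G$, so $\Gamma$ is $K_h$-topological-minor-free; applying \Cref{lem:topminorfreecliques} (counting the cliques coming from the gadgets) or \Cref{lem:topminorfreesparse} (counting edges) then bounds the number of rank-$\ge 3$ hyperedges by $\OO_h(|V(E)|)$ as well. Summing the three cases yields $|E| \le \OO_h(|V(E)|)$.

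I expect this third step to be the main obstacle, for two intertwined reasons: a single high-rank hyperedge contributes an entire clique's worth of incident pairs, and a minimal connected subgraph of $\Lc[s]$ spanning $\bd(\Lc[s])$ may be long, so realizing the gadget threatens to introduce far more auxiliary vertices than $\OO_h(|V(E)|)$. Keeping the auxiliary-vertex count in check — so that \Cref{lem:topminorfreesparse}/\Cref{lem:topminorfreecliques} actually gives the linear bound rather than a vacuous one — is exactly where downwards well-linkedness must be exploited to route the needed disjoint path systems economically inside each $\Lc[s]$; this is the technically delicate part of the argument.
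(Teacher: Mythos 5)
Your rank-$\le 1$ and rank-$2$ cases are sound and essentially reproduce one half of the paper's argument (for low-rank hyperedges, routing one pair per hyperedge via \Cref{lem:outside_paths} and applying \Cref{lem:topminorfreesparse}). The rank-$\ge 3$ case, however, has a genuine gap — and it is exactly the one you flag at the end. Neither of your candidate gadgets can be made to work. A subdivided clique on $\bd(\Lc[s])$ would require $\binom{|V(e_s)|}{2}$ pairwise internally disjoint paths inside $\Lc[s]$; well-linkedness only gives $|\bd(\Lc[s])|$ disjoint paths between two equal-size subsets of the boundary (\Cref{lem:outside_paths} extracts a single such path), so there is no reason these $\binom{k}{2}$ paths exist simultaneously. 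A spider gadget centred at a private vertex of $\inter(\Lc[s])$ has a different flaw: it introduces one new auxiliary vertex $c_s$ per rank-$\ge 3$ hyperedge, so the auxiliary graph $\Gamma$ has $|V(E)|+N$ vertices and $\ge 3N$ edges where $N$ is the very quantity you want to bound; \Cref{lem:topminorfreesparse} then yields $3N \le c_H(|V(E)|+N)$, which is vacuous once $c_H \ge 3$, i.e.\ for any interesting $H$. So the economical routing you hoped well-linkedness would provide does not exist, and the vertex count of $\Gamma$ is circularly coupled to $N$.

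The idea you are missing is that one should never try to route a full clique or spider for a high-rank hyperedge. Instead, route \emph{one} pair per hyperedge uniformly across all ranks, and control the hyperedges that cannot be routed via a clique count rather than an edge count. Concretely, call $R \subseteq E$ \emph{routable} if there is an injection $g\colon R \to \binom{V(E)}{2}$ with $g(e) \in \binom{V(e)}{2}$. Fix a \emph{maximum} routable $R$ and let $G'$ be the simple graph on $V(E)$ with edge set $g(R)$; realizing each $g(e)$ by an $s$-outside path (your rank-$2$ argument, applied regardless of rank) shows $G'$ is a topological minor of $G$, so $|R| = |E(G')| \le \OO_H(|V(E)|)$ by \Cref{lem:topminorfreesparse}. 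The crucial second step is: for any $e \in E \setminus R$, maximality of $R$ forces every pair $\{u,v\} \subseteq V(e)$ to already equal $g(e')$ for some $e' \in R$ — otherwise $e$ could be added to $R$. Hence $V(e)$ is a clique in $G'$, and by multiplicity one these cliques are distinct across $E \setminus R$. Now \Cref{lem:topminorfreecliques} bounds the number of cliques in the topological-minor-free graph $G'$ by $\OO_H(|V(E)|)$, giving $|E \setminus R| \le \OO_H(|V(E)|)$. Adding the two bounds finishes the proof, with no case split on rank and no per-hyperedge auxiliary vertices.
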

\begin{proof}


    We say that a subset $R \subseteq E$ is \emph{routable} if there exists an injective function $g \colon R \to \binom{V(E)}{2}$ that maps each hyperedge $e \in R$ to a unique pair of distinct vertices $g(e) = \{u_e, v_e\} \in \binom{V(e)}{2}$.

    \begin{claim}
    \label{torso-vertex-bound-if-unique-edges-topological:claim1}
    Any routable set $R$ has size at most $|R| \le \OO_H(|V(E)|)$.
    \end{claim}
    \begin{claimproof}
    Denote $R = \{e_1, \ldots, e_p\}$, and let $S = \{s_1, \ldots, s_p\} \subseteq \chd(r)$ be the corresponding set of children of $r$ in $T$.
    Let $g \colon R \to \binom{V(E)}{2}$ be the injective function, and denote $g(e_i) = \{u_i, v_i\}$.
    By \Cref{lem:outside_paths}, for each $i \in [p]$ there exists an $s_i$-outside path $P_i$ in $G$ between $u_i$ and $v_i$.
    For different $i \neq j$, the paths $P_i$ and $P_j$ can only intersect in their endpoints.
    We obtain a graph $G'$ by taking the vertex set $V(E)$, and adding an edge between $u_i$ and $v_i$ for each $i \in [p]$.
    We observe that $G'$ is a topological minor of $G$, and therefore $H$-topological-minor-free.
    By \Cref{lem:topminorfreesparse}, $G'$ has at most $\OO_H(|V(E)|)$ edges, and therefore $|R| \le \OO_H(|V(E)|)$.
    \end{claimproof}

    It remains to show that if $|E|$ would be too large compared to $|V(E)|$, then we could construct a large routable set.

    \begin{claim}
    \label{torso-vertex-bound-if-unique-edges-topological:claim2}
    There exists a routable set $R$ of size $|R| \ge |E| - \OO_H(|V(E)|)$
    \end{claim}
    \begin{claimproof}
    Let $R \subseteq E$ be a maximum-size routable set and $g \colon R \to \binom{V(E)}{2}$ the corresponding function.
    As in \Cref{torso-vertex-bound-if-unique-edges-topological:claim1}, we define a graph $G'$ with the vertex set $V(E)$ and the edge set consisting of the pairs of vertices $\{g(e) \mid e \in R\}$.
    As before, we observe that $G'$ is a topological minor of $G$, and therefore $G'$ is $H$-topological-minor-free.
    By \Cref{lem:topminorfreecliques}, $G'$ has at most $\OO_H(|V(E)|)$ cliques.
    
    For each hyperedge $e \in E \setminus R$ and for each pair $u,v \in V(e)$ of distinct vertices, there must exist $e' \in R$ with $g(e') = \{u,v\}$, as otherwise we could increase $|R|$ by adding $e$ to it.
    Therefore, for each $e \in E \setminus R$ the set $V(e)$ is a clique in $G'$.
    Because $E$ has multiplicity one, the cliques are distinct, and therefore $|E \setminus R| \le \OO_H(|V(E)|)$.
    \end{claimproof}

    By combining the conclusions of \Cref{torso-vertex-bound-if-unique-edges-topological:claim1,torso-vertex-bound-if-unique-edges-topological:claim2}, we deduce $|E| \le \OO_H(|V(E)|)$.
\end{proof}

\Cref{torso-vertex-bound-if-unique-edges-topological} essentially tells us that few vertices and many edges in $\torso(r)$ imply high multiplicity. Next, we show that if there is an edge with high multiplicity, then there is also a low-rank edge with high multiplicity. Then, we can later find the set $B$ within the copies of this edge.

\begin{lemma}\label{lem:large_boundary_duplicates}
    Let $H$ be a graph on $h$ vertices and let $G$ be an $H$-topological-minor-free graph together with a downwards well-linked superbranch decomposition $\Tc = (T,\Lc)$ of $\Hc(G)$ rooted at $r$. For every hyperedge $e \in E(\torso(r))$ with rank $|V(e)| \geq h$, the multiplicity of $e$ is less than $\binom{h}{2}$.
\end{lemma}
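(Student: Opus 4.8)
The plan is to argue by contradiction: suppose $e \in E(\torso(r))$ has rank $|V(e)| \ge h$ but multiplicity at least $\binom{h}{2}$, and then construct a topological model of $H$ in $G$, contradicting $H$-topological-minor-freeness. Recall that, since $r$ is the root of $T$, every hyperedge of $\torso(r)$ is of the form $e_s$ for a child $s \in \chd(r)$, with $V(e_s) = \adh(sr) = \bd(\Lc[s])$. Write $W = V(e)$; the multiplicity assumption yields distinct children $s_1, \dots, s_m$ of $r$ with $m \ge \binom{h}{2}$ and $\bd(\Lc[s_j]) = W$ for all $j$. Since $|W| \ge h$, fix $h$ distinct vertices $w_1, \dots, w_h \in W$ and a bijection between $V(H)$ and $\{w_1, \dots, w_h\}$.

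Next I would record the two disjointness facts that make the construction work. Because the subtrees of $T$ rooted at distinct children of $r$ are vertex-disjoint and $\Lc$ is a bijection, the edge sets $\Lc[s_1], \dots, \Lc[s_m]$ are pairwise disjoint. Consequently the interiors $\inter(\Lc[s_j])$ are pairwise disjoint: if $v \in \inter(\Lc[s_i]) \cap \inter(\Lc[s_j])$ with $i \ne j$, then $v$ is incident to some hyperedge of $\Lc[s_i] \subseteq \overline{\Lc[s_j]}$, hence $v \in V(\overline{\Lc[s_j]})$, contradicting $v \in \inter(\Lc[s_j]) = V(\Lc[s_j]) \setminus V(\overline{\Lc[s_j]})$. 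Moreover $\inter(\Lc[s_j]) \cap W = \emptyset$, because $W = \bd(\Lc[s_j])$ is disjoint from $\inter(\Lc[s_j])$ by definition of the interior.

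Now I would build the model. Enumerate $E(H) = \{f_1, \dots, f_p\}$ with $p = |E(H)| \le \binom{h}{2} \le m$. For each $j \in [p]$, write $f_j$ as a pair $\{w_{a_j}, w_{b_j}\}$ of distinct vertices of $W = \bd(\Lc[s_j])$, and apply \Cref{lem:outside_paths} to the non-root node $s_j$ and this pair to obtain an $s_j$-outside path $P_j$ in $G$ between $w_{a_j}$ and $w_{b_j}$, all of whose edges lie in $\Lc[s_j]$ and all of whose internal vertices lie in $\inter(\Lc[s_j])$. By the two facts above, the internal vertex sets of $P_1, \dots, P_p$ are pairwise disjoint and disjoint from $\{w_1, \dots, w_h\} \subseteq W$. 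Hence the vertices $w_1, \dots, w_h$, together with the paths $P_1, \dots, P_p$, form a topological model of $H$ in $G$, contradicting that $G$ is $H$-topological-minor-free. Therefore the multiplicity of $e$ is less than $\binom{h}{2}$.

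I do not expect a genuine obstacle here once \Cref{lem:outside_paths} is available: the proof is essentially just routing one edge of $H$ through each of the $\binom{h}{2}$ parallel copies. The only point requiring care is the bookkeeping that the routing paths are internally disjoint and avoid all branch vertices, which is exactly what the disjointness of the sets $\inter(\Lc[s_j])$ and the identity $\inter(\Lc[s_j]) \cap \bd(\Lc[s_j]) = \emptyset$ provide. (For $h \le 1$ the statement is vacuous, as then no $H$-topological-minor-free graph has a vertex and $\torso(r)$ is not defined, so it suffices to treat $h \ge 2$.)
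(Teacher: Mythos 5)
Your proof is correct and follows essentially the same route as the paper: invoke \Cref{lem:outside_paths} to route one path per required edge through one of the $\ge \binom{h}{2}$ children whose adhesion equals $V(e)$, and use the pairwise disjointness of the interiors $\inter(\Lc[s_j])$ together with $\inter(\Lc[s_j]) \cap \bd(\Lc[s_j]) = \emptyset$ to see that the paths are internally disjoint and avoid the branch vertices. The only (inessential) difference is that the paper builds a topological $K_h$ and then derives the contradiction via $H \subseteq K_h$, whereas you build a subdivision of $H$ directly using only $|E(H)| \le \binom{h}{2}$ of the copies.
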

\begin{proof}
    Let $e \in E(\torso(r))$ be a hyperedge with $|V(e)| \geq h$.
    Suppose that there are $\ell = \binom{h}{2}$ children $s_1,\dots,s_{\ell} \in \chd(r)$ with $\bd(\Lc[s_1]) = \dots = \bd(\Lc[s_{\ell}]) = V(e)$.
    Let $S \coloneq \{s_1,\dots,s_{\ell}\}$.
    We show that $G$ contains the complete graph $K_h$ as a topological minor, which is a contradiction to the $H$-topological-minor-freeness.

    To that end, we fix a subset $V_h \subseteq V(e)$ of size exactly $h$ together with an arbitrary bijective mapping $g \colon S \to \binom{V_h}{2}$.
    For $1 \leq i \leq \ell$, let $g(s_i) = \{u_i, v_i\} \subseteq V_h \subseteq \bd(\Lc[s_i])$.
    We remark that by \Cref{lem:outside_paths}, there is an $s_i$-outside path between $u_i$ and $v_i$ for every $1 \leq i \leq \ell$.
    Then, we contract these $s_i$-outside paths until there is only one edge left of each and obtain the complete graph $K_h$.
    Since the $s_i$-outside paths are vertex-disjoint, except for their endpoints, this $K_h$ is a topological minor of $G$, which is a contradiction.
\end{proof}

We are now ready to put together an almost-final version of \Cref{mergeable-children} as the following \Cref{lem:mergeable_children_without_internal_components}.
It lacks only one condition compared to \Cref{mergeable-children}, which is then immediately added in the following final proof of \Cref{mergeable-children}.

In the case when there are many vertices in $\torso(r)$, the proof of \Cref{lem:mergeable_children_without_internal_components} follows by combining \Cref{lem:many_torso_vertices} with \Cref{torso-vertex-bound-if-unique-edges-topological,lem:large_boundary_duplicates}.
Otherwise, we can directly use the results of \Cref{torso-vertex-bound-if-unique-edges-topological,lem:large_boundary_duplicates} to obtain the desired result in the form of a low-rank high-multiplicity hyperedge of the torso.



\begin{lemma}\label{lem:mergeable_children_without_internal_components}
    Let $H$ be a graph and $\cTwMod$, $\cSbdProt$ integers. There are integers $\cNewSbdProt \coloneq \cNewSbdProt(H,\cTwMod)$ and $\cSbdRootDeg \coloneq \cSbdRootDeg(H,\cTwMod,\cSbdProt)$ such that for every $k$ the following holds:
    Let $G$ be an $H$-topological-minor-free graph that has a treewidth-$\cTwMod$-modulator of size $k$.
    Let $\Tc = (T, \Lc)$ be a downwards well-linked superbranch decomposition of $\Hc(G)$ rooted at $r$ and with adhesion size $\cSbdProt$. 
    If $\Delta(r) \geq \cSbdRootDeg \cdot k$, then there exists a subset of hyperedges $B \subseteq E(\torso(r))$ such that
    \begin{enumerate}
        \item $\lambda(B) \leq \cNewSbdProt$,
        \item $\itw(B \expand \Tc) \leq \cNewSbdProt$, and
        \item $2^{2\cNewSbdProt+2} \leq |B| \leq \OO_{H,\cTwMod}(\cSbdProt)$.
    \end{enumerate}
    Furthermore, $\cNewSbdProt$ is computable when given $H$ and $\cTwMod$, and $\cSbdRootDeg$ is computable when given $H$, $\cTwMod$, and $\cSbdProt$.
\end{lemma}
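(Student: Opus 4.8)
The plan is to take the output constant $\cNewSbdProt$ to be the constant $\cNewSbdProt(H,\cTwMod)$ supplied by \Cref{lem:many_torso_vertices}; after harmlessly enlarging the second parameter of the protrusion decomposition in \Cref{lem:modulator_implies_decomposition}, we may assume $\cNewSbdProt \ge \cTwMod + |V(H)|$. Write $h=|V(H)|$, let $\cTorsoVertices$ be the constant from \Cref{lem:many_torso_vertices}, and choose $\cSbdRootDeg$ (depending on $H,\cTwMod,\cSbdProt$) sufficiently large at the end. The argument splits on whether $\torso(r)$ has few or many vertices, and in both cases the key device is the following observation. At most $k$ root-children $c$ have $\itw(\Lc[c])>\cTwMod$, since each such $c$ contains a vertex of the treewidth-$\cTwMod$-modulator in $\inter(\Lc[c])$ and these sets are pairwise disjoint. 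Moreover, if an adhesion $S$ is the boundary of at least $2^{2\cNewSbdProt+2}$ (hence more than $\binom h2$) root-children that are moreover $\cTwMod$-light, then taking $B\subseteq E(\torso(r))$ to be the torso hyperedges of $2^{2\cNewSbdProt+2}$ of them works: $V(B)=S$ has size $<h$ by \Cref{lem:large_boundary_duplicates}, so $\lambda(B)\le|S|<h\le\cNewSbdProt$ and $|B|=2^{2\cNewSbdProt+2}=\OO_{H,\cTwMod}(1)$, while $\inter(B\expand\Tc)$ is the union of a subset of $S$ with the pairwise disjoint — and, in $G$, pairwise non-adjacent, since a common incident edge would lie in two distinct $\Lc[c]$ — sets $\inter(\Lc[c])$, each of treewidth $\le\cTwMod$, so $\itw(B\expand\Tc)\le\cTwMod+|S|\le\cNewSbdProt$.

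Now I carry out the case split. If $|V(\torso(r))|<\cTorsoVertices\cdot k$, then at least $\Delta(r)-k\ge(\cSbdRootDeg-1)k$ root-children are $\cTwMod$-light; picking one per distinct adhesion, these adhesions are pairwise distinct, so by \Cref{torso-vertex-bound-if-unique-edges-topological} (whose proof applies to any set of hyperedges with pairwise distinct vertex sets) there are at most $\OO_H(|V(\torso(r))|)=\OO_{H,\cTwMod}(k)$ of them. By pigeonhole, for $\cSbdRootDeg$ large enough some adhesion is shared by $\ge2^{2\cNewSbdProt+2}$ of these $\cTwMod$-light children, and the observation above finishes this case. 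If instead $|V(\torso(r))|\ge\cTorsoVertices\cdot k$, I apply \Cref{lem:many_torso_vertices} to get $B'\subseteq E(\torso(r))$ with $\lambda(B')\le\cNewSbdProt$, $\itw(B'\expand\Tc)\le\cNewSbdProt$, $2^{2\cNewSbdProt+2}\le|B'|$ and $|V(B')|\le\OO_{H,\cTwMod}(\cSbdProt)$ (so every $c\in\mathcal C_{B'}$ has $\inter(\Lc[c])\subseteq\inter(B'\expand\Tc)$, hence $\itw(\Lc[c])\le\cNewSbdProt$). The only condition still to secure is the upper bound on $|B|$: by \Cref{torso-vertex-bound-if-unique-edges-topological}, $\mathcal C_{B'}$ has at most $\OO_H(|V(B')|)=\OO_{H,\cTwMod}(\cSbdProt)$ distinct adhesions, so if $|B'|$ exceeds a suitable $\OO_{H,\cTwMod}(\cSbdProt)$ bound then some adhesion $S$ occurs more than $\binom h2$ times among $\mathcal C_{B'}$, whence $|S|<h$ by \Cref{lem:large_boundary_duplicates}; since at most $k$ root-children with adhesion $S$ are $\cTwMod$-heavy, I can again select $2^{2\cNewSbdProt+2}$ of them that are $\cTwMod$-light and conclude as above, while otherwise $|B'|\le\OO_{H,\cTwMod}(\cSbdProt)$ and I take $B:=B'$. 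Computability of $\cNewSbdProt$ and $\cSbdRootDeg$ follows from that of all the invoked constants.

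The main obstacle is that every output parameter must be bounded by a constant depending on $H$ and $\cTwMod$ \emph{alone} — not on the current adhesion size $\cSbdProt$, and not on the previous parameters of the decomposition, which would otherwise accumulate over updates. This is exactly why, whenever $B$ is formed by taking many copies of an adhesion $S$, the copies must be chosen $\cTwMod$-light, so that their union has treewidth only $\cTwMod+|S|$ rather than something growing with the individual subtree treewidths; that is what forces the bookkeeping around the $\le k$ possibly $\cTwMod$-heavy children. It is also why the vertex bound $|V(\torso(r))|\le\OO_{H,\cTwMod}(\cSbdProt)$ from \Cref{lem:many_torso_vertices} must be upgraded, via the topological-minor sparsity lemmas \Cref{torso-vertex-bound-if-unique-edges-topological,lem:large_boundary_duplicates}, to the required bound on the number of hyperedges $|B|$ rather than on the number of vertices.
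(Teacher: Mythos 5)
Your overall plan — split on whether $\torso(r)$ has few or many vertices, and in each case use the topological-minor sparsity lemmas to pigeonhole a low-rank, high-multiplicity adhesion — is the paper's route, and your few-vertices case is correct. However, there is a real gap in your many-vertices case, in the step where $|B'|$ exceeds the cutoff. With any cutoff of the form $c_H|V(B')|\cdot m$ where $m=\OO_{H,\cTwMod}(1)$, the pigeonhole only gives you an adhesion $S$ shared by $m$ children of $\mathcal C_{B'}$ — an $\OO_{H,\cTwMod}(1)$ number. You then want to ``select $2^{2\cNewSbdProt+2}$ of them that are $\cTwMod$-light,'' but up to $k$ root-children can be $\cTwMod$-heavy, and $k$ is unbounded, so there may not be a single $\cTwMod$-light one among those sharing $S$. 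Raising $m$ to $2^{2\cNewSbdProt+2}+k$ makes the ``otherwise'' branch output $|B'|\le\OO_{H,\cTwMod}(\cSbdProt\cdot k)$ rather than $\OO_{H,\cTwMod}(\cSbdProt)$, so the constants cannot be reconciled. (Selecting from $\mathcal C_{B'}$ alone does not help either: those children are only guaranteed $\cNewSbdProt$-light, and your key observation then only yields $\itw(B\expand\Tc)\le\cNewSbdProt+|S|>\cNewSbdProt$.)

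The fix is the monotonicity you already note parenthetically for single children: if $B\subseteq B'$ then $\inter(B\expand\Tc)\subseteq\inter(B'\expand\Tc)$, hence $\itw(B\expand\Tc)\le\itw(B'\expand\Tc)\le\cNewSbdProt$, with \emph{no} appeal to $\cTwMod$-lightness. So once the pigeonhole gives an adhesion $S$ with multiplicity $\ge 2^{2\cNewSbdProt+2}$ inside $\mathcal C_{B'}$ — for which the cutoff $c_H|V(B')|\cdot 2^{2\cNewSbdProt+2}=\OO_{H,\cTwMod}(\cSbdProt)$ suffices — take $B$ to be $2^{2\cNewSbdProt+2}$ of those hyperedges. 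Then $\lambda(B)\le|S|<|V(H)|\le\cNewSbdProt$ by \Cref{lem:large_boundary_duplicates} (since $2^{2\cNewSbdProt+2}>\binom{|V(H)|}{2}$), and the internal-treewidth bound follows from $B\subseteq B'$. This is precisely what the paper's proof does; your $\cTwMod$-light filtering is needed only in the few-vertices case, where $B'$ from \Cref{lem:many_torso_vertices} is unavailable.
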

\begin{proof}
    Let $H$ be a graph and $\cTwMod,\cSbdProt,k$ integers.
    Let $\cNewSbdProt = \cNewSbdProt(H,\cTwMod)$ be the maximum of the integer $\cNewSbdProt(H,\cTwMod)$ of \Cref{lem:many_torso_vertices} and $\cTwMod + |V(H)|$.
    Let also $\cTorsoVertices$ be the integer $\cTorsoVertices(H,\cTwMod,\cSbdProt)$ from \Cref{lem:many_torso_vertices}.
    Furthermore, let $c_H \le \OO_H(1)$ be the constant from \Cref{torso-vertex-bound-if-unique-edges-topological} so that for every $E \subseteq E(\torso(r))$ with multiplicity one it holds that $|E| \leq c_H \cdot |V(E)|$.

    We set $\cSbdRootDeg \coloneq c_H \cdot \cTorsoVertices \cdot \left( \binom{|V(H)|}{2} + 2^{2\cNewSbdProt+2} \right) + 1$. Let $G$ be an $H$-topological-minor-free graph that has a treewidth-$\cTwMod$-modulator of size $k$. Let $\Tc = (T,\Lc)$ be a downwards well-linked superbranch decomposition of $\Hc(G)$ with adhesion size $\cSbdProt$ and root $r$, where $\Delta(r) = |E(\torso(r))| \geq \cSbdRootDeg \cdot k = c_H \cdot \cTorsoVertices \cdot \binom{|V(H)|}{2} \cdot k + c_H \cdot \cTorsoVertices \cdot 2^{2\cNewSbdProt + 2} \cdot k + k$. We distinguish the two cases that $|V(\torso(r))| \geq \cTorsoVertices \cdot k$ or $|V(\torso(r))| < \cTorsoVertices \cdot k$.

    \paragraph{Many torso vertices.}
    In the first case, by \Cref{lem:many_torso_vertices}, there exists a set $B' \subseteq E(\torso(r))$ of hyperedges such that $\lambda(B') \leq \cNewSbdProt$, $\itw(B' \expand \Tc) \leq \cNewSbdProt$, $|B'| \geq 2^{2\cNewSbdProt + 2}$, and $|V(B')| \leq \OO_{H,\cTwMod}(\cSbdProt)$.
    If $|B'| \leq |V(B')| \cdot c_H \cdot \max\{\binom{|V(H)|}{2}, 2^{2 \cNewSbdProt + 2}\} \le \OO_{H,\cTwMod}(\cSbdProt)$, we can take $B'$ as $B$ and are done.
    Otherwise, by the pigeonhole principle, there is a hyperedge $e \in B'$ with multiplicity at least $\max\{\binom{|V(H)|}{2},2^{2\cNewSbdProt+2}\}$ in $B'$.
    By \Cref{lem:large_boundary_duplicates}, $|V(e)| < |V(H)|$.
    We choose $B \subseteq \{e' \in B' \mid V(e') = V(e)\}$ arbitrarily such that $|B| = 2^{2\cNewSbdProt+2}$.
    Then, since $B' \subseteq B$, we have $\itw(B' \expand \Tc) \leq \itw(B \expand \Tc) \leq \cNewSbdProt$, and $\lambda(B') \leq |V(B')| = |V(e)| < |V(H)| \le \cNewSbdProt$.

    \paragraph{Few torso vertices.}
    From now on, assume that $|V(\torso(r))| < \cTorsoVertices \cdot k$.
    Since for every $E \in E(\torso(r))$ with multiplicity one, we have $|E| \leq c_H \cdot |V(E)| \leq c_H \cdot |V(\torso(r))| < c_H \cdot \cTorsoVertices \cdot k$, it follows that there are fewer than $c_H \cdot \cTorsoVertices \cdot k$ hyperedges with pairwise different vertex sets in $E(\torso(r))$. Thus, by \Cref{lem:large_boundary_duplicates}, there are fewer than $\binom{|V(H)|}{2} \cdot c_H \cdot \cTorsoVertices \cdot k$ edges of rank at least $|V(H)|$ and hence more than $c_H \cdot \cTorsoVertices \cdot 2^{2\cNewSbdProt + 2} \cdot k + k$ edges of rank smaller than $|V(H)|$ in $E(\torso(r))$. Next, we argue that of all these low-rank edges, there are more than $c_H \cdot \cTorsoVertices \cdot 2^{2\cNewSbdProt + 2} \cdot k$ low-rank edges $e \in E(\torso(r))$ with $\itw(\{e\} \expand \Tc) \leq \cTwMod$.

    For this, let $X$ be a treewidth-$\cTwMod$-modulator of $G$ of size $k$, i.e., with $\tw(G-X) \leq \cTwMod$.
    Therefore, for every hyperedge $e \in E(\torso(r))$, in order to have $\itw(\{e\} \expand \Tc) > \cTwMod$, we need $\inter(\{e\} \expand \Tc) \cap X \neq \emptyset$.
    Because for two distinct hyperedges $e_1,e_2 \in E(\torso(r))$, we have $\inter(\{e_1\} \expand \Tc) \cap \inter(\{e_2\} \expand \Tc) = \emptyset$, there are at most $|X| = k$ hyperedges $e \in E(\torso(r))$ with $\inter(\{e\} \expand \Tc) \cap X \neq \emptyset$ and thus with $\itw(\{e\} \expand \Tc) > \cTwMod$.

    It follows that there are more than $c_H \cdot \cTorsoVertices \cdot 2^{2\cNewSbdProt + 2} \cdot k$ hyperedges $e \in E(\torso(r))$ with rank $|V(e)| < |V(H)|$ and $\itw(\{e\} \expand \Tc) \leq \cTwMod$. Since there are fewer than $c_H \cdot \cTorsoVertices \cdot k$ hyperedges with pairwise different vertex sets in $E(\torso(r))$, by the pigeonhole principle, there is a set $B \subseteq E(\torso(r))$ of size $|B| = 2^{2\cNewSbdProt+2}$ such that for all hyperedges $e \in B$, we have $V(e) = V(B)$ with $|V(B)| < |V(H)|$ and $\itw(\{e\} \expand \Tc)$. It follows that $\bd(B) \subseteq V(B)$, so $\lambda(B) < |V(H)| \leq \cNewSbdProt$ and $\itw(B \expand \Tc) \leq \max\limits_{e \in B}\{\itw(\{e\} \expand \Tc)\} + \lambda(B) < \cTwMod + |V(H)| \leq \cNewSbdProt$.
\end{proof}

The only remaining condition to prove \Cref{mergeable-children} is \Cref{ite:internal_components}, i.e., that for every internal component $B'$ of the desired set $B$, we have $\bd(B') = \bd(B)$.
We ensure this condition by taking a subset of the set $B$ from \Cref{lem:mergeable_children_without_internal_components} consisting of internal components with the same boundary.
Let us now re-state \Cref{mergeable-children} and finish its proof.

\MergeableChildren*
\begin{proof}
    Let $H$ be a graph and $\cTwMod$, $\cSbdProt$, $k$ integers.
    Let $\cSbdRootDeg, \cNewSbdProt$ be the integers from \Cref{lem:mergeable_children_without_internal_components} such that the following holds:
    Let $G$ be an $H$-topological-minor-free graph that has a treewidth-$\cTwMod$-modulator of size $k$.
    Let $\Tc = (T,\Lc)$ be a downwards well-linked superbranch decomposition of $\Hc(G)$ rooted at $r$ and with adhesion size $\cSbdProt$.
    By \Cref{lem:mergeable_children_without_internal_components}, there exists a set $C \subseteq E(\torso(r))$ with $\lambda(C), \itw(C \expand \Tc) \leq \cNewSbdProt$ and $2^{2\cNewSbdProt + 2} \leq |C| \leq \OO_{H,\cTwMod}(\cSbdProt)$.
    We consider the internal components of $C$.
    First, we observe that if $C'$ is an internal component of $C$, then $\bd(C') \subseteq \bd(C)$.
    
    Now, we group the internal components of $C$ by their boundary. Since each such boundary is a subset of $\bd(C)$, there are at most $2^{\lambda(C)} \leq 2^{\cNewSbdProt}$ such groups. Thus, by the pigeonhole principle, there exists at least one group such that the union over all internal components in this group contains at least $|C| / 2^{\cNewSbdProt} \geq 2^{\cNewSbdProt+2}$ hyperedges. Let $B$ be the union of all internal components in this group. Clearly, $B$ satisfies \Cref{ite:size,ite:internal_components}. It is also easy to see that $\itw(B \expand \Tc) \leq \itw(C \expand \Tc) \leq \cNewSbdProt$, since $B \subseteq C$. Furthermore, since each internal component of $B$ has the same boundary, and this boundary is a subset of $\bd(C)$, $\lambda(B) \leq \lambda(C) \leq \cNewSbdProt$.
\end{proof}
\section{Dynamic local search}\label{sec:local_search}
In the previous section, we showed that if the degree of the root of our superbranch decomposition becomes too high, then there exists a set of root-children, identified by their corresponding hyperedges in the torso of the root, that we can merge. In detail, we showed that there exists a set $B \subseteq E(\torso(r))$ of bounded size such that $B \expand \Tc$ has small boundary and internal treewidth. Additionally, we can assume that for every internal component $B'$ of $B$ we have $\bd(B') = \bd(B)$, which makes this set $B$ traceable via a local search -- a property that we will exploit in this section.

Here, we construct a data structure that keeps track of all small internally connected sets in the torso of the root, grouped by their respective boundary, so that we can easily find the desired set $B$ within such a group. An important ingredient to this data structure is the following local search algorithm that efficiently finds all small internally connected sets that intersect a given set $I$ of hyperedges. In our data structure, we will then employ this local search algorithm to locally recompute internally connected sets after every update to the torso of the root.

The local search algorithm is based on a similar algorithm given by Korhonen~\cite[Lemma~7.3]{Korhonen_2024}, the main difference being that they did not require an upper bound on the size of the sets $A$ but instead on the multiplicity of the hypergraph $G$. For clarity, we again describe the algorithm here.

\begin{lemma}[Based on {\cite[Lemma 7.3]{Korhonen_2024}}]\label{lem:static_local_search}
    Let $G$ be a hypergraph of rank $r$, whose representation is already stored. There is an algorithm that, given three integers $p,s,k \geq 0$, and two sets $I \subseteq E(G)$ and $X \subseteq V(I)$ of size $1 \le |I| \leq p$ and $|X| \leq s$, in time $p r \cdot s^{\OO(k)}$ lists all sets $A \subseteq E(G)$ so that
    \begin{itemize}
        \item $I \subseteq A$,
        \item $X \subseteq \bd(A)$,
        \item $|A| \leq p$,
        \item $|V(A)| \leq s$,
        \item $\lambda(A) \leq k$, and
        \item every internal component of $A$ intersects $I$.
    \end{itemize}
    The number of such sets is at most $s^{\OO(k)}$.
\end{lemma}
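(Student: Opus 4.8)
The plan is to adapt the branching local-search algorithm of Korhonen~\cite[Lemma~7.3]{Korhonen_2024}. The two new features to handle are that we must grow the set from the whole input set $I$ rather than from a single hyperedge, and that we carry along the extra constraint $X \subseteq \bd(A)$; the rôle played by the multiplicity bound in~\cite{Korhonen_2024} will be taken over here by the size bounds $|A| \le p$ and $|V(A)| \le s$. First I would dispose of a precondition: if some $x \in X$ is internal to $I$ (i.e.\ $X \not\subseteq \bd(I)$), then since $X \subseteq V(I)$ and $\overline{A} \subseteq \overline{I}$ for every $A \supseteq I$, no set $A \supseteq I$ can have $x \in \bd(A)$, so the algorithm should list nothing. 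Otherwise I would run a recursion whose state is a pair $(A,F)$ with $I \subseteq A \subseteq E(G)$ and a set $F$ of \emph{frozen} vertices — those already committed to lie in the boundary of the final set — initialized as $A := I$, $F := X$; throughout the recursion I would maintain the invariants $F \subseteq \bd(A)$, $|F| \le k$, $|A| \le p$, and $|V(A)| \le s$, recomputing $V(A)$ and $\bd(A)$ after each change of $A$ by \Cref{lem:compute-boundary} in time $O(|A|\cdot r) = O(pr)$.

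The recursion step I have in mind is the following. At a state $(A,F)$, if $\bd(A) = F$ we are at a leaf, where $A$ already satisfies $I \subseteq A$, $X \subseteq F = \bd(A)$, $|A| \le p$, $|V(A)| \le s$, and $\lambda(A) = |F| \le k$, so only the internal-component condition still has to be verified; this can be done directly in $O(pr)$ time by inspecting the connected components of $\Pc(G)[\inter(A)]$ and the induced internal components, and $A$ is output iff the check succeeds. If $\bd(A) \neq F$, pick any $v \in \bd(A) \setminus F$ and branch: (i) \emph{freeze} $v$, i.e.\ set $F := F \cup \{v\}$ and prune if now $|F| > k$; or (ii) \emph{internalize} $v$, i.e.\ set $A := A \cup \incidences(v)$, adding incident hyperedges one by one and pruning as soon as $|A| > p$, $|V(A)| > s$, or a frozen vertex has left the boundary ($F \not\subseteq \bd(A)$). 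The last prune is what enforces $F \subseteq \bd(\cdot)$, and in particular $X \subseteq \bd(\cdot)$, globally.

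Soundness of the enumeration is immediate, since every reported set is explicitly checked against all six conditions. For completeness I would fix a valid set $A^\star$ and follow the branch that freezes a processed vertex $v$ exactly when $v \in \bd(A^\star)$. If $v \notin \bd(A^\star)$ then, as $v \in \bd(A) \subseteq V(A) \subseteq V(A^\star)$, we get $v \in \inter(A^\star)$, so every hyperedge incident to $v$ lies in $A^\star$; hence $A \subseteq A^\star$ stays invariant along this branch, which makes all the prunes inactive (the size bounds hold since $A \subseteq A^\star$, and $F \subseteq \bd(A^\star) \cap V(A) \subseteq \bd(A)$ because $\overline{A^\star} \subseteq \overline{A}$). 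The one thing needing a careful argument is that this branch halts precisely when $A = A^\star$: if $A \subsetneq A^\star$, take $f \in A^\star \setminus A$ and its internal component $C$ in $A^\star$; the hypothesis that $C$ meets $I \subseteq A$, together with $|C|>1$ and the connectedness of $\Pc(G)[\inter(C)]$, produces a vertex of $\inter(C)$ that is incident both to a hyperedge in $A$ and to a hyperedge outside $A$ — such a vertex is a boundary vertex of $A$, but, lying in $\inter(C) \subseteq \inter(A^\star)$, it is not frozen, so the search is not yet at a leaf.

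I expect the running-time bound to be the main obstacle. Counting the binary branching naively gives only $2^{O(s)}$ leaves, whereas the statement demands $s^{O(k)}$; closing this gap is where the real work lies. The resolution I would pursue mirrors the charging argument of~\cite[Lemma~7.3]{Korhonen_2024}, with the bounds $|A|\le p$ and $|V(A)|\le s$ substituting for the multiplicity bound there: along any root-to-leaf path at most $k$ steps are freeze steps (as $|F|\le k$ is enforced), while the internalize steps become forced once the eventual frozen set is fixed, so the number of leaves is governed by the number of admissible frozen sets, each a $\le k$-element subset of the at most $s$ vertices that can ever appear on the boundary of a non-pruned state — giving $s^{O(k)}$ leaves and hence at most $s^{O(k)}$ listed sets. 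Since every node of the search is handled in $O(pr)$ time, the total running time is $pr \cdot s^{O(k)}$, as required.
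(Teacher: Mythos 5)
Your algorithm, branching rule, pruning conditions, and completeness argument all match the paper's proof (where its pair $(I,X)$ plays the role of your $(A,F)$). Your handling of the precondition $X\not\subseteq\bd(I)$, and your argument that following $A^\star$ halts exactly at $A=A^\star$ via a vertex of $\inter(C)\cap\bd(A)$, are both correct and carefully spelled out; your explicit verification of the internal-component condition at leaves is even a little more conservative than the paper, which at a leaf simply asserts that the current set is the unique candidate without re-checking that condition.

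The issue, which you flag yourself, is the running-time bound. Your proposed justification --- that the number of leaves is at most the number of admissible frozen sets, each a $\le k$-subset of ``the at most $s$ vertices that can ever appear on the boundary of a non-pruned state'' --- does not quite hold: a single state satisfies $|\bd(A)| \le |V(A)| \le s$, but different branches of the recursion tree explore different sets $A$, so the union of boundary vertices over all non-pruned states is not bounded by $s$. The paper's argument sidesteps this entirely: each freeze step strictly increases $|X|$ (bounded by $k$), and each internalize step strictly increases $|\inter(I)|$ (bounded by $s$, since the internalized vertex $v$ was on the boundary before and is internal afterwards), so every root-to-leaf path consists of at most $k$ freeze steps and at most $s$ internalize steps; the number of such paths, and hence the size of the recursion tree, is at most $\binom{s+k}{k} \leq s^{\OO(k)}$. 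Each node is processed in $\OO(pr)$ time by \Cref{lem:compute-boundary}, giving $pr\cdot s^{\OO(k)}$ overall. You should replace the frozen-set count with this direct path count.
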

\begin{proof}
    We describe a recursive algorithm.
    In each recursive step, we first compute in time $\OO(|I| \cdot r) = \OO(p \cdot r)$ the set $V(I)$ and $\bd(I)$ using \Cref{lem:compute-boundary}. If $|I| > p$, $|V(I)| > s$, or $|X| > k$, we can immediately return the empty list. Now, we can assume that $|V(I)| \leq s$ and $|I| \leq p$. 
    If $X$ intersects $\inter(I)$, we can again return the empty list. If $X = \bd(I)$, the only set $A$ that satisfies the requirements is $I$ itself, so we can output $I$.
    
    Otherwise, we can assume that $X \subsetneq \bd(I)$. Then, we choose an arbitrary $v \in \bd(I) \setminus X$. For any set that satisfies the requirements, we either have $v \in \bd(A)$ or $v \in \inter(A)$. We now make two recursive calls, one enumerating the sets $A$ of the first type and the other enumerating the sets $A$ of the latter type. For the first type, it suffices to simply add $v$ to $X$. For the second case, we need to add all hyperedges in $\incidences(v)$ to $I$. If $|\incidences(v)| > p$, the recursive call with $I \cup \incidences(v)$ would immediately return, so we can skip the recursion in this case. Otherwise, we recurse on $I \cup \incidences(v)$.

    We observe that each recursive call takes time $\OO(p \cdot r)$.
    Moreover, we show that the recursion tree has size at most $s^{\OO(k)}$. For this, we observe that in a recursive call of the first type, $|X|$ increases, and in a recursive call of the second type, $|\inter(I)|$ increases. As $|X|$ is bounded by $k$ and $|\inter(I)|$ by $s$, we find that the size of the recursion tree is at most $\binom{s+k}{k} \leq s^{\OO(k)}$.
    It follows that the total running time of our algorithm is $p r \cdot s^{\OO(k)}$ and that there are at most $s^{\OO(k)}$ sets in the list.
\end{proof}

Now, we are ready to construct our local search data structure, which we will later use to keep track of the small internally connected sets with small internal treewidth and boundary size in the torso of the root of our superbranch decomposition. To abstract away from the internal treewidth requirement, we introduce the concept of an oracle $\oracle$ that, given a set $S$ of hyperedges, depends on $S$ itself and on the boundary $\bd(S)$ and returns true $\true$ or false $\false$ -- later corresponding to whether $S \expand \Tc$ has small internal treewidth or not. As we are only interested in small sets of size at most $s_2$, we restrict the oracle to be \emph{$s_2$-bounded}, i.e., the oracle only needs to decide correctly whether the internal treewidth of $S \expand \Tc$ is small if $|S| \leq s_2$.

Then, given a hypergraph $G$, this data structure supports insertions and deletions of vertices and hyperedges to $G$, and the $\query$ operation allows us to find a set $C$ of bounded size and bounded boundary such that every internal component of $C$ has the same boundary as $C$ and satisfies the oracle $\oracle$. Applied to $\torso(r)$ with an internal treewidth oracle (realized by using the internal treewidth automaton from \Cref{lem:internal_treewidth_automaton}) this data structure returns (if possible) a set $C \subseteq E(\torso(r))$ of hyperedges, corresponding to mergeable subtrees of our superbranch decomposition. Whenever the degree $\Delta(r)$ is too high, the existence of such a set $C$ is guaranteed by \Cref{mergeable-children}, so in this case the data structure will always return such a set $C$ and we can decrease $\Delta(r)$ by merging the corresponding subtrees.

Recall that during this paper we are working with \emph{labeled} hyperedges, i.e., each insertion inserts a hyperedge with a new label. In particular, we do not require the hypergraph $G$ to have multiplicity one, i.e., there might be hyperedges with the same vertex set but different labels in $G$. Then, most importantly, for a set $S \subseteq E(G)$ of labeled hyperedges, the result of the oracle $\oracle(S)$ really depends on the labels of the hyperedges, and not only their vertex sets.

\begin{lemma}\label{lem:dynamic_local_search}
    Let $G$ be a dynamic hypergraph of rank $r$ and let $s_1$, $s_2$, and $k$ be integers with $s_1 \le s_2$. Let $\oracle$ be an $s_2$-bounded oracle that, given a set $S \subseteq E(G)$ of labeled hyperedges of size $|S| \leq s_2$, depends on $S$ and $\bd(S)$ and returns $\true$ or $\false$ in time $\OO_{r,s_2}(\log |E(G)|)$.
    There is a data structure that maintains $G$ and supports the following operations:
    \begin{itemize}
        \item $\init(G)$: Given an empty hypergraph $G$, initialize the data structure. Runs in time $\OO(1)$.
        \item $\addVertex(v)$: Given a vertex $v \notin V(G)$, add $v$ to $G$. Runs in time $\OO(1)$.
        \item $\deleteVertex(v)$: Given an isolated vertex $v \in V(G)$, remove $v$ from $G$. Runs in time $\OO(1)$.
        \item $\addHyperedge(e)$: Given a hyperedge $e \notin E(G)$ of rank at most $r$ with a new label, add $e$ to $G$. Runs in time $\OO_{r,s_2,k}(\log |E(G)|)$.
        \item $\deleteHyperedge(e)$: Given a hyperedge $e \in E(G)$, remove $e$ from $G$. Runs in time $\OO_{r,s_2,k}(\log |E(G)|)$.
        \item $\query$: Returns a set $C \subseteq E(G)$ such that:
        \begin{itemize}
            \item $s_1 / 2 \leq |C| \leq s_2$,
            \item $\lambda(C) \leq k$,
            \item for every internal component $C'$ of $C$, we have $\bd(C') = \bd(C)$, and
            \item for every internal component $C'$ of $C$, we have $\oracle(C') = \true$,
        \end{itemize}
        or reports that no such set of size $s_1 \leq |C| \leq s_2$ exists.
        Runs in time $\OO_{r,s_2,k}(1)$.
    \end{itemize}
\end{lemma}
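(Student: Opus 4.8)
The plan is to store, for every hyperedge $e \in E(G)$, the family $\mathcal{L}(e)$ of all internally connected sets $A \subseteq E(G)$ with $e \in A$, $|A| \le s_2$, $|V(A)| \le r\cdot s_2$, and $\lambda(A)\le k$, each tagged with the value $\oracle(A)$; I will call $A$ \emph{good} if $\oracle(A)=\true$. By \Cref{lem:static_local_search} applied with $p\coloneqq s_2$, $s\coloneqq r\cdot s_2$, and $I\coloneqq\{e\}$ (and $X\coloneqq\emptyset$), the family $\mathcal{L}(e)$ has size $\OO_{r,s_2,k}(1)$, can be listed in time $\OO_{r,s_2,k}(1)$, and its boundaries computed via \Cref{lem:compute-boundary}; evaluating $\oracle$ on its members costs an extra $\OO_{r,s_2,k}(\log|E(G)|)$. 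Globally I keep every good internally connected set once, and for each vertex set $X$ a bucket $\mathcal{B}_X$ of the good internally connected sets with boundary exactly $X$, together with a sub-collection $\mathcal{M}_X \subseteq \mathcal{B}_X$ whose members are pairwise edge-disjoint, pairwise \emph{internally vertex-disjoint} ($V(A_i)\cap V(A_j)\subseteq X$ for distinct $A_i,A_j$), and such that every vertex of $X$ is still incident to a hyperedge outside $\bigcup\mathcal{M}_X$; I maintain $\mathcal{M}_X$ so that its total edge-count is within a constant factor of the maximum over all such sub-collections of $\mathcal{B}_X$.

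For $\query$, the structural fact I will use is that for such a sub-collection $A_1,\dots,A_m$, setting $C\coloneqq\bigcup_i A_i$ gives $\bd(C)=X$ and the internal components of $C$ are exactly $A_1,\dots,A_m$: internal vertex-disjointness makes each $\inter(A_i)=V(A_i)\setminus X$ a connected component of $\Pc(C)[\inter(C)]$ met only by hyperedges of $A_i$, so any internally connected $C'\subseteq C$ lies in one $A_i$ and, being maximal, equals it; conversely, any $C^\star$ with $s_1\le|C^\star|\le s_2$ satisfying the four bullets has internal components forming exactly such a sub-collection of $\mathcal{B}_{\bd(C^\star)}$ of total size $|C^\star|\ge s_1$. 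Hence $\query$ first checks whether some single good internally connected set has edge-count in $[s_1/2,s_2]$ and returns it if so; otherwise every good internally connected set has fewer than $s_1/2$ hyperedges, and, using a max-heap over buckets keyed by the stored total edge-count of $\mathcal{M}_X$, I look for a bucket whose sub-collection totals at least $s_1/2$. From such a bucket I add member sets greedily until the accumulated (edge-disjoint, hence additive) count reaches $s_1/2$, overshooting by less than $s_1/2$ and so remaining below $s_1\le s_2$, and return this $C$. If no bucket qualifies, I report that no valid set of size in $[s_1,s_2]$ exists; this is correct by the converse direction together with the constant-factor guarantee on $\mathcal{M}_X$ — the constant is harmless because the size bound the caller (\Cref{mergeable-children}) needs is $2^{\cNewSbdProt+2}$, a quantity one is free to inflate, and if one insists on the literal factor $2$ a swap-local-optimal $\mathcal{M}_X$ suffices. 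The heap peek plus reconstructing $C$ cost $\OO_{r,s_2,k}(1)$.

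For the updates, I exploit that inserting or deleting a hyperedge $e$ changes the status only of internally connected sets $A$ of size $\le s_2$ that either contain $e$ or contain some vertex of $V(e)$ as an internal vertex — and in the latter case every hyperedge at that vertex lies in $A$, so that vertex has degree $\le s_2$. Thus all affected sets are found by running the local search of \Cref{lem:static_local_search} from $e$ and from the $\OO_{r,s_2,k}(1)$ hyperedges incident to the low-degree vertices of $V(e)$, giving $\OO_{r,s_2,k}(1)$ affected sets; for each I re-evaluate $\oracle$ in $\OO_{r,s_2}(\log|E(G)|)$ time and update the global family, the buckets $\mathcal{B}_X$, and the back-pointers. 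A set entering or leaving $\mathcal{B}_X$ disturbs only the $\OO_{r,s_2,k}(1)$ members of $\mathcal{B}_X$ sharing a hyperedge or a relevant vertex with it (each hyperedge lying in $\OO_{r,s_2,k}(1)$ good sets), so $\mathcal{M}_X$ and its total are recomputed in $\OO_{r,s_2,k}(1)$ time and the heap in $\OO_{r,s_2,k}(\log|E(G)|)$ time. Adding or deleting an isolated vertex, and $\init$ on the empty hypergraph, touch no good set and take $\OO(1)$. Summing up yields the claimed bounds.

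I expect the main obstacle to be exactly this dynamic maintenance of the per-boundary sub-collections $\mathcal{M}_X$: keeping a choice that is simultaneously edge-disjoint, internally vertex-disjoint, external-edge-respecting, \emph{and} within a constant factor of the maximum total edge-count, in $\OO_{r,s_2,k}(1)$ amortized time per change to a bucket even though $|\mathcal{B}_X|$ may be as large as $\Theta(|E(G)|)$. What makes it tractable is the bounded-degree structure inherited from \Cref{lem:static_local_search} — every hyperedge lies in only $\OO_{r,s_2,k}(1)$ tracked sets — so each bucket change perturbs only $\OO_{r,s_2,k}(1)$ of its members, and a swap-based local-search invariant on $\mathcal{M}_X$ is both maintainable within that budget and strong enough to furnish the required (factor-$2$) approximation.
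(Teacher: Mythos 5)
Your data structure skeleton matches the paper's --- enumerate the ``good'' internally connected sets (chips) locally via \Cref{lem:static_local_search}, bucket them by boundary, keep a max-heap keyed on bucket volume, and re-enumerate affected chips on each hyperedge update. What you are missing, and what makes the rest of your construction both unnecessary and unsubstantiated, is the structural fact the paper proves inside the $\query$ operation (\Cref{lem:internally_connected_disjoint}): if $A$ and $B$ are internally connected sets with $\bd(A)=\bd(B)$ and $A\cap B\neq\emptyset$, then $A=B$. It follows that all members of a bucket $\mathcal{B}_X$ are \emph{automatically} pairwise edge-disjoint, and their internal vertex sets $V(A_i)\setminus X$ are automatically pairwise disjoint, so the first two of your three constraints on $\mathcal{M}_X$ hold for the entire bucket for free. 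The external-edge condition is handled in the paper by a one-line observation: $\query$ assembles $C$ from a strict subset of $\mathcal{B}_X$ (the accumulated volume is kept below $s_1\le\vol(\mathcal{B}_X)$), so some chip $Z$ is always left over; since $\bd(Z)=X$ and $Z$ is edge-disjoint from $C$, the leftover chip alone certifies $\bd(C)=X$.

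Without this fact you introduce a dynamically maintained sub-collection $\mathcal{M}_X\subseteq\mathcal{B}_X$ that is only within a constant factor of maximum feasible volume, and this layer is where the gap lies. You concede that a generic constant factor would only give $|C|\ge s_1/(2c)$ rather than the stated $|C|\ge s_1/2$ and appeal to the downstream caller's slack, which does not prove the lemma as stated. You then assert that ``a swap-local-optimal $\mathcal{M}_X$'' achieves the literal factor $2$, with no argument; the packing problem you pose (edge-disjoint, internally vertex-disjoint, external-edge-respecting, maximize total edge count) is not one where a swap argument obviously yields a $2$-approximation, and you yourself flag its dynamic maintenance as ``the main obstacle.'' Once same-boundary internally connected sets are shown disjoint, that obstacle vanishes: there is no $\mathcal{M}_X$, no approximation factor, no swap invariant --- the max-heap already stores exact bucket volumes, and $\query$ simply collects chips from the top bucket until the accumulated size lands in $[s_1/2,s_1)$. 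The remainder of your outline (affected-set enumeration on insert/delete, $\OO_{r,s_2,k}(1)$ affected chips per update, $\OO(1)$ isolated-vertex operations) agrees with the paper.
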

\begin{proof}
    We define a \emph{chip} of a hypergraph $G$ to be an internally connected set $Z \subseteq E(G)$ with $|Z| \leq s_2$, $\lambda(Z) \leq k$, and $\oracle(Z) = \true$.
    Note that due to the bound on the rank, it also holds that $|V(Z)| \leq r \cdot s_2$.    
    For a set $\mc Z$ of chips, we denote by $\vol(\mc Z) \coloneq \sum\limits_{Z \in \mc Z} |Z|$ the total size over all chips in $\mc Z$. For a set $B \subseteq V(G)$ of size $|B| \leq k$, we denote by $\chips(B)$ the set of all chips with boundary $B$ and by $\vol(B) \coloneq \vol(\chips(B))$. We maintain the following three data structures.
    \begin{itemize}
        \item For every $B \subseteq V(G)$ of size $|B| \leq k$ with $\vol(B) > 0$, a balanced binary search tree $\bbst_B$ containing all chips $Z \in \chips(B)$.
        \item A max-heap containing all sets $B \subseteq V(G)$ of size $|B| \leq k$ with $\vol(B) > 0$ ranked by $\vol(B)$ and pointing to the corresponding search tree $\bbst_B$.
        \item A balanced binary search tree $\bbst^*$ that contains each set $B$ of size $|B| \leq k$ with $\vol(B) > 0$ together with a pointer pointing to the element of the max-heap that corresponds to $B$.
    \end{itemize}

    Before describing how we maintain these three data structures, we first show that the local search algorithm from \Cref{lem:static_local_search} can be used to find all chips of $G$ containing a specific set of hyperedges. This will later be used to locally recompute chips after every update.
    \begin{claim}\label{claim:static_local_search_chips}
        There is an algorithm that, given a set $I \subseteq E(G)$ with $1 \leq |I| \leq s_2$ and a set $X \subseteq I$, enumerates all chips $Z$ with $I \subseteq Z$ and $X \subseteq \bd(Z)$ in time $\OO_{r,s_2,k}(\log |E(G)|)$.
        Moreover, the number of such chips is $\OO_{s_2,k}(1)$.
    \end{claim}
    \begin{claimproof}
        Given two sets $I \subseteq E(G)$ and $X \subseteq I$, in order to find all chips $Z$ with $I \subseteq Z$ and $X \subseteq \bd(Z)$, we call the algorithm from \Cref{lem:static_local_search} with $I$, $X$, and the required bounds from the definition of a chip, and then filter out the sets that are no chips. That is, for every set $A$ enumerated by the algorithm, we check whether it is internally connected and satisfies the oracle.
        For a set $A$, checking whether it is internally connected can be done in $(r+|A|)^{\OO(1)} = \OO_{r,s_2}(1)$ time, and checking whether $\oracle(A) = \true$ can be done in time $\OO_{r,s_2}(\log |E(G)|)$.
        Since the algorithm from \Cref{lem:static_local_search} takes time $s_2 \cdot r \cdot (r \cdot s_2)^{\OO(k)} = (r \cdot s_2)^{\OO(k)}$ and returns at most $s_2^{\OO(k)}$ sets $A$, the total running time is $\OO_{r,s_2,k}(\log |E(G)|)$.
    \end{claimproof}

    To bound the running times for updates to our three data structures, we now use \Cref{claim:static_local_search_chips} to show that there are at most $\OO_{r,s_2,k}(|E(G)|)$ chips in $G$.

     \begin{claim}
        There are at most $\OO_{s_2,k}(|E(G)|)$ chips in $G$.
    \end{claim}
    \begin{claimproof}
        If we call the algorithm from \Cref{claim:static_local_search_chips} with $I = \{e\}$ for some hyperedge $e \in E(G)$ and $X = \emptyset$, we obtain all chips $Z \subseteq E(G)$ with $e \in A$.
        So, repeating this call to the algorithm for every hyperedge $e \in E(G)$ gives us all chips. Since every call returns at most $\OO_{s_2,k}(1)$ sets, there are at most $\OO_{s_2,k}(|E(G)|$ chips in $G$.
    \end{claimproof}

    It follows that each balanced binary search tree $\bbst_B$ contains at most $\OO_{s_2,k}(|E(G)|)$ chips, each of size at most $s_2$, so inserting and deleting takes time at most $\OO_{s_2,k}(\log |E(G)|)$.
    It also holds that there are at most $\OO_{s_2,k}(|E(G)|)$ sets $B \subseteq V(G)$ with $\vol(B) > 0$, so with the help of $\bbst^*$, we can find the max-heap entry corresponding to a given set $B \subseteq V(G)$ in $\OO_{s_2,k}(\log |E(G)|)$ time, and thus inserting or deleting $B$, or updating the rank $\vol(B)$ in the max-heap can also be done in $\OO_{s_2,k}(\log |E(G)|)$ time.

    In the following, we first describe how we maintain these three data structures, i.e., how we implement the operations $\init$, $\addVertex$, $\deleteVertex$, $\addHyperedge$, and $\deleteHyperedge$, before we describe how queries are answered. For the $\addHyperedge$ and $\deleteHyperedge$ operations, let $G$ denote the hypergraph before and $G'$ the hypergraph after the insertion/deletion of the hyperedge. For a set $Z \in E(G) \cap E(G')$, we denote by $\bd_G(Z)$ and $\bd_{G'}(Z)$, $\lambda_G(Z)$ and $\lambda_{G'}(Z)$, and $\inter_G(Z)$ and $\inter_{G'}(Z)$ the boundary, boundary size and set of internal vertices of $Z$ in $G$ or $G'$, respectively. To avoid duplicates, we assume that during the $\addHyperedge$ and $\deleteHyperedge$ operations, every time a chip $Z$ is inserted into its binary search tree $\bbst_{\bd_{G'}(Z)}$, we first check whether it is already there, and if so, skip its insertion.
    
    Without explicitly stating it later in the descriptions of the operations, we assume that whenever we add the first chip to its balanced binary search tree or remove the last one, the search tree is created or removed, and the max-heap and the balanced binary search tree $\bbst^*$ are updated accordingly. That is, whenever we want to add a new chip $Z$ to a balanced binary search tree $\bbst_B$ but $\bbst_{B}$ does not yet exist, we create $\bbst_{B}$ as a new balanced binary search tree with only one entry $Z$. We also compute $\vol(B) = |Z|$, and insert the set $B$ together with a pointer to $\bbst_{B}$ and with rank $\vol(B)$ into the max-heap. Lastly, we add $B$ together with a pointer to the corresponding max-heap entry to $\bbst^*$. As argued before, these operations can be done in $\OO_{r,s_2,k}(\log |E(G)|)$ time. Similarly, when we remove a chip $Z$ from a balanced binary search tree $\bbst_B$, and afterward have $\vol(B) = 0$, we can delete the entire search tree $\bbst_B$ together with the corresponding entries in the max-heap and the balanced binary search tree $\bbst^*$. For this, we look up the pointer to the corresponding max-heap entry in $\bbst^*$. Then, the max-heap entry points to the search tree $\bbst_B$, which we delete. Next, we delete the entry from the max-heap and restore the max-heap property. Lastly, we remove the entry from the $\bbst^*$. This can all be done in $\OO_{r,s_2,k}(\log |E(G)|)$ time.

    
    \paragraph{$\init$:} Given a hypergraph $G$ without vertices and hyperedges, the data structure can clearly be initialized in $\OO(1)$ time.

    \paragraph{$\addVertex(v)$ / $\deleteVertex(v)$:} We remark that inserting or deleting an isolated vertex does neither change anything about the existing chips nor creates any new chips. The only thing that needs to be done is updating the representation of $G$ itself, which clearly can be done in $\OO(1)$ time.

    \paragraph{$\addHyperedge(e)$:} Before we update our three data structures, let us first consider the representation of $G$. Recall that a hypergraph $G$ is stored as a bipartite graph, where the two parts of the bipartition are $V(G)$ and $E(G)$ and there is an edge between $v \in V(G)$ and $e' \in E(G)$ if $v \in V(e')$. Since the new hyperedge $e$ has rank at most $r$, we need to add one vertex and at most $r$ edges to the representation of $G$, which can be done in $\OO(r)$ time.
    
    Now, we describe how the three data structures are updated. The only thing that can happen to an existing chip $Z$ when adding a hyperedge to $G$ is that some internal vertices can become boundary vertices, which increases the boundary size and might destroy the internal connectivity. We say a chip $Z$ of $G$ is \emph{affected} if $\bd_G(Z) \neq \bd_{G'}(Z)$ and remark that an affected chip is not necessarily a chip of $G'$. However, every \emph{unaffected} chip, i.e., every chip of $G$ that is not affected, is also a chip of $G'$. For this, note that $\bd_G(Z) = \bd_{G'}(Z)$ also implies that $\inter_G(Z) = \inter_{G'}(Z)$. Hence, every unaffected chip $Z$ is internally connected in $G'$, $\lambda_{G'}(Z) = \lambda_G(Z) \leq k$, and $\oracle(Z) = \true$ in $G'$, so $Z$ is a chip of $G'$. Therefore, we only need to take care of the affected chips in the following.
    
    For every affected chip $Z$, we need to remove $Z$ from $\bbst_{\bd_G(Z)}$, and if $Z$ is a chip of $G'$, insert it into $\bbst_{\bd_{G'}(Z)}$. To find out, which chips are affected, we first argue that if $\inter_G(Z) \cap V(e) = \emptyset$ for a chip $Z$ of $G$, then $\bd_G(Z) = \bd_{G'}(Z)$, so $Z$ is unaffected.

    \begin{claim}\label{claim:add_edge_boundary}
        Let $Z \subseteq E(G)$ be a set with $\inter_G(Z) \cap V(e) = \emptyset$. Then, $\bd_G(Z) = \bd_{G'}(Z)$.
    \end{claim}
    \begin{claimproof}
        First, we observe that $\bd_G(Z) \subseteq \bd_{G'}(Z)$ as adding a hyperedge cannot decrease the boundary. To show that $\bd_G(Z) \supseteq \bd_{G'}(Z)$, let $v \in \bd_{G'}(Z)$. Then, $v \in V(Z)$, so either $v \in \bd_G(Z)$ or $v \in \inter_G(Z)$. In the first case, we are done. In the second case, there is no hyperedge $e' \in E(G) \setminus Z$ with $v \in V(e')$ and we know that $v \notin V(e)$ since $\inter_G(Z) \cap V(e) = \emptyset$. Thus, there is also no hyperedge $e' \in E(G') \setminus Z$, which contradicts $v \in \bd_{G'}(Z)$. Hence, $\bd_G(Z) = \bd_{G'}(Z)$.
    \end{claimproof}
    
    It follows that every affected chip $Z$ has a vertex $v \in \inter_G(Z) \cap V(e)$. Now, we can determine the affected chips as follows. Note that for every vertex $v \in V(e)$ with $|\incidences_G(v)| > s_2$ and every chip $Z$, we have $\incidences_G(v) \nsubseteq Z$, so $v \notin \inter_G(Z)$. Thus, we only need to consider vertices $v \in V(e)$ with $|\incidences_G(v)| \leq s_2$. For a fixed such vertex $v$, we can determine all chips $Z$ with $\incidences_G(v) \subseteq Z$ and thus with $v \in \inter_G(Z) \cap V(e)$, by calling the algorithm from \Cref{claim:static_local_search_chips} with $I = \incidences_G(v)$ and $X = \emptyset$ on the hypergraph $G$, which takes time $\OO_{r,s_2,k}(\log |E(G)|)$ and returns at most $\OO_{s_2,k}(1)$ chips. Then, for every such chip $Z$ we compute the boundary $\bd_G(Z)$ in time $\OO(r \cdot s_2)$, remove $Z$ from the corresponding balanced binary search tree $\bbst_{\bd_G(Z)}$, and update the rank $\vol(\bd_G(Z))$ in the max-heap in time $\OO_{r,s_2,k}(\log |E(G)|)$.

    Now, the balanced binary search trees contain only those chips of $G'$ that are also chips of $G$ with $\inter_G(Z) \cap V(e) = \emptyset$, and each one of them is in the correct balanced binary search tree with respect to their boundary in $G'$. Next, we reinsert the chips $Z$ of $G'$ with $\inter_G(Z) \cap V(e) \neq \emptyset$, which were deleted in the previous step. To do this, we again apply the algorithm from \Cref{claim:static_local_search_chips}, now on $G'$, with $I = \incidences_{G'}(v) \setminus \{e\}$ and $X = \{v\}$ for every vertex $v \in V(e)$ with $|\incidences(v) \setminus \{e\}| \leq s_2$. For each enumerated chip $Z$ of $G'$, we compute $\bd_{G'}(Z)$, insert $Z$ into $\bbst_{\bd_{G'}}(Z)$, and update the rank $\vol(\bd_{G'}(Z))$ in time $\OO_{r,s_2,k}(\log |E(G)|)$.

    Afterwards, every chip $Z$ of $G'$ with $Z \subseteq E(G)$ is in the correct balanced binary search tree $\bbst_{G'}(Z)$ since either $Z$ was an unaffected chip of $G$ (more precisely, $\inter_G(Z) \cap V(e) = \emptyset$) or we (re)inserted $Z$ in the previous step. Thus, the only chips $Z$ of $G'$ that are missing are the ones with $Z \subsetneq E(G)$, i.e., with $e \in Z$. We can find all these chips using the algorithm from \Cref{claim:static_local_search_chips} with $I = \{e\}$ and $X = \emptyset$ in time $\OO_{r,s_2,k}(\log |E(G)|)$ and there are at most $\OO_{s_2,k}(1)$ of them. For each one of them, we compute $\bd_{G'}(Z)$, insert $Z$ into $\bbst_{\bd_{G'}(Z)}$, and update the rank $\vol(\bd_{G'}(Z))$ in time $\OO_{r,s_2,k}(\log |E(G)|)$.

    \paragraph{$\deleteHyperedge(e)$:} Again, we start with updating the representation of $G$, which is a bipartite graph $H$ with parts $V(G)$ and $E(G)$. For the hyperedge $e \in E(G)$, we need to delete every edge that is incident to $e$ in $H$. This can be done in time $\OO(r)$ by iterating over the adjacencies of $e$ and using the pointers to find the second appearance of the respective edge. Lastly, deleting the vertex $e \in E(G)$ from $H$ can be done in $\OO(1)$ time.

    We continue with updating our data structures for the chips of $G$.
    When deleting $e$, every chip $Z$ with $e \in Z$ needs to be removed. By calling the algorithm from \Cref{claim:static_local_search_chips} with $I = \{e\}$ and $X = \emptyset$, we find all chips of $G$ that contain $e$ in time $\OO_{r,s_2,k}(\log |E(G)|)$, and there are at most $\OO_{s_2,k}(1)$ of them. We remove every such $Z$ from $\bbst_{\bd_G(Z)}$ and update $\vol(\bd_G(Z))$ in time $\OO_{r,s_2,k}(\log |E(G)|)$.

    For sets $Z \subseteq E(G)$ with $e \notin Z$, the boundary might decrease when deleting $e$. For one, this means that some of the existing chips need to be regrouped. Secondly, there might be chips $Z$ in $G'$ that are not chips of $G$ because $\bd_G(Z) > k$ or $Z$ is not internally connected in $G$. In the following, we show that this boundary decrease is the only thing that can happen to sets $Z$ with $e \notin Z$, before we take care of these sets.

    \begin{claim}\label{claim:delete_edge}
        Let $Z \subseteq E(G)$ be a set with $e \notin Z$. Then, $\lambda_G(Z) - r \leq  \lambda_{G'}(Z) \leq \lambda_G(Z)$ and $\inter_{G'}(Z) \supseteq \inter_G(Z)$. Furthermore, if there is no vertex $v \in \bd_G(Z) \cap V(e)$ with $\incidences_G(v) \setminus \{e\} \subseteq Z$, then $\bd_G(Z) = \bd_{G'}(Z)$ and $\inter_{G'}(Z) \supseteq \inter_G(Z)$.
    \end{claim}
    \begin{claimproof}
        First, note that deleting $e$ cannot increase the boundary of a set $Z$, i.e., $\bd_{G'}(Z) \subseteq \bd_G(Z)$, so $\lambda_{G'}(Z) \leq \lambda_{G}(Z)$ and $\inter_{G'}(Z) \supseteq \inter_G(Z)$. Moreover, we observe that for every vertex $v \in \bd_{G}(Z) \setminus \bd_{G'}(Z)$, we have $v \in V(e)$ and $\incidences_G(v) \setminus \{e\} \subseteq Z$. Since $|V(e)| \leq r$, it follows that $\lambda_{G'}(Z) \geq \lambda_G(Z) - r$. Moreover, if there is no such vertex $v \in \bd_G(Z) \cap V(e)$ with $\incidences_G(v) \setminus \{e\} \subseteq Z$, then $\bd_G(Z) = \bd_{G'}(Z)$.
    \end{claimproof}

    By \Cref{claim:delete_edge}, for a set $Z \subseteq E(G)$ with $e \notin Z$, the only thing that can happen when deleting $e$ is that the boundary might decrease and that $Z$ might become internally connected, and this happens only if there is a vertex $v \in V(e) \cap \bd_G(Z)$ with $\incidences_G(v) \setminus \{e\} \subseteq Z$. We find all chips $Z$ of $G'$ with a vertex $v \in V(e) \cap \bd_G(Z)$ and $\incidences_G(v) \setminus \{e\} \subseteq Z$ as follows. If  $| \incidences_G(v) \setminus \{e\} | > s_2$, there is no chip $Z$ with $\incidences_G(v) \setminus \{e\} \subseteq Z$.
    For all other vertices $v \in V(e)$, we call the algorithm from \Cref{claim:static_local_search_chips} on $G$ with $I = \incidences_G(v) \setminus \{e\}$, $X = \{v\}$, and a relaxed bound $\lambda(Z) \leq k+r$ on the boundary size, which takes time $\OO_{r,s_2,k}(\log |E(G)|)$ and returns at most $\OO_{s_2,k}(1)$ chips.
    We remark that the algorithm from \Cref{claim:static_local_search_chips} can clearly work with this relaxed bound on $\lambda$. We check whether $\bd_{G'}(Z) \leq k$ in time $\OO_{s_2,r,k}(1)$ and if so, add $Z$ to $\bbst_{\bd_{G'}(Z)}$, and update the rank $\vol(\bd_{G'}(Z))$ in the max-heap in time $\OO_{r,s_2,k}(\log |E(G)|)$. Before inserting $Z$ into $\bbst_{\bd_{G'}(Z)}$, we also check whether $Z$ is in $\bbst_{\bd_G(Z)}$, and if so, remove it from there and update the rank $\vol(\bd_G(Z))$ in the max-heap.

    \paragraph{$\query$:}     
    First, we find the set $B \subseteq V(G)$ of size $|B| \leq k$ with the maximum value of $\vol(B)$ in the max-heap in time $\OO(1)$, which also points to the corresponding balanced binary search tree $\bbst_B$. If $\vol(B) < s_1$, we can report that there is no suitable set $C$ of size $s_1 \leq |C| \leq s_2$. Otherwise, we iterate over (at most) the first $s_2$ sets in the balanced binary search tree $\bbst_B$. If we find a chip $Z \in \chips(B)$ with $|Z| \geq s_1 / 2$, we set $C \coloneq Z$ and return. Clearly, this set $C$ satisfies all the requirements as $Z$ is a chip, so $Z$ is internally connected, $s_1 / 2 \leq |Z| \leq s_2$, $\lambda(Z) \leq k$, and $\oracle(Z) = \true$.

    If we do not find such a large set $Z$, then the first $s_2$ sets (or less, if $\bbst_B$ contains less than $s_2$ sets) in $\bbst_B$ all have size smaller than $s_1 / 2$. In this case, we can find an arbitrary subset $\mc Z$ of these first (at most) $s_2$ chips such that $s_1 / 2 \leq \vol(\mc Z) < s_1 \leq s_2$ and we set $C = \bigcup_{Z \in \mc Z} Z$. 
    We now show that all chip pieces of $\chips(B)$ are disjoint.
    \begin{claim}\label{lem:internally_connected_disjoint}
        Let $G$ be a hypergraph and $A,B \subseteq E(G)$ be two internally connected sets with $\bd(A) = \bd(B)$. If $A \cap B \neq \emptyset$, then $A = B$.
    \end{claim}
    \begin{claimproof}
        Let $e \in A \cap B$ and suppose $A \neq B$.
        First, we argue that there is a vertex $v \in V(e) \cap \inter(A) \cap \inter(B)$. For this, observe that $V(e) \cap \inter(A) = V(e) \setminus \bd(A) = V(e) \setminus \bd(B) = V(e) \cap \inter(B)$, so every vertex in $V(e) \cap \inter(A)$ is also in $V(e) \cap \inter(B)$. Moreover, there is a vertex $v \in V(e) \cap \inter(A)$, since otherwise $V(e) \subseteq \bd(A) = \bd(B)$, which implies by the definition of internal connectivity $|A| = |B| = 1$, so $A = B = \{e\}$ contradicting $A \neq B$. So, let $v \in V(e) \cap \inter(A) \cap \inter(B)$.
    
        Without loss of generality, we assume that $A \setminus B \neq \emptyset$ and consider a hyperedge $e' \in A \setminus B$. Note that $V(e') \subseteq V(B)$ is clearly not possible since $V(e') \cap \inter(B) = \emptyset$ and $V(e') \nsubseteq \bd(B)$ by the definition of internal connectivity. So, let $v' \in V(e') \setminus V(B)$ and note that then $v' \notin \bd(B) = \bd(A)$, so $v' \in \inter(A)$. Since $v,v' \in \inter(A)$ and $A$ is internally connected, there is a path between $v$ and $v'$ in $\Pc(G)[\inter(A)]$. This is a contradiction since $\inter(A) \cap \bd(B) = \emptyset$, so this is a path from a vertex $v' \notin V(B)$ to a vertex $v \in \inter(B)$ that does not use vertices from $\bd(B)$.
    \end{claimproof}
    Note that since $\vol(B) \geq s_1$ there is a chip $Z \in \chips(B) \setminus \mc Z$, and since by \Cref{lem:internally_connected_disjoint} the chips in $\chips(B)$ are disjoint, we have $C \cap Z = \emptyset$. Thus, $\bd(C) = \bd(Z) = B$, so $\lambda(C) = |B| \leq k$, and every internal component $C'$ of $C$ is a chip $C' \in \chips(B)$, so $\bd(C') = B = \bd(C)$ and $\oracle(C') = \true$.

    Since we only need to look at (at most) the first $s_2$ sets in $\bbst_B$, determine their size, and possibly take the union of these sets, the $\query$ operation can be done in time $\OO_{s_2,r,k}(1)$.
\end{proof}

\section{Balancing protrusions}\label{sec:splay}
Recall that our goal is to build a dynamic protrusion decomposition.
This can be seen as having two main challenges, (1) maintaining a root with small bag and low-degree, and (2) maintaining protrusions with small treewidth.
In the previous sections, we built a data structure, that will enable us to handle the first challenge, and in this section we will build a data structure that will enable us to handle the second challenge.
In~\Cref{sec:main_data_structure}, we will combine the two parts to maintain the full protrusion decomposition.

The techniques of this section are based on the dynamic treewidth data structure of~\cite{Korhonen_2025}.
We essentially want to use a copy of this data structure for each of our protrusions.
Unfortunately, we cannot do so in a completely black-box manner, because we will sometimes need to split and merge protrusions.
Nevertheless, our data structure will largely reuse subroutines from~\cite{Korhonen_2025}.
What we obtain is captured in~\Cref{lem:dynamic_prot}, which provides a dynamic protrusion decomposition data structure, but leaves the maintenance of the degree of the root to the user of the data structure.

\begin{lemma}\label{lem:dynamic_prot}
    There is a data structure that, initialized with an integer $c\geq 3$, maintains a graph $G$, the support hypergraph $\Hc(G)$ of $G$, a downwards well-linked superbranch decomposition $\Tc$ of $\Hc(G)$ with adhesion size at most $c$, as well as an annotated normal $(\Delta(r) \cdot c, \OO_c(1))$-protrusion decomposition $\tilde{\Tc}$ corresponding to $\Tc$ under the following operations:
    \begin{itemize}
        \item $\init(c)$: Given an integer $c \geq 3$, initialize the data structure with $c$ and an empty graph $G$.
        \item $\addVertex(v)$: Given a new vertex $v \notin V(G)$, add $v$ to $G$. 
        \item $\deleteVertex(v)$: Given an isolated vertex $v \in V(G)$, remove $v$ from $G$.
        \item $\addEdge(e)$: Given a new edge $e \in \binom{V(G)}{2} \setminus E(G)$, add $e$ to $G$.
        \item $\deleteEdge(e)$: Given an edge $e \in E(G)$, remove $e$ from $G$.
        \item $\merge(A)$: Given a proper non-empty subset $A \subsetneq \chd(r)$ of root-children with $2\leq|A|\leq 2^{2c}+1$ and $|\overline{A}| \ge 2$, corresponding to a set $B_A \subseteq E(\torso(r))$ such that $B_A \expand \Tc$ is well-linked in $\Hc(G)$ and $\wl(B_A \expand \Tc) \leq c$, add a new root-child node $r'$ and change the parent of each node in $A$ to be $r'$ instead of $r$, leaving $T$ otherwise unchanged.
    \end{itemize}
    The $\init$ operation runs in $\OO(1)$ time, the $\merge$ operation runs in $\OO_c(1)$ amortized time, and the remaining operations each run in $\OO_c( \log |G| )$ amortized time.
    For each operation, the changes to $\torso(r)$ can be described as a sequence $\mc C$ of hypergraph operations of size $\OO_c(1)$, which is returned.
    Changes to $\edges(r)$ can be described as a list of insertions and deletions of size at most $\OO_c(1)$, which is returned.
    
    Moreover, if upon initialization the data structure is provided a tree decomposition automaton $\autom$ with evaluation time $\tau$, then a run of $\autom$ on each protrusion is maintained, incurring an additional $\tau$ factor on the running times.
    If the automaton state at the root-child node corresponding to an hyperedge $e \in E(\torso(r))$ is modified by an operation, the corresponding sequence $\mc C$ will contain the operation $\deleteHyperedge(e)$.
\end{lemma}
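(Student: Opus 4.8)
The plan is to reuse the dynamic treewidth data structure of~\cite{Korhonen_2025} essentially as a black box, but to run a separate copy of its internal machinery for each protrusion, together with a thin wrapper that glues these copies to a common root node. Concretely, I would maintain: the graph $G$ and its support hypergraph $\Hc(G)$ explicitly; the superbranch decomposition $\Tc$ represented as in the \enquote{Representation of objects} paragraph, with the root $r$, its torso $\torso(r)$ stored explicitly, and each subtree $T_c$ (for $c \in \chd(r)$) maintained by an instance of the Korhonen data structure, treating $\Lc[c]$ (together with the boundary $\bd(\Lc[c])$) as a boundaried hypergraph of bounded well-linked number, hence of bounded treewidth via~\Cref{lem:well_linked_number_treewidth}; and the corresponding annotated tree decomposition $\tilde\Tc$, whose root bag is $V(\torso(r))$ and whose subtrees under the root children are the nice/annotated tree decompositions produced by the Korhonen instances (refined to be binary and annotated in the obvious way, which is how~\cite{Korhonen_2025} already represents them). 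Normality of $\tilde\Tc$ is then immediate from~\Cref{lem:corresponding-is-normal}, and the width bound $(\Delta(r)\cdot c, \OO_c(1))$ follows because each adhesion has size $\le c$ and each non-root bag has size $\OO_c(1)$ by the bounded-treewidth guarantee.

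For the vertex/edge update operations I would follow exactly the \enquote{easy} scheme sketched in~\Cref{subsubsec:overview:basic}: to insert or delete an edge $uv$, first call the \enquote{rotate-up} subroutine of~\cite{Korhonen_2025} on the relevant subtree(s) to pull the leaves corresponding to the singleton hyperedges $\{u\}$, $\{v\}$ (and, for deletion, also $\{u,v\}$) up to be children of $r$; once these are direct children of $r$, inserting or deleting the hyperedge $\{u,v\}$ amounts to adding or removing one leaf-child of $r$ and updating $\torso(r)$ by an $\OO_c(1)$-size sequence of hypergraph operations. Crucially, because $\{u\},\{v\}$ are children of $r$ throughout, no adhesion $\bd(\Lc[t])$ changes except locally, so downwards well-linkedness and the adhesion-size bound are preserved; the amortized $\OO_c(\log|G|)$ running time is inherited from~\cite{Korhonen_2025}. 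Adding or deleting an isolated vertex $v$ is handled by adding/removing the singleton hyperedge $\{v\}$ as a leaf-child of $r$, updating $\torso(r)$ in $\OO(1)$. In each case, the changes to $\torso(r)$ and to $\edges(r)$ are the $\OO_c(1)$-size sequences described above, which we return; whenever one of these updates modifies the automaton state at a root-child node corresponding to a torso hyperedge $e$, I would emit a $\deleteHyperedge(e)$ in $\mc C$, as required.

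The $\merge(A)$ operation is the genuinely new piece. Given $A$ with $2\le|A|\le 2^{2c}+1$ and $|\overline A|\ge 2$, whose corresponding torso hyperedge set $B_A$ satisfies that $B_A\expand\Tc$ is well-linked in $\Hc(G)$ with $\wl(B_A\expand\Tc)\le c$: I introduce a new node $r'$, reparent the $|A|$ subtrees to hang under $r'$, and make $r'$ a child of $r$. Structurally this leaves $T$ otherwise unchanged, so the only invariants at risk are (i) $r'$ must be a legal internal node, i.e.\ it has $\ge 3$ children when $|A|\ge 3$ or can be handled as a degree-two node by a standard suppression when $|A|=2$ (exactly as~\cite{Korhonen_2025} already handles low-degree nodes); (ii) downwards well-linkedness of $\Lc[r']=B_A\expand\Tc$, which is precisely the hypothesis; (iii) adhesion size $\le c$, since $\adh(r'r)=\bd(B_A\expand\Tc)$ has size $\le\wl(B_A\expand\Tc)\le c$; and (iv) the depth/balance of the new subtree under $r'$, which I restore by invoking the rebalancing routine of~\cite{Korhonen_2025} on it. The protrusion-decomposition side is maintained analogously: the corresponding subtrees of $\tilde\Tc$ are reparented under a new node whose bag is $\bd(B_A\expand\Tc)$, and since $B_A\expand\Tc$ has bounded well-linked number it has bounded treewidth, so the new bag and subtree stay within the $\OO_c(1)$ bound. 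I also need to merge the $|A|$ automaton runs into one run on the new protrusion: this is done by introducing $\OO_c(1)$ new bags forming a join-tree over the merged boundaried graph and running the transition mapping $\delta$ bottom-up on them, at cost $\OO_c(\tau)$. The torso change returned is: delete the hyperedges in $B_A$ from $\torso(r)$ and add the single hyperedge $e_{r'}$ with $V(e_{r'})=\bd(B_A\expand\Tc)$; this is an $\OO_c(1)$-size sequence, and because every automaton state in $A$ is effectively discarded, each such $e\in B_A$ contributes a $\deleteHyperedge(e)$. The amortized $\OO_c(1)$ time for $\merge$ (modulo the rebalancing, which is charged in the standard amortized fashion of~\cite{Korhonen_2025}) follows because $|A|=\OO_c(1)$ and all the local work is $\OO_c(1)$ per subtree.

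The main obstacle I anticipate is item (iv) together with correctly interfacing with~\cite{Korhonen_2025}: the dynamic treewidth data structure's potential/amortization argument is tuned to tree decompositions of globally bounded width, and here the subtrees can be split and merged, so I must verify that (a) the rebalancing routine still applies verbatim to a freshly assembled subtree of bounded treewidth, (b) the amortized analysis survives the $\merge$ operation creating a new subtree \enquote{from scratch} (in particular, that the potential of the new subtree is controlled — this should work because its size is at most the combined size of the merged subtrees, so no potential is created out of nowhere), and (c) the interplay between the superbranch-decomposition invariants (downwards well-linkedness, adhesion size) and the tree-decomposition invariants (width, niceness, annotation) is preserved under all five operations, which requires carefully tracking how each operation changes the boundaries $\bd(\Lc[t])$ — by the design above, only locally, which is exactly what makes the argument go through. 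A secondary point is isolated vertices: the support hypergraph $\Hc(G)$ is used precisely so that an isolated vertex still corresponds to a leaf (the singleton hyperedge), so the bijection $\Lc$ and the correspondence of~$\tilde\Tc$ to~$\Tc$ remain well-defined even when $G$ has isolated vertices.
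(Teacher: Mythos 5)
Your proposal follows the same high-level blueprint as the paper's proof (reuse Korhonen's machinery for the bounded-treewidth subtrees, rotate-up leaves for edge/vertex updates, split-and-rebalance for $\merge$, potential-function amortization), and you correctly flag the amortization interface as the sensitive point. However, there are a few genuine gaps.

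First, the framing \enquote{run a separate copy of the Korhonen data structure per protrusion, treated as a black box} is exactly what the paper says \emph{cannot} be done: \enquote{Unfortunately, we cannot do so in a completely black-box manner, because we will sometimes need to split and merge protrusions.} A rotate-up can cause a node inside one protrusion to become a root child (changing that protrusion's composition and boundary), and a $\merge$ combines several protrusions into one; neither is an operation a standalone Korhonen instance supports. The paper instead maintains a \emph{single} unified superbranch decomposition and generalizes two specific lemmas of Korhonen (the rebalancing routine and the rotate-to-root routine, becoming \Cref{lem:c-semigood-to-good} and \Cref{lem:rotate-to-root}) so that they act on a subtree $T_t$ rooted at a root-child $t$ rather than on the whole decomposition, with $\wl(G)$ replaced by $\wl(\Lc[t])$. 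Your approach would need exactly this generalization anyway, so presenting it as a per-protrusion black box understates the work and papers over the case where a rotation changes which root-child a leaf hangs under.

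Second, the maintenance of the \emph{annotated} protrusion decomposition at the root is nontrivial and unaddressed. Because $\bag(r)$ and $\edges(r)$ can have size $\Theta(\Delta(r)\cdot c)$, they cannot be rebuilt per operation within $\OO_c(\log|G|)$; the paper stores them as balanced BSTs and updates them incrementally via the $\OO_c(1)$-size torso-change sequence $\mc C$ (and via the auxiliary function $\EL$, applying the recurrence $\EL(t) = \bigcup_i \{e\in \EL(c_i) \mid V(e)\subseteq \adh(tc_i)\}$ on the trace). Your proposal says the refinement to a binary annotated decomposition is done \enquote{in the obvious way}, but at the root there is nothing obvious about it.

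Third, two smaller points. Your concern about $|A|=2$ requiring degree-two suppression is unfounded: the new node $r'$ has $|A|\ge 2$ children plus its parent $r$, so degree $\ge 3$ always, and the paper simply uses the split rotation without any suppression. And your amortization sketch for $\merge$ (\enquote{no potential is created out of nowhere, because its size is bounded}) is imprecise — the relevant fact is that a split rotation can only \emph{decrease} the potential $\Phi_\Tc(t)=(\Delta(t)-1)\log_2|\Lc[t]|$: the root loses $(|A|-1)\log_2|\Lc[r]|$ and the new node gains $(|A|-1)\log_2|\Lc[r']|$ with $|\Lc[r']|\le|\Lc[r]|$. You should state and verify this rather than appeal to \enquote{size}.
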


The remainder of this section consists of proving this lemma.
In \Cref{sec:basic-rotations}, we introduce a standardized set of subroutines through which we will modify our superbranch decomposition, based on~\cite{Korhonen_2025}.
In \Cref{sec:balancing-sbd}, we generalize tree-balancing subroutines from~\cite{Korhonen_2025} to be suitable in our setting, and then in \Cref{sec:blancing-proof}, we provide the full data structure of~\Cref{lem:dynamic_prot}.

\subsection{Basic rotations}\label{sec:basic-rotations}

In this subsection, we describe four subroutines for local transformation of superbranch decompositions, which we call \emph{basic rotations}.
Each basic rotation modifies only a local part of the tree and updates the whole representation, including torsos accordingly. 
Our basic rotations correspond to those of~\cite{Korhonen_2025}, although we must be a bit more careful here, to avoid spending time linear in the root size.
In the following, let $G$ be a hypergraph and $\Tc = (T,\Lc)$ a superbranch decomposition of $G$.

\paragraph{Contracting.}

Let $pt \in E(T)$ be an edge of $T$ so that $p$ is the parent of $t$ and $t\in \Vint(T)$. \emph{Contracting} the edge $pt$ in $\Tc$, simply means contracting $pt$ in $T$ while preserving the leaf mapping $\Lc$. 
In the following lemma, we give an efficient algorithm for contracting an edge of a superbranch decomposition.

\begin{lemma}\label{lem:contract}
    A representation of $\Tc$ can be turned into a representation of $\Tc$ with $pt$ contracted in time $\OO(\adhsize(\Tc) \cdot |\torso(t)|)$. 
    Torso changes can be described as a sequence of hypergraph operations of size at most the running time, and these sequences are returned.
\end{lemma}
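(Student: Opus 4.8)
The plan is to implement the contraction by local surgery on the data structures described in the representation paragraph, and to bound the running time by the amount of information that actually changes. Observe first that contracting $pt$ (where $t$ is internal) affects the tree $T$ only locally: the children of $t$ become children of $p$, the node $t$ is deleted, and $p$ survives with its updated child set. Crucially, for every node $s \ne t$ the set $\Lc[s]$ is unchanged, so all adhesions $\adh(s s')$ for edges not incident to $t$ are unchanged, and hence $\torso(s)$ is unchanged for every node $s \notin \{p,t\}$. Thus the only torso that changes is $\torso(p)$: the hyperedge $e_t$ of $\torso(p)$ (with $V(e_t)=\adh(tp)$) is removed, and for each former child $c$ of $t$ a new hyperedge $e_c$ with $V(e_c)=\adh(ct)=\adh(cp)$ is inserted into $\torso(p)$. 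The torso $\torso(t)$ is simply discarded.

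The steps I would carry out, in order, are: (1) iterate over the neighbours of $t$ in $T$ using $\torso(t)$ (whose hyperedge set is exactly $\{e_s : st \in E(T)\}$); this takes $\OO(|\torso(t)|)$ time and gives us, for each child $c$ of $t$, the edge $ct$ and via its stored back-pointer the corresponding hyperedge $e_c \in E(\torso(t))$, hence $\adh(ct) = V(e_c)$. (2) For each such child $c$: redirect the parent pointer of $c$ to $p$, create in $\torso(p)$ a new hyperedge $e_c$ with vertex set $\adh(ct)$ — this costs $\OO(|\adh(ct)|) \le \OO(\adhsize(\Tc))$ per child via $\addHyperedge$ — and set up the mutual pointers between the tree-edge $cp$ and $e_c$. (3) In $\torso(p)$, delete the hyperedge $e_t$ corresponding to the old edge $tp$, in $\OO(\adhsize(\Tc))$ time via $\deleteHyperedge$. (4) Delete the node $t$ from $T$, discard $\torso(t)$, and note that the leaf-descendant counts $|L[s]|$ are unchanged for all surviving nodes since no leaf moved relative to any ancestor except that $t$ is gone (and $|L[p]|$ is unchanged as well). (5) Record all the hypergraph operations performed on $\torso(p)$ (one $\deleteHyperedge$ and $\Delta(t)$ many $\addHyperedge$s) into the returned sequence $\mc C$. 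Since the tree $T$ itself is represented as a plain graph with a root pointer and parent-edge pointers, the structural edits to $T$ in steps (2) and (4) are $\OO(1)$ each, touching only $t$, $p$, and the children of $t$.

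For the running-time bound: step (1) is $\OO(|\torso(t)|)$; steps (2)–(3) perform $\Delta(t)+1 \le |\torso(t)|$ hypergraph operations on $\torso(p)$, each of cost $\OO(\adhsize(\Tc))$ because every relevant hyperedge has rank at most $\adhsize(\Tc)$; and steps (4)–(5) are $\OO(|\torso(t)| + \Delta(t))$. Altogether this is $\OO(\adhsize(\Tc) \cdot |\torso(t)|)$, matching the claimed bound, and the size of the returned sequence $\mc C$ is $\OO(\sum_{o \in \mc C}(|V(o)|+1)) = \OO(\adhsize(\Tc)\cdot|\torso(t)|)$, i.e.\ at most the running time, as required.

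I do not expect a genuine mathematical obstacle here — the lemma is essentially a bookkeeping statement. The only thing requiring a little care is the accounting that no torso other than $\torso(p)$ changes (this is where one uses that $t$ is internal, so that contraction does not alter any $\Lc[s]$, and hence neither the adhesions nor the boundaries that feed into the other torsos), and the observation that one must charge the cost of updating $\torso(p)$ to $|\torso(t)|$ rather than to $|\torso(p)|$ — this is exactly the point where a naive implementation would spend time linear in the (possibly huge) root size, and it is avoided because we only ever add to $\torso(p)$ the hyperedges corresponding to children of $t$ and remove the single hyperedge $e_t$, never touching the rest of $\torso(p)$'s hyperedge list.
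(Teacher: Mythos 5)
Your proposal follows the same approach as the paper's proof: perform local tree surgery, observe that only $\torso(p)$ among all torsos changes, and charge the update cost to $|\torso(t)|$ rather than to $|\torso(p)|$. The structural analysis and the tree-side bookkeeping are all correct. There is, however, a gap at the one genuinely non-trivial point of the time bound.

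You take the cost of creating each hyperedge $e_c$ in $\torso(p)$ to be $\OO(\adhsize(\Tc))$ via a direct $\addHyperedge$ call, and attribute the good running time to never touching the rest of $\torso(p)$'s hyperedge list. But $\addHyperedge$ presupposes $V(e_c) \subseteq V(\torso(p))$ with pointers to those vertices already in hand, while some vertices of $\adh(ct)$ need not yet be in $V(\torso(p))$ at all, and for the ones that are, a pointer must be located. A naive search over $V(\torso(p))$ would cost time linear in the root size --- exactly the trap you are trying to sidestep --- and the danger lies on the vertex side of the bipartite representation, not the hyperedge side that you mention. The paper closes this with the observation that, for $v \in \adh(ct) \subseteq V(\Lc[t])$, one has $v \in V(\torso(p))$ if and only if $v \in \adh(tp) = V(e_t)$, so both the membership test and the pointer retrieval can be done in $\OO(\adhsize(\Tc))$ time by scanning the neighbours of $e_t$ in the bipartite representation of $\torso(p)$; vertices of $\adh(ct)$ failing the test are freshly created with $\addVertex$. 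This is the step your proposal implicitly relies on but does not state, and without it the per-insertion cost is unjustified. Relatedly, those $\addVertex$ operations must also appear in the returned sequence $\mc C$, which your tally of one $\deleteHyperedge$ plus $\Delta(t)$ many $\addHyperedge$s omits.
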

\begin{proof}
    Let $\Tc'=(T', \Lc')$ denote the new superbranch decomposition.
    We start by transforming $T$ into $T'$, so that $p$ corresponds to the new contracted node. 
    This is done by deleting $t$ and its incident edges from $T$, and then inserting an edge $ps$ for each $s\in \chd(t)$. This takes $\OO(|\torso(t)|)$ time. 
    To transform $\torso_\Tc(p)$ to $\torso_{\Tc'}(p)$, we start by deleting $e_t$ corresponding to $\adh_\Tc(pt)$  in time $\OO(|V(e_t)|)=O(|\torso(t)|)$.
    Then, for each hyperedge  $e\in E(\torso(t))\setminus\{e_t\}$, we insert $e$ into the bipartite representation of $\torso(p)$:
    \begin{itemize}
        \item add a vertex corresponding to $e$ in $\OO(1)$ time, and
        \item for each $v\in V(e)$ check whether $v\in V(\torso_\Tc(p))$. If so, add an edge between the existing vertex for $v$ and the new vertex for $e$. Otherwise, create a new vertex for $v$ and add the edge.
    \end{itemize}    
    Note that $v\in V(\torso_\Tc(p))$ if and only if $v\in \adh_\Tc(pt)$, so we can check this and retrieve a pointer to the vertex in $\OO(\adhsize(\Tc))$ time.
    As there are $\OO(|\torso(t)|)$ edges to insert, the total running time is $\OO(\adhsize(\Tc)  \cdot |\torso(t)|)$. 

    During this transformation, we can also construct a sequence $\mc C$ of hypergraph operations that describes the change of $\torso(p)$. The sequence consists of
    \begin{itemize}
        \item $\deleteEdge(e_t)$,
        \item for each $e\in E(\torso(t)\setminus\{e_t\})$: $\addVertex(v)$ for each new $v\in V(e)$ and $\addEdge(e)$
    \end{itemize}
    The size of $\mc C$ is upper bounded by $\adhsize(\Tc) + |\torso(t)| = \OO(\adhsize(\Tc) \cdot |\torso(t)|)$
\end{proof}

\paragraph{Splitting.}  Let $t\in \Vint(T)$ be an internal node and $C\subseteq E(\torso(t))$ a set of hyperedges with $|C|\geq2$ and $|\overline{C}|\geq 2$.
We assume that $C$ does not contain a hyperedge corresponding to the parent of $t$ (if it exists).
\textit{Splitting} the subset $C$ from $t$ in $\Tc$, means replacing $t$ with two new adjacent nodes, $t'$ and $t_{C}$, so that $t'$ is adjacent to every node $s$ with $e_s\in \overline{C}$ and $t_C$ is adjacent to every node $s$ with $e_s\in C$.

\begin{lemma}\label{lem:split}
    A representation of $\Tc$ can be turned into a representation with $C$ splitting from $t$, in time $\OO(\adhsize(\Tc) \cdot |C|)$.
    Torso changes can be described as a sequence of hypergraph operations of size at most the running time, and these sequences are returned.
\end{lemma}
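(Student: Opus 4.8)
The plan is to perform the surgery entirely locally, reusing the node $t$ itself as the node $t'$ (so that the parent edge of $t$, its parent pointer, and the leaf-counts $|L[\cdot]|$ of all proper ancestors of $t$ are never touched) and creating a single fresh node $t_C$ as a child of $t'$. The defining feature of this situation, in contrast with the contraction of \Cref{lem:contract}, is that $\overline C$ may be enormous (for instance when $t=r$), so we must \emph{not} rebuild $\torso(t')$ from scratch; instead we edit $\torso(t)$ in place, whereas $\torso(t_C)$, whose hyperedge set has size only $|C|+1$, is built from nothing.

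First I would compute $V(C)$ and $\bd_{\torso(t)}(C)$ by applying \Cref{lem:compute-boundary} to the hypergraph $\torso(t)$ (whose rank is at most $\adhsize(\Tc)$) and the set $C$, in time $\OO(|C|\cdot\adhsize(\Tc))$. A straightforward unfolding of the definitions, using that the edge sets $\{\Lc[s]\}_{s\in\chd(t)}$ are pairwise disjoint and that the parent-hyperedge of $\torso(t)$ (if present) has vertex set $\bd(\Lc[t])$, shows $\adh(t_C t')=\bd(C\expand\Tc)=\bd_{\torso(t)}(C)$. Hence $\bd_{\torso(t)}(C)$ is exactly the adhesion of the new tree edge $t_C t'$, and therefore the vertex set both of the parent-hyperedge of $\torso(t_C)$ and of the new child-hyperedge $e_{t_C}$ of $\torso(t')$.

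Then I would carry out the transformation in three stages. (i) \emph{Tree and pointers}: in $T$ delete the edges $st$ for all $s$ with $e_s\in C$, create $t_C$, and add the edges $\{s t_C: e_s\in C\}$ together with $t_C t'$ (recall $t'=t$); set the parent pointer of $t_C$ to $t_C t'$, redirect the parent pointer of each moved child $s$ to $s t_C$, and for each moved \emph{internal} child $s$ update the $\OO(1)$ cross-pointers between $s t_C$ and the hyperedge of $\torso(s)$ towards its parent (whose vertex set $\bd(\Lc[s])$ is unchanged). Set $|L[t_C]|=\sum_{s:e_s\in C}|L[s]|$; the count $|L[t']|=|L[t]|$ and all ancestor counts are unchanged, since the leaf set below $t'$ equals that below $t$. (ii) \emph{Build $\torso(t_C)$}: take the hyperedges $e_s$ for $e_s\in C$, each copied from $\torso(t)$ (of rank $\le\adhsize(\Tc)$, with cross-pointer set to the new edge $s t_C$), together with one new hyperedge of vertex set $\bd_{\torso(t)}(C)$ pointing to $t_C t'$. (iii) \emph{Turn $\torso(t)$ into $\torso(t')$ in place}: delete from the bipartite representation the hyperedges $e_s$ with $e_s\in C$; then scan $V(C)$ and, for each vertex, check its degree in the bipartite representation, deleting it if it has just become isolated — the vertices of $\torso(t)$ that become isolated are exactly those of $\inter_{\torso(t)}(C)=V(C)\setminus V(\overline C)\subseteq V(C)$ — and finally add $e_{t_C}$, whose vertex set $\bd_{\torso(t)}(C)$ is contained in the surviving vertex set $V(\overline C)$, so no vertex is created. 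Each stage costs $\OO(\adhsize(\Tc)\cdot|C|)$, giving the claimed running time; and since $|C|\ge2$ and $|\overline C|\ge2$, both $t_C$ (of degree $|C|+1$) and $t'$ (of degree $|\overline C|+1$) have degree at least $3$, so the output is a genuine superbranch decomposition.

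The accompanying operation sequences are read off the construction: the changes to $\torso(t)$ are returned as $\deleteHyperedge(e_s)$ for each $e_s\in C$, then $\deleteVertex(v)$ for each newly isolated $v$, then $\addHyperedge(e_{t_C})$; the creation of $\torso(t_C)$ is returned as the corresponding $\addVertex$/$\addHyperedge$ sequence. Both have size $\OO(\adhsize(\Tc)\cdot|C|)$, i.e.\ at most the running time. I expect the only non-routine point to be stage (iii): one must argue, as above, that every vertex to be deleted lies in $\inter_{\torso(t)}(C)\subseteq V(C)$ and that no vertex of $\bd_{\torso(t)}(C)$ ever disappears; this is precisely what confines the edit of the potentially huge torso $\torso(t)$ to time proportional to $|C|$ rather than to $|\torso(t)|$.
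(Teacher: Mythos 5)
Your proof is correct and takes essentially the same approach as the paper's: reuse $t$ as $t'$, create a single new node $t_C$, compute $\bd(C)$ via the boundary-computation lemma, edit $\torso(t)$ in place (delete the hyperedges of $C$, remove newly-isolated vertices, add $e_{t_C}$), and build $\torso(t_C)$ from scratch. You spell out a few points the paper leaves implicit — that exactly the vertices of $\inter_{\torso(t)}(C)$ become isolated, that $\bd(C)$ survives, and that the leaf-count and cross-pointer bookkeeping is local — but the construction and the cost analysis are identical.
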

\begin{proof}
    Let $\Tc'=(T', \Lc')$ denote the new superbranch decomposition.
    We first transform $T$ into $T'$. 
    Rename $t$ into $t'$, add a new node $t_C$, and insert the edge $t' t_C$. 
    For every child $s$ of $t$ with $e_s \in C$, remove the edge $t s$ and instead connect $t_C$ to $s$. 
    This step takes $\OO(|C|)$ time.

    Next, we update the adhesions. For each $s$ with $e_s\in C$, we set $\adh_{\Tc'}(t_Cc)=\adh_{\Tc}(tc)$. 
    For the edge $t' t_C$ we set $\adh_{\Tc'}(t' t_c) = \bd(C)$, where $\bd(C)$ can be computed in time $O(|C|\cdot \adhsize(\Tc))$ by \Cref{lem:compute-boundary}.

    To obtain $\torso(t')$, remove all hyperedges of $C$ from $\torso(t)$ and remove every vertex $v$ that becomes isolated after removing $C$,
    and finally add the new hyperedge $e_{t_C}$ with $V(e_{t_C})=\bd(C)$.
    To obtain $\torso(t_C)$, copy the part of $\torso(t)$ corresponding to edges in $C$ and add $e_{t_C}$. 
    Both steps take time $\OO(|C| \cdot \adhsize(\Tc))$.
    
    Performing these transformations, we can simultaneously construct a sequence $\mc C$ describing the change of $\torso(t)$ to $\torso(t')$. The sequence consists of
    \begin{itemize}
        \item $\deleteEdge(e)$ for each $e\in C$, 
        \item $\addEdge(e_{t_C})$, and
        \item $\deleteVertex(v)$ for each vertex $v$ that becomes isolated after deleting $C$. 
    \end{itemize}    
    The size of $\mc C$ is upper bounded by $\OO(\adhsize(\Tc) \cdot |C|)$. Similarly, the changes of $\torso(t_C)$ can be described by a sequence $\mc C'$ consisting of
    \begin{itemize}
        \item $\addVertex(v)$ for each $v\in V(C)$,
        \item $\addEdge(e)$ for each $e\in C$, 
        \item $\addEdge(e_{t_C})$.
    \end{itemize}    
    The size of $\mc C'$ is upper bounded by $\OO(\adhsize(\Tc) \cdot |C|)$.
\end{proof}

\paragraph{Inserting a leaf.}
We use this leaf insert operation to insert a new hyperedge $e$ to $G$.
This operation may also introduce new vertices, where all vertices in $V(e)\setminus V(G)$ are added to $G$ together with $e$.
To insert the hyperedge into the superbranch decomposition, we proceed as follows.

Let $t\in \Vint(T)$ be an internal node of $T$, and let $\cl(t) = \chd(t) \cap L(T)$ denote the set of leaf-children of $t$. Now, for a hyperedge $e\notin E(G)$ with $V(e) \cap V(G) \subseteq V(\Lc(\cl(t)))$, \textit{inserting} $e$ as a child of $t$ in $\Tc$ means adding a leaf-node $\ell_e$ as a child of $t$ in $T$, and setting $\Lc(\ell_e)=e$. 

To obtain an algorithm independent of $|\torso(t)|$, we take as an additional input a set $X \subseteq \cl(t)$ with $V(e) \cap V(G) \subseteq V(\Lc(X))$. Since only edges incident to $X$ are affected by the insertion, $\torso(t)$ can be updated by considering $X$ rather than the entire torso.

\begin{lemma}\label{lem:insert-leaf}
    Given a new hyperedge $e$ and $X \subseteq \cl(t)$ with $V(e)\cap V(G)\subseteq V(\Lc(X))$, a representation of $\Tc$ can be turned into a representation of $\Tc$ with $e$ inserted as a child of $t$ in time $\OO((|X| \cdot \rank(G)+1) \cdot |V(e)| + |\anc(t)|)$.
    Torso changes can be described as a sequence of hypergraph operations of size at most the running time, and these sequences are returned.
\end{lemma}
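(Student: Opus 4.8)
The plan is to transform the representation of $\Tc$ into the representation of $\Tc$ with $e$ inserted, handling in turn the tree $T$, the leaf-count annotations $|L[t']|$ along the root path, and the torso $\torso(t)$, while never touching parts of the decomposition not incident to the newly inserted hyperedge. First I would add a fresh leaf-node $\ell_e$, attach it as a child of $t$, and set $\Lc(\ell_e) = e$ together with the mutual pointers between $\ell_e$ and $e$; this costs $\OO(1)$. Any vertex $v \in V(e) \setminus V(G)$ is then added to $G$ (and hence to $V(\Hc(G))$) and to the bipartite representation of $e$ in $\OO(|V(e)|)$ time. The leaf-count bookkeeping is updated by walking up the root path of $\ell_e$ and incrementing $|L[t']|$ for each ancestor $t'$ of $t$, which accounts for the $\OO(|\anc(t)|)$ term.

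The more delicate part is updating $\torso(t)$ without reading all of $\torso(t)$. The key observation is that inserting $e$ as a leaf-child of $t$ only creates a new hyperedge $e_{\ell_e}$ in $\torso(t)$ with $V(e_{\ell_e}) = \bd(\Lc[\ell_e]) = V(e)$, and may enlarge the vertex set $V(\torso(t))$ by the vertices of $V(e)$ that were not previously incident to any adhesion $\adh(st)$. No existing adhesion $\adh(st)$ for $s \in N(t)$ changes, because $\Lc[s]$ is unchanged for every old neighbor $s$ of $t$, and $\bd(\Lc[s])$ depends only on which edges of $G$ lie in $\Lc[s]$ versus outside — and $e$ lies outside $\Lc[s]$ after the insertion exactly as every edge of $\Lc[s]$ did before (here we use that $V(e)\cap V(G) \subseteq V(\Lc(X))$ with $X \subseteq \cl(t)$, so the new hyperedge is ``local'' to $t$ and does not, for instance, become internal to some $\Lc[s]$). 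Thus the only update to $\torso(t)$ is: insert a vertex of $\torso(t)$ for each $v \in V(e)$ not already present, and insert the hyperedge $e_{\ell_e}$ with incidences to the $|V(e)|$ vertices $V(e)$. To decide for each $v \in V(e)$ whether $v \in V(\torso(t))$ in sublinear time, I would check whether $v$ lies in some adhesion incident to a node of $X$: for each $\ell \in X$, the adhesion $\adh(\ell t)$ is $V(\Lc(\ell))$, a set of size at most $\rank(G)$, so scanning all of them costs $\OO(|X|\cdot \rank(G))$; a vertex of $V(e)$ is in $V(\torso(t))$ iff it is internal to none of the relevant subtrees, which can be read off from these adhesions together with the pointers stored in the representation. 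Performing the $|V(e)|$ membership tests this way yields the $\OO((|X|\cdot\rank(G) + 1)\cdot|V(e)|)$ term. Simultaneously one records the hypergraph-operation sequence: $\addVertex(v)$ for each newly appearing $v$, then $\addEdge(e_{\ell_e})$, whose size is bounded by $\OO((|X|\cdot\rank(G)+1)\cdot|V(e)|)$, i.e.\ at most the running time.

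The main obstacle I expect is precisely the sublinear-in-$|\torso(t)|$ membership test ``is $v \in V(\torso(t))$?'': the naive way is to scan $V(\torso(t))$, which is too slow when $t$ is the root. The fix is the hypothesis $X \subseteq \cl(t)$ with $V(e)\cap V(G) \subseteq V(\Lc(X))$, which localizes all relevant incidences to the (constant-or-small-sized) set $X$ of leaf-children, so that one never needs to consult the full torso — only the $\OO(|X|)$ adhesions of edges to leaves in $X$, each of size $\le \rank(G)$. A minor secondary point to get right is the argument that no other adhesion or torso anywhere in $\Tc$ changes, which follows because $\Lc[t']$ is unchanged for every node $t'$ except that $\Lc[\ell_e] = \{e\}$ is brand new and $\Lc[t']$ for ancestors $t'$ of $t$ only gains the edge $e$ whose endpoints are already in $\bag$-reachable positions — but this affects only the leaf counts, already handled, and not the boundaries, since $V(e) \subseteq V(\Lc(X))$ keeps $e$'s endpoints inside the subtree at $t$ where they were already "seen". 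Once these observations are in place, the rest is the routine pointer manipulation described above.
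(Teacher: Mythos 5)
Your plan gets the high-level structure right — add the leaf, walk the root path to update leaf counts, and use $X$ to localize all work near $t$ so you never scan the full $\torso(t)$ — but it contains a concrete error that would make the resulting torso representation wrong.

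You claim that no existing adhesion $\adh(st)$ for $s \in N(t)$ changes, reasoning that $\bd(\Lc[s])$ ``depends only on which edges of $G$ lie in $\Lc[s]$ versus outside.'' That is not quite true: $\bd(\Lc[s]) = V(\Lc[s]) \cap V(\overline{\Lc[s]})$, and adding $e$ to the complement can change $V(\overline{\Lc[s]})$. Concretely, take $s \in X$ a leaf-child with $\Lc(s) = \{v\}$ for an isolated vertex $v$; before the insertion, $v$ is incident only to the singleton hyperedge, so $\adh(st) = \bd(\{\{v\}\}) = \emptyset$, and $v \notin V(\torso(t))$. After inserting $e$ with $v \in V(e)$, the vertex $v$ becomes a boundary vertex of $\Lc[s]$, so $\adh(st) = \{v\}$. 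This is exactly what happens in the $\addEdge(uv)$ routine of \Cref{lem:dynamic_prot} after $e_u, e_v$ have been rotated to the root. Because you assert these adhesions are static, your torso update only creates the new hyperedge $e_{\ell_e}$ and new vertices, without adding $v$ to the hyperedges $e_x \in E(\torso(t))$ for affected $x \in X$, and your emitted operation sequence correspondingly misses the $\deleteEdge(e_x)/\addEdge(e_x)$ pairs needed to record those updates. The paper's proof explicitly updates each $\adh(xt)$ for $x \in X$ (noting that $v \notin \adh_{\Tc}(xt)$ implies $v \notin V(\torso(t))$, and handling the promotion of such $v$ from internal to boundary), which is the part your argument omits. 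A secondary, smaller issue: the adhesion $\adh(\ell_e t)$ should be $V(e) \cap V(G)$, not $V(e)$, since the newly introduced vertices of $V(e) \setminus V(G)$ are not yet incident to anything else.
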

\begin{proof}
    Let $\Tc'$ denote the new tree decomposition and let $e'$ denote the hyperedge with $V(e')=V(e) \cap V(G)$. 
    Since $V(e') \subseteq V(\Lc(X))$, any edge $uv$ of $T'$ not incident to a child in $X$ satisfies $\adh_{\Tc'}(vt)=\adh_{\Tc}(vt)$.
    Hence, only adhesions between $t$ and the children in $X$ should be updated. 
    For $x \in X$ and $v \in V(\Lc(x))$, note that $v \notin \adh_{\Tc'}(xt)$ implies $v \notin V(\torso(t))$.
    So, to update $\adh_{\Tc'}(xt)$ and the corresponding hyperedge $e_x \in E(\torso(t))$, whenever $v \in V(e')$ and $v \notin \adh_{\Tc}(xt)$ we insert $v$ into $\adh_{\Tc'}(xt)$, add $v$ to $V(\torso(t))$, and add $v$ to $e_x$ in the bipartite representation of $\torso(t)$. This takes $\OO(|X| \cdot \rank(G) \cdot |V(e)|)$ time.

    Next, we add a new leaf $w$ with edge $tw$, set $\adh_{\Tc'}(tw) = V(e')$, and insert the hyperedge $e'$ into $\torso(t)$. 
    Since $V(e') \subseteq V(\Lc(X))$, all vertices of $V(e')$ can be identified in the bipartite representation of $\torso(t)$ by scanning the hyperedges corresponding to $X$, so the insertion runs in time $\OO(|X| \cdot \rank(G) \cdot |V(e)|)$. Finally, we need to increase the stored number of descendant leaves for the ancestors of $t$ in time $\OO(|\anc(t)|)$.

    During this transformation, we can also construct a sequence $\mc C$ of hypergraph operations that describes the change of $\torso(t)$. The sequence consists of
    \begin{itemize}
        \item $\addVertex(v)$ for each new $v\in V(e')\setminus V(\torso_\Tc(t))$,
        \item $\deleteEdge(e_x)$ for each $e_x\in E(\torso_{\Tc}(t))$ with $x\in X$,
        \item $\addEdge(e_x)$ for each $e_x\in E(\torso_{\Tc'}(t))$ with $x\in X$, and
        \item $\addEdge(e')$.
    \end{itemize}
    The size of $\mc C$ is upper bounded by $|V(e)|+ 2 \cdot |X|\cdot (\rank(G)+1) + (|V(e)|+1)$.
\end{proof}

\paragraph{Deleting a leaf.} 

Let $t\in \Vint(T)$ be an internal node of $T$ with at least $3$ children.
Now, for a leaf $\ell\in \cl(t)$ with $V(\Lc(\ell)) \cap V(G\setminus\Lc(\ell))\subseteq V(\Lc(\cl(t)\setminus\{\ell\}))$, \textit{deleting} $\ell$ from $\Tc$ means deleting $\ell$ and the edge $\ell t$ from $T$.   

\begin{lemma}\label{lem:delete-leaf}
    A representation of $\Tc$ can be turned into a representation of $\Tc$ with $\ell$ deleted in time $\OO(\adhsize(\Tc)^2 + |\anc(t)|)$.
    Torso changes can be described as a sequence of hypergraph operations of size at most the running time, and these sequences are returned.
\end{lemma}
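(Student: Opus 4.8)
The plan is to run the leaf-insertion procedure of \Cref{lem:insert-leaf} in reverse, the main point being to touch only an $\OO(\adhsize(\Tc))$-sized portion of $\torso(t)$ instead of all of it (which could be as large as the root). First I would retrieve, using the stored pointers, the hyperedge $e = \Lc(\ell)$, the tree edge $\ell t$ (from $\ell$'s parent-pointer), and the hyperedge $e_\ell \in E(\torso(t))$ corresponding to $\ell$ (from the pointer stored at $\ell t$). Then $\adh(\ell t) = V(e_\ell) = V(\Lc(\ell)) \cap V(G \setminus \Lc(\ell))$ can be read off in $\OO(\adhsize(\Tc))$ time, and by the hypothesis on $\ell$ this set already contains every vertex of $e$ that is not removed from $G$ by the deletion.

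The key structural claim, which I expect to be the main obstacle, is that deleting $\ell$ changes \emph{only} $\torso(t)$: for every non-leaf child $s$ of $t$ the adhesion $\adh(st)$ is unchanged, and the same holds for $\adh(tp)$ where $p$ is the parent of $t$ (so no change propagates up or down the tree, and in particular we never scan $\torso(p)$ or $\torso(s)$). The argument is a boundary computation exploiting the hypothesis: removing a hyperedge never enlarges a boundary, so we only need the reverse; and if some $v \in V(e)$ lay in $V(\Lc[s])$ (resp.\ $V(\Lc[t]\setminus\{e\})$), then $v \in V(e) \cap V(G\setminus e)$, so by the hypothesis on $\ell$ it lies in $V(\Lc(x))$ for some leaf-child $x \ne \ell$ of $t$; since $\Lc(x) \notin \Lc[s]$ and $\Lc(x)\neq e$, this witnesses that $v$ stays in the boundary of $\Lc[s]$ (resp.\ $\Lc[t]\setminus\{e\}$) after removing $e$, hence $v$ cannot leave $\adh(st)$ (resp.\ $\adh(tp)$). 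Consequently only the adhesions $\adh(xt)$ of leaf-children $x$ of $t$ can move, and since $t$ has at least three children it still has at least two after the deletion, so $\Tc$ with $\ell$ deleted is again a valid superbranch decomposition.

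Next I would locate the affected leaf-children efficiently. From the definition of the boundary one checks that, for a leaf-child $x \ne \ell$ of $t$, the adhesion $\adh(xt)$ loses exactly the vertices $v$ with $\incidences_G(v) = \{e,\Lc(x)\}$, and every such $v$ lies in $\adh(\ell t)$. So I would scan the at most $\adhsize(\Tc)$ vertices $v \in \adh(\ell t)$; for each, inspect its first three incident hyperedges, and whenever $\incidences_G(v)$ is exactly $\{e,f\}$, use the pointer $f \mapsto \Lc^{-1}(f)$ and the parent-pointer to test whether $f=\Lc(x)$ for a leaf-child $x$ of $t$, recording the pair $(x,v)$ in that case. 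This yields at most $\adhsize(\Tc)$ affected leaf-children, each together with the (at most $\adhsize(\Tc)$) vertices it loses.

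Finally I would perform the updates: delete $\ell$ and $\ell t$ from $T$; update the bipartite representation of $G$ by removing $e$ and the vertices of $V(e)$ that thereby become isolated (time $\OO(\rank(G))$, which is $\OO(1)$ in our applications to $\Hc(G)$); in $\torso(t)$, delete the node $e_\ell$ with its $\le\adhsize(\Tc)$ incident edges, replace each affected $e_x$ by $e_x$ minus its lost vertices, and then delete from $V(\torso(t))$ each vertex of $\adh(\ell t)$ whose adjacency list in $\torso(t)$ is now empty; and decrement the stored descendant-leaf counts along $\anc(t)$. Handling $e_\ell$ costs $\OO(\adhsize(\Tc))$, and handling one hyperedge $e_x$ of size $\le\adhsize(\Tc)$ per affected leaf-child over at most $\adhsize(\Tc)$ of them costs $\OO(\adhsize(\Tc)^2)$; the tree surgery is $\OO(1)$ and the counter updates are $\OO(|\anc(t)|)$, giving the claimed time bound. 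The returned sequence $\mc C$ of hypergraph operations on $\torso(t)$ is $\deleteHyperedge(e_\ell)$, then a pair $\deleteHyperedge(e_x),\addHyperedge(e_x')$ for each affected leaf-child $x$ (with $e_x'$ the shrunk adhesion), then $\deleteVertex(v)$ for each newly isolated $v \in \adh(\ell t)$; built alongside the updates, its size is $\OO(\adhsize(\Tc)^2)$, within the running time.
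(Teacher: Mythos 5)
Your proof is correct and takes essentially the same approach as the paper's: both observe that, by the hypothesis on $\ell$, no adhesion outside $\torso(t)$ changes, then iterate over the at-most-$\adhsize(\Tc)$ vertices of $V(e_\ell) = \adh(\ell t)$ to detect and apply the shrinking of the affected adhesions, finally removing newly isolated torso vertices, yielding the $\OO(\adhsize(\Tc)^2)$ bound plus the $\OO(|\anc(t)|)$ counter updates. The only cosmetic difference is that you phrase the "vertex $v$ leaves $\adh(st)$" criterion in terms of $\incidences_G(v)$ whereas the paper phrases it via incidences in $\torso(t)$; these are equivalent, and you also make explicit the (true but not strictly needed) refinement that only leaf-children of $t$ can have their adhesion shrink.
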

\begin{proof}
    Let $\Tc'$ denote the new tree decomposition. 
    Since $V(\Lc(\ell)) \cap V(G\setminus\Lc(\ell))\subseteq V(\Lc(\cl(t)\setminus\{\ell\}))$, any edge $uv$ of $T'$ not incident to a child of $t$ satisfies $\adh_{\Tc'}(vt)=\adh_{\Tc}(vt)$.
    Hence, only adhesions between $t$ and the children should be updated.

    For a child $s$, a vertex $v\in \adh_{\Tc}(ts)$ may need to be removed only if $v\in V(e_\ell)$.
    This is the case if $v$ is incident only to $e_s$ and $e_\ell$ in $\torso_{\Tc}(t)$.
    This also means that each $v\in V(e_\ell)$ can affect at most one adhesion, which can be updated in $\OO(\adhsize(\Tc))$ time.
    As $|V(e_\ell)|\leq \adhsize(\Tc)$ this takes $\OO(\adhsize(\Tc)^2)$ time.
    Finally, we remove $\ell$ from $T$, and then we need to decrease the stored number of descendant leaves for the ancestors of $t$ in time $\OO(|\anc(t)|)$.

    During this transformation, we can also construct a sequence $\mc C$ of hypergraph operations that describes the change of $\torso(t)$.  The sequence consists of
    \begin{itemize}
        \item $\deleteEdge(e_\ell)$,
        \item for each hyperedge that loses a vertex of $V(e_\ell)$, update it by performing $\deleteEdge$ followed by $\addEdge$,
        \item $\deleteVertex(v)$ for each $x\in V(e_\ell)$ that becomes isolated after deleting $e_\ell$.
    \end{itemize}
    The size of $\mc C$ is upper bounded by $\OO(\adhsize(\Tc)^2)$.
\end{proof}

\paragraph{Representing basic rotations.}
We will denote a basic rotation by $s$, where $s$ encodes the type of rotation (contraction, split, leaf insertion, or leaf deletion) together with the information required to perform it: for a contraction of an edge $st$, the pair $(s,t)$ is stored; for a split of a node $t$ with edge set $C$, the pair $(t,C)$ is stored; for inserting a leaf at $t$ with respect to a set $X$, the pair $(t,X)$ is stored; and for deleting a leaf $\ell$, the node $\ell$ is stored.

If $\Tc$ and $\Tc'$ are the superbranch decompositions before and after applying a basic rotation $s$, we say that $s$ \emph{transforms} $\Tc$ into $\Tc'$.
The size of a rotation, denoted $\|s\|$, is defined so that applying $s$ takes time $\OO(\|s\|)$. Specifically, a contraction has size $\adhsize(\Tc) \cdot |\torso(t)|$, a split has size $\adhsize(\Tc) \cdot |C|$, a leaf insertion has size $(|X| \cdot \adhsize(\Tc)+1) \cdot |V(e)|+|\anc(t)|$, and a leaf deletion has size $\adhsize(\Tc)^2+|\anc(t)|$.

A \emph{sequence of basic rotations} is a finite sequence $\Sc=s_1s_2\cdots s_k$, where each $s_i$ is a basic rotation as described above. Applying $\Sc$ to a superbranch decomposition $\Tc$ \emph{transforms} $\Tc$ into a new decomposition $\Tc'$ by applying $s_1$ to $s_k$ one sequentially. The \emph{size} of the sequence is $\|\Sc\|=\sum_i(\|s_i\|+1)$, so that applying $\Sc$ takes time $\OO(\|\Sc\|)$.
We define $V_\Tc(\Sc)$ as the set of nodes of $T$ involved in the rotations: all internal nodes involved in a contraction or a split, and all nodes deleted or inserted together with their parents. Analogously, $V_{\Tc'}(\Sc)$ denotes the corresponding set in the transformed decomposition.
The \emph{trace} of $\Sc$ with respect to $\Tc$ is $\trace_\Tc(\Sc)=\anc(V_\Tc(\Sc))$, the set of ancestors of nodes involved in the sequence. We define $\trace_{\Tc'}(\Sc)$ analogously. Finally, we let $\|\Sc\|_\Tc=\|\Sc\|+|\trace_\Tc(\Sc)|$. Observe that $|\trace_{\Tc'}(\Sc)|\le \|\Sc\|_\Tc$.

\subsection{Balancing subroutines}\label{sec:balancing-sbd}

In this section, we introduce balancing subroutines that allow us to maintain the invariants of the superbranch decomposition under dynamic updates.
To this end, we define the notions of $c$-semigood and $c$-good.
The property of being $c$-good captures exactly what we wish to maintain: bounded degree and logarithmic depth.
However, we sometimes revert to maintaining the weaker condition of $c$-semigood and afterwards restoring the $c$-good property by applying a dedicated subroutine.
The definitions and subroutines of this subsection are taken from~\cite{Korhonen_2025} with slight adaptations to our setting.

\paragraph{$c$-good and $c$-semigood.}
Let $\Tc=(T,\Lc)$ be a rooted superbranch decomposition of a hypergraph $G$. For a node $x\in V(T)$, we say that $\Tc$ is \emph{$c$-semigood at $x$}, if
\begin{itemize}
    \item for every $v\in V(T_x)$, the set $\Lc[v]$ is well-linked in $G$ (i.e., the subtree is downwards well-linked),
    \item $\Delta(T_x)\leq 2^{2c}+1$,
    \item $\wl(\Lc[x])\leq c$.
\end{itemize}
For a node $x\in V(T)$, we say that $x$ is \emph{$d$-unbalanced} for an integer $d\geq 1$ if there exists a $s\in \desc(t)$, so that $\depth(s) \geq \depth(t)+d$ and $|\Lc[s]|\geq (2/3) \cdot |\Lc[x]|$.
A node $x\in V(T)$ is \emph{$d$-balanced} if it is not $d$-unbalanced, i.e., no such descendant exists.
For a node $x\in V(T)$, we say that $\Tc$ is \emph{$c$-good at $x$}, if 
\begin{itemize}
    \item  every $v\in V(T_x)$ is $2^{2c+1}$-balanced,
    \item $\Tc$ is $c$-semigood at $x$.
\end{itemize}
We say that $\Tc$ is \emph{$c$-good} (respectively, \emph{$c$-semigood}) if, for every root child $t$, $\Tc$ is $c$-good (respectively, $c$-semigood) at $t$. 

Being $c$-good at a node implies that the corresponding part of the decomposition has logarithmic height, by the following lemma.

\begin{lemma}[{\cite[Lemma 6.2]{Korhonen_2025}}]\label{lem:c-good-balanced}
    Let $G$ be a hypergraph and $\Tc = (T, \Lc)$ a rooted superbranch decomposition of $G$. For a node $x\in V(T)$, if $\Tc$ is $c$-good at $x$ then $\depth(T_x)\leq 2^{\OO(c)}\log |T_x|$ 
\end{lemma}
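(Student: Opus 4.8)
The plan is essentially to reprove \cite[Lemma 6.2]{Korhonen_2025}, noting that only the \emph{balance} half of $c$-goodness is needed here; the degree bound $\Delta(T_x)\le 2^{2c}+1$, downwards well-linkedness, and the bound $\wl(\Lc[x])\le c$ play no role in this particular estimate. Fix the node $x$, set $d\coloneqq 2^{2c+1}$, and recall that $c$-goodness at $x$ guarantees that every node of $T_x$ is $d$-balanced. Write $L\coloneqq\depth(T_x)$ for the maximum distance from $x$ to a node of $T_x$, and fix a path $x=v_0,v_1,\dots,v_L$ realizing this distance, so that $v_i$ is a descendant of $x$ at distance exactly $i$ from $x$.

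First I would record two elementary monotonicity/positivity facts. For every node $v\in V(T_x)$ we have $|\Lc[v]|\ge 1$, since $v$ is an ancestor of at least one leaf of $T$; and since $\Lc[\,\cdot\,]$ of an ancestor contains $\Lc[\,\cdot\,]$ of a descendant, $1\le |\Lc[v]|\le |\Lc[x]|$, where $|\Lc[x]|$ is just the number of leaves of $T_x$, hence $|\Lc[x]|\le |V(T_x)|=|T_x|$. Next I would iterate the balance condition along the chosen path: whenever $jd\le L$, the node $v_{jd}$ is $d$-balanced and $v_{(j+1)d}$ (if it exists) is a descendant of it whose distance from $x$ is $d$ larger, so by the definition of $d$-balanced, $|\Lc[v_{(j+1)d}]|<(2/3)\cdot|\Lc[v_{jd}]|$. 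A straightforward induction then gives $|\Lc[v_{jd}]|<(2/3)^{j}\cdot|\Lc[x]|$ for every $j$ with $jd\le L$.

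Combining $|\Lc[v_{jd}]|\ge 1$ with the displayed bound, the largest index $j$ with $jd\le L$ satisfies $(2/3)^{j}\cdot|\Lc[x]|>1$, i.e.\ $j<\log_{3/2}|\Lc[x]|$; hence $\lfloor L/d\rfloor<\log_{3/2}|\Lc[x]|$ and therefore $L<d\cdot(\log_{3/2}|\Lc[x]|+1)$. Substituting $d=2^{2c+1}$ and $|\Lc[x]|\le |T_x|$, and treating the degenerate case $|T_x|=1$ (where $\depth(T_x)=0$) separately, yields $\depth(T_x)=L\le 2^{\OO(c)}\log|T_x|$, as claimed. I do not expect a real obstacle: the only points needing care are fixing the convention that $\depth(T_x)$ is measured from $x$ rather than from the global root (so that the statement is even plausible), and making sure the balance inequality is only invoked at pairs of nodes that are genuinely $d$ apart in depth; everything else is a one-line geometric-decay calculation.
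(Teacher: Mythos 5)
The paper does not reprove this lemma; it cites it directly from \cite{Korhonen_2025} as a black box, so there is no in-paper argument to compare against. Your reconstruction is correct: it is the standard geometric-decay argument, walking down a longest root-to-leaf path of $T_x$ in blocks of $d=2^{2c+1}$, applying the $d$-balance condition at each block boundary (where the depth difference in $T$ is exactly $d$, matching the definition of $d$-unbalanced via $\depth(s)\ge\depth(x)+d$), deducing $|\Lc[v_{jd}]|\le(2/3)^j|\Lc[x]|$, and combining with $|\Lc[v]|\ge 1$ and $|\Lc[x]|\le|T_x|$. Your observation that only the balance clause of $c$-goodness is used is also accurate; the degree bound, well-linkedness, and $\wl(\Lc[x])\le c$ do not enter. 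The edge cases and the convention that $\depth(T_x)$ is measured within $T_x$ rooted at $x$ are handled correctly.
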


\paragraph{Potential functions.}
The analysis of the balancing subroutines is based on potential functions.
As in \cite{Korhonen_2025}, we define the potential of a single internal node $t \in V_{\inter}(T)$ as \[\Phi_\Tc(t) = (\Delta(t)-1)\log_2|\Lc[t]|.\]
Note that all internal nodes then have potential at least $1$. We then define the potential of $\Tc$ as
\[\Phi(\Tc) = \sum_{t \in V_\inter(T)}\Phi_\Tc(t).\]

\paragraph{Balancing subroutines.}
We now give the balancing subroutines, described in the two lemmas below.
The first shows how to restore the $c$-good property at a node.

\begin{lemma}[Generalization of {\cite[Lemma 7.1]{Korhonen_2025}}]\label{lem:c-semigood-to-good}
    Let $G$ be a hypergraph, $\Tc=(T, \Lc)$ a superbranch decomposition of $G$ rooted at $r$, and $c\geq 1$ an integer.
    Suppose also that $t$ is a root child for which $\Tc$ is $c$-semigood at $t$.
    There is an algorithm that, given $c$, $t$, and a prefix $R\subseteq V(T_t)$ of $T_t$ containing all $2^{2c+1}$-unbalanced nodes of $T_t$, transforms $\Tc$ into a superbranch decomposition $\Tc' = (T', \Lc')$ of $G$ rooted at $r'$ via a sequence $\Sc$ of basic rotations, which is returned, such that,
    \begin{itemize}
        \item $\Phi(\Tc')\leq \Phi(\Tc)$,
        \item $V_\Tc(\Sc)\subseteq \desc(t)$,
        \item $\Tc'$ is $c$-good at $t'$,
    \end{itemize}
    where $t'$ is the child of $r'$ with $V_{\Tc'}(\Sc)\subseteq \desc(t')$ if $|\Sc| > 0$, and $t' = t$ otherwise.
    The running time of the algorithm is $2^{\OO(c)} (|R| + \Phi(\Tc ) - \Phi(\Tc'))$, which is also an upper bound for $\|S\|_{\Tc}$.
\end{lemma}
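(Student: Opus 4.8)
The plan is to follow the strategy of~\cite[Lemma 7.1]{Korhonen_2025}, adapting it to superbranch decompositions with unbounded-arity torsos. The core idea is a recursive rebalancing procedure operating on the subtree $T_t$: working top-down along the prefix $R$, whenever a node $x$ is $2^{2c+1}$-unbalanced we locally restructure the tree around $x$ so that its ``heavy'' descendant subtree (the one containing at least $(2/3)|\Lc[x]|$ leaves) is pulled closer to $x$, and we recurse into the pieces. First I would set up the recursion so that it is invoked on the root child $t$ with the given prefix $R$; at each recursive call on a node $x$ whose ancestors have already been fixed, we inspect whether $x$ is $2^{2c+1}$-balanced. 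If it is, we recurse on the children of $x$ that lie in $R$. If not, we perform a constant-depth (in terms of $c$) sequence of basic rotations — contractions to merge $x$ with appropriate descendants, and a split to separate out the heavy part — to reduce the ``imbalance'' of $x$, then continue. The well-linkedness structure ensures that all the sets $\Lc[v]$ along the way stay well-linked: splitting off a set $C\subseteq E(\torso(x))$ is only performed for $C$ that is well-linked in $\torso(x)$ (guaranteed because $\Tc$ is $c$-semigood, so $\wl(\Lc[x])\le c$ and any set we split has boundary $\le c$, and well-linked sets in the torso correspond to well-linked sets in $\Hc(G)$ by \Cref{lem:well_linked_torso}), and contracting an edge of $T$ never changes any $\Lc[v]$.

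The second ingredient is the potential argument. Using $\Phi_\Tc(t)=(\Delta(t)-1)\log_2|\Lc[t]|$, each local rebalancing step around an unbalanced node $x$ must be shown to pay for itself: the rotations decrease $\Phi$ by an amount proportional (up to a $2^{\OO(c)}$ factor) to their total size $\|\Sc\|_\Tc$, except for an additive $2^{\OO(c)}|R|$ term accounting for the nodes of $R$ that we must visit even when balanced. The degree bound $\Delta(T_x)\le 2^{2c}+1$ from $c$-semigoodness is what keeps the number of children we ever manipulate bounded by $2^{\OO(c)}$, so each individual rotation has size $2^{\OO(c)}$ before the $|\anc|$ contributions. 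The ancestor-cost terms $|\anc(t)|$ appearing in \Cref{lem:insert-leaf,lem:delete-leaf} are charged into $|\trace_\Tc(\Sc)|$, which is why the running time bound and the bound on $\|S\|_\Tc$ coincide. One must verify the three output guarantees: (i) $\Phi(\Tc')\le\Phi(\Tc)$ follows because the potential is monotonically non-increasing through the whole procedure (each step either leaves $\Phi$ unchanged or strictly decreases it); (ii) $V_\Tc(\Sc)\subseteq\desc(t)$ because every rotation we perform is localized inside $T_t$; (iii) $\Tc'$ is $c$-good at $t'$, i.e.\ every node of $T'_{t'}$ is $2^{2c+1}$-balanced, which is the termination condition of the recursion, combined with the observation that $c$-semigoodness is preserved (downwards well-linkedness by the argument above, the degree bound because splits only decrease degrees and contractions merge at most $2^{2c}+1$ children into a node that then gets immediately re-split, and $\wl(\Lc[t'])\le c$ because $\Lc[t']=\Lc[t]$ up to the reshuffling, which does not change the edge set below $t$).

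The main obstacle, I expect, is making the generalization from bounded-degree branch decompositions to the superbranch setting rigorous in two places. First, in~\cite{Korhonen_2025} the tree has bounded degree everywhere, so ``local'' rotations touch $\OO(1)$ nodes; here the torsos can be large, and we must be careful that the rebalancing inside $T_t$ never touches the root $r$ (whose degree is large) and never performs a rotation whose size depends on $\Delta(r)$ — this is exactly why the statement restricts to $t$ being a \emph{root child} and why the basic rotations were re-proven in \Cref{sec:basic-rotations} with running times depending on $|\torso(t)|$ for the manipulated node rather than on the whole tree. Second, one must check that the ``split off the heavy part'' operation is always applicable: the heavy descendant subtree's root $s$ corresponds, after suitable contractions bringing $s$ to be a child of $x$, to a single hyperedge $e_s\in E(\torso(x))$, and the singleton $\{e_s\}$ is trivially well-linked, so \Cref{lem:split} applies with $C=\{e_s\}$ provided $|\overline C|\ge 2$; handling the corner case $|\overline C|<2$ (i.e.\ $x$ has too few children) requires a small separate argument, typically by first contracting $x$ into its parent. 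I would handle the detailed potential bookkeeping by mirroring the case analysis of~\cite[Lemma 7.1]{Korhonen_2025} verbatim wherever the degree is already bounded, and only write out in full the cases where the unbounded torso at the \emph{parent} of a manipulated node could in principle intervene — there it matters that we never manipulate $r$ itself.
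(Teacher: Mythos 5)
Your plan is correct and follows the same route as the paper: the paper's own proof of this lemma is in fact much terser than your sketch, consisting essentially of a direct citation of~\cite[Lemma 7.1]{Korhonen_2025} together with a list of the superficial adaptations needed (treat the root child $t$ as the ``root'', replace $\wl(G)$ by $\wl(\Lc[t])$ in the cited Lemmas 5.4 and 5.6, and take $R$ to be a prefix of $T_t$ rather than of $T$). Your identification of the two places where care is needed — never touching $r$ so that rotation sizes depend only on local torso sizes, and ensuring all splits preserve downwards well-linkedness — matches exactly the paper's stated reasons for why the generalization works, though your parenthetical justification for well-linkedness of the split set $C$ (small boundary plus $\wl(\Lc[x]) \le c$) is not by itself a valid implication and would need to be replaced by the explicit well-linked-partition mechanism from the cited proof.
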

\begin{proof}
    The lemma from~\cite{Korhonen_2025} only works on the root of tree decompositions, not on subtrees of tree decompositions.
    In~\cite{Korhonen_2025} tree decompositions have a degree-one ``dummy root'', making the unique node at depth 1 (what we think of as the root) an internal node.
    This means that their situation actually is very similar to ours, where our root child $t$ acts as the root a tree decomposition of the protrusion $\Lc[t]$.
    The main difference is that our ``root'' $t$ can have a boundary/adhesion to nodes outside $\Lc[t]$.
    The proof carries through, essentially without modification, so we will not repeat it here.
    Instead, we mention the following changes to be made to the proof of~\cite{Korhonen_2025} (which will not make sense without reading~\cite{Korhonen_2025}).
    Instead of assuming that $\Tc$ is $c$-good and that $\wl(G) \leq c$, we should assume this for the subtree we wish to balance, and $R$ should be a prefix of this subtree, containing all the unbalanced nodes of the subtree.
    (This is what we do in our statement.)
    Their proof uses their Lemma 5.6 and 5.4, where $\wl(G)$ should be replaced by $\wl(\Lc[t])$.
    Finally, note that our definition of semigood includes bounded well-linked number.
\end{proof}

The next balancing subroutine allows us to move a hyperedge close to the root.
This will be useful when inserting and deleting hyperedges in the data structure.

\begin{lemma}[Generalization of {\cite[Lemma 8.1]{Korhonen_2025}}]\label{lem:rotate-to-root}
    Let $G$ be a hypergraph, and $\Tc=(T, \Lc)$ a superbranch decomposition of $G$ rooted at $r$ and $c\geq 3$ an integer.
    Suppose also that $t$ is a root child such that $\Tc$ is $c$-good at $t$.
    There is an algorithm that, given $c$, $t$, and a hyperedge $e \in \Lc[t]$ with $|V(e)|\leq 2$, transforms $\Tc$ into a superbranch decomposition $\Tc' = (T', \Lc')$ of $G$ rooted at $r'$ via a sequence $\Sc$ of basic rotations, which is returned, such that,
    \begin{itemize}
        \item $\Phi(\Tc') \leq \Phi(\Tc) + 2^{\OO(c)}\log |G|$,
        \item $V_{\Tc}(\mc S)\subseteq \desc(t)$,
        \item $\Lc^{\prime-1}(e) \in \chd(t')$,
        \item $\Tc'$ is $c$-semigood at $t'$, 
    \end{itemize}
    where $t'$ is the child of $r'$ with $\Lc^{\prime-1}(e) \in \desc(t')$. 
    The running time of the algorithm is $2^{\OO(c)}\log |G|$, which is also an upper bound for $\| S\|_{\Tc}$.
\end{lemma}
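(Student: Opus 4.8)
The plan is to reduce \Cref{lem:rotate-to-root} to the analogous statement in~\cite[Lemma 8.1]{Korhonen_2025} by the same translation used in the proof of \Cref{lem:c-semigood-to-good}: our root child $t$ plays the role of the ``root'' of a tree decomposition of the protrusion $\Lc[t]$, except that $t$ additionally has a nonempty adhesion $\adh(tp)$ to the rest of the decomposition. First I would recall the structure of the argument from~\cite{Korhonen_2025}: to move a target hyperedge $e$ up to be a child of the (local) root, one walks up the root path of $\Lc^{-1}(e)$ inside $T_t$, and at each step performs a basic rotation (a contraction followed by a split, in the language of \Cref{sec:basic-rotations}) that pulls $e$ one level closer to $t$; because $\Tc$ is $c$-good at $t$, the root path of $\Lc^{-1}(e)$ inside $T_t$ has length $2^{\OO(c)}\log|T_t| \le 2^{\OO(c)}\log|G|$ by \Cref{lem:c-good-balanced}, so only $2^{\OO(c)}\log|G|$ basic rotations are performed. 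Each such rotation is applied to a node of $T_t$ whose torso has bounded size (since $c$-good implies $\Delta(T_t)\le 2^{2c}+1$ and $\adhsize \le \wl(\Lc[t]) \le c$), hence each has size $2^{\OO(c)}$, and the $|\anc(t)|$-type terms are absorbed into the $\|\Sc\|_{\Tc}$ bound via the trace. The restriction $|V(e)|\le 2$ is exactly what~\cite{Korhonen_2025} needs for its leaf/rotation bookkeeping, so it carries over unchanged.

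Next I would track the potential. A single rotation step that pulls $e$ up can create a new internal node and restructure a constant number of others, but each affected node has $|\Lc[\cdot]| \le |G|$, so its potential is at most $(2^{2c}) \log_2|G| = 2^{\OO(c)}\log|G|$; however one cannot afford that per step over $2^{\OO(c)}\log|G|$ steps. The point of~\cite[Lemma 8.1]{Korhonen_2025} is a telescoping argument: the rotations along the root path are designed so that the potential changes in each step nearly cancel, leaving a total increase of only $2^{\OO(c)}\log|G|$. I would state that this telescoping is carried out exactly as in~\cite{Korhonen_2025}, with $\log|G|$ everywhere replaced by $\log|T_t| \le \log|G|$, and with $\wl(G)$ in their bounds replaced by $\wl(\Lc[t]) \le c$ in the invocations of their Lemmas 5.4 and 5.6 (the same substitution already used in the proof of \Cref{lem:c-semigood-to-good}). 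The output $\Tc'$ is only claimed to be $c$-\emph{semi}good at $t'$ (not $c$-good), because the up-rotations can create a $2^{2c+1}$-unbalanced node; restoring $c$-goodness is deferred, and is precisely the job of \Cref{lem:c-semigood-to-good}, which is why that lemma was proved first.

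The final bookkeeping steps are: (i) observe that since $V_{\Tc}(\Sc)\subseteq\desc(t)$, the whole transformation is confined to $T_t$, so $r'$ differs from $r$ only in that $t$ has possibly been renamed/merged and the unique root child $t'$ containing $\Lc'^{-1}(e)$ is well-defined; (ii) note that the adhesion $\adh(t'p)$ to the parent never changes, because every rotation is internal to $T_t$, so downwards well-linkedness of the rest of the decomposition and of the subtrees below $t'$ is preserved (this is where $|V(e)|\le 2$ matters less than the locality of the rotations); and (iii) check the running time: each of the $2^{\OO(c)}\log|G|$ basic rotations costs $2^{\OO(c)}$ by \Cref{lem:contract,lem:split}, and together with the $|\trace_{\Tc}(\Sc)|$ term (also $2^{\OO(c)}\log|G|$, since traces sit above nodes on a path of that length) we get total running time and $\|\Sc\|_{\Tc}$ bounded by $2^{\OO(c)}\log|G|$, as claimed. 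I expect the main obstacle to be the telescoping potential bound: verifying that the per-step potential increments genuinely cancel in our setting requires re-checking the case analysis of~\cite[Lemma 8.1]{Korhonen_2025} against the fact that our ``root'' $t$ carries a boundary, which changes some of the $\lambda$-bookkeeping in their uncrossing arguments; but as in \Cref{lem:c-semigood-to-good}, the boundary only ever appears additively as $\le c$ and the argument goes through, so the proof can legitimately be presented as ``the proof of~\cite[Lemma 8.1]{Korhonen_2025} carries through with the indicated substitutions.''
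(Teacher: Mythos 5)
Your proposal is correct and takes essentially the same approach as the paper: both reduce to~\cite[Lemma~8.1]{Korhonen_2025} by treating the root child $t$ as the local root, measuring depth relative to $t$, and replacing $\wl(G)$ with $\wl(\Lc[t])\le c$ in the invocations of the supporting lemmas (the paper additionally names their Lemma 8.2 alongside 5.4 and 5.6, but this is a cosmetic omission on your part). Your write-up actually supplies more of the underlying mechanism (the walk up the root path, the telescoping potential argument, the bound via \Cref{lem:c-good-balanced}) than the paper, which states only the substitutions to be made.
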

\begin{proof}
    This lemma follows from~\cite{Korhonen_2025} in a simliarly straightforward manner as~\Cref{lem:c-semigood-to-good}.
    We again mention the trivial changes to be made to their proof.
    Instead of assuming that $\Tc$ is $c$-good and that $\wl(G) \leq c$, we should assume this for the subtree rooted at the node $t$ we wish to rotate to, and depth should be measured against this node instead of the unique node at depth 1.
    (This is what we do in our statement.)
    Their Lemma 8.1 uses Lemma 8.2, 5.6, and 5.4, where $\wl(G)$ should be replaced by $\wl(\Lc[t])$.
\end{proof}

\subsection{Proofs for the protrusion-balancing data structure}\label{sec:blancing-proof}
In this subsection we prove the correctness and time bounds of the balancing data structure of~\Cref{lem:dynamic_prot}.
First, we give a subroutine for rotating leaves upwards until they become root children.

\begin{lemma}\label{claim:isolate}
    Let $\Tc=(T, \Lc)$ be $c$-good superbranch decomposition rooted at $r$ of $\Hc(G)$, for an integer $c \geq 3$.
    There is an algorithm that, given $c$ and a set of hyperedges $X \subseteq E(\Hc(G))$ with $|X|\leq 3$, transforms $\Tc$ into a $c$-good superbranch decomposition $\Tc' = (T', \Lc')$ rooted at $r'$ via a sequence $\Sc$ of basic rotations, so that 
    \begin{itemize}
        \item $\Lc^{\prime-1}(X)\subseteq\chd(r')$ and
        \item $\Phi(\Tc')\leq \Phi(\Tc)+ \OO_c(\log |G|)$.
    \end{itemize}    
    The running time is $\OO_c(\log | G| + \max\{\Phi(\Tc)-\Phi(\Tc_3),0\})$, which is also an upper bound for $\|\Sc\|_\Tc$.
    Furthermore, $\torso_{\Tc'}(r')$ can be obtained from $\torso_{\Tc}(r)$ via a sequence $\mc C$ of basic hypergraph operations of size $\OO_c(1)$. 
    If the subtree of $\Tc$ rooted at the node corresponding to an edge $e \in E(\torso_\Tc(r))$ is modified by the algorithm, then $\mc C$ is guaranteed to contain the operation $\deleteHyperedge(e)$.
    The algorithm returns $\Sc$ and $\mc C$.
\end{lemma}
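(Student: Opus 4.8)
The plan is to process the hyperedges of $X$ one at a time, using the rotate-to-root subroutine (\Cref{lem:rotate-to-root}) to bring each one up to become a child of the root, while interleaving calls to the $c$-semigood-to-$c$-good restoration subroutine (\Cref{lem:c-semigood-to-good}) to repair the balance condition that gets broken along the way. Concretely, I would iterate: for each $e \in X$, let $t$ be the root child whose subtree contains the leaf $\Lc^{-1}(e)$ (if $e$ is already a root child, there is nothing to do). First invoke \Cref{lem:rotate-to-root} with parameters $c$, $t$, and $e$; this produces a sequence of basic rotations confined to $\desc(t)$, makes $\Lc^{-1}(e)$ a child of the new root child $t'$, increases the potential by at most $2^{\OO(c)}\log|G|$, and leaves $\Tc$ only $c$-semigood at $t'$. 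Then invoke \Cref{lem:c-semigood-to-good} with $c$, $t'$, and a prefix $R$ of $T_{t'}$ containing all $2^{2c+1}$-unbalanced nodes; this restores $c$-goodness at the resulting root child, does not increase the potential, and its running time is $2^{\OO(c)}(|R| + \Phi(\text{before}) - \Phi(\text{after}))$. The subtle point is that after \Cref{lem:c-semigood-to-good} the leaf $\Lc^{-1}(e)$ may no longer be a root child — but since \Cref{lem:rotate-to-root} places it as a child of $t'$ and the restoration only rearranges within $\desc(t')$, I would afterwards perform one more cheap rotation (a split of $t'$, or equivalently: observe that $\Lc^{-1}(e)$ sits directly under $t'$ and we can simply contract $t'$ into the root, or split $\{e\}$ off) to lift $\Lc^{-1}(e)$ to be a child of the root $r'$ itself, increasing the root degree by $\OO(1)$ and the potential by $\OO_c(\log|G|)$. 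Since $|X|\le 3$, we do this at most three times, so the total potential increase is $\OO_c(\log|G|)$ and all the error terms remain within the claimed bounds.

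For the running time, I would use the telescoping of the potential across the $\le 3$ phases: each \Cref{lem:rotate-to-root} call costs $2^{\OO(c)}\log|G|$ and raises $\Phi$ by at most the same amount; each \Cref{lem:c-semigood-to-good} call costs $2^{\OO(c)}(|R| + \Phi_{\text{in}} - \Phi_{\text{out}})$. We need a bound on $|R|$: the set of $2^{2c+1}$-unbalanced nodes is contained in the ancestors of the nodes modified by the rotate-to-root step, so $|R| \le |\trace_\Tc(\Sc_{\text{rot}})| \le \|\Sc_{\text{rot}}\|_\Tc = 2^{\OO(c)}\log|G|$ by the running-time guarantee of \Cref{lem:rotate-to-root}. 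Summing over the at most three phases and telescoping the potential differences, the total cost is $\OO_c(\log|G| + \max\{\Phi(\Tc) - \Phi(\Tc_3), 0\})$, where $\Tc_3$ is the final decomposition; this same quantity bounds $\|\Sc\|_\Tc$ for the concatenated rotation sequence, since $\|\cdot\|_\Tc$ is subadditive along sequences (as $|\trace_{\Tc'}(\Sc)| \le \|\Sc\|_\Tc$ lets us charge traces of later subsequences against earlier costs).

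The torso-change sequence $\mc C$ is obtained by concatenating the torso-change sequences returned by each basic rotation that touches the root, and discarding those confined strictly below root children. Each individual basic rotation near the root contributes $\OO(\adhsize(\Tc) \cdot (\text{local size})) = \OO_c(1)$ to $\|\mc C\|$ because adhesion size is $\le c$ and the local objects (a single $C = \{e\}$, a single torso of a bounded-degree node, a leaf) have bounded size; since only $\OO(1)$ root-touching rotations occur, $\|\mc C\| = \OO_c(1)$. For the guarantee that $\mc C$ contains $\deleteHyperedge(e)$ whenever the subtree under the root child corresponding to $e$ is modified: the rotate-to-root and restoration subroutines, when they alter a subtree rooted at a root child $t'$, necessarily change the identity of that root child (it becomes some $t''$), and in the superbranch-decomposition bookkeeping this is realized as deleting the torso hyperedge $e_{t'}$ and inserting $e_{t''}$ — I would verify this is exactly what the basic rotations of \Cref{sec:basic-rotations} do (a contraction of the edge $r t'$ or a split at the root emits a $\deleteEdge$ on the hyperedge corresponding to the affected child), so the claim follows by inspection of which basic rotations get applied.

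The main obstacle I expect is the bookkeeping around the interaction between \Cref{lem:rotate-to-root} (which only achieves $c$-semigood and places $e$ under a root child, not at the root) and \Cref{lem:c-semigood-to-good} (which may move $e$ around inside that subtree): one has to argue carefully that after the restoration, $e$ can still be lifted to the root in $\OO_c(1)$ additional work, and that doing so for all three hyperedges of $X$ does not cause the restoration at a later phase to undo the root-child status achieved in an earlier phase. This is handled by noting that all the surgery in phase $i+1$ is confined to $\desc(t_{i+1})$ for a root child $t_{i+1}$ distinct from the ones already isolated, so earlier root children (and their isolated leaf-children) are untouched — formally, because the hyperedges of $X$ lie under distinct root children after the first isolation, or, if two of them originally lay under the same root child, because isolating the first one splits that subtree so the second lies under a fresh root child.
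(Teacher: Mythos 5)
There is a genuine gap in the ordering of your three subroutine calls per hyperedge, and it is exactly at the spot you flagged as the ``subtle point'' but then hand-waved past.

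You propose: rotate-to-root, then $c$-semigood-to-good, then one more cheap rotation to lift $\Lc^{-1}(e)$ up to the root. The problem is that \Cref{lem:c-semigood-to-good}, applied to the root child $t'$, gives no control whatsoever over where the leaf $\Lc^{-1}(e)$ ends up inside $\desc(t')$ after balancing. The lemma only promises that the resulting decomposition is $c$-good at some new root child $t''$ with $V_{\Tc'}(\Sc) \subseteq \desc(t'')$; the balancing routine is free to split and contract internal nodes in ways that push $\Lc^{-1}(e)$ arbitrarily deep. So the parenthetical ``observe that $\Lc^{-1}(e)$ sits directly under $t'$ and we can simply contract'' is precisely what you no longer know after the balancing step, and the ``one more cheap rotation'' you invoke is therefore not cheap — lifting $e$ from an uncontrolled depth requires another full rotate-to-root, which again leaves you only $c$-semigood, and the argument does not terminate.

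The paper's proof avoids this by swapping the last two steps: after \Cref{lem:rotate-to-root} places $\Lc^{-1}(e) \in \chd(t_1)$ for the new root child $t_1$, it \emph{first} contracts the edge $r_1 t_1$ (an $\OO_c(1)$-cost basic rotation), which makes $\Lc^{-1}(e)$ a direct root child and simultaneously promotes all former children of $t_1$ to root children. \emph{Then} it applies \Cref{lem:c-semigood-to-good} separately to each of those other newly-promoted root children (the set $R$ in Step 3 of the paper), each with a prefix covering its own unbalanced nodes. Since \Cref{lem:c-semigood-to-good} confines all its rotations to $\desc(s)$ for the root child $s$ it is invoked on, and $\Lc^{-1}(e)$ is a distinct root child (in fact a leaf), the balancing never touches $\Lc^{-1}(e)$. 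This is what lets the paper conclude both that $\Lc^{-1}(e)$ remains a root child and that the torso change $\mc C$ has size $\OO_c(1)$.

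The rest of your proposal — the bound $|R| \le \|\Sc_{\mathrm{rot}}\|_\Tc$ via the trace, the telescoping of potential across the three phases, the $\deleteHyperedge(e)$ bookkeeping, and the observation that hyperedges of $X$ sit under distinct root children after the first isolation so later phases do not undo earlier ones — matches the paper's argument and is sound. It is only the rotate/balance/contract ordering that must become rotate/contract/balance.
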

\begin{proof}
    We first describe an algorithm that satisfies all the claimed properties but takes a single hyperedge $e$ as input instead of a set $X$.
    The full algorithm for a set $X$ is then obtained by applying it once for each $e\in X$.
    To ensure that all hyperedges in $X$ end up as children of the root, the algorithm will also guarantee that for every $x \in \chd(r) \setminus \{t\}$, where $t$ is the root child with $e \in \Lc[t]$, the node $x$ remains a child of the new root $r'$ and the subtree rooted at $x$ is left unchanged.
    This way, once a hyperedge is moved to the root, it is not moved away again in later calls.
    
    The algorithm is divided into three steps, each producing a new superbranch decomposition $\Tc_i$, with $\Tc_3=(T_3,\Lc_3)$ being the final superbranch decomposition that is $c$-good and satisfies $\Lc^{-1}_3(e)\in \chd(r_3)$ where $r_3$ is the root.

    \paragraph{Step 1.}  
    In this step, we will rotate $e$ close to the root. 
    First identify the root child $t\in \chd(r)$ with $e\in \Lc[t]$ in time $\OO(\depth(\Tc))=\OO_c(\log |G|)$.
    By applying \Cref{lem:rotate-to-root} to $\Tc$ with $c$, $t$ and $e$, we obtain in time $\OO_c(\log |G|)$ a superbranch decomposition $\Tc_1=(T_1,\Lc_1)$ rooted at $r_1$ via a sequence $\Sc_1$ of basic rotations, so that, 
    \begin{itemize}
        \item $\Phi(\Tc_1) \leq \Phi(\Tc) + \OO_c(\log |G|)$, 
        \item $V_{\Tc}(\mc S_1)\subseteq \desc(t)$,
        \item $\|\Sc_1\|_\Tc \leq \OO_c(\log | G|)$,
        \item $\Tc_1$ is $c$-semigood at $t_1$, 
        \item $\Lc_1^{-1}(e) \in \chd(t_1)$,
    \end{itemize}
    where $t_1$ is the child of $r_1$ with $V_{\Tc'}(\Sc)\subseteq \desc(t_1)$. 
    Because $V_{\Tc}(\mc S_1)\subseteq \desc(t)$, the only difference between $\Tc$ and $\Tc_1$ is the subtree at $t_1$, in the sense that replacing the subtree $(T_1)_{t_1}$ with $T_{t}$ gives a superbranch decomposition identical to $\Tc$.
    This means that for each $s \in \chd(r_1) \setminus \{t_1\}$, the subtree rooted at $s$ is not touched by the algorithm, so $\Tc_1$ is still $c$-good at $s$. 
    This also means that $\torso_{\Tc}(r)=\torso_{\Tc_1}(r_1)$. 

    \paragraph{Step 2.}
    Since $\Lc_1^{-1}(e)\in \chd(t_1)$, we can make $\Lc_1^{-1}(e)$ a root child by contracting the edge $rt_1$ using the basic rotation of \Cref{lem:contract} in $\OO(\adhsize(\Tc)\cdot  |\torso(t_1)|) = \OO_c(1)$ time. 
    Let $\Tc_2=(T_2,\Lc_2)$ be the resulting superbranch decomposition rooted at $r_2$ (the contracted node). 
    By the lemma, we further have that $\torso_{\Tc_2}(r_2)$ can be obtained from $\torso_{\Tc_1}(r_1)$ by a sequence $\mc C$ of size $\OO_c(1)$.
    As $\Delta(r_2)$ increases by at most $\OO_c(1)$, the potential increases by at most $\OO_c(\log |G|)$, i.e. $\Phi(\Tc_2)\leq \Phi(\Tc_1) + \OO_c(\log |G|)$.
    Let $\Sc_{1,2}$ be the concatenation of $\Sc$ and this contraction basic rotation, satisfying $\|\Sc_{1,2}\|_{\Tc} \leq \|S_1\|_{\Tc} + \OO_c(1)\leq \OO_c(\log |G|)$.

    \paragraph{Step 3.}
    Finally we will transform $\Tc_2$ (which is $c$-semigood) into a $c$-good superbranch decomposition $\Tc_3$.
    Let $R\subseteq \chd(r_2) \setminus\{\Lc_2^{-1}(e)\}$ be the set of children of $r_2$ corresponding to the children of $t_1$ in $\Tc_1$, excluding the child associated with $e$. 
    Since $\Tc_1$ is $c$-semigood at $t_1$, it follows that for each $s\in R$, the superbranch decomposition $\Tc_2$ is $c$-semigood at $s$.   
    Note that for every other child of $r_2$, the subtree is still $c$-good at this node.
    Thus, to obtain a $c$-good superbranch decomposition, it suffices to consider the children in $R$. 
    For a node $s\in R$, define $P_s=\trace_{\Tc_2}(\Sc_{1,2})\cap V((T_2)_s)$.
    This is a prefix of $(T_2)_s$ that contains all of its $2^{2c+1}$-unbalanced nodes, and $|P_s| \leq \|\Sc_{1,2}\|_{\Tc} \leq \OO_c(\log | G|)$. 
    Applying \Cref{lem:c-semigood-to-good} with $s$ and $P_s$ for each $s\in R$, transforms $\Tc_2$ into a $c$-good superbranch decomposition $\Tc_3$ rooted at $r_3$ via a sequence $\Sc_3$ of basic rotations, so that,
    \begin{itemize}
        \item $\Phi(\Tc_3)\leq \Phi(\Tc_2)$ and
        \item $V_{\Tc_2}(\Sc_3) \subseteq \bigcup_{s\in R} \desc(s)$.
    \end{itemize}
    The latter implies that $\Lc^{-1}_3(e)$ is still a child of the root. 
    The running time is $\OO_c(\log |G| + \Phi(\Tc_2) - \Phi(\Tc_3))$, which also bounds $\|\Sc_3\|_{\Tc_2}$.     
    Since $r_2\notin V_{\Tc_2}(\Sc_2)$, we have $\torso_{\Tc_2}(r_2)=\torso_{\Tc_3}(r_3)$. 
    Therefore we can obtain $\torso_{\Tc_3}(r_3)$ from $\torso_{\Tc}(r)$ via the sequence $\mc C$ of size $\OO_c(1)$. 
    Combining inequalities from the previous steps, $\Phi(\Tc_3) \leq \Phi(\Tc) + \OO_c(\log |G|)$.
    The total running time is $\OO_c(\log |G|)+(\Phi(\Tc_2) - \Phi(\Tc_3))$, and since $\Phi(\Tc_2) \leq \Phi(\Tc)+\OO_c(\log |G|)$ we get the claimed running time.
    Let $\Sc$ be the entire sequence of basic rotations, i.e., the concatenation of $\Sc_{1,2}$ and $\Sc_3$.
    Then the size $\|\mc S\|_{\Tc}\leq \|\mc S_{1,2}\|_\Tc + \|\mc S_3\|_{\Tc_2}$ is upper bounded by the running time.
    Observe that $V_\Tc(\Sc)\subseteq \desc(t) \cup \{r\}$. 
    Since the only basic rotation involving $r$ is the contraction, every $x \in \chd(r) \setminus \{t\}$ remains a child of the new root $r'$ and the subtree rooted at $x$ is left unchanged.
    This also implies that for every hyperedge $e$ in $E(\torso_\Tc(r))$, if $\mc C$ does not delete $e$, then the subtree of $\Tc$ rooted at the node corresponding to $e$ does not change during the transformation.
    Finally, we return $\mc C$ and $\mc S$.
\end{proof}

We now prove the first part of the protrusion balancing data structure, maintaining a superbranch decomposition.

\begin{proof}[Proof of \Cref{lem:dynamic_prot}]
    We first show how to maintain the superbranch decomposition under all operations, before moving to the protrusion decomposition.

\paragraph{Superbranch decomposition.}
    We will maintain a $c$-good superbranch decomposition $\Tc$ of $\Hc(G)$ and analyze the amortized running time using the potential function $\Phi$ defined in~\Cref{sec:balancing-sbd}.
    Any change $\Phi(\Tc')-\Phi(\Tc)$ is charged as $\OO_c(\Phi(\Tc')-\Phi(\Tc))$ in the amortized time.
    Note that, since $\Tc$ is downwards well-linked and every node $t\in V(T)\setminus\{r\}$ satisfies $\wl(\Lc[t])\leq c$ in $G$, it follows that $\adhsize(\Tc)\leq c$.

    In all operations other than $\init$, we will update $\Tc$ by a sequence $\mc S$ of basic rotations, which we save for use later in the proof.
    The sizes $\|\mc S\|_\Tc$ of these sequences will have the same amortized upper bound as the running times.
    
    Furthermore, each operation will guarantee that if the subtree rooted at a node $t \in \chd(r)$ corresponding to a hyperedge $e \in E(\torso(r))$ is modified, then $\mc C$ will contain the operation $\deleteHyperedge(e)$.
    This will be clear from our operation implementations, when considering the guarantee from~\Cref{claim:isolate}.
    
    We note that the algorithm of \Cref{claim:isolate} runs in $\OO_c(\log |G|)$ amortized time.
    Specifically, it transform $\Tc$ into a new superbranch decomposition $\Tc'$ such that $\Phi(\Tc')\leq \Phi(\Tc)+ \OO_c(\log |G|)$ in time $\OO_c(\log | G| + \max\{\Phi(\Tc)-\Phi(\Tc'),0\})$. By accounting for the change in potential, we get this amortized running time.

\paragraph{$\addVertex(v)$.} 
    First, we transform $G$ into a graph $G'$ by adding $v$ to the vertex set.
    We transform $\Tc$ into a superbranch decomposition $\Tc'$ of $\Hc(G')$ rooted at $r'$ by inserting $e_v$ as a child of $r$ using the rotation of \Cref{lem:insert-leaf} with $X=\emptyset$ in $\OO(|V(e)|)=\OO(1)$ time. 
    This corresponds to a sequence $\Sc$ with $\|\Sc\|_\Tc\leq O(1)$. 
    The torso of $r'$ can be obtained from $\torso(r)$ by inserting a hyperedge $e$ with $V(e)=\emptyset$ corresponding to the empty adhesion of $e_v$.
    This can be done through a sequence $\mc C$ of basic hypergraph operations of size $\OO(1)$.
    Since $v$ is an isolated vertex, the resulting decomposition $\Tc'$ remains downwards well-linked, with unchanged well-linked number. 
    Then, since every node of $T'$ other than the root has the same number of children as in $T$, it follows that $\Tc'$ is $c$-good.

\paragraph{$\deleteVertex(v)$.}  
    First, we transform $G$ into a graph $G'$ by deleting $v$ from the vertex set.
    We apply~\Cref{claim:isolate} to $e_v$ which transforms $\Tc$ into a $c$-good superbranch decomposition $\Tc_1$ of $\Hc(G)$ rooted at $r_1$ with $\Lc^{-1}(e_v) \in \chd(r_1)$ in $\OO_c(\log|G|)$ amortized time, through a sequence $\Sc_1$ of basic rotations of size bounded by $\OO_c(\log|G|)$.
    The lemma also gives a sequence of hypergraph operations $\mc C_1$ of size $\OO_c(1)$ transforming $\torso(r)$ into $\torso(r_1)$.
    Next, we transform $\Tc_1$ into a superbranch decomposition $\Tc_2$ of $\Hc(G')$ rooted at $r_2$ by deleting $e_v$ as a child of $r_1$ using the rotation of \Cref{lem:delete-leaf} in $\OO(\adhsize(\Tc_1)^2) =\OO_c(1)$ time, through a (single-rotation) sequence $\Sc_2$ of size $\|\Sc_2\|\leq \OO_c(1)$.
    Concatenating the sequences $\Sc_1$ and $\Sc_2$, we get a final sequence $\Sc$ of size $\OO_c(\log|G|)$.
    The torso of $r_2$ can be obtained from $\torso(r_1)$ by removing the hyperedge corresponding to the empty adhesion of $e_v$, through a sequence $\mc C_2$ of basic hypergraph operations of size $\OO(1)$.
    Concatenating $\mc C_1$ and $\mc C_2$ we get a final sequence of size $\OO_c(1)$ transforming $\torso(r)$ into $\torso(r_2)$.
    Since $v$ is an isolated vertex, the resulting decomposition $\Tc_2$ remains downwards well-linked, with unchanged well-linked number. 
    Then, since every node of $T_2$ other than the root has the same number of children as in $T$, it follows that $\Tc'$ is $c$-good.

\paragraph{$\addEdge(uv)$.}
    First, we transform $G$ into a graph $G'$ by adding $uv$ to the edge set.
    We apply \Cref{claim:isolate} to both hyperedges $e_u$ and $e_v$, which transforms $\Tc$ into a $c$-good superbranch decomposition $\Tc_1$ of $\Hc(G)$ rooted at $r_1$ with $\Lc^{-1}\{e_u,e_v\}\subseteq \chd(r)$ in $\OO_c(\log |G|)$ amortized time.
    Let $\Sc_1$ and $\mc C_1$ be the returned sequences of basic rotations and hypergraph operations, respectively.
    The amortized size of $\|\Sc_1\|_\Tc$ is $\OO_c(\log |G|)$ and $\|\mc C_1\|\leq \OO_c(1)$.

    Then, we transform $\Tc_1$ into a superbranch decomposition $\Tc_2$ of $\Hc(G')$ rooted at $r_2$ by inserting $e_{uv}$ as a child of $r_1$ using the rotation of \Cref{lem:insert-leaf} with $X=\Lc_1^{-1}\{e_u,e_v\}$ in $\OO(\rank(\Hc(G)) \cdot |V(e)|)=\OO_c(1)$ time. 
    This corresponds to a sequence $\Sc_2$ with $\|\Sc_2\|_{\Tc_1}\leq \OO_c(1)$. 
    By the lemma, we also have that $\torso(r_2)$ can be obtained from $\torso(r_1)$ via a sequence $\mc C_2$ of basic hypergraph operations of size $\OO_c(1)$.
    We can thus let $\Sc$ be the concatenation of $\Sc_1$ and $\Sc_2$, and $\mc C$ be the concatenation of $\mc C_1$ and $\mc C_2$.

    Note that $\Tc_2$ is trivially $c$-good at $e_{uv}$.
    Furthermore, $\Tc_2$ is $c$-good at every other $t \in \chd(r_2)\setminus\{e_{uv}\}$, as the insertion basic rotation leaves the subtree rooted at $t$, as well as the boundary of every subset of $\Lc_2[t]$ unchanged, since $e_u$ and $e_v$ were already root children.
    
\paragraph{$\deleteEdge(uv)$.}
    First, we transform $G$ into a graph $G'$ by deleting $uv$ from the edge set.
    We apply \Cref{claim:isolate} to both hyperedges $e_u, e_v, e_{uv}$, which transforms $\Tc$ into a $c$-good superbranch decomposition $\Tc_1$ rooted at $r_1$ with $\Lc^{-1}\{e_u,e_v,e_{uv}\}\subseteq \chd(r)$ in $\OO_c(\log |G|)$ amortized time.
    Let $\Sc_1$ and $\mc C_1$ be the returned sequences of basic rotations and hypergraph operations, respectively.
    The amortized size of $\|\Sc_1\|_\Tc$ is $\OO_c(\log |G|)$ and $\|\mc C_1\|\leq \OO_c(1)$.

    Then, we transform $\Tc_1$ into a superbranch decomposition of $\Tc_2$ of $\Hc(G')$ rooted at $r_2$ by deleting $e_{uv}$ as a child of $r$ using the rotation of \Cref{lem:delete-leaf} in $\OO(\adhsize(\Tc_1)^2)=\OO_c(1)$ time. 
    This corresponds to a sequence $\Sc_2$ with $\|\Sc_2\|_{\Tc_1}\leq \OO_c(1)$. 
    By the lemma, we have that $\torso(r_2)$ can be obtained from $\torso(r_1)$ via a sequence $\mc C_2$ of basic hypergraph operations of size $\OO_c(1)$. 
    We can thus let $\Sc$ be the concatenation of $\Sc_1$ and $\Sc_2$, and $\mc C$ be the concatenation of $\mc C_1$ and $\mc C_2$.

    By the same argument as for $\addEdge(uv)$, we get that $\Tc_2$ is $c$-good.
    
\paragraph{$\merge(A)$.}
    By performing a split rotation using \Cref{lem:split} with $B_A$, we transform $\Tc$ into a superbranch decomposition $\Tc_1=(T_1,\Lc_1)$ rooted at $r_1$, where the children of $A$ become children of a new node $t$ that is a child of $r_1$.
    This corresponds to a sequence $\Sc_1$ with $\|\Sc_1\|_\Tc\leq \OO_c(1)$. By the lemma, we have that $\torso(r_1)$ can be obtained from $\torso(r)$ via a sequence $\mc C_1$ of basic hypergraph operations of size $\OO_c(1)$. Since $\Tc$ is downwards well-linked and $B_A\triangleright\Tc$ is well-linked in $G$, it follows that $\Tc_1$ is downwards well-linked. 
    Because $|A|\le 2^{2c}+1$, we have $\Delta((T_1)_t)\le 2^{2c}+1$. 
    Together with $\wl(B_A\triangleright\Tc)\leq c$, this shows that $\Tc_1$ is $c$-semigood.
    Note that a split rotation cannot increase the potential, i.e. $\Phi(\Tc_1)\leq \Phi(\Tc)$.

    Observe that the only node in $V(T_1)\setminus\{r_1\}$ that can be $2^{2c+1}$-unbalanced is $t$. By applying \Cref{lem:c-semigood-to-good} with $t$ and prefix $\{t\}$, we transform $\Tc_1$ into a $c$-good superbranch decomposition $\Tc_2$ via a sequence $\Sc_2$ such that $r_1\notin V_{\Tc_1}(\Sc_2)$ and $\Phi(\Tc_2)\leq \Phi(\Tc_1)$. 
    Thus, the root torso is unchanged. 
    Furthermore, the amortized running time of this algorithm is $\OO_c(1)$, which is also an upper bound for $\| \Sc_2\|_{\Tc_1}$. 
    We let $\Sc$ be the concatenation of $\Sc_1$ and $\Sc_2$, and $\mc C = \mc C_1$.
    By the above, $\Phi(\Tc_2)\leq \Phi(\Tc)$, and the total amortized running time is $\OO_c(1)$, which is also an upper bound on $\Sc$.

\paragraph{Protrusion decomposition.}
    Next, we show how to maintain a corresponding protrusion decomposition.
    We will construct $\tilde \Tc$ from $\Tc$ in time $\OO_c(\|\Sc\|_\Tc)$, giving the claimed amortized time bounds. 
    As a first step, we let $\tilde T = T$ and $\bag(t) = V(\torso(t))$ for all $t \in V(\tilde T)$.
    Note that this already is a tree decomposition corresponding to $\Tc$ with bag size $|\bag(t)| \leq \OO_c(1)$ for every non-root node $t \in V(\tilde T)\setminus \{r\}$, and $|\bag(r)| \leq \Delta(r) c$ for the root.

    We now make the protrusions of $\tilde \Tc$ binary.
    For each non-root node $t \in V(\tilde T) \setminus \{r\}$ with $d > 2$ children $c_1, \ldots, c_d$ and some parent $p$, we remove $t$ and its incident edges from $\tilde T$, and add new nodes $t_1,\ldots,t_{d-1}$, together with edges $t_it_{i+1}$ for $i \in \{1,\ldots,d-2\}$, edges $t_ic_i$ for $i \in \{1,\ldots,d-1\}$, and edges $pt_1$ and $t_{d-1}c_d$.
    The new nodes are given bags $\bag(t_i) = \bag(t)$.
    We say that the $t_i \in V(\tilde T)$ \emph{correspond} to $t \in V(T)$.
    We observe that $(\tilde T$, $\bag)$ can be maintained locally for all non-root nodes, in the sense that if $\Tc$ is transformed to $\Tc'$ by a sequence of basic rotations $\Sc$, then the only parts of $\tilde \Tc$ that need to be recomputed are the nodes corresponding to the nodes in $\trace_{\Tc'}(\Sc)$.
    This can be done in time $\OO_c(|\trace_{\Tc'}(\Sc)|) \leq \OO_c(\|\Sc\|_{\Tc})$.
    For the root $r$, we cannot recompute the entire $\bag(r)$ within the claimed time bounds.
    Instead, we can use the sequence $\mc C$ of basic hypergraph operations computed for each operation above.
    For every hypergraph operation $\deleteVertex(v)$ or $\addVertex(v)$ in $\mc C$, we update the balanced binary search tree representing $\bag(r)$, by deleting or inserting the corresponding vertex $v$.
    The running time of this update is $\OO_c(\log|G|)$.

    We know that $\tilde \Tc$ is normal by~\Cref{lem:corresponding-is-normal}.
    For $\tilde \Tc$ to be an annotated normal $(\Delta(r) c, \OO_c(1))$-protrusion decomposition corresponding to $\Tc$, the only thing left is to maintain the $\edges$ function.
    Let us consider the process of making $\tilde T$ binary, i.e., when we split $t$ into nodes $t_1,\dots t_{d-1}$.
    Since $\bag(t_i)=\bag(t)$ for all $i\in \{1,d-1\}$, we have that $\edges(t_i)=\emptyset$ for all $i\in \{2,d-1\}$ and $\edges_{\tilde T}(t_1)=\edges_T(t)$.
    Furthermore, note that each time we split a single high-degree node into a chain, this does not change the value of $\edges$ on the untouched nodes.
    Thus, to compute the $\edges$ function, it suffices to compute it for nodes in $T$.

    For this, we will maintain the function $\EL \colon V(T) \to 2^{E(G)}$ where, for each node $t \in V(T)$, $\EL(t)$ is the set of all edges $uv\in E(G)$ with $u, v \in V(\torso(t))$ and $e_{uv} \in \Lc[t]$.
    The function is represented as follows:
    for each non-root node $t\in V(T)\setminus \{r\}$, the set $\EL(t)$ is stored as a linked list; and
    for the root $r$, the set $\EL(r)$ is stored as a balanced binary search tree $\mc B$ containing all edges $e\in \EL(r)$.
    Observe that for any internal node $t\in \Vint(T)$ with children $c_1,\dots, c_d$, we have 
    \begin{align}
        \EL(t) = \bigcup_{i} \{e\in \EL(c_i) \mid V(e)\subseteq \adh(tc_i)\}. \label{eq:EL}
    \end{align} 
    The only parts of $\tilde \Tc$ where $\EL$ must be recomputed are the nodes in $\trace_{\Tc'}(\Sc)$, since no torso outside the trace changes during our operations.
    We start by updating $\EL(t)$ for each non-root node $t\in \trace_{\Tc'}(\Sc)\setminus\{r\}$ in a bottom-up manner. 
    Since $|\EL(t)|\leq \binom{|\torso(t)|}{2} \leq \OO_c(1)$, and given $\EL(c)$ for all children $c$ of $t$, we can recompute $\EL(t)$ in time $\OO_c(1)$ using \Cref{eq:EL}.
    This gives a total running time of $\OO_c(|\trace_{\Tc'}(\Sc)|) \leq \OO_c(\|\Sc\|_{\Tc})$.
    
    We now update $\EL(r)$ represented by the binary balanced search tree $\mc B$. 
    For any $e\in E(\torso(r))$ that also appears in the torso after applying the sequence $\Sc$ (i.e. where the sequence $\mc C$ does not contain $\deleteEdge(e)$), we have ensured above, that the corresponding subtree is unchanged. 
    Since the adhesion is the same, it contributes the same edges to $\EL_{\Tc'}(r)$ in \Cref{eq:EL}.
    Hence, only the children inserted or deleted by the sequence $\mc C$ of basic hypergraph operations must be considered.
    As $\|\mc C\|\leq \OO_c(1)$, there are at most $\OO_c(1)$ insertions or deletions, and we can maintain the balanced binary search tree $\mc B$ in time $\OO_c(\log|G|)$.

    Now, we compute $\edges(t)$ for each $t\neq r$ as the set of edges in $\EL(t)$ which are not also in $\EL(p)$, where $p$ is the parent of $t$.
    If $p$ is not the root, this takes $\OO_c(1)$ time, and otherwise, we have to query $\mc B$ resulting in a running time of $\OO_c(\log|G|)$.
    Since at most $\OO_c(1)$ nodes in $\trace_{\Tc'}(\Sc)$ have the root as a parent, the total running time for this step is $\OO_c(\log|G|+ \|\Sc\|_\Tc)$, which remains within the claimed bounds.
    Note that $\edges(r)=\EL(r)$, so the balanced binary search tree $\mc B$ serves as the representation of $\edges(r)$. 
    Each time we insert or delete an edge in $\mc B$, we can append it to a list, which is returned. This list has size $\OO_c(1)$.

\paragraph{Run of an automaton.}
    Finally, we prove that we can maintain a run of an automaton.
    Recall from earlier in this proof, that when $\Tc$ is modified by a sequence of basic rotations $\Sc$, we rebuild a prefix of $\tilde \Tc$ of size at most $\OO_c(\|\Sc\|_\Tc)$.
    This can be done in a bottom-up fashion, so that when we build a non-root node $t$, we can also compute it's automaton state $\run_\autom^{\tilde \Tc}(t)$ incurring an additional factor $\tau$ on the running times.

    If the automaton state at the node $t \in \chd(r)$ corresponding to an edge $e \in E(\torso(r))$ is modified by an operation, this happens because the subtree of the superbranch decomposition rooted at $t$ was modified.
    If so, the sequence of hypergraph operations $\mc C$ will contain the hypergraph operation $\deleteHyperedge(e)$, as guaranteed above.
\end{proof}

\section{Assembling the main data structure}\label{sec:main_data_structure}

In this section, we finally assemble our main data structure that maintains a protrusion decomposition of a dynamic topological-minor-free graph $G$, thus proving \Cref{theo:main}. The data structure from \Cref{lem:dynamic_prot} already performs all the required operations for the insertion and deletion of edges and vertices and maintains a superbranch decomposition and a corresponding protrusion decomposition, where the subtrees below the root satisfy all our requirements. However, this data structure allows the root degree, and thus the first parameter of the corresponding protrusion decomposition, to be arbitrarily large, and instead provides a $\merge$ operation, which allows the user to decrease the root degree themselves -- if they know a suitable set of ``mergeable'' root-children.

In \Cref{sec:existence}, for topological-minor-free graphs with a small treewidth-$\cTwMod$-modulator, we proved the existence of such a set of ``mergeable'' root-children, identified by their corresponding hyperedges in the torso of the root, if the degree of the root $r$ of the superbranch decomposition is high enough. That is, we showed that there exists a set $B \subseteq E(\torso(r))$ of bounded size such that $B \expand \Tc$ has small boundary and internal treewidth and all internal components of $B$ have the same boundary as $B$. Then, in \Cref{sec:local_search}, we constructed a data structure that, when applied to $\torso(r)$ and provided with an oracle that decides whether the internal treewidth of such a set $B \expand \Tc$ is ``small'', returns such a set $B$ (if it exists).

Essentially, there are two major things left to be done in this section. For one, we need to construct such an internal treewidth oracle. 
Secondly, the $\merge$ operation from \Cref{lem:dynamic_prot} is restricted to well-linked sets of root-children to keep the superbranch decomposition downwards well-linked. Thus, we need to find a well-linked subset of this ``mergeable'' set $B$. 

We start with the internal treewidth oracle.
To obtain this, we use a tree decomposition automaton for computing internal treewidth, stated in \Cref{lem:internal_treewidth_automaton}.
The proof of this lemma is given in \Cref{sec:automata}.
It is based on the observation that the Bodlaender-Kloks dynamic programming for treewidth~\cite{Bodlaender_Kloks_1996} can be seen as an tree decomposition automaton (used also e.g. by~\cite{Korhonen_Majewski_Nadara_Pilipczuk_Sokołowski_2023}), hence the name $\mc{IBK}$.

\begin{restatable}{lemma}{internalTreewidthAutomaton}\label{lem:internal_treewidth_automaton}
    For every pair of integers $k \leq \ell$, there is a tree decomposition automaton $\mc{IBK}_{k,\ell}$ of width $\ell$ with the following property: For any boundaried graph $G$ and its annotated boundaried tree decomposition $\Tc = (T,\bag,\edges)$ of width at most $\ell$, $\mc{IBK}_{k,\ell}$ accepts $\Tc$ if and only if the internal treewidth $\itw(G)$ of $G$ is at most $k$. The state space of $\mc{IBK}_{k,\ell}$ is of size $\OO_{k,\ell}(1)$ and can be computed in time $\OO_{k,\ell}(1)$. The evaluation time of $\mc{IBK}_{k,\ell}$ is $\OO_{k,\ell}(1)$ as well.
\end{restatable}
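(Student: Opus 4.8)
The plan is to recall that the Bodlaender--Kloks dynamic programming for treewidth~\cite{Bodlaender_Kloks_1996} is an automaton-style bottom-up computation on nice tree decompositions, and then adapt it so that (a) it runs on annotated (binary, not necessarily nice) tree decompositions as defined in \Cref{sec:def_decompositions}, and (b) it computes the treewidth of $G \setminus \bd(G)$ rather than of $G$. First I would recall the classical setup: for a node $x$ of a tree decomposition $\Tc$ of a graph $W$, the Bodlaender--Kloks state at $x$ records, for each ``typical sequence''/characteristic describing how a width-$k$ tree decomposition of $W_x$ can look when restricted to the bag $\bag(x)$, whether such a decomposition exists. Since $|\bag(x)| \le \ell+1$ and we only care about whether $\itw$ is at most $k$ (a fixed constant), the number of relevant characteristics is bounded by a function of $k$ and $\ell$ only; this gives the claimed $\OO_{k,\ell}(1)$ bound on $|Q|$, and each transition is a join/introduce/forget-style combination that is evaluable in $\OO_{k,\ell}(1)$ time. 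I would cite the fact, used e.g. in~\cite{Korhonen_Majewski_Nadara_Pilipczuk_Sokołowski_2023}, that this DP is already phrased as (or trivially repackaged as) a tree decomposition automaton on \emph{nice} tree decompositions, and note that a nice tree decomposition can be simulated on the fly inside a transition of an automaton over binary annotated tree decompositions: a binary node of the annotated decomposition corresponds to a bounded-length gadget of introduce/forget/join nice nodes (determined by $\bag(x)$, the bags of its $\le 2$ children, and $\edges(x)$), so the transition function $\delta$ of our automaton just runs the Bodlaender--Kloks transitions along that bounded gadget. The initial mapping $\iota$ is likewise defined by running the DP on the bounded graph $G_l$ of size $\le \ell+1$ at a leaf $l$.

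The new ingredient is handling the boundary. Here I would maintain, alongside the Bodlaender--Kloks characteristic, a small amount of bookkeeping identifying which vertices currently in $\bag(x)$ are boundary vertices of $G$; concretely, encode in $\iota$ and propagate through $\delta$ the label set $\labelSet$ restricted to the current bag (this is $\OO_{\ell}(1)$ information since $|\bag(x)| \le \ell+1$). Then, instead of running the ordinary Bodlaender--Kloks DP for $\tw(G)$, run it for the graph $G \setminus \bd(G)$ by simply \emph{skipping} every vertex marked as a boundary vertex: such a vertex is never introduced into the simulated nice decomposition, and every edge of $\edges(x)$ incident to a boundary vertex is ignored. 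Because $\bd(G) \subseteq \bag(r)$ for the root $r$ of a boundaried tree decomposition, and more generally the set of boundary vertices is exactly the set of vertices with a distinguished label, this yields a faithful bottom-up computation of $\tw(G \setminus \bd(G)) = \itw(G)$. The accepting set $F$ is the set of states whose Bodlaender--Kloks component certifies width $\le k$ at the root; membership in $F$ is decidable in $\OO_{k,\ell}(1)$ time. Correctness then follows from the correctness of Bodlaender--Kloks together with the observation that deleting $\bd(G)$ commutes with the tree-decomposition DP in the sense just described.

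I would then assemble these pieces: define $\mc{IBK}_{k,\ell} = (Q, F, \iota, \delta)$ with $Q$ the set of (Bodlaender--Kloks characteristic for width $\le k$, boundary-label bookkeeping) pairs plus the null state $\bot$, prove by induction on the structure of $T$ that $\run_{\mc{IBK}_{k,\ell}}^{\Tc}(x)$ equals the characteristic of $G_x \setminus \bd(G)$, and read off acceptance at the root. Finally I would tally the resource bounds: $|Q| = \OO_{k,\ell}(1)$ and $Q$ is computable in $\OO_{k,\ell}(1)$ time since the characteristics are a finite combinatorial family parameterised by $k,\ell$; and $\iota$, $\delta$, and the test ``$q \in F$'' each run in $\OO_{k,\ell}(1)$ time, giving evaluation time $\OO_{k,\ell}(1)$. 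The main obstacle is bureaucratic rather than conceptual: one must carefully specify how a binary annotated node (with its possibly nontrivial $\edges(x)$ and two children with differing bags) is expanded into the fixed sequence of nice-tree-decomposition operations and check that the boundary-skipping is consistent across that expansion and across the join operation; this is exactly the kind of routine-but-lengthy verification that belongs in \Cref{sec:automata}, and I would defer the full details there while keeping the statement's proof in the main text at the level of the sketch above.
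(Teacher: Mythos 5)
Your use of the Bodlaender--Kloks DP as a black-box tree decomposition automaton is sound, and so is the idea of simulating nice-decomposition introduce/forget/join operations along a bounded gadget inside each transition of the binary annotated decomposition (this is essentially what $\mc{BK}_{k,\ell}$ from Korhonen--Majewski--Nadara--Pilipczuk--Soko\l owski already provides). The problem is the ``new ingredient.'' You propose to mark, at every node $x$, which vertices of $\bag(x)$ lie in $\bd(G)$, and then have the DP skip those vertices. But in the tree decomposition automaton framework of \Cref{sec:automata} there is no channel through which this information can reach the automaton. The transition map $\delta$ receives only $\bag(x)$, the child bags, $\edges(x)$, and the child states; it has no access to the labeling $\Lambda$ or the boundary $B$ of the input boundaried graph. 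Likewise, the initial map $\iota$ at a leaf $l$ is applied to the local boundaried graph $G_l$, whose boundary is $\bag(l)$ by definition, not $\bd(G)\cap\bag(l)$; so $\iota(G_l)$ is the same regardless of what $\bd(G)$ is. In fact, since $\bd(G)\subseteq\bag(r)$, whether a vertex $v\in\bag(x)$ is a boundary vertex of $G$ depends on the part of the tree \emph{above} $x$ (does $v$ survive to the root bag, and is it in $B$?), whereas the run is computed strictly bottom-up. So there is nothing to ``propagate through $\delta$''; the boundary information flows in the wrong direction.

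The paper circumvents exactly this obstacle by deferring the choice of which vertices to delete. The state of $\mc{IBK}_{k,\ell}$ at a node $x$ is not a single Bodlaender--Kloks state but a table of $2^{|\bag(x)|}\le 2^{\ell+1}$ of them, one for every subset $S\subseteq\bag(x)$, with the entry for $S$ equal to the $\mc{BK}_{k,\ell}$-state for the tree decomposition of $G_x - S$ obtained by removing $S$ from every bag and every edge set. The transition combines the children's tables subset-by-subset (restricting $S$ to each child's bag), and only at the root is the one relevant entry $S=\bag(r)\supseteq\bd(G)$ read off to decide acceptance. This incurs a $2^{\ell+1}$ factor in state-space size and evaluation time, which is still $\OO_{k,\ell}(1)$. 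Your approach needs to be replaced by something of this form: you cannot commit to a set of ``skipped'' vertices at intermediate nodes; you must carry all $2^{\le\ell+1}$ hypotheses up the tree.
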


Using \Cref{lem:internal_treewidth_automaton} we can construct the internal treewidth oracle.

\begin{lemma}\label{lem:internal_tw_oracle}
    Let $G$ be a graph, $\cNewSbdProt$ an integer, and $\Tc' = (T', \Lc')$ a downwards well-linked superbranch decomposition of $\Hc(G)$ with adhesion size $\cSbdProt$ and root $r$.
    Let further $\Tc = (T,\bag,\edges)$ be an annotated normal $(k,\cFinalProt)$-protrusion decomposition with root $r$ that corresponds to $\Tc$, and $\run^{\Tc}_{\autom}$ the run of the automaton $\autom \coloneq \mc{IBK}_{\cNewSbdProt,(\cFinalProt+1) \cdot \cUpperBoundSize}$ from \Cref{lem:internal_treewidth_automaton} on $\Tc$.
    
    Suppose that the representations of $\Tc'$ and $\Tc$ are stored, and for each $t \in \chd(r)$ we can query the state $\run^{\Tc}_{\autom}(t)$ in $\OO_{\cNewSbdProt,\cFinalProt,\cUpperBoundSize}(1)$ time.
    Then, given a set $S \subseteq E(\torso(r))$ of size $|S| \leq \cUpperBoundSize$, we can in time $\OO_{\cSbdProt,\cFinalProt,\cNewSbdProt,\cUpperBoundSize}(\log \Delta(r))$ decide whether $\itw(S \expand \Tc') \leq \cNewSbdProt$.
\end{lemma}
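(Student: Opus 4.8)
The plan is to reduce the test ``$\itw(S\expand\Tc')\le\cNewSbdProt$?'' to running $\autom=\mc{IBK}_{\cNewSbdProt,(\cFinalProt+1)\cdot\cUpperBoundSize}$ on a small, freshly assembled annotated boundaried tree decomposition whose bulk is made of the protrusion subtrees of $\Tc$ that hang below the root-children selected by $S$ --- subtrees on which a run of $\autom$ is already maintained. First I would translate $S$ into graph-theoretic data: from the pointers stored in $\torso(r)$ I read, in $\OO(|S|)$ time, the set $\mc C_S=\{c\in\chd(r)\mid e_c\in S\}$ of corresponding root-children, and, via \Cref{lem:compute-boundary}, the set $\bd(S)\subseteq V(\torso(r))$. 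A short case analysis shows that $\bd(S)$, computed inside $\torso(r)$, coincides with the boundary $\bd(S\expand\Tc')$ of $S\expand\Tc'=\bigcup_{c\in\mc C_S}\Lc[c]$ inside $\Hc(G)$: any vertex that is incident both to a hyperedge in $S\expand\Tc'$ and to one outside it lies in $\bd(\Lc[c])=\adh(cr)\subseteq V(\torso(r))$ for the relevant root-child $c$, and conversely. I would also form $B:=\bigcup_{c\in\mc C_S}\bag(c)$ by concatenating the linked-list bags of the $\OO(\cUpperBoundSize)$ nodes of $\mc C_S$; since $\Tc$ is a $(k,\cFinalProt)$-protrusion decomposition each $\bag(c)$ has size $\le\cFinalProt$, so $|B|\le\cFinalProt\cdot\cUpperBoundSize$, and moreover $\bd(S\expand\Tc')\subseteq B$.

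Next I would assemble the tree decomposition $\Tc_S$: attach the subtrees $T_c$ of $\Tc$ for $c\in\mc C_S$ below a binary path $z_1-z_2-\dots-z_{|\mc C_S|-1}$ of fresh nodes, all given bag $B$, so that $z_i$ has children $z_{i+1}$ and $c_i$ (with $z_{|\mc C_S|-1}$ having children $c_{|\mc C_S|-1}$ and $c_{|\mc C_S|}$), keeping the $\edges$-values of \Cref{lem:dynamic_prot}'s annotation on all nodes inside the $T_c$, setting $\edges(z_i):=\emptyset$ for $i\ge 2$, and $\edges(z_1):=\{uv\in E(G)\mid u,v\in B\}$. The list $\edges(z_1)$ is obtained by searching, for each of the $\OO_{\cFinalProt,\cUpperBoundSize}(1)$ pairs of vertices of $B$, the balanced binary search tree that stores $\edges(r)$ (which contains every edge of $G$ with both endpoints in $\bag(r)\supseteq B$); as $|\edges(r)|\le\binom{|\bag(r)|}{2}\le\OO_{\cSbdProt}(\Delta(r)^2)$, this costs $\OO_{\cSbdProt,\cFinalProt,\cUpperBoundSize}(\log\Delta(r))$, which is the only super-constant step. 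I would then check that $(\Tc_S,\bag,\edges)$ is an annotated boundaried tree decomposition, of width at most $(\cFinalProt+1)\cdot\cUpperBoundSize$, of the boundaried graph $(G_S,\bd(S\expand\Tc'))$ it encodes, and that $G_S$ differs from the subgraph of $G$ induced by $S\expand\Tc'$ only in edges both of whose endpoints lie in $\bd(S\expand\Tc')$; hence $\itw(G_S)=\itw(S\expand\Tc')$. Edge coverage uses normality of $\Tc$ (every hyperedge of $\Hc(G)$ is contained in a non-root bag), and the decisive point is that every edge of the induced subgraph either has both endpoints in $\bag(r)$ --- in which case both endpoints are already in $B$, so it is re-owned by $z_1$ and was owned by $r$, not by any $T_c$-node in $\Tc$ --- or does not, in which case it stays owned inside some $T_c$ exactly as in $\Tc$. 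This is precisely why the inherited $\edges$-values inside the $T_c$ need no change, and why the restriction of the run of $\autom$ on $\Tc$ to the nodes of $T_c$ is verbatim the restriction of the run of $\autom$ on $\Tc_S$ to those nodes.

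Finally I would evaluate the automaton: query the $\OO(\cUpperBoundSize)$ states $\run^{\Tc}_{\autom}(c)$ for $c\in\mc C_S$, and feed these (together with the bags and the list $\edges(z_1)$ computed above) into the transition mapping $\delta$ of $\autom$ to compute, bottom-up along the path, the states at $z_{|\mc C_S|-1},\dots,z_1$; then report whether the state at $z_1$ is accepting. By \Cref{lem:internal_treewidth_automaton} this answers exactly whether $\itw(G_S)\le\cNewSbdProt$, i.e.\ whether $\itw(S\expand\Tc')\le\cNewSbdProt$. Each $\delta$-evaluation takes $\OO_{\cNewSbdProt,(\cFinalProt+1)\cdot\cUpperBoundSize}(1)$ time and there are $\OO(\cUpperBoundSize)$ of them, so together with the cost of computing $\edges(z_1)$ the total running time is $\OO_{\cSbdProt,\cFinalProt,\cNewSbdProt,\cUpperBoundSize}(\log\Delta(r))$, as required.

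The main obstacle is not a single deep argument but the care needed to glue everything consistently. One must verify that splicing the protrusions of $\Tc$ under a brand-new small root yields a genuine annotated boundaried tree decomposition of the intended boundaried graph --- no edge lost or double-counted, the interface vertices of the $T_c$ all collected into $B$ so the connectivity condition survives, and $\bd(S\expand\Tc')$ sitting in the root bag --- that the harmless extra edges among boundary vertices do not change $\itw$, and, most delicately, that the run of $\mc{IBK}$ maintained on each protrusion by \Cref{lem:dynamic_prot} is set up with a notion of boundary compatible with how that protrusion sits inside $\Tc_S$, so that the queried states $\run^{\Tc}_{\autom}(c)$ are legitimately usable as the child states when $\delta$ is applied at $z_{|\mc C_S|-1}$.
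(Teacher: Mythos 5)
Your overall plan matches the paper's: assemble a small annotated boundaried tree decomposition $\Tc_S$ on top of the stored protrusion subtrees, feed the maintained states $\run^{\Tc}_{\autom}(c)$ into the transition function, evaluate bottom-up, and test acceptance. However, giving every fresh node --- including the root $z_1$ --- the bag $B = \bigcup_{c \in \mc C_S} \bag(c)$ breaks the argument in two places. (i) The acceptance criterion of $\mc{IBK}_{k,\ell}$ (see the proof of \Cref{lem:internal_treewidth_automaton}) tests $\tw(G_S \setminus \bag(\text{root})) \le \cNewSbdProt$, i.e.\ it removes the \emph{entire} root bag. Since $B \supsetneq \bd(S \expand \Tc')$ in general (the protrusion root bags contain internal vertices), you would only be checking $\tw(G_S \setminus B)$, a lower bound on $\itw(G_S)$, so the oracle can wrongly return $\true$. (ii) Making the parent of each $c \in \mc C_S$ carry bag $B \supseteq \bag(c)$ changes the adhesion $\adh(c,\mathrm{parent})$ from $V(e_c)$ to all of $\bag(c)$. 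By the definition of the canonical $\edges$-annotation and \Cref{obs:run_subtree} this changes the automaton state at $c$: in the canonical annotation of your $\Tc_S$ one has $\edges_S(c) = \emptyset$ (because $\bag(c) \subseteq B$), whereas $\edges_{\Tc}(c) = \{uv : u,v \in \bag(c),\ \text{not both in } V(e_c)\}$ is generically non-empty, so $\run^{\Tc}_{\autom}(c)$ is \emph{not} the state your transition at $z_j$ needs. Your claim that ``the restriction of the run of $\autom$ on $\Tc$ to the nodes of $T_c$ is verbatim the restriction of the run of $\autom$ on $\Tc_S$ to those nodes'' is exactly what fails; you flag it as the ``most delicate'' point but do not resolve it, and under your bag schedule it is false.

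The paper's construction avoids both problems: it sets $\bag_S(s_i) = V(e_1)\cup\dots\cup V(e_{i+1})$, for which one verifies $\bag_S(s_{i-1}) \cap \bag(t_i) = V(e_i)$ and hence $\edges_S(t_i) = \edges(t_i)$, so the stored states really are the correct child states; and it adds a final root $r_S$ with $\bag_S(r_S) = \bd(S)$, so acceptance tests exactly $\itw(G_S) \le \cNewSbdProt$. The remaining ingredients of your proof --- reading $\mc C_S$ and $\bd(S)$ from $\torso(r)$, the observation $\itw(G_S)=\itw(S\expand\Tc')$ modulo extra boundary--boundary edges, and the $\OO(\log \Delta(r))$ cost of querying the balanced search tree storing $\edges(r)$ --- are sound and agree with the paper.
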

\begin{proof}
    Let $S = \{e_1, \dots, e_q\} \subseteq E(\torso(r))$ be a set of hyperedges and for $1 \leq i \leq q$, let $t_i \in \chd(r)$ be the root-child that corresponds to $e_i$, i.e., $\{e_i\} \expand \Tc' = \Lc'[t_i]$, and thus $V(e_i) = \bd(\Lc'[t_i])$. 
    Since $\Tc$ corresponds to $\Tc'$, for each $t \in \chd(r)$, we have $V(\Lc'[t]) = \bigcup_{t' \in \desc(t)} \bag(t')$.
    Let $G_S$ be the boundaried graph obtained from $G[\bigcup_{t_i} V(\Lc'[t_i])]$ by setting $\bd(S)$ as the boundary.
    We observe that $\itw(G_S) = \itw(S \expand \Tc')$, so our goal is to determine whether $\itw(G_S) \le \cNewSbdProt$.
    We do this by computing the root-state of $\autom$ on a boundaried tree decomposition of $G_S$.

    Without time constraints, this could be done simply by combining the subtrees $\Tc_i$ of $\Tc$ rooted at the nodes $t_i$ to form an annotated boundaried tree decomposition of $G_S$ and running $\autom$ on it.
    However, building this entire tree decomposition would exceed the runtime constraints.
    But as we already know the states of $\autom$ on the roots of the subtrees $\Tc_i$, it is sufficient to construct a prefix $\Tc_S$ of this tree decomposition that contains the root $t_i$ from each $\Tc_i$ but nothing below.
    To get the same result when running the automaton $\autom$ only on this prefix, we initialize the nodes $t_i$ with their state $\run_{\autom}^{\Tc}(t_i)$, which we can query in $\OO_{\cNewSbdProt,\cFinalProt,\cUpperBoundSize}(1)$ time for each $t_i$.

    In detail, to construct $\Tc_S = (T_S,\bag_S,\edges_S)$, we first take the nodes $t_i$ for $1 \leq i \leq q$ and add them to $V(T_S)$.
    Then, we insert $q$ additional nodes $s_1,\dots,s_{q-1}$ and $r_S$, where $r_S$ will be the root of $\Tc_S$. For $i = 2,\dots,q-1$, we add the edges $s_i s_{i-1}$ and $s_i t_{i+1}$ to $T_S$. Additionally, we add the edges $s_1 t_1$, $s_1 t_2$, and $r_S s_{q-1}$ to $\Tc_S$.
    We define the $\bag_S$ function as follows in a bottom-up way:
    \begin{itemize}
        \item First, we set $\bag_S(t_i) \coloneq \bag(t_i)$ for every $1 \leq i \leq q$.
        \item Then, we set $\bag_S(s_1) \coloneq V(e_1) \cup V(e_2)$.
        \item Next, we recursively set $\bag_S(s_i) \coloneq \bag_S(s_{i-1}) \cup V(e_{i+1})$ for $2 \leq i \leq q-1$.
        \item Lastly, we set $\bag_S(r_S) \coloneq \bd(S \expand \Tc') = \bd(S) \subseteq \bag_S(s_{q-1})$.
    \end{itemize}
    For the last step, by \Cref{lem:compute-boundary}, $\bd(S)$ can be computed in time $\OO(|S|\cdot\cSbdProt) = \OO_{\cSbdProt,\cUpperBoundSize}(1)$.
    We observe that the width of $\Tc_S$ is at most $(\cFinalProt+1) \cdot |S|-1 \leq \cFinalProt \cdot \cUpperBoundSize$, so the automaton $\autom$ of width $(\cFinalProt+1) \cdot \cUpperBoundSize$ can indeed be applied to $\Tc_S$.

    Recall that the $\edges_S$ function is uniquely determined by $G_S$ and the pair $(T_S,\bag_S)$.
    Since $\bag_S(t_i)=\bag(t_i)$, and the adhesion of $t_i$ to its parent $p$ of $T_S$ satisfy $\adh_S(t_i p) = \adh(t_i r)= V(e_i)$, it follows that $\edges_S(t_i)=\edges(t_i)$.
    All other bags of $\Tc_S$ except $\bag_S(s_{q-1})$ and $\bag_S(r_S)$ are subsets of their parents, so they have empty $\edges_S$.
    For $\bag(s_{q-1})$, note that all its vertices are in $\bag(r)$, implying that $\edges_S(s_{q-1})\subseteq\edges(r)$.
    Hence, we can compute $\edges_S(s_{q-1})$ by finding all edges $uv$ with $u,v \in \bag_S(s_{q-1}) \setminus \bag_S(r_S)$ and $uv\in \edges(r)$. For this, we query the balanced binary search tree representing $\edges(r)$ for each pair $u,v\in \bag_S(s_{q-1})  \setminus \bag_S(r_S)$ in $(|S|+\alpha)^{\OO(1)}\cdot \OO(\log(\Delta(r)\cdot \alpha) = \OO_{\alpha,\sigma}(\log \Delta(r))$ time.    
    In the same manner, we can compute $\edges_S(r_S)$ within the same time bound.
    
    We observe that that $\Tc_S$ can indeed be extended to an annotated boundaried tree decomposition $\Tc_S^*$ of $G_S$ by attaching the subtrees $\Tc_i$ to $t_i$ for each $1 \leq i \leq q$.
    Note that for each $t_i$, we have $\run_{\autom}^{\Tc}(t_i) = \run_{\autom}^{\Tc^*_S}(t_i)$.
    Therefore, to determine the state of $\autom$ on the node $r_S$ of $\Tc_S^*$, it suffices to initialize the states of $t_i$ to be $\run_{\autom}^{\Tc}(t_i)$, and compute the states of the rest of the nodes of $\Tc_S$ in a bottom-up manner with the transitions of $\autom$.
    As $\Tc_S$ has $\OO(\cUpperBoundSize)$ nodes and the evaluation time of $\autom$ is $\OO_{\cNewSbdProt,\cFinalProt,\cUpperBoundSize}(1)$, this runs in $\OO_{\cNewSbdProt,\cFinalProt,\cUpperBoundSize}(1)$ time.
    Whether $\itw(G_S) \le \cFinalProt$ can then be determined by whether the state of $r_S$ is accepting.
\end{proof}

From now on, we can assume that we have an oracle that decides for every small set $B \subseteq E(\torso(r))$ whether the internal treewidth of $B \expand \Tc$ is small. Thus, we can give this oracle to the data structure from \Cref{lem:dynamic_local_search}, which will then help us find a ``mergeable'' set $B$ of hyperedges in $\torso(r)$ whenever needed.

Secondly, we need to show that we can always find a well-linked subset of this set $B$.
To do this, \Cref{lem:partition_well_linked} provides us with an algorithm that partitions a set $B$ of hyperedges of a hypergraph $G$ into well-linked subsets. The algorithm is from~\cite[Lemma 5.3]{Korhonen_2025}, although they have a runtime dependency on the size $|G|$ of the hypergraph $G$ instead of only the size $|B|$ of $B$ in their paper. The reason for this is that they use a version of the following algorithm, also with a runtime dependency on $|G|$ instead of $|B|$, as a subprocedure (see~\cite[Lemma 3.1]{Korhonen_2025}). However, as shown in \cite[Lemma 7.1]{Korhonen_2024}, this can be made to run in $2^{\OO(\lambda(B))} \cdot \rank(G)^2 \cdot |B|$ time.

\begin{lemma}[\cite{Korhonen_2024}]\label{lem:well_linked_witness}
    Let $G$ be a hypergraph of rank $r$ whose representation is already stored. There is an algorithm that, given a set $B \subseteq E(G)$, in time $2^{\OO(\lambda(B))} \cdot r^2 \cdot |B|$ either
    \begin{itemize}
        \item correctly concludes that $B$ is well-linked or
        \item returns a bipartition $(B_1,B_2)$ of $B$ so that $\lambda(B_i) < \lambda(B)$ for both $i \in [2]$.
    \end{itemize}
\end{lemma}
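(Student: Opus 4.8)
The plan is to reduce testing well-linkedness of $B$ to a bounded number of vertex-connectivity (Menger) computations in the primal graph $\Pc(G[B])$. First I would build $\Pc(G[B])$ from the stored representation of $G$: since each hyperedge of $B$ has at most $r$ vertices and therefore spans a clique on at most $r$ vertices, $\Pc(G[B])$ has at most $r|B|$ vertices and at most $r^2|B|$ edges, and it is constructible in $\OO(r^2|B|)$ time. Write $k = \lambda(B)$ and $T = \bd(B)$; if $k=0$ then $T=\emptyset$ and $B$ is trivially well-linked, so the interesting case is $k\ge 1$.

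The key reformulation I would establish is: $B$ fails to be well-linked exactly when there is a partition of $T$ into three parts $X$, $Y$, $Z$ with $|X|=|Y|$ such that, in $\Pc(G[B])\setminus Z$, the maximum number of vertex-disjoint $X$--$Y$ paths is less than $|X|$ (equivalently, by Menger, $X$ and $Y$ can be separated in $\Pc(G[B])\setminus Z$ by fewer than $|X|$ vertices). One direction is the more structural one: given a bipartition $(B_1,B_2)$ with $\lambda(B_1),\lambda(B_2)<k$, the set $M:=V(B_1)\cap V(B_2)$ equals $\bd(B_1)\cap\bd(B_2)$, and moreover $M\setminus T = \bd(B_1)\setminus T = \bd(B_2)\setminus T$; taking $X:=T\cap(V(B_1)\setminus V(B_2))$, $Y:=T\cap(V(B_2)\setminus V(B_1))$, $Z:=T\cap M$, and using that every hyperedge of $B_i$ has its whole vertex set inside $V(B_i)$ (hence forms a clique on one side of the cut), one checks that $M\setminus T$ separates $X$ from $Y$ in $\Pc(G[B])\setminus Z$ and has size $<\min(|X|,|Y|)$. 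The converse is the reverse massaging, carried out below. Since we only ever care about separators of size below $k$, it is harmless to also allow $|X|\neq|Y|$ and ask instead for $\min(|X|,|Y|)$ paths.

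Given the reformulation, the algorithm enumerates all $3^{|T|} = 2^{\OO(k)}$ colourings of $T$ into $(X,Y,Z)$; for each, it deletes $Z$ from $\Pc(G[B])$, runs a unit-vertex-capacity max-flow from a source adjacent to $X$ to a sink adjacent to $Y$, and stops as soon as $\min(|X|,|Y|)$ units are routed. Each such computation performs at most $k$ augmenting-path searches, each costing $\OO(r^2|B|)$, so the whole search takes $2^{\OO(k)}\cdot r^2|B|$ time. If every colouring reaches its target flow, the algorithm reports that $B$ is well-linked. Otherwise, it takes a deficient colouring $(X,Y,Z)$ together with the Menger cut $C$ with $|C| < \min(|X|,|Y|)$ that separates $X$ from $Y$ in $\Pc(G[B])\setminus Z$, sets $S:=C\cup Z$, and assigns each hyperedge $e\in B$ to $B_2$ if the set $V(e)\setminus S$ (connected, since $V(e)$ is a clique in $\Pc(G[B])$, hence inside a single component of $\Pc(G[B])\setminus S$) lies in a component meeting $Y$, and to $B_1$ otherwise. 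A short verification gives $\bd(B_1)\subseteq S\cup X$ and $\bd(B_2)\subseteq S\cup Y$; since $|S| = |C|+|Z| < \min(|X|,|Y|) + (k-|X|-|Y|) = k-\max(|X|,|Y|)$, both $\lambda(B_1)$ and $\lambda(B_2)$ are strictly below $k$, and both parts are nonempty (e.g.\ $X\not\subseteq S$ forces a hyperedge into $B_1$). The algorithm returns $(B_1,B_2)$.

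The main obstacle I anticipate is the reformulation itself: carefully proving that examining disjoint terminal subsets together with Menger cuts is enough to certify robustness of $B$ in the hyperedge sense (rather than the ``for all overlapping equal-size $X,Y$'' condition in the definition of well-linkedness), and that a deficient cut can always be turned into a bipartition with strictly smaller boundary on both sides, including the bookkeeping for terminals that land inside the separator or straddle $B_1$ and $B_2$. Once that is settled, everything else is a bounded-size, bounded-value max-flow over a graph of size $\OO(r^2|B|)$, replicated $2^{\OO(k)}$ times, which gives the claimed running time. This is the route taken in \cite[Lemma 7.1]{Korhonen_2024}.
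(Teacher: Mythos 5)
Your proposal is sound and, as far as this paper is concerned, matches it exactly: the paper does not reprove this lemma but simply cites it as Lemma~7.1 of the referenced work, which is precisely the reduction you sketch (build $\Pc(G[B])$, enumerate the $3^{\lambda(B)}$ colourings of $\bd(B)$ into $(X,Y,Z)$, run a bounded-value vertex-capacitated flow for each, and convert a deficient cut into a boundary-reducing bipartition). The two directions of your reformulation check out — in particular $M=V(B_1)\cap V(B_2)\subseteq S$ in the backward construction and $|S|+|X|<k$, $|S|+|Y|<k$ — and replacing the ``$|X|=|Y|$'' threshold by $\min(|X|,|Y|)$ is indeed the cleanest way to make both directions line up.
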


Now, if we plug this improved version into the algorithm from~\cite[Lemma 5.3]{Korhonen_2025}, we obtain an algorithm that in time $2^{O(\lambda(B))} \cdot \rank(G)^2 \cdot |B|$ partitions a set $B$ of hyperedges into well-linked sets.

\begin{lemma}[Based on {\cite[Lemma 5.3]{Korhonen_2025}}]\label{lem:partition_well_linked}
    Let $G$ be a hypergraph of rank $r$ whose representation is already stored. There is an algorithm that, given a set $B \subseteq E(G)$, in time $2^{O(\lambda(B))} \cdot r^2 \cdot |B|$ returns a partition $\mathfrak{B}$ of $B$ into at most $|\mathfrak{B}| \leq 2^{\lambda(B)}$ sets, so that each $X\in \mathfrak{B}$ is well-linked in $G$.
\end{lemma}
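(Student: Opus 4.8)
The plan is to follow the recursive scheme of~\cite[Lemma 5.3]{Korhonen_2025} verbatim, with the single change that we call the improved tester of~\Cref{lem:well_linked_witness} in place of its $|G|$-dependent counterpart. Define a recursive procedure $\mathsf{Partition}(B)$: run the algorithm of~\Cref{lem:well_linked_witness} on $B$; if it reports that $B$ is well-linked, return $\{B\}$; otherwise it returns a bipartition $(B_1,B_2)$ of $B$ with $\lambda(B_i) < \lambda(B)$ for both $i \in [2]$, and we return $\mathsf{Partition}(B_1) \cup \mathsf{Partition}(B_2)$. The output is visibly a partition of $B$, since at each step we split into a bipartition and recurse on the two parts independently, and each set in the final family is certified well-linked by~\Cref{lem:well_linked_witness} at the leaf where the recursion terminated.

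For the bound $|\mathfrak{B}| \le 2^{\lambda(B)}$, I would argue by induction on $\lambda(B)$. If $B$ is well-linked, then $|\mathfrak{B}| = 1 \le 2^{\lambda(B)}$; this covers in particular the base case $\lambda(B) = 0$, since $\lambda \ge 0$ forces every set with empty boundary to be well-linked. Otherwise $\lambda(B_i) \le \lambda(B)-1$, so the inductive hypothesis yields $|\mathsf{Partition}(B_i)| \le 2^{\lambda(B)-1}$, and hence $|\mathfrak{B}| \le 2 \cdot 2^{\lambda(B)-1} = 2^{\lambda(B)}$.

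For the running time, the key observation is that along any root-to-leaf path of the recursion tree the boundary size strictly decreases, so the recursion has depth at most $\lambda(B)$, and in particular every set $B'$ occurring anywhere in the recursion satisfies $\lambda(B') \le \lambda(B)$. Thus each invocation of~\Cref{lem:well_linked_witness} on a set $B'$ costs $2^{O(\lambda(B'))} \cdot r^2 \cdot |B'| \le 2^{O(\lambda(B))} \cdot r^2 \cdot |B'|$, plus $O(r\,|B'|)$ bookkeeping for the bipartition; and the sets appearing at any fixed depth of the recursion tree are pairwise disjoint subsets of $B$, so their sizes sum to at most $|B|$. Summing over the at most $\lambda(B)+1$ depths gives total time $(\lambda(B)+1)\cdot 2^{O(\lambda(B))}\cdot r^2 \cdot |B| = 2^{O(\lambda(B))}\cdot r^2 \cdot |B|$, as required.

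I do not expect any genuinely hard step here; the proof is essentially a localization of an existing argument. The one point that must be handled carefully is that $\lambda$ is \emph{not} monotone under taking subsets in general, so it is essential to use that the recursion only ever passes sets whose boundary is strictly smaller than that of their parent — this is exactly what lets us replace the $|G|$-factor of the original analysis by a factor depending only on $\lambda(B)$ and $|B|$.
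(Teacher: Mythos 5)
Your proof is correct and follows essentially the same approach as the paper: both repeatedly apply the tester of~\Cref{lem:well_linked_witness} to split non-well-linked parts until the partition consists only of well-linked sets. The paper organizes this as an iterative procedure and controls both $|\mathfrak{B}|$ and the number of iterations via the potential $\sum_{X\in\mathfrak{B}} 2^{\lambda(X)}$, whereas you recurse and use induction on $\lambda(B)$ plus the per-depth disjointness observation for the time bound; the two analyses are interchangeable and yield the same bounds.
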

\begin{proof}
    We maintain a partition $\mathfrak{B}$ of $B$, initially setting $\mathfrak{B} = \{B\}$. For each part $X \in \mathfrak{B}$, we repeatedly apply the algorithm from \Cref{lem:well_linked_witness} to test whether $X$ is well-linked, and if not, replace $X$ by the two sets $X_1,X_2$ returned by the algorithm, where $(X_1,X_2)$ is a bipartition of $X$ with $\lambda(X_i) < \lambda(X)$ for both $i \in [2]$.

    Throughout this process, we always have $\sum\limits_{X \in \mathfrak{B}} 2^{\lambda(X)} \leq 2^{\lambda(B)}$, while the size of $\mathfrak{B}$ increases strictly in each step. Therefore, it must terminate within $2^{\lambda(B)}$ iterations, with $|\mathfrak{B}| \leq 2^{\lambda(B)}$. As the algorithm from \Cref{lem:well_linked_witness} runs in time $2^{\OO(\lambda(X))} \cdot r^2 \cdot |X|$, and $\lambda(X) \leq \lambda(B)$ for each part $X \in \mathfrak{B}$, the total running time is at most $2^{\OO(\lambda(B))} \cdot r^2 \cdot |B|$.
\end{proof}

Finally, we are ready to prove \Cref{theo:main}. For this, we only need to combine our data structures from \Cref{sec:local_search} and \Cref{sec:splay} together with the results from this section and \Cref{sec:existence} and adjust them with the right parameters.
We prove the following stronger and more technical version of \Cref{theo:main}.

\begin{theorem}\label{thm:technical_main}
There is a data structure, that is initialized with a graph $H$, an integer $\cTwMod$, and an empty graph $G$, and supports updating $G$ under the assumption that it remains $H$-topological-minor-free.
The data structure maintains $G$, $\Hc(G)$, and an annotated normal $(\OO_{H,\cTwMod}(\optTwMod(G)), \OO_{H,\cTwMod}(1))$-protrusion decomposition $\Tc = (T,\bag,\edges)$ of $G$ rooted at node $r$ together with the hypergraph $\torso(r)$.
The data structure supports the following operations.
    \begin{itemize}
        \item $\init(H, \cTwMod)$: We initialize the data structure with $H$ and $\cTwMod$, and an empty graph $G$. Afterwards, $\torso(r)$ is an empty hypergraph. Runs in $\OO_{H,\cTwMod}(1)$  time.
        \item $\addVertex(v)$: Given a new vertex $v \notin V(G)$, add $v$ to $G$.
        \item $\deleteVertex(v)$: Given an isolated vertex $v \in V(G)$, remove $v$ from $G$.
        \item $\addEdge(e)$: Given a new edge $e \in \binom{V(G)}{2} \setminus E(G)$, add $e$ to $G$.
        \item $\deleteEdge(e)$: Given an edge $e \in E(G)$, remove $e$ from $G$.
    \end{itemize}
    Each operation except for $\init$ runs in $\OO_{H,\cTwMod}(\log |G|)$ amortized time.
    For each operation, the changes to $\torso(r)$ can be described as a sequence $\mc C$ of hypergraph operation of size $\OO_{H,\cTwMod}(1)$, which is returned.
    Changes to $\edges(r)$ can be described as a list of insertions and deletions of size at most $\OO_{H,\cTwMod}(1)$, which is returned as well.

    Moreover, if upon initialization the data structure is provided a tree decomposition automaton $\autom$ with evaluation time $\tau$, then a run of $\autom$ on each protrusion is maintained, incurring an additional factor $\tau$ on the running times.
    If the automaton state at the node corresponding to an edge $e \in E(\torso(r))$ is modified by an operation, the corresponding sequence $\mc C$ will contain the operation $\deleteHyperedge(e)$.
\end{theorem}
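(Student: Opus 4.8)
The plan is to assemble the data structure by combining the three components developed above, glued together by a "merge loop" that repeatedly shrinks the root degree. First I would fix a constant $c = \OO_{H,\cTwMod}(1)$, large enough for the constraints below, and run \Cref{lem:dynamic_prot} with parameter $c$: this maintains a dynamic downwards well-linked superbranch decomposition $\Tc = (T,\Lc)$ of $\Hc(G)$ with adhesion size at most $c$, together with a corresponding annotated normal $(\Delta(r)\cdot c, \cFinalProt)$-protrusion decomposition $(T,\bag,\edges)$, where $\cFinalProt = \OO_c(1)$. I would feed \Cref{lem:dynamic_prot} the product of the user's automaton $\autom$ (a trivial automaton if none is supplied) with the internal-treewidth automaton $\mc{IBK}_{\cNewSbdProt,(\cFinalProt+1)\cUpperBoundSize}$ of \Cref{lem:internal_treewidth_automaton}, where $\cNewSbdProt = \cNewSbdProt(H,\cTwMod)$ is the constant of \Cref{mergeable-children} and $\cUpperBoundSize := s_2$ is defined below; from the maintained run one reads the $\mc{IBK}$-state $\run_{\mc{IBK}}^{\Tc}(t)$ of every $t \in \chd(r)$, so \Cref{lem:internal_tw_oracle} supplies an oracle $\oracle$ deciding whether $\itw(S \expand \Tc) \le \cNewSbdProt$ for any $S \subseteq E(\torso(r))$ with $|S| \le s_2$, in time $\OO_{H,\cTwMod}(\log |E(\torso(r))|)$. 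On top of this I would run the data structure of \Cref{lem:dynamic_local_search} on the hypergraph $\torso(r)$ (of rank at most $c$) with parameters $s_1 := 2^{\cNewSbdProt+2}$, $s_2 := \OO_{H,\cTwMod}(c)$ the size bound of \Cref{mergeable-children}, $k := \cNewSbdProt$, and oracle $\oracle$, synchronizing hyperedge labels with the children of $r$. The constant $c$ is chosen so that in addition $s_2 \le 2^{2c}+1$ and $c \ge 6\cNewSbdProt+3$; this is consistent since $s_2$ is linear in $c$ while $\cNewSbdProt$ does not depend on $c$.

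On an update to $G$ I would forward it to \Cref{lem:dynamic_prot}, obtain the change $\mc C_0$ to $\torso(r)$ (of size $\OO_c(1)$) and the change to $\edges(r)$, and replay $\mc C_0$ on the \Cref{lem:dynamic_local_search} structure. Then, while $\query$ returns a set $C$ for which — after partitioning $C$ into at most $2^{\lambda(C)}$ well-linked parts by \Cref{lem:partition_well_linked} and taking the largest part $A$ — we have $|\overline A| \ge 2$, I would call $\merge(\mathcal C_A)$ on \Cref{lem:dynamic_prot} and replay the resulting $\torso(r)$-change; the loop stops once $\query$ reports no suitable set, or $|\overline A| < 2$. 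The returned $\mc C$ and $\edges(r)$-change are the concatenations of the pieces. Each $\merge(\mathcal C_A)$ is a legal call: $|A| \ge |C|/2^{\cNewSbdProt} \ge s_1/2^{\cNewSbdProt+1} = 2$ and $|A| \le |C| \le s_2 \le 2^{2c}+1$; $A$ is well-linked in $\torso(r)$, so $A \expand \Tc$ is well-linked in $\Hc(G)$ by \Cref{lem:well_linked_torso} (using downwards well-linkedness); and since the internal components of $C$ all have boundary $\bd(C)$ (so they are pairwise disjoint by \Cref{lem:internally_connected_disjoint}, giving $\itw(C\expand\Tc) = \max_{C'}\itw(C'\expand\Tc) \le \cNewSbdProt$ from $\oracle(C') = \true$), we get $\itw(A\expand\Tc) \le \itw(C\expand\Tc) \le \cNewSbdProt$ and $\lambda(A) \le \lambda(C) \le \cNewSbdProt$, whence $\wl(A\expand\Tc) \le 3(\itw(A\expand\Tc) + \lambda(A\expand\Tc) + 1) \le 3(2\cNewSbdProt+1) \le c$ by \Cref{lem:well_linked_number_treewidth}.

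For correctness of the maintained object, \Cref{lem:dynamic_prot} keeps $\Tc$ downwards well-linked with adhesion size at most $c$ and keeps the corresponding annotated normal protrusion decomposition throughout (normality via \Cref{lem:corresponding-is-normal}), and each $\merge$ strictly decreases $\Delta(r)$ so the loop terminates. It remains to bound $\Delta(r)$ after the loop: either $\Delta(r) = |A| + |\overline A| \le s_2 + 1 = \OO_{H,\cTwMod}(1)$, or $\query$ failed, in which case \Cref{mergeable-children} applied with modulator size $\optTwMod(G)+1$ (padding an optimal modulator) shows that if $\Delta(r) \ge \cSbdRootDeg(H,\cTwMod,c)\cdot(\optTwMod(G)+1)$ there is a set $B \subseteq E(\torso(r))$ with $|B| \in [s_1,s_2]$, $\lambda(B) \le \cNewSbdProt = k$, all internal components sharing boundary $\bd(B)$, and $\itw(B'\expand\Tc) \le \itw(B\expand\Tc) \le \cNewSbdProt$ for each component $B'$ — exactly a valid answer for $\query$, a contradiction. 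Hence in both cases $\Delta(r) = \OO_{H,\cTwMod}(\optTwMod(G))$, so $|\bag(r)| \le \Delta(r)\cdot c$ and $\Delta(r)$ are $\OO_{H,\cTwMod}(\optTwMod(G))$ while non-root bags stay $\OO_{H,\cTwMod}(1)$, as required. For the running time, the \Cref{lem:dynamic_prot} calls cost $\OO_{H,\cTwMod}(\log|G|)$ amortized (with the extra factor $\tau$ from the product automaton), each replay into \Cref{lem:dynamic_local_search} costs $\OO_{H,\cTwMod}(\log|G|)$, and each merge-loop iteration costs $\OO_{H,\cTwMod}(\log|G|)$; the number of iterations per update is $\OO_{H,\cTwMod}(1)$, which also bounds $\|\mc C\|$, the $\edges(r)$-change, and the number of touched protrusions. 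The $\deleteHyperedge$-guarantee for automaton-state changes is inherited from the corresponding guarantee of \Cref{lem:dynamic_prot} for the forwarded operation and for each $\merge$.

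The step I expect to be the main obstacle is upgrading "$\OO_{H,\cTwMod}(1)$ changes per update" from amortized to worst-case, i.e.\ bounding the number of merge-loop iterations within a single update. The natural route is to maintain the stronger invariant that after every update no chip-group of $\torso(r)$ has volume $\ge s_1$ (equivalently, $\query$ fails), and then argue by locality: a $G$-update changes $\torso(r)$ by only $\OO_{H,\cTwMod}(1)$ hyperedges, hence affects only $\OO_{H,\cTwMod}(1)$ chips and thus only $\OO_{H,\cTwMod}(1)$ chip-groups, so at most $\OO_{H,\cTwMod}(1)$ groups can become "bad"; each subsequent $\merge$ removes the chips realizing $C$ (draining one bad group by $\ge s_1/2$) while disturbing only $\OO_{H,\cTwMod}(1)$ chips of other groups, forcing the loop to close in $\OO_{H,\cTwMod}(1)$ steps. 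Making the parameter inequalities line up here — in particular that a merge cannot create more "bad volume" than it removes — and keeping the \Cref{lem:dynamic_local_search} copy of $\torso(r)$, the superbranch decomposition, and the oracle mutually consistent while replaying $\mc C$ operation by operation, is where the real care is needed.
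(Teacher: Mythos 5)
Your assembly plan matches the paper's proof in its main lines: you combine \Cref{lem:dynamic_prot} with the local-search structure of \Cref{lem:dynamic_local_search}, feed \Cref{lem:dynamic_prot} the product of the internal-treewidth automaton and the user's automaton, use \Cref{lem:internal_tw_oracle} as the oracle, and restore the root-degree bound by a merge loop that calls $\query$, extracts a well-linked part via \Cref{lem:partition_well_linked}, and calls $\merge$ — exactly what the paper does. Your inline verification that the extracted part $A$ is a legal $\merge$ input (size bounds, well-linkedness via \Cref{lem:well_linked_torso}, and $\wl(A\expand\Tc)\le 3(\itw+\lambda+1)\le c$ via \Cref{lem:well_linked_number_treewidth}) is correct and is what the paper packages as its Claim ``semi\_mergeable to mergeable.'' The parameter choices differ cosmetically (you take $s_2$ linear in $c$; the paper uses $s_2 = c^2$), and your ``pad the modulator by $+1$'' trick to handle $\optTwMod(G)=0$ is a reasonable repair of a corner case the paper glosses over. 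None of these differences is substantive.

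The genuine gap is exactly where you flag it: bounding the number of merge-loop iterations per update. You propose to run the loop until $\query$ fails (equivalently, to maintain the invariant that no chip group has volume $\ge s_1$) and to argue termination by locality, tracking how many chip groups can become ``bad.'' This is the wrong invariant and the wrong argument. It is both harder to make rigorous than you suggest — a $\merge$ deletes $|A|$ hyperedges of $\torso(r)$ and inserts a fresh one with vertex set $\bd(A)$ (not $\bd(C)$), which can create chips with arbitrary new boundaries in previously clean groups, so you cannot cleanly account for ``bad volume'' being only drained — and also unnecessary. The paper maintains the much weaker invariant $\Delta(r) < \cSbdRootDeg \cdot \optTwMod(G)$ and simply \emph{cuts off} the loop after a fixed number $\cUBTorsoOps + 2\cSbdRootDeg = \OO_{H,\cTwMod}(1)$ of iterations, regardless of whether $\query$ would still succeed. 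That this cutoff restores the invariant follows from two observations you do not make: (i) a single update changes $\Delta(r)$ by at most $\cUBTorsoOps = \OO_{H,\cTwMod}(1)$ since $\torso(r)$ changes by a constant-size sequence, and (ii) a single update changes $\optTwMod(G)$ by at most $2$ (deleting a vertex or edge cannot increase it; conversely adding back the one or two endpoints to a modulator of the new graph yields one for the old). Hence, if the invariant held before the update, then after the update $\Delta(r) < \cSbdRootDeg(\optTwMod(G)+2)+\cUBTorsoOps$, and since each $\merge$ decreases $\Delta(r)$ by at least one, $\cUBTorsoOps+2\cSbdRootDeg$ merges suffice to restore $\Delta(r) < \cSbdRootDeg\cdot\optTwMod(G)$ (and if $\query$ fails earlier, \Cref{mergeable-children} already gives this bound directly). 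This observation — that the \emph{target} $\cSbdRootDeg\cdot\optTwMod(G)$ drifts by only $\OO(1)$ per update, so a hard iteration cap suffices — is the missing idea; without it your worst-case $\OO_{H,\cTwMod}(1)$ bound on $\|\mc C\|$ and on the number of touched protrusions does not follow.
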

\begin{proof}
    We begin by outlining the structure of the proof.
    We use the data structure from \Cref{lem:dynamic_prot} with $c=\OO_{\cTwMod,H}(1)$ to maintain the graph $G$, the support hypergraph $\Hc(G)$,  a downwards well-linked superbranch decomposition $\Tc'$ of $\Hc(G)$ rooted at a node $r'$, and an annotated normal $(\Delta(r')\cdot \OO_{\cTwMod,H}(1), \OO_{\cTwMod,H}(1))$-protrusion decomposition $\Tc$ corresponding to $\Tc'$ under the required operations. 
    We now also want to keep the degree of the root $r'$ of the superbranch decomposition $\Tc'$ bounded, that is, $\Delta(r') \leq \OO_{H,\cTwMod}(\optTwMod(G))$, implying that $\Tc$ is a  $(\OO_{H,\cTwMod}(\optTwMod(G)), \OO_{H,\cTwMod}(1))$-protrusion decomposition as required. To that end, we let the data structure from \Cref{lem:dynamic_local_search} maintain a representation of the hypergraph $\torso(r')$ -- the same $\torso(r')$ that is also maintained as part of $\Tc'$ by the data structure from \Cref{lem:dynamic_prot}. This will help us find a set of ``mergeable'' root-children (see below for a formal definition of ``mergeable''), on which we can apply the $\merge$ operation from \Cref{lem:dynamic_prot} to decrease the root degree whenever possible.

    Let $\cNewSbdProt = \cNewSbdProt(H,\cTwMod)$ be the integer from \Cref{mergeable-children}, and $b \le \OO_{H,\cTwMod}(1)$ the factor hidden by the $\OO_{H,\cTwMod}$-notation in the upper bound $|B| \le \OO_{H,\cTwMod}(\cSbdProt)$ of \Cref{mergeable-children}.
    We use the data structure of \Cref{lem:dynamic_prot} with $c=\cSbdProt \coloneq \max(6\cNewSbdProt + 3, b)$. It maintains a downwards well-linked superbranch decomposition $\Tc' = (T',\Lc')$ rooted at $r'$ with adhesion size $\cSbdProt$, as well as an annotated normal $(\Delta(r') \cdot \cSbdProt, \OO_{\cSbdProt}(1))$-protrusion decomposition $\Tc = (T,\bag,\edges)$ that corresponds to $\Tc'$.
    Note that the adhesion size bound implies that the rank of the hypergraph $\torso(r')$ is at most $\cSbdProt$.

    Let $\ell = \OO_{\alpha}(1)$ be the upper bound for the width of the protrusions of $\Tc$ maintained by \Cref{lem:dynamic_prot}.
    At the initialization of \Cref{lem:dynamic_prot}, we provide it also the automaton $\mc{IBK}_{\cNewSbdProt, \ell \cdot \alpha^2}$ from \Cref{lem:internal_treewidth_automaton} and the automaton $\autom$ that was given to this lemma.
    Note that \Cref{lem:dynamic_prot} can be generalized from maintaining just one automaton to maintaining multiple by considering the automaton formed as the cartesian product of them.
    Maintaining these automata causes an extra $\OO_{H,\cTwMod}(1) + \tau$ factor in the running time.

    Let $\cSbdRootDeg = \cSbdRootDeg(H,\cTwMod,\cSbdProt)$ be the integer from \Cref{mergeable-children} such that the following holds: If $\Delta(r') \geq \cSbdRootDeg \cdot \optTwMod(G)$, then there exists a set $B \subseteq E(\torso(r'))$ with 
    \begin{itemize}
        \item $\lambda(B) \leq \cNewSbdProt$,
        \item $\itw(B \expand \Tc) \leq \cNewSbdProt$,
        \item $2^{\cNewSbdProt+2} \leq |B| \leq b \cdot \cSbdProt \leq \cSbdProt^2$, and
        \item for every internal component $B'$ of $B$, we have $\bd(B') = \bd(B)$.
    \end{itemize}
    We call a set $B \subseteq \torso(r')$ of hyperedges \emph{semi-mergeable} if $\lambda(B) \le \cNewSbdProt$, $\itw(B \expand \Tc') \leq \cNewSbdProt$, and for every internal component $B'$ of $B$ it holds that $\bd(B') = \bd(B)$. These properties are closely related to what we actually want to be able to apply the $\merge$ operation, as we will show in \Cref{claim:semi_mergeable_to_mergeable}. Specifically, we call a set $B \subseteq E(\torso(r'))$ \emph{mergeable} if it is well-linked and $\wl(B \expand \Tc) \leq \cSbdProt$. Furthermore, a set $C \subseteq \chd(r')$ of root-children corresponding to such a set $B$ is also called \emph{(semi-)mergeable}. We now show that it is indeed a small step from being semi-mergeable to being mergeable.

    \begin{claim}\label{claim:semi_mergeable_to_mergeable}
        There exists an algorithm that, given a semi-mergeable set $B$ of size $|B| \geq 2^{\cNewSbdProt+1}$, finds a mergeable subset $B' \subseteq B$ of size $|B'| \geq 2$ in time $2^{\OO(\lambda(B))} \cdot \cSbdProt^2 \cdot |B|$.
    \end{claim}
    \begin{claimproof}
        By \Cref{lem:partition_well_linked}, there exists an algorithm that in time $2^{\OO(\lambda(B))} \cdot \cSbdProt^2 \cdot |B|$ finds a partition $\mathfrak{B}$ of $B$ into at most $|\mathfrak{B}| \leq 2^{\lambda(B)} \leq 2^{\cNewSbdProt}$ sets so that each $X \in \mathfrak{B}$ is well-linked in $\torso(r')$. By the pigeonhole principle, there is a well-linked set $B' \in \mathfrak{B}$ of size $|B'| \geq |B| / 2^{\cNewSbdProt} \geq 2$. By \Cref{lem:well_linked_torso}, $B' \expand \Tc'$ is well-linked in $\Hc(G)$ as well. Moreover, by \Cref{lem:well_linked_number_treewidth} we have $\wl(B' \expand \Tc') \leq 3 \cdot (\tw(B' \expand \Tc') + 1) \leq 3(\tw(B \expand \Tc') + 1) \leq 3 \cdot (\itw(B \expand \Tc') + \lambda(B \expand \Tc') + 1) \leq 6\cNewSbdProt + 3 \le \cSbdProt$, where $\tw(B' \expand \Tc') \leq \tw(B \expand \Tc')$ holds since $B' \expand \Tc' \subseteq B \expand \Tc$.
    \end{claimproof}

    Our goal is to keep $\Delta(r') < \cSbdRootDeg \cdot \optTwMod(G)$ at all times (without knowing the exact value of $\optTwMod(G)$). To that end, whenever possible, we want to find a semi-mergeable set of hyperedges in $\torso(r')$, turn it into a mergeable set using \Cref{claim:semi_mergeable_to_mergeable}, and then apply the $\merge$ operation from \Cref{lem:dynamic_prot} on the corresponding set of root-children.    
    To find such a set of semi-mergeable hyperedges in $\torso(r')$, we use the data structure from \Cref{lem:dynamic_local_search} with the parameters $s_1 \coloneq 2^{\cNewSbdProt+2}$, $s_2 \coloneq \cSbdProt^2$, and $k \coloneq \cNewSbdProt$ and let it maintain a representation of $\torso(r')$.

    To this end, we provide the data structure with a $\cSbdProt^2$-bounded oracle $\oracle$ so that for every set $S \subseteq E(\torso(r'))$ of size $|S| \leq \cSbdProt^2$ the oracle is satisfied if and only if $\itw(S \expand \Tc') \leq \cNewSbdProt$.
    This oracle is realized by the algorithm from \Cref{lem:internal_tw_oracle} together with the maintained automaton $\mc{IBK}_{\cNewSbdProt, \ell \cdot \alpha^2}$.
    Given a set $S \subseteq E(\torso(r'))$ of size $|S| \leq \cSbdProt^2$, the algorithm from \Cref{lem:internal_tw_oracle} decides whether $\itw(S \expand \Tc') \leq \cNewSbdProt$ in time $\OO_{\cSbdProt,\cNewSbdProt}(\log \Delta(r')) \le \OO_{H,\cTwMod}(\log |G|)$.
    Now, the $\query$ operation of \Cref{lem:dynamic_local_search} returns a semi-mergeable set $B \subseteq E(\torso(r'))$ of size $2^{\cNewSbdProt+1} \leq |B| \leq \cSbdProt^2$ or concludes that no semi-mergeable set of size $2^{\cNewSbdProt+2} \leq |B| \leq \cSbdProt^2$ exists.
    Furthermore, the operations of \Cref{lem:dynamic_local_search} run in time $\OO_{\cSbdProt}(\log |E(\torso(r'))|) \le \OO_{H,\cTwMod}(\log |G|)$.

    We are now ready to describe the operations of our data structure. 
    For the initialization, we compute the integers $\cNewSbdProt$ and $\cSbdProt$ given $H$ and $\cTwMod$.
    Then, we initialize the data structure from \Cref{lem:dynamic_prot} with the integer $\cSbdProt$ in $\OO(1)$ time.
    Note that $\torso(r')$ is an empty hypergraph at the beginning since $G$ is an empty graph. Thus, we can also initialize the data structure from \Cref{lem:dynamic_local_search}, which will keep a representation of $\torso(r')$, in $\OO(1)$ time.

    For an update operation (insert/delete, vertex/edge), we apply the respective operation from \Cref{lem:dynamic_prot} to update our representations of $G$, $\Hc(G)$, $\Tc$, and $\Tc'$ in $\OO_{\cSbdProt}(\log |G|)$ amortized time. This returns a sequence of operations of size at most $\OO_{\cSbdProt}(1)$, which describes how the hypergraph $\torso(r')$ before the operation can be turned into the hypergraph $\torso(r')$ after the operation.
    For every basic hypergraph operation in this sequence, we apply the respective operation from \Cref{lem:dynamic_local_search} to update the representation of $\torso(r')$, each in at most $\OO_{H,\cTwMod}(\log |G|)$ time, so in total this takes $\OO_{H,\cTwMod}(\log |G|)$ amortized time.
    At the end, we return forward this sequence of operations, and also the changes to the set $\edges(r)$.

    To restore our invariant $\Delta(r') < \cSbdRootDeg \cdot \optTwMod(G)$ we do the following. After every update to $G$, we repeatedly apply the $\query$ operation from \Cref{lem:dynamic_local_search} to find a semi-mergeable set $B \subseteq E(\torso(r'))$ of size $2^{\cNewSbdProt+1} \leq |B| \leq \cSbdProt^2$. Then, we use \Cref{claim:semi_mergeable_to_mergeable} to translate $B$ into a mergeable set $B'$ of size $2 \leq |B'| \leq \cSbdProt^2$, and finally apply the $\merge$ operation on the corresponding set of root-children to decrease $\Delta(r')$.
    Let $\cUBTorsoOps$ denote the maximum difference between the root degree $\Delta(r')$ before and after the current operation and observe that $\cUBTorsoOps \le \OO_{\cSbdProt}(1)$.
    We repeat these steps at most $\cUBTorsoOps + 2\cSbdRootDeg$ times, or until the $\query$ operation returns that no semi-mergeable set $B$ of size $2^{\cNewSbdProt+2} \leq |B| \leq \cSbdProt^2$ exists.
    
    By \Cref{mergeable-children}, if no such $B$ exists, we know that $\Delta(r') < \cSbdRootDeg \cdot \optTwMod(G)$. Furthermore, we need to argue that $\cUBTorsoOps + 2\cSbdRootDeg$ applications of the $\merge$ operation are enough to keep $\Delta(r') < \cSbdRootDeg \cdot \optTwMod(G)$, even if the data structure from \Cref{lem:dynamic_local_search} could find more semi-mergeable sets. For this, we first argue that the value of $\optTwMod(G)$ changes by at most two during each update.

    \begin{claim}\label{claim:tw_mod_increase}
        Let $G$ be a graph, $v \in V(G)$ a vertex, and $e \in E(G)$ an edge. For every integer $\cTwMod$, we have
        \begin{itemize}
            \item $\optTwMod(G \setminus \{v\}) \leq \optTwMod(G) \leq \optTwMod(G \setminus \{v\}) + 1$ and
            \item $\optTwMod(G \setminus \{e\}) \leq \optTwMod(G) \leq \optTwMod(G \setminus \{e\}) + 2$.
        \end{itemize}
    \end{claim}
    \begin{claimproof}
        Recall that for a graph $G$, $\optTwMod(G)$ denotes the size of a smallest treewidth-$\cTwMod$-modulator of $G$, i.e., the size of a smallest set $X \subseteq V(G)$ with $\tw(G \setminus X) \leq \cTwMod$. It is clear from the definition that deleting a vertex or an edge cannot increase the size of the smallest treewidth modulator, so $\optTwMod(G \setminus \{v\}) \leq \optTwMod(G)$ and $\optTwMod(G \setminus \{e\}) \leq \optTwMod(G)$. Furthermore, if $X$ is a treewidth-$\cTwMod$-modulator of $G \setminus \{v\}$ for a vertex $v \in V(G)$ (or of $G \setminus \{e\}$ for an edge $e = uv \in E(G)$), then $X \cup \{v\}$ (or $X \cup \{u,v\}$) is a treewidth-$\cTwMod$-modulator of $G$, and thus $\optTwMod(G) \leq \optTwMod(G \setminus \{v\}) + 1$ (and $\optTwMod(G) \leq \optTwMod(G \setminus \{e\}) + 2$). 
    \end{claimproof}

    It follows that the target value $\cSbdRootDeg \cdot \optTwMod(G)$ for the degree $\Delta(r')$ changes by at most $2 \cSbdRootDeg$ during each update. Thus, after one update and before applying the $\merge$ operation, we have $\Delta(r') < \cSbdRootDeg \cdot (\optTwMod(G) + 2) + \cUBTorsoOps$, so we need to decrease $\Delta(r')$ by at most $\cUBTorsoOps + 2\cSbdRootDeg$. As we only apply the $\merge$ operation to sets $B'$ with $|B'| \geq 2$, each application decreases $\Delta(r')$ by at least one. Thus, $\cUBTorsoOps + 2\cSbdRootDeg \le \OO_{H,\cTwMod}(1)$ applications are enough to restore the invariant $\Delta(r') < \cSbdRootDeg \cdot \optTwMod(G)$.
    
    As argued before, the updates themselves can be performed in $\OO_{\cSbdProt}(\log |G|)$ amortized time, so the total amortized running time of each update operation is $\OO_{H,\cTwMod}(\log |G|)$.
    Then, the $\query$ operation from \Cref{lem:dynamic_local_search} runs in $\OO_{H,\cTwMod}(\log |G|)$ time.
    Lastly, each of the $\OO_{H,\cTwMod}(1)$ applications of $\merge(C_{B'})$, where $C_{B'}$ is the set of root-children corresponding to the mergeable set $B'$, runs in $\OO_{\cSbdProt}(1)$ amortized time.
    Thus, the total amortized running time of each update operation is $\OO_{H,\cTwMod}(\log |G|)$.
\end{proof}
\section{Dynamic kernelization}\label{sec:kernelization}

In this section, we finally show how our data structure can be used to maintain kernels for various problems on sparse graphs. Specifically, in \Cref{sec:protrusion_replacement}, we give a tree decomposition automaton for the protrusion replacement, based on the techniques introduced by~\cite{Bodlaender_Fomin_Lokshtanov_Penninkx_Saurabh_Thilikos_2016}. Then, in \Cref{sec:kernelization}, we combine this automaton with the data structure from \Cref{theo:main} to obtain a dynamic kernelization algorithm for all treewidth-bounding problems with FII on $\cmso$-definable topological-minor-free graph classes, thus proving \Cref{thm:kernelization}.

We note to the reader that in this section we will have constants that depend on a non-computable way from other constants, typically from the problem $\Pi$, the graph class $\mc G$, and the treewidth-modulator constant $\cTwMod$.
In this case, we do not use the $\OO$-notation with subscript, as it implies a computable dependence, but instead treat $\Pi$, $\mc G$, and $\cTwMod$ as true constants in the sense that numbers depending on them can be hidden by $\OO(1)$.

\subsection{The protrusion replacement automaton}\label{sec:protrusion_replacement}

Before constructing the protrusion replacement automaton, we first need to define the graphs by which the protrusions will be replaced. Specifically, for every boundaried graph, we define a smaller ``equivalent'' graph. For this, recall the definition of $\equiv_\Pi$, that is, for two boundaried graphs $G_1,G_2 \in \mc F$, we have $G_1 \equiv_\Pi G_2$ if and only if they have the same label set, i.e.,  $\labelSet(G_1) = \labelSet(G_2)$, and there exists a transposition constant $\Delta = \Delta_{\Pi}(G_1,G_2) \in \mathbb{Z}$ such that
\begin{equation}\label{eq:FII}
    \forall (F,k) \in \mc F \times \mathbb{Z}: (F \oplusu G_1,k) \in \Pi \Leftrightarrow (F \oplusu G_2, k + \Delta) \in \Pi.
\end{equation}

It is not clear if such a value $\Delta_{\Pi}(G_1,G_2)$ is uniquely defined, and it in fact is not unique in all cases.
However, we will fix a unique definition for $\Delta_{\Pi}(G_1,G_2)$ as follows.

We say a boundaried graph $G$ is \emph{monotone} if either $(F \oplusu G,k) \in \Pi$ for every $(F,k) \in \mc F \times \mathbb{Z}$ (\emph{positive monotone}) or $(F \oplusu G,k) \notin \Pi$ for every $(F,k) \in \mc F \times \mathbb{Z}$ (\emph{negative monotone}). We observe that the set of all positive monotone graphs and the set of all negative monotone graphs with the same label set form their own equivalence classes under $\equiv_\Pi$, respectively. For two boundaried graphs $G_1,G_2$ with the same label set that are both either positive or negative monotone, we set $\Delta_\Pi(G_1,G_2) \coloneq 0$. For all non-monotone graphs $G_1,G_2$, we now show that the transposition constant $\Delta_\Pi(G_1,G_2)$ is uniquely determined by \Cref{eq:FII}.

\begin{lemma}\label{lem:transposition_unique}
    Let $\Pi$ be an  parameterized graph problem and let $G_1, G_2 \in \mc F$ be two non-monotone boundaried graphs with $G_1 \equiv_\Pi G_2$. Then, there is exactly one constant $\Delta = \Delta_\Pi(G_1,G_2)$ such that for every $F \in \mc F$ and every $k \in \mathbb{Z}$ it holds that $(F \oplusu G_1,k) \in \Pi \Leftrightarrow (F \oplusu G_2, k + \Delta) \in \Pi$.
\end{lemma}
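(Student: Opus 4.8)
The plan is to prove uniqueness by exploiting the fact that both $G_1$ and $G_2$ are non-monotone, so each of them has at least one "separating" context for the problem. First I would recall the setup: since $G_1 \equiv_\Pi G_2$, at least one constant $\Delta$ with the stated property exists, so it suffices to prove uniqueness. Suppose for contradiction that $\Delta$ and $\Delta'$ are two such constants with $\Delta \neq \Delta'$; without loss of generality $\Delta < \Delta'$.

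The key step is to use the non-monotonicity of $G_1$ (or equivalently $G_2$) to produce a pair $(F,k)$ that the two constants would "decide" differently. Concretely, since $G_1$ is not positive monotone, there exists $(F_0,k_0)$ with $(F_0 \oplusu G_1, k_0) \notin \Pi$, and since $G_1$ is not negative monotone, there exists $(F_1,k_1)$ with $(F_1 \oplusu G_1, k_1) \in \Pi$. Here I would need to distinguish the two cases in the definition of a parameterized graph problem, namely whether $(G,k) \in \Pi$ for all $k<0$ or $(G,k)\notin\Pi$ for all $k<0$; suppose we are in the minimization-type case where $(G,k)\notin\Pi$ for $k<0$ (the other case is symmetric, swapping the roles of "in" and "not in"). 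In this case, for any fixed $F$ the set $\{k : (F\oplusu G_1,k)\in\Pi\}$ is upward closed (this is part of the definition of a parameterized graph problem as an optimization problem parameterized by solution size — monotonicity in $k$), and by non-monotonicity of $G_1$, there is some $F$ for which this set is neither all of $\mathbb{Z}$ nor empty. Fix such an $F$ and let $k^\star = \min\{k : (F\oplusu G_1,k)\in\Pi\}$, which is a well-defined finite integer. Then $(F\oplusu G_1, k^\star)\in\Pi$ but $(F\oplusu G_1, k^\star-1)\notin\Pi$.

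Now apply both constants at these two values. From $\Delta$ being valid at $(F,k^\star)$: $(F\oplusu G_2, k^\star+\Delta)\in\Pi$. From $\Delta'$ being valid at $(F,k^\star-1)$: since $(F\oplusu G_1,k^\star-1)\notin\Pi$, we get $(F\oplusu G_2, k^\star-1+\Delta')\notin\Pi$. But $\Delta' \ge \Delta+1$ gives $k^\star-1+\Delta' \ge k^\star+\Delta$, and since the "in $\Pi$" set for $F\oplusu G_2$ is upward closed, $(F\oplusu G_2,k^\star+\Delta)\in\Pi$ implies $(F\oplusu G_2, k^\star-1+\Delta')\in\Pi$ — a contradiction. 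Hence $\Delta=\Delta'$, proving uniqueness.

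The main obstacle I anticipate is being careful about which "direction" of monotonicity-in-$k$ the problem has (minimization vs.\ maximization), since the paper's definition of a parameterized graph problem fixes the behavior for $k<0$ but the upward/downward closedness of $\{k : (G,k)\in\Pi\}$ is what really drives the argument; I would state explicitly that we use this closedness (which holds because $\Pi$ is an optimization problem parameterized by solution size) and handle the maximization case by a symmetric argument using a maximal rather than minimal threshold. A secondary subtlety is making sure the witness context $F$ with a finite, nontrivial threshold genuinely exists: this is exactly the content of $G_1$ being neither positive nor negative monotone, so it follows directly from the hypothesis, but I would spell it out to avoid any gap. Everything else is a short chain of applications of the defining equivalence \eqref{eq:FII}.
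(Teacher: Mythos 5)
Your overall strategy is on the right track, but as written there is a genuine gap: the step ``since the `in $\Pi$' set for $F\oplusu G_2$ is upward closed, $(F\oplusu G_2,k^\star+\Delta)\in\Pi$ implies $(F\oplusu G_2, k^\star-1+\Delta')\in\Pi$'' relies on upward-closedness of $\{k : (G,k)\in\Pi\}$, which is \emph{not} part of the paper's formal definition of a parameterized graph problem. That definition only imposes the behavior for $k<0$ (all in, or all out); it does not require that $(G,k)\in\Pi$ implies $(G,k+1)\in\Pi$. You flag this yourself as the main subtlety and say you would ``state explicitly'' that you use it, but that would only turn an unstated assumption into a stated one rather than discharge it. When $\Delta'-\Delta=1$ your argument collapses to a direct contradiction and is fine, but for larger gaps the monotonicity is doing real work and the proof fails for an adversarial (non-monotone-in-$k$) $\Pi$ that still meets the paper's definition.

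The paper's proof sidesteps this by using minimality on both sides. Fixing the context $F$ (obtained from non-monotonicity of $G_1$ exactly as you do), let $k_1^*$ and $k_2^*$ be the smallest integers with $(F\oplusu G_1,k_1^*)\in\Pi$ and $(F\oplusu G_2,k_2^*)\in\Pi$; both exist by the $k<0$ condition once any accepting $k$ exists, which $G_1\equiv_\Pi G_2$ guarantees for $G_2$ too. Setting $\Delta=k_2^*-k_1^*$: if $\Delta'<\Delta$ then $(F\oplusu G_1,k_1^*)\in\Pi$ forces $(F\oplusu G_2,k_1^*+\Delta')\in\Pi$ with $k_1^*+\Delta'<k_2^*$, contradicting minimality of $k_2^*$; if $\Delta'>\Delta$ then $(F\oplusu G_2,k_2^*)\in\Pi$ forces $(F\oplusu G_1,k_2^*-\Delta')\in\Pi$ with $k_2^*-\Delta'<k_1^*$, contradicting minimality of $k_1^*$. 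No monotonicity in $k$ is ever needed. Your own argument can be repaired in the same spirit: instead of stepping to $k^\star-1$, step to $k^\star-(\Delta'-\Delta)<k^\star$, which is not accepted for $G_1$ by minimality of $k^\star$; applying $\Delta'$ then yields $(F\oplusu G_2,k^\star+\Delta)\notin\Pi$, directly contradicting $(F\oplusu G_2,k^\star+\Delta)\in\Pi$ obtained by applying $\Delta$.
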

\begin{proof}
    Without loss of generality (by flipping the $\in \Pi$ relation if needed), we assume that $\Pi$ is a parameterized graph problem with $(G,k) \notin \Pi$ for every graph $G$ and $k < 0$.
    Let us fix a boundaried graph $F \in \mc F$ such that there is an integer $k$ with $(F \oplusu G_1,k) \in \Pi$.  This graph $F$ exists since $G_1$ is non-monotone. Note that $G_1 \equiv_\Pi G_2$ implies that there is an integer $k'$ such that $(F \oplusu G_2,k') \in \Pi$.
    For $i \in [2]$, let $k_i^*$ denote the smallest integer such that $(F \oplusu G_i, k_i^*) \in \Pi$. Since $(G,k) \notin \Pi$ for $k < 0$, these integer $k_i^*$ exist and are non-negative. We now show that $\Delta \coloneq k_2^* - k_1^*$ is the only integer that satisfies \Cref{eq:FII}.

    Suppose for the sake of contradiction, there is an integer $\Delta' \neq \Delta$ that satisfies \Cref{eq:FII}. That is, for every $k \in \mathbb{Z}$, we have $(F \oplusu G_1,k) \in \Pi \Leftrightarrow (F \oplusu G_2, k + \Delta') \in \Pi$, or equivalently, for every $k \in \mathbb{Z}$, we have $(F \oplusu G_1, k - \Delta') \in \Pi \Leftrightarrow (F \oplusu G_2, k) \in \Pi$. First, suppose $\Delta' < \Delta$. Since $(F \oplusu G_1,k_1^*) \in \Pi$, it follows that $(F \oplusu G_2, k_1^* + \Delta') \in \Pi$, where $k_1^* + \Delta' < k_1^* + \Delta = k_2^*$, which contradicts $k_2^*$ being the smallest integer $k$ with $(F \oplusu G_2,k) \in \Pi$. Now, suppose $\Delta' > \Delta$. Then, $(F \oplusu G_2,k_2^*) \in \Pi$ implies that $(F \oplusu G_1, k_2^* - \Delta')$, which is again a contradiction to the minimality of $k_1^*$ since $k_2^* - \Delta' < k_2^* - \Delta = k_1^*$.

    Thus, there is no $\Delta' \neq \Delta$ that satisfies \Cref{eq:FII}. Since $G_1 \equiv_\Pi G_2$, there exists an integer that satisfies \Cref{eq:FII}, so this must be $\Delta$.
\end{proof}

We further observe that for any two boundaried graphs $G_1, G_2 \in \mc F$ with $G_1 \equiv_\Pi G_2$, we have $\Delta_\Pi(G_1,G_2) = - \Delta_\Pi(G_2,G_1)$. Moreover, we show that the function $\Delta_\Pi$ satisfies the following \emph{transitivity} condition.

\begin{lemma}\label{lem:transposition_transitive}
    Let $G_1,G_2,G_3 \in \mc F$ be three boundaried graphs that are equivalent under $\equiv_\Pi$. Then, $\Delta_\Pi(G_1,G_3) = \Delta_\Pi(G_1,G_2) + \Delta_\Pi(G_2,G_3)$.
\end{lemma}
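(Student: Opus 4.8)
The plan is to prove the transitivity of $\Delta_\Pi$ by a straightforward case analysis, splitting on whether the three boundaried graphs $G_1, G_2, G_3$ are monotone or non-monotone, and then reducing the non-monotone case to an application of \Cref{lem:transposition_unique}.

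First I would handle the monotone cases. Observe that if one of $G_1, G_2, G_3$ is monotone, then since they are all equivalent under $\equiv_\Pi$ and each equivalence class of monotone graphs with a fixed label set consists entirely of positive monotone graphs or entirely of negative monotone graphs, all three must be monotone of the same sign. By the definition of $\Delta_\Pi$ on monotone graphs, we get $\Delta_\Pi(G_1, G_2) = \Delta_\Pi(G_2, G_3) = \Delta_\Pi(G_1, G_3) = 0$, so the identity $\Delta_\Pi(G_1, G_3) = \Delta_\Pi(G_1, G_2) + \Delta_\Pi(G_2, G_3)$ holds trivially. This disposes of every case except the one where all three graphs are non-monotone.

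For the non-monotone case, write $\Delta_{12} = \Delta_\Pi(G_1, G_2)$, $\Delta_{23} = \Delta_\Pi(G_2, G_3)$, and set $\Delta = \Delta_{12} + \Delta_{23}$. The goal is to show $\Delta = \Delta_\Pi(G_1, G_3)$; by \Cref{lem:transposition_unique} (applicable since $G_1$ and $G_3$ are non-monotone and $G_1 \equiv_\Pi G_3$), it suffices to verify that $\Delta$ satisfies the defining condition \Cref{eq:FII} for the pair $(G_1, G_3)$, i.e. that for every $F \in \mc F$ and every $k \in \ZZ$ we have $(F \oplusu G_1, k) \in \Pi \Leftrightarrow (F \oplusu G_3, k + \Delta) \in \Pi$. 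This is immediate by chaining: $(F \oplusu G_1, k) \in \Pi$ iff $(F \oplusu G_2, k + \Delta_{12}) \in \Pi$ (by the definition of $\Delta_{12}$ applied to the same $F$), and $(F \oplusu G_2, k + \Delta_{12}) \in \Pi$ iff $(F \oplusu G_3, k + \Delta_{12} + \Delta_{23}) \in \Pi$ (by the definition of $\Delta_{23}$ applied with the parameter $k + \Delta_{12}$). Since $k + \Delta_{12} + \Delta_{23} = k + \Delta$, we conclude that $\Delta$ works, and uniqueness from \Cref{lem:transposition_unique} forces $\Delta_\Pi(G_1, G_3) = \Delta$.

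I do not expect any genuine obstacle here; the only point requiring a little care is making sure the monotone/non-monotone dichotomy is handled cleanly, in particular that equivalence under $\equiv_\Pi$ cannot mix a monotone graph with a non-monotone one (this follows from the observation, already noted in the excerpt, that the positive and negative monotone graphs with a given label set each form their own $\equiv_\Pi$-class). Once that is in place, the chaining argument is purely formal.
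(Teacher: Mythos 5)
Your proof is correct and follows essentially the same route as the paper: dispatch the monotone case as trivial, then for non-monotone graphs chain \Cref{eq:FII} through $G_2$ and invoke the uniqueness guaranteed by \Cref{lem:transposition_unique}. You are slightly more explicit than the paper in justifying that $\equiv_\Pi$ cannot mix monotone and non-monotone graphs, which is a useful clarification but not a different argument.
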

\begin{proof}
    If $G_1$, $G_2$, and $G_3$ are monotone, this lemma trivially holds, so assume that they are non-monotone.
    By \Cref{eq:FII}, for every boundaried graph $F \in \mc F$ and every integer $k \in \mathbb{Z}$, we have the following:
    \[(F \oplusu G_1,k) \in \Pi \Leftrightarrow (F \oplusu G_3,k + \Delta_\Pi(G_1,G_3)) \in \Pi,\]
    and further
    \begin{align*}
    (F \oplusu G_1,k) \in \Pi &\Leftrightarrow (F \oplusu G_2, k + \Delta_\Pi(G_1,G_2)) \in \Pi\\
    &\Leftrightarrow (F \oplusu G_3, k + \Delta_\Pi(G_1,G_2) + \Delta_\Pi(G_2,G_3)) \in \Pi
    \end{align*}
    Due to the uniqueness of $\Delta_\Pi$, as shown in \Cref{lem:transposition_unique}, it follows that $\Delta_\Pi(G_1,G_3) = \Delta_\Pi(G_1,G_2) + \Delta_\Pi(G_2,G_3)$.
\end{proof}

For a graph class $\mc G$ and two boundaried graphs $G_1,G_2 \in \mc F$, we further define $G_1 \equiv_{\mc G} G_2$ if and only if $\labelSet(G_1) = \labelSet(G_2)$ and for every $F \in \mc F$, $F \oplusu G_1 \in \mc G$ if and only if $F \oplusu G_2 \in \mc G$.
It is easy to see that $\equiv_{\mc G}$ is an equivalence relation.
For $\cmso$-definable graph classes, it was proven in~\cite{Bodlaender_Fomin_Lokshtanov_Penninkx_Saurabh_Thilikos_2016} (based on~\cite{Courcelle_1990,DBLP:journals/jal/ArnborgLS91,DBLP:journals/algorithmica/BoriePT92,abrahamson1993finite,Downey_Fellows_1999}) that it has a finite number of equivalence classes.

\begin{lemma}[{\cite[Lemma~3.2]{Bodlaender_Fomin_Lokshtanov_Penninkx_Saurabh_Thilikos_2016}}]\label{lem:cmso_finite_equivalence_classes}
    Let $\mc G$ be a $\cmso$-definable graph class and let $I \subseteq \ZI$ be a finite set. Then, the number of equivalence classes of $\equiv_{\mc G}$ that are subsets of $\mc F_I$ is finite.
\end{lemma}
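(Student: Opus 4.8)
The plan is to prove this by a Myhill--Nerode argument built on the Feferman--Vaught composition theorem for $\cmso$; since the statement is classical (see~\cite{Courcelle_1990,DBLP:journals/jal/ArnborgLS91,DBLP:journals/algorithmica/BoriePT92,abrahamson1993finite,Downey_Fellows_1999}), I will only sketch it. Fix a $\cmso$-sentence $\phi$ with $G \models \phi$ if and only if $G \in \mc G$, let $q$ be the quantifier rank of $\phi$, and let $m$ be a common upper bound on the moduli of the modular-counting atoms $\mathsf{card}_{a,b}$ appearing in $\phi$. View a boundaried graph $G$ with $\labelSet(G) = I$ as a relational structure over the finite vocabulary consisting of the vertex--edge incidence relation together with one unary predicate $P_i$ for each $i \in I$, where $P_i$ marks the boundary vertex carrying label $i$; an arbitrary boundaried graph $F$ is viewed over this same vocabulary, with $P_i$ marking the vertex of $F$ labelled $i$ when one exists and being empty otherwise. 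Define the \emph{type} $\tau(G)$ of $G$ to be the set of $\cmso$-sentences over this vocabulary, of quantifier rank at most $q$ and moduli at most $m$, that hold in $G$, taken modulo logical equivalence. Over a fixed finite vocabulary there are, up to equivalence, only finitely many such sentences, hence only finitely many possible types; write $N = N(q,m,I)$ for their number.

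Next, I would apply the composition theorem to the gluing operation $\oplusu$. Since $\oplusu$ merges only boundary vertices sharing a label, it is a sum-like operation: a disjoint union of the two structures followed by a quotient that, for each $i \in I$ lying in the $P_i$-domain of both sides, identifies the two $P_i$-vertices. In particular the edge set of $F \oplusu G$ is the disjoint union of the edge sets of $F$ and $G$, so the argument is insensitive to whether quantification ranges over vertex sets or edge sets, and the classical Feferman--Vaught composition results apply. They yield that $\tau(F \oplusu G)$ is determined by $\tau(F)$ and $\tau(G)$; in particular whether $F \oplusu G \models \phi$ depends only on $\tau(F)$ and $\tau(G)$.

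To conclude, suppose $G_1, G_2 \in \mc F_I$ satisfy $\tau(G_1) = \tau(G_2)$. Then for every $F \in \mc F$ we have $F \oplusu G_1 \models \phi$ if and only if $F \oplusu G_2 \models \phi$, that is, $F \oplusu G_1 \in \mc G$ if and only if $F \oplusu G_2 \in \mc G$; since also $\labelSet(G_1) = \labelSet(G_2) = I$, this is exactly $G_1 \equiv_{\mc G} G_2$. Hence $G \mapsto \tau(G)$ refines $\equiv_{\mc G}$ on $\mc F_I$, so the number of $\equiv_{\mc G}$-classes contained in $\mc F_I$ is at most $N$, which is finite.

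The step carrying the most weight is the composition theorem: one must express $\oplusu$ faithfully as a composition of operations that the classical machinery handles (disjoint union, forgetting the predicates $P_i$ for labels present on only one side, and the quotient identifying the remaining labelled pairs) and check that the modular-counting atoms cause no trouble, which they do not, since disjoint union adds set sizes (hence respects congruences modulo $m$) while the quotient changes sizes by a bounded amount that is itself recorded in the type. As only the qualitative finiteness statement is needed, it suffices to invoke these results as black boxes.
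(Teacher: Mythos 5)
The paper states this lemma as an imported result, citing~\cite{Bodlaender_Fomin_Lokshtanov_Penninkx_Saurabh_Thilikos_2016} and a chain of earlier references, and does not supply its own proof, so there is no in-paper argument to compare against. Your sketch is a correct rendering of the standard Feferman--Vaught/type-counting argument, and it is in fact essentially the same argument used in the cited sources (Courcelle-style composition combined with a Myhill--Nerode refinement): define the finite $(q,m)$-theory of a boundaried structure as a type, show the type of $F \oplusu G$ is a function of the types of $F$ and $G$ via the composition theorem, and conclude that the type map refines $\equiv_{\mc G}$ on $\mc F_I$ with finitely many values.

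Two small technical remarks. First, your parenthetical claim that ``the edge set of $F \oplusu G$ is the disjoint union of the edge sets of $F$ and $G$'' is not literally correct under the paper's conventions: graphs here are simple, so if $F$ and $G$ both carry an edge between boundary vertices labelled $i$ and $j$, the two copies collapse to one edge in $F \oplusu G$. This does not damage the argument, since the identification (and the collapsing of resulting parallel edges) is still a quantifier-free/MSO-definable operation on the disjoint union that depends only on which labels in $I$ are present on each side, but the claim as phrased overstates the innocence of the gluing step. Second, it is worth being explicit that the $F$-side type must be taken over the same finite vocabulary with $P_i$ for $i \in I$ only, so that labels of $F$ outside $I$ are treated as unmarked vertices; you do say this, but it is precisely the point that makes the type of $F$ a single object ranging over a finite set rather than one depending on $\labelSet(F) \subseteq \ZI$, and the finiteness of the whole argument hinges on it. With those caveats noted, the proof is sound as a sketch of the classical result.
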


For a parameterized graph problem $\Pi$, a graph class $\mc G$, and two boundaried graphs $G_1,G_2 \in \mc F$, we say $G_1 \equiv_{\Pi,\mc G} G_2$ if and only if $G_1 \equiv_\Pi G_2$ and $G_1 \equiv_{\mc G} G_2$. We observe that $\equiv_{\Pi,\mc G}$ is again an equivalence relation and the number of equivalence classes of $\equiv_{\Pi,\mc G}$ is at most the number of equivalence classes of $\equiv_{\Pi}$ times the number of equivalence classes of $\equiv_{\mc G}$.
Thus, if $\Pi$ has FII and $\mc G$ is $\cmso$-definable, the number of equivalence classes of $\equiv_{\Pi,\mc G}$ that are subsets of $\mc F_I$ for a finite $I \subseteq \ZI$ is finite.

Now, for every equivalence class of $\equiv_{\Pi,\mc G}$, we want to fix a ``good'' representative, with which we can replace all subgraphs that are in the same equivalence class. 
Consider a equivalence class $\mc C$ of $\equiv_{\Pi,\mc G}$. A graph $G \in \mc C$ is called a \emph{progressive representative} of $\mc C$ if for any graph $G' \in \mc C$ it holds that $\Delta_{\Pi}(G', G) \leq 0$, or equivalently, $\Delta_\Pi(G,G') \geq 0$. For the equivalence relation $\equiv_\Pi$ it is known that such a progressive representative always exists for every equivalence class, regardless of whether $\Pi$ has FII~\cite{Bodlaender_Fomin_Lokshtanov_Penninkx_Saurabh_Thilikos_2016,kernelization-book}.

\begin{lemma}[{\cite[Lemma 16.11]{kernelization-book}}]\label{lem:existence_progressive_representatives}
    Let $\Pi$ be a parameterized graph problem. Then, each equivalence class of $\equiv_\Pi$ has a progressive representative.
\end{lemma}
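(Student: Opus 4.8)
The plan is to use a minimality/extremality argument over each equivalence class. Fix an equivalence class $\mc C$ of $\equiv_\Pi$. The goal is to produce a representative $G \in \mc C$ such that $\Delta_\Pi(G', G) \le 0$ for every $G' \in \mc C$, equivalently $\Delta_\Pi(G, G') \ge 0$ for every $G' \in \mc C$. First I would note that by \Cref{lem:transposition_transitive} the function $G' \mapsto \Delta_\Pi(G_0, G')$ behaves additively along the class relative to any fixed base point $G_0 \in \mc C$: for any $G_1, G_2 \in \mc C$ we have $\Delta_\Pi(G_1, G_2) = \Delta_\Pi(G_0, G_2) - \Delta_\Pi(G_0, G_1)$, using $\Delta_\Pi(G_1,G_0) = -\Delta_\Pi(G_0,G_1)$. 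Hence ``$G$ is progressive'' is equivalent to ``$\Delta_\Pi(G_0, G)$ is minimal among all values $\{\Delta_\Pi(G_0, G') : G' \in \mc C\}$''. So the whole problem reduces to showing this set of integers attains its minimum.

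The monotone case is immediate: if the class consists of monotone graphs, then $\Delta_\Pi$ is identically $0$ on the class by definition, so every element is a progressive representative. So assume the graphs in $\mc C$ are non-monotone, in which case \Cref{lem:transposition_unique} guarantees $\Delta_\Pi$ is well-defined on $\mc C$ and the additive relation above is unambiguous. Now I would argue that the set $S = \{\Delta_\Pi(G_0, G') : G' \in \mc C\} \subseteq \ZZ$ is bounded below. Concretely, following the argument in the proof of \Cref{lem:transposition_unique}: normalize $\Pi$ (flipping the relation if necessary) so that $(H,k) \notin \Pi$ whenever $k < 0$, pick a boundaried graph $F$ and the smallest integer $k_0^*$ with $(F \oplusu G_0, k_0^*) \in \Pi$ — this exists by non-monotonicity and the normalization, and $k_0^* \ge 0$. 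For any $G' \in \mc C$, applying \Cref{eq:FII} with this $F$ gives that the smallest $k$ with $(F \oplusu G', k) \in \Pi$ equals $k_0^* + \Delta_\Pi(G_0, G')$; since this quantity is a nonnegative integer, $\Delta_\Pi(G_0, G') \ge -k_0^*$. Thus $S$ is bounded below by $-k_0^*$, and being a nonempty set of integers bounded below, it has a minimum, attained by some $G \in \mc C$.

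It remains to verify that this minimizer $G$ is indeed progressive, i.e.\ $\Delta_\Pi(G, G') \ge 0$ for all $G' \in \mc C$. By transitivity, $\Delta_\Pi(G, G') = \Delta_\Pi(G_0, G') - \Delta_\Pi(G_0, G) \ge 0$ since $\Delta_\Pi(G_0, G)$ is the minimum of $S$. This completes the argument. I do not expect any genuine obstacle here — the lemma is essentially a packaging of the well-definedness and transitivity already established in \Cref{lem:transposition_unique,lem:transposition_transitive}; the only point requiring a little care is handling the monotone classes separately (where $\Delta_\Pi \equiv 0$ makes everything trivial) and making sure the lower-bound argument is run with respect to a single fixed base point $G_0$ so that transitivity can be invoked cleanly. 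If one prefers, one can also cite \cite[Lemma 16.11]{kernelization-book} directly, but the self-contained proof above is short enough to include.
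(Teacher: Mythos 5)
Your proof is correct and follows essentially the same approach as the paper's: normalize $\Pi$ so that parameters are nonnegative, dispose of monotone classes trivially, and then use the fact that the minimal parameter value for $F \oplusu G'$ is a nonnegative integer to bound $\Delta_\Pi$ from below and extract a minimizer. The only cosmetic difference is that you fix a base point $G_0$ and a single $F$ and minimize $\Delta_\Pi(G_0,\cdot)$, whereas the paper minimizes $k_0$ over all triples $(G_0, F_0, k_0)$ — these are the same argument packaged differently.
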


We now have to prove the same thing for the equivalence relation $\equiv_{\Pi,\mc G}$. Since the proof is almost the same as for~\cite[Lemma 16.11]{kernelization-book}, we move it to \Cref{sec:missingproofs}.

\begin{restatable}{lemma}{progRepr}
    Let $\mc G$ be a graph class and let $\Pi$ be a parameterized graph problem. Then, each equivalence class of $\equiv_{\Pi,\mc G}$ has a progressive representative.
\end{restatable}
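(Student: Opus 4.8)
The plan is to mimic the proof of \cite[Lemma 16.11]{kernelization-book} (that is, \Cref{lem:existence_progressive_representatives}) but carried out inside a fixed equivalence class $\mc C$ of $\equiv_{\Pi,\mc G}$ rather than of $\equiv_{\Pi}$. First I would fix $\mc C$ and a label set $I$ with $\mc C \subseteq \mc F_I$. I would handle the monotone case separately: if the graphs in $\mc C$ are (positive or negative) monotone, then $\Delta_\Pi(G_1,G_2)=0$ for all $G_1,G_2 \in \mc C$, so every element of $\mc C$ is a progressive representative and we are done. So assume from now on that the graphs in $\mc C$ are non-monotone, and recall that by \Cref{lem:transposition_unique} the transposition constant $\Delta_\Pi(G_1,G_2)$ is then uniquely determined.

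Next I would set up the key quantity. For $G, G' \in \mc C$ define $\Delta_\Pi(G,G') \in \ZZ$; by \Cref{lem:transposition_transitive} this is ``additive'' along the class, i.e.\ $\Delta_\Pi(G_1,G_3) = \Delta_\Pi(G_1,G_2)+\Delta_\Pi(G_2,G_3)$, and $\Delta_\Pi(G,G') = -\Delta_\Pi(G',G)$. Fix an arbitrary base graph $G_0 \in \mc C$ and consider the function $f(G) = \Delta_\Pi(G_0, G)$ on $\mc C$. A graph $G^\star \in \mc C$ is a progressive representative iff $\Delta_\Pi(G',G^\star) \le 0$ for all $G' \in \mc C$, i.e.\ iff $f(G^\star) \ge f(G')$ for all $G' \in \mc C$ (using additivity: $\Delta_\Pi(G',G^\star) = f(G^\star)-f(G')$). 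So it suffices to show that $f$ attains a maximum on $\mc C$; equivalently, that $f$ is bounded from above on $\mc C$. The cleanest way is the standard one: among all graphs in $\mc C$, pick one, say $R$, of minimum number of vertices (or minimum size $|R|$), breaking ties arbitrarily. I claim $R$ is progressive. Suppose not; then there is $G' \in \mc C$ with $\Delta_\Pi(G',R) > 0$, i.e.\ $\Delta_\Pi(R,G') < 0$. Now iterate: replacing $R$ by $G'$ we get $\Delta_\Pi(G',R) > 0$; but we can instead form a new candidate by gluing-and-replacing in a way that strictly decreases a well-founded measure, contradicting minimality — this is exactly the argument of \cite[Lemma 16.11]{kernelization-book}, where one shows that from a non-progressive representative one can build a smaller member of the same class by substituting copies of a ``more negative'' graph into itself and using that $\Pi$'s answer only shifts by the (negative) transposition constant, eventually forcing the parameter below $0$ on a yes-instance, a contradiction. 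The only new ingredient is that this substitution must stay inside $\mc C$, and this is guaranteed because $\equiv_{\Pi,\mc G}$ refines $\equiv_\Pi$ and $\equiv_{\mc G}$: gluing the same boundaried graph $F$ and replacing $\equiv_{\mc G}$-equivalent pieces keeps membership in $\mc G$ (by definition of $\equiv_{\mc G}$), and replacing $\equiv_\Pi$-equivalent pieces keeps $\equiv_\Pi$-equivalence; and $\Delta_\Pi$ behaves additively by \Cref{lem:transposition_transitive}, so the computation of the transposition constants goes through verbatim.

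Concretely, the steps in order: (1) reduce to the non-monotone case; (2) fix $I$ and note $\mc C \subseteq \mc F_I$; (3) take $R \in \mc C$ of minimum number of vertices; (4) assume toward a contradiction that $R$ is not progressive, obtaining $G' \in \mc C$ with $\Delta_\Pi(R,G') < 0$, and use the closure of $\mc C$ under the $\oplusu$/replacement operations (from $\equiv_\Pi$ and $\equiv_{\mc G}$) together with transitivity of $\Delta_\Pi$ to construct, for large enough $t$, a member of $\mc C$ obtained by $t$-fold self-substitution whose behaviour on some yes-instance $(F \oplusu R, k)$ shifts by $t \cdot \Delta_\Pi(R,G') \to -\infty$, contradicting that $\Pi$ has no yes-instances with negative parameter; (5) conclude that $R$ is progressive. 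The main obstacle — and really the only place where care is needed — is step (4): one has to be precise that the self-substitution is legal, i.e.\ that after replacing a boundaried subgraph of $R$ (isomorphic to $R$ on its boundary) by $G'$ the result still lies in $\mc C$ and the transposition constant adds up as claimed. Since the boundaried subgraph being replaced and its replacement are $\equiv_{\Pi,\mc G}$-equivalent, membership in $\mc G$ and in the $\equiv_\Pi$-class is preserved, and \Cref{lem:transposition_transitive} handles the arithmetic; beyond that the argument is identical to \cite[Lemma 16.11]{kernelization-book}, which is why the proof is deferred to \Cref{sec:missingproofs}.
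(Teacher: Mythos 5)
Your approach diverges from the paper's and contains a genuine gap. The paper (like the textbook argument in~\cite[Lemma 16.11]{kernelization-book}) does not pick a \emph{minimum-size} graph in $\mc C$; it picks a triple $(G_0, F_0, k_0)$ with $G_0 \in \mc C$, $(F_0 \oplusu G_0, k_0) \in \Pi$, and $k_0$ minimized. Then $G_0$ is immediately progressive: if some $G \in \mc C$ had $\Delta_\Pi(G_0, G) < 0$, then $(F_0 \oplusu G, k_0 + \Delta_\Pi(G_0,G)) \in \Pi$ would be a yes-instance with parameter strictly smaller than $k_0$, contradicting minimality (or the convention that no instance with negative parameter is a yes-instance). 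No iteration and no self-substitution are needed.

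Your step (3) — taking $R \in \mc C$ of minimum number of vertices — and step (4) — a ``self-substitution'' argument — do not fit together. First, minimality of $|R|$ gives no control on the quantity you actually need to bound, namely $f(R) = \Delta_\Pi(G_0, R)$; two graphs of the same size can have very different $f$-values, and the smallest graph in $\mc C$ need not maximize $f$. Second, the self-substitution step is not well-defined: there is no reason for $R$ to contain a boundaried subgraph $\equiv_{\Pi,\mc G}$-equivalent to $R$ that you could replace by $G'$, so ``substituting copies of a more negative graph into itself'' has no concrete meaning here. What you do get right is the framing (reduce to the non-monotone case; observe via \Cref{lem:transposition_transitive} that $R$ is progressive iff $f(R) = \Delta_\Pi(G_0, R)$ is maximal; and note that $\equiv_{\Pi, \mc G}$ refines both $\equiv_\Pi$ and $\equiv_{\mc G}$, so all the relevant closure properties carry over). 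But the well-founded measure that makes the contradiction work is the parameter $k$ of a witnessing yes-instance, not the size of the graph. If you replace step (3) with ``choose $(G_0,F_0,k_0)$ minimizing $k_0 \ge 0$ over yes-instances with $G_0 \in \mc C$'' and drop the self-substitution entirely, the one-line contradiction above finishes the proof.
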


We remark that there might be more than one progressive representative for each equivalence class, however, we can assume that we fix one of them.
For this, recall that by definition, all graphs within the same equivalence class of $\equiv_\Pi$ (and thus, $\equiv_{\Pi,\mc G}$) have the same label set. Thus, every equivalence class of $\equiv_\Pi$ is a subset of $\mc F_I$ for some finite $I \subseteq \ZI$, where $\mc F_I$ is the class of all boundaried graphs with label set $I$. So, for a fixed $I \subseteq \ZI$, we can find one progressive representative for each equivalence class that is a subset of $\mc F_I$.

That is, for each $I \subseteq \ZI$, we define $\mc S_I$ to be a set containing exactly one smallest (by number of vertices) progressive representative of each equivalence class of $\equiv_{\Pi,\mc G}$ that is a subset of $\mc F_I$. We also define $S_{\subseteq I} \coloneq \bigcup_{I' \subseteq I} \mc S_{I'}$ and $c_{\Pi,\mc G}(t) \coloneq \max\{|V(G)| \colon G \in \mc S_{\subseteq [t]}\}$.
Also, let $\varphi_{\Pi,\mc G,I} \colon \mc F_{\subseteq I} \to \mc S_{\subseteq I}$ be the function that maps every boundaried graph $G \in \mc F_{\subseteq I}$ to the unique smallest progressive representative $\varphi_{\Pi,\mc G,I}(G) \in \mc S_{\subseteq I}$ of $G$'s equivalence class under $\equiv_{\Pi,\mc G}$.

Note that the value of $c_{\Pi,\mc G}(t)$ only depends on $\Pi$, $\mc G$, and $t$, and that $c_{\Pi,\mc G}(t) \geq t$. Recall that, by definition of $\varphi_{\Pi,\mc G,[t]}$ and $\Delta_\Pi$, it holds for every graph $G \in \mc F_{\subseteq[t]}$ and every pair $(F,k) \in \mc F \times \mathbb{Z}$ that $(F \oplusu G,k) \in \Pi$ if and only if $(F \oplusu \varphi_{\Pi,\mc G,[t]}(G), k + \Delta_\Pi(G,\varphi_{\Pi,\mc G,[t]}(G)) \in \Pi$.

Before constructing the protrusion replacement automaton, we need to take care of one last technicality. Note that the replacement function $\varphi_{\Pi,\mc G,[t]}$ replaces boundaried graphs that, by definition, have a fixed labeling. In particular, the replacement is different for different boundaried graphs with the same underlying non-boundaried graph and boundary set, but different labelings. This is a problem since we cannot maintain a global labeling for our dynamic protrusion decomposition, but need to recompute labelings on the fly. To do this, we now show that first changing the labeling and then replacing is equivalent to first replacing and then changing the labeling.

\begin{lemma}\label{lem:permute_labels}
    Let $\Pi$ be a parameterized graph problem and let $t$ be an integer.
    Let $X = (G_X,B,\Lambda)$ be a $t$-boundaried graph, and let $\Lambda' \colon B \to [t]$ be another labeling. Let $X' = (G_X,B,\Lambda')$, $Y \coloneq \varphi_{\Pi,\mc G,[t]}(X) = (G_X',B,\Lambda)$, and $Y' = (G_X',B,\Lambda')$. 
    Then, we have that $X' \equiv_\Pi Y'$ and $\Delta_\Pi(X',Y') = \Delta_\Pi(X,Y)$.
\end{lemma}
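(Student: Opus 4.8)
The plan is to reduce the statement about the relabeled graphs $X'$ and $Y'$ to the already-known equivalence $X \equiv_\Pi Y$ via a change-of-variables on the "environment" graph $F$. The key observation is that gluing is governed entirely by labels: for a permutation-type relabeling, the operation $F \oplusu X'$ can be rewritten as $F_\pi \oplusu X$ for a suitable relabeled copy $F_\pi$ of $F$, where $F_\pi$ is obtained from $F$ by renaming its boundary labels according to the bijection $\Lambda \circ \Lambda'^{-1}$ (restricted to the relevant label sets). The same relabeling $F \mapsto F_\pi$ simultaneously turns $F \oplusu Y'$ into $F_\pi \oplusu Y$, since $X'$ and $Y'$ share the boundary $B$ and the labeling $\Lambda'$, and $X$, $Y$ share $B$ and $\Lambda$. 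Crucially, $F \mapsto F_\pi$ is a bijection on $\mc F$ (its inverse uses the inverse permutation), and it produces the \emph{same} non-boundaried graph: $F_\pi \oplusu X = F \oplusu X'$ as (unlabeled) graphs. So membership in $\Pi$ is literally the same.

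First I would set up the relabeling precisely. Let $I = \labelSet(X) = \Lambda(B)$ and $I' = \labelSet(X') = \Lambda'(B)$; note $\Lambda$ and $\Lambda'$ are both injective on $B$, so $\sigma \coloneqq \Lambda \circ (\Lambda')^{-1} \colon I' \to I$ is a bijection. For a boundaried graph $F$ with label set contained in the relevant range, define $F_\sigma$ to be $F$ with the label of each boundary vertex $v$ changed from $\ell$ to $\sigma(\ell)$ if $\ell \in I'$, and left unchanged otherwise — in fact it is cleanest to only ever glue against $F$ whose labels meeting $X'$ lie in $I'$, and handle the rest of the label set as a harmless passive part, exactly as in the definition of $\oplusu$. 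Then I would check the two identities $F \oplusu X' = F_\sigma \oplusu X$ and $F \oplusu Y' = F_\sigma \oplusu Y$ as equalities of graphs: in both cases a vertex of $F$ and a boundary vertex of $X'$ (resp. $X$ after relabeling) are identified iff they carry matching labels, and by construction $v$ has label $\ell$ in $X'$ iff it has label $\sigma(\ell)$ in $X$, matched against $F$ with label $\ell$ iff $F_\sigma$ with label $\sigma(\ell)$. Since $Y'$ (resp. $Y$) has the same boundary and labeling as $X'$ (resp. $X$), the identity transfers verbatim.

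With these identities in hand, the conclusion is immediate. For any $(F,k) \in \mc F \times \mathbb{Z}$, using $X \equiv_\Pi Y$ with transposition constant $\Delta \coloneqq \Delta_\Pi(X,Y)$ applied to the environment $F_\sigma$:
\[
(F \oplusu X', k) \in \Pi \iff (F_\sigma \oplusu X, k) \in \Pi \iff (F_\sigma \oplusu Y, k+\Delta) \in \Pi \iff (F \oplusu Y', k+\Delta) \in \Pi.
\]
This shows $X' \equiv_\Pi Y'$ with $\Delta_\Pi(X',Y') = \Delta$ in the case where $X$ (equivalently $Y$) is non-monotone, by the uniqueness of the transposition constant (Lemma~\ref{lem:transposition_unique}). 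In the monotone case one instead notes that $X$ monotone iff $X'$ monotone (again because the above graph identities show the family $\{F \oplusu X' : F \in \mc F\}$ equals $\{F_\sigma \oplusu X : F \in \mc F\} = \{F' \oplusu X : F' \in \mc F\}$ as $F \mapsto F_\sigma$ is a bijection), and likewise for $Y$ versus $Y'$; moreover $X'$ and $Y'$ have the same monotonicity sign because $X, Y$ do (they are $\equiv_\Pi$-equivalent), so both $\Delta_\Pi(X,Y)$ and $\Delta_\Pi(X',Y')$ are $0$ by definition.

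The main obstacle — really the only subtlety — is bookkeeping around the labels: the operation $\oplusu$ only identifies boundary vertices of its two arguments that share a label, and $F$ may carry labels outside $I'$ that are irrelevant to the gluing, so one must be careful that $\sigma$ is applied exactly to the labels appearing on the boundary of $X'$ and nowhere else, and that $\sigma$ is genuinely a bijection between $I'$ and $I$ (this uses injectivity of $\Lambda$ and $\Lambda'$ on $B$, which is part of the definition of a boundaried graph). Once the relabeling $F \mapsto F_\sigma$ is defined on the nose and shown to be a self-bijection of $\mc F$ that leaves the underlying glued graph unchanged, everything else is a one-line substitution into the defining property of $\equiv_\Pi$, plus an appeal to Lemma~\ref{lem:transposition_unique} for uniqueness of the constant. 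I expect no real difficulty beyond writing the relabeling carefully; there are no treewidth, minor, or automaton considerations here.
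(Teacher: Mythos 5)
Your proposal is correct and takes essentially the same approach as the paper: both proofs first reduce to an environment $F$ whose boundary labels match $\labelSet(X')$, then convert $F$ to an environment $F_\sigma$ (the paper's $F$) with labels matching $\labelSet(X)$ via the permutation $\Lambda \circ (\Lambda')^{-1}$, observe that $F \oplusu X' = F_\sigma \oplusu X$ and $F \oplusu Y' = F_\sigma \oplusu Y$ as unlabeled graphs, and invoke $X \equiv_\Pi Y$ together with Lemma~\ref{lem:transposition_unique}. Your explicit treatment of the monotone case is a minor addition that the paper handles implicitly (the chain with constant $\Delta_\Pi(X,Y)$ is established directly, and in the monotone case both constants are $0$ by definition), but it does not change the substance of the argument.
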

\begin{proof}
    Let $(F',k) \in \mc F \times \mathbb{Z}$ be a fixed, but arbitrary pair. We show that $(F' \oplusu X',k) \in \Pi$ if and only if $(F' \oplusu Y', k + \Delta_\Pi(X,Y)) \in \Pi$. Note that for every such boundaried graph $F' \in \mc F$, there is a graph $\tilde{F'} \in \mc F$ with $\labelSet(\tilde{F'}) = \labelSet(X) (= \labelSet(Y))$ such that $\tilde{F'} \oplusu X' = F' \oplusu X'$ and $\tilde{F'} \oplusu Y' = F' \oplusu Y'$. This boundaried graph $\tilde{F'}$ can be obtained from $F'$ by removing boundary vertices with labels in $\labelSet(F') \setminus \labelSet(X')$ from the boundary of $F'$ and inserting isolated vertices for every label in $\labelSet(X) \setminus \labelSet(F')$.
    Thus, without loss of generality, we can assume that $\labelSet(F') = \labelSet(X') (= \labelSet(Y'))$, and moreover, $F' = (G_F,B,\Lambda')$ for some graph $G_F$.
    We consider the boundaried graph $F = (G_F,B,\Lambda)$ and observe that $F \oplusu X = F' \oplusu X'$ and $F \oplusu Y = F' \oplusu Y'$.
    Thus, we have the following relations:
    \begin{align*}
        &(F' \oplusu X',k) \in \Pi\\
        \Leftrightarrow &(F \oplusu X, k) \in \Pi &  (F' \oplusu X' = F \oplusu X)\\
        \Leftrightarrow &(F \oplusu Y, k + \Delta_\Pi(X,Y)) \in \Pi & (Y = \varphi_{\Pi,\mc G,[t]}(X))\\
        \Leftrightarrow & (F' \oplusu Y', k + \Delta_\Pi(X,Y)) \in \Pi & (F \oplusu Y = F' \oplusu Y').
    \end{align*}
    It follows that $X' \equiv_\Pi Y'$ and $\Delta_\Pi(X',Y') = \Delta_\Pi(X,Y)$.
\end{proof}

Our goal is now to provide a tree decomposition automaton that, given a graph $G$ together with an annotated tree decomposition $\Tc = (T,\bag,\edges)$, for every node $x \in V(T)$ computes the replacement $\varphi_{\Pi,\mc G,[t]}(X)$, where $X = (G_x,\bag(x),\Lambda)$ and $\Lambda$ is some labeling, which we fix in the following.

We assume without loss of generality that the set of vertices $V(G)$ of our graph is totally ordered by a relation $\prec$ that can be computed in $\OO(1)$ time, e.g., by indexing them by integers.
Let $B \subseteq V(G)$ be a set of size $|B| \leq t$. An injective labeling $\Lambda \colon B \to [t]$ is called \emph{in-order} if for any two vertices $u,v \in B$, $u \prec v$ implies that $\Lambda(u) < \Lambda(v)$, and $\max_{v \in B} \Lambda(v) = |B|$. Note that there is exactly one injective in-order labeling for every such set $B \subseteq V(G)$, which we denote by $\Lambda_B^*$. Then, for a node $x \in V(T)$, we say that $(G_x,\bag(x),\Lambda_{\bag(x)}^*)$ is the boundaried graph that \emph{corresponds to} $x$. We are now ready to construct the protrusion replacement automaton that, given a graph $G$ together with a tree decomposition $\Tc$ of width $t-1$, for every node $x \in V(T)$ with corresponding boundaried graph $X = (G_x,\bag(x),\Lambda_{\bag(x)}^*)$ computes the progressive representative $\varphi_{\Pi,\mc G,[t]}(X)$ and the transposition constant $\Delta_\Pi(X,\varphi(X))$.


\begin{lemma}\label{lem:protrusion_replacement_automaton}
    Let $\mc G$ be a $\cmso$-definable graph class, $\Pi$ a parameterized graph problem that has FII, and $t \in \mathbb{Z}_{\ge 1}$.
    There exists a tree decomposition automaton $\autom$ of width $t-1$ such that the following holds:
    Let $G$ be a graph together with an annotated tree decomposition $\Tc = (T,\bag,\edges)$ of width $t-1$. Then, given the run $\run_{\autom}^{\Tc}(x)$ of $\autom$ on a node $x \in V(T)$, we can determine the progressive representative $\varphi_{\Pi,\mc G,[t]}(X)$ together with the transposition constant $\Delta_\Pi(X,\varphi_{\Pi,\mc G,[t]}(X))$ in $\OO(1)$ time, where $X = (G_x,\bag(x),\Lambda_B^*)$ is the boundaried graph that corresponds to $x$. The evaluation time $\tau$ of $\autom$ is a constant depending on $\mc G$, $\Pi$ and $t$.
\end{lemma}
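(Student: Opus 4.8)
\textbf{Proof plan for \Cref{lem:protrusion_replacement_automaton}.}
The plan is to build the automaton as a product of two pieces: a standard ``boundaried-graph-type'' automaton that records just enough information to determine the $\equiv_{\Pi,\mc G}$-class of the boundaried graph corresponding to each node, and a bookkeeping component that accumulates the transposition constant $\Delta_\Pi$. First I would fix the finite set $I = [t]$ and invoke FII of $\Pi$ together with \Cref{lem:cmso_finite_equivalence_classes} to conclude that $\equiv_{\Pi,\mc G}$ has finitely many classes among boundaried graphs with label set contained in $[t]$; call this finite set of classes $\mathcal{R}$, and let $\mc S_{\subseteq [t]}$ be the fixed choice of smallest progressive representatives (which exists by \Cref{lem:existence_progressive_representatives} and its variant for $\equiv_{\Pi,\mc G}$ proved just above). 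Since $\mathcal{R}$ is finite and the representatives $\mc S_{\subseteq [t]}$ have bounded size $c_{\Pi,\mc G}(t)$, both are ``constants'' in the sense of this section.

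The state of a node $x$ will be a pair $(\kappa, d)$, where $\kappa \in \mathcal{R}$ is (a canonical name for) the $\equiv_{\Pi,\mc G}$-class of the corresponding boundaried graph $X = (G_x, \bag(x), \Lambda^*_{\bag(x)})$, and $d = \Delta_\Pi(X, \varphi_{\Pi,\mc G,[t]}(X)) \in \mathbb{Z}$ is the transposition constant to the chosen representative. The key point making this a legitimate tree-decomposition automaton is that $\equiv_{\Pi,\mc G}$ is a \emph{congruence} with respect to the gluing operation $\oplusb$: if $X_1 \equiv_{\Pi,\mc G} X_1'$ and $X_2 \equiv_{\Pi,\mc G} X_2'$ (same label set) then $X_1 \oplusb X_2 \equiv_{\Pi,\mc G} X_1' \oplusb X_2'$, and the transposition constants add under $\oplusb$ by the transitivity established in \Cref{lem:transposition_transitive}. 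Concretely, for a join node $x$ with children $y_1, y_2$ satisfying $\bag(x) = \bag(y_1) = \bag(y_2)$ and no edges introduced, $X$ is obtained by gluing $X_1$ and $X_2$ along the full boundary; for introduce/forget nodes and for the handling of $\edges(x)$, $X$ is obtained from the child's boundaried graph by a constant-size modification (adding a boundary vertex, deleting it from the boundary while keeping it in the graph, or adding the finitely many edges of $\edges(x)$ among boundary vertices), which are again operations under which $\equiv_{\Pi,\mc G}$ is a congruence. In every case, the class $\kappa$ of $X$ is a computable function of the classes of the children and the constant-size local data, and the new $d$ is computed as (sum of children's $d$'s, adjusted for the relabeling issue, see next paragraph) plus $\Delta_\Pi$ from the class of the glued/modified representatives to its chosen representative $\varphi_{\Pi,\mc G,[t]}$ — a value that can be precomputed once and stored in a finite table indexed by $\mathcal{R}$ and the local data. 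Since only finitely many classes and finitely many local-data shapes occur, and each $\Delta_\Pi$ in these tables is a fixed integer depending only on $\Pi, \mc G, t$, the transition function is computable in constant time and has a finite description; the initial mapping $\iota$ is just: look up the class and representative of the (constant-size) leaf graph $G_l$, all of which is a finite table.

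The one genuine subtlety — which I expect to be the main obstacle — is the mismatch between the \emph{in-order} labelings $\Lambda^*_{\bag(x)}$ used to define the corresponding boundaried graphs and the labelings that naturally appear when gluing children at a join node: a vertex of $\bag(x)$ may have a different in-order rank within $\bag(y_i)$ than within $\bag(x)$, so $X$ is not literally $X_1 \oplusb X_2$ but rather a relabeled version. This is exactly what \Cref{lem:permute_labels} is for: relabeling commutes with replacement and preserves both $\equiv_\Pi$ and the transposition constant, so one can freely switch to whatever labeling is convenient, perform the gluing/replacement, and switch back, all without affecting $\kappa$ or $d$. Thus the transition function will, as a first step, apply the (constant-size, computable) permutation taking $\Lambda^*_{\bag(y_i)}$ to the restriction of $\Lambda^*_{\bag(x)}$, read off the (relabeled) child classes — which are still among the finitely many classes in $\mathcal{R}$ since relabeling permutes $\mathcal{R}$ — and only then combine. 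Once this relabeling bookkeeping is in place, correctness is an induction on $T$ showing $\run^{\Tc}_{\autom}(x) = (\text{class of } X, \Delta_\Pi(X, \varphi_{\Pi,\mc G,[t]}(X)))$, using transitivity (\Cref{lem:transposition_transitive}) and the congruence property at each step; the final claim that $\varphi_{\Pi,\mc G,[t]}(X)$ and $\Delta_\Pi(X, \varphi_{\Pi,\mc G,[t]}(X))$ are recoverable in $\OO(1)$ time from the state is immediate, since $\kappa$ determines the representative via a finite lookup table and $d$ is stored directly. The evaluation time $\tau$ is the maximum cost of these finite table lookups and constant-size permutation computations, hence a constant depending only on $\mc G$, $\Pi$, and $t$.
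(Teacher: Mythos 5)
Your proposal is correct and takes essentially the same approach as the paper's proof: both build the automaton state as a pair (equivalence class or its representative, transposition constant), maintain it bottom-up via a precomputed finite lookup table on constant-size glued/modified graphs, and rely on \Cref{lem:permute_labels} together with the transitivity of $\Delta_\Pi$ (\Cref{lem:transposition_transitive}) to handle the mismatch between the in-order labelings of parent and child bags. The only presentational difference is that you carry an abstract class identifier $\kappa$ where the paper carries the concrete progressive representative $\varphi_{\Pi,\mc G,[t]}(X)$ (in bijection with $\kappa$), and you invoke ``$\equiv_{\Pi,\mc G}$ is a congruence for $\oplusb$ and the local modifications'' as a black-box principle, whereas the paper establishes exactly that congruence by explicit chains of equivalences at each node type (leaf, edges, forget, introduce, join).
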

\begin{proof}
    We start by arguing that it suffices to construct an automaton $\mc A = (Q, \iota, \emptyset, \delta)$ that assumes that the annotated tree decomposition is nice.
    To remove this assumption, we construct a modified transition map $\delta'$ as follows.
    Consider a node $x\in V(T)$ with two children $y_1$ and $y_2$ (if $x$ has one child, this is handled analogously).
    We conceptually replace the edge $xy_i$ by a path that successively introduces and forgets one node at a time to transform $\bag(y_i)$ into $\bag(x)$ for both $i\in \{1,2\}$.
    This makes it ``locally nice'' and we can use the transition map $\delta$ on these paths:
    Given the state of the children, $\run_{\autom}^{\Tc}(y_1)$ and $\run_{\autom}^{\Tc}(y_2)$, we apply $\delta$ along the nodes of the path, until the resulting run $\run_{\autom}^{\Tc}(x)$ is obtained. 
    Since the length of each path is at most $|\bag(x)|\leq t$, the evaluation time $\tau$ still only depends on $\mc G$, $\Pi$, and $t$.
    
    We will now describe a dynamic programming scheme that, given a graph $G$ with an annotated nice tree decomposition $\Tc = (T,\bag,\edges)$ of width $t-1$, for every node $x \in V(T)$ with corresponding boundaried graph $X = (G_x,\bag(x),\Lambda_{\bag(x)}^*)$, determines the tuple $(\varphi_{\Pi,\mc G,[t]}(X),\Delta_\Pi(G_x,\varphi_{\Pi,\mc G,[t]}(X))$.
    In the following, we drop the subscripts and set $\varphi(G) \coloneq \varphi_{\Pi,\mc G,[t]}(G)$ and $\Delta(G) \coloneq \Delta_\Pi(G,\varphi(G))$ for every boundaried graph $G \in \mc F_{\subseteq [t]}$.
    The dynamic programming algorithm can be implemented as a tree decomposition automaton $\autom = (Q,\emptyset,\iota, \delta)$, whose states correspond to pairs $(G,\Delta) \in \mc S_{\subseteq [t]} \times \mathbb{Z}$, such that for every node $x \in V(T)$ the run $\run_{\autom}(x)$ corresponds to the pair $(\varphi(X),\Delta(X))$.
    Specifically, we start with the base case of the dynamic program, i.e., the leaf nodes of the tree decomposition (corresponding to the initial mapping $\iota$ in the automaton). Then, for the three cases of a non-leaf node $x$ in a nice tree decomposition -- forget, introduce, and join -- we describe how we can determine the pair $(\varphi(X),\Delta(X))$ from the state(s) of the child(ren) (corresponding to the transition mapping $\delta$ of $\autom$).
    For this, we can assume that for every boundaried graph $G \in \mc F_{\subseteq [t]}$ on at most $2 \cdot c_{\Pi,\mc G}(t)$ vertices, we can determine $\varphi(G)$ together with the value $\Delta(G)$ by hardcoding this mapping in our algorithm's source code, similar to the protrusion replacement algorithm from Fomin, Lokshtanov, Saurabh, and Zehavi (see Lemma 16.18 in~\cite{kernelization-book}).

    First, recall that for a node $x \in V(T)$, the graph $G_x$ is defined as $G_x = (V_x,E_x)$, where $V_x = \bigcup_{y \in \desc(x)} \bag(y)$ and $E_x = \bigcup_{y \in \desc(x)} \edges(y)$. Further recall that $\edges(x)$ contains exactly the edges of $G$, for which $x$ is the shallowest node that contains both endpoints, so $\edges(x) = \emptyset$, unless the parent of $x$ is a forget node. In particular, for a node $x$ with children $y$ and $z$ (where $y = z$ if $x$ has only one child), the question whether $E_x \supsetneq E_y \cup E_z$ depends solely on the parent of $x$. For the sake of simplification, we want $E_x = E_y \cup E_z$ for every forget, introduce, and join node $x$ with children $y$ and $z$ (again, $y = z$ if $x$ is a forget or introduce node). To achieve this, we introduce a new type of node, called \emph{edges node}, where every edges node $x$ has exactly one child $y$ with $\bag(x) = \bag(y)$. From now on, we assume that the only child of every forget node is an edges node. Then, we have $\edges(x) \neq \emptyset$ only if $x$ is an edges node. This assumption does not change our requirements on $\Tc$, since processing an edges node before every forget node can also be implemented as part of the forget node procedure. We now describe how we process these edges nodes before taking care of the standard nodes: leaf, forget, introduce, and join.

    \paragraph{Edges node:} Let $x \in V(T)$ be an edges node with child $y$. Note that $\bag(x) = \bag(y)$, so let $X = (G_x,\bag(x),\Lambda_{\bag(x)}^*)$ and $Y = (G_y,\bag(x),\Lambda_{\bag(x)}^*)$ be the boundaried graph that corresponds to $x$ and $y$. Let further $G_B = (\bag(x),\edges(x))$, and $B = (G_B,\bag(x),\Lambda_{\bag(x)}^*)$. Then, $X = B \oplusb Y$. We now show that $\varphi(X) = \varphi(B \oplusb \varphi(Y))$ and $\Delta(X) = \Delta(Y) + \Delta(B \oplusu \varphi(Y))$. From our dynamic program, we know the pair $(\varphi(Y),\Delta(Y))$. Note that $|V(B \oplusb \varphi(Y))| = |V(\varphi(Y))| \leq c_{\Pi,\mc G}(t)$, so we can explicitly construct the boundaried graph $B \oplusb \varphi(Y)$ and then look up the progressive representative $\varphi(B \oplusb \varphi(Y))$ in our hardcoded mapping.

    \begin{claim}
        $\varphi(X) = \varphi(B \oplusb \varphi(Y))$ and $\Delta(X) = \Delta(Y) + \Delta(B \oplusb \varphi(Y))$.
    \end{claim}
    \begin{claimproof}
        Let $(F,k) \in \mc F \times \mathbb{Z}$ be an arbitrary pair. We apply the definitions of $\varphi$ and $\Delta$ and use the associativity of $\oplusb$ to obtain the following:
        \begin{align*}
            &(F \oplusb X,k) \in \Pi\\
            \Leftrightarrow &(F \oplusb (B \oplusb Y),k) \in \Pi & (X = B \oplusb Y)\\
            \Leftrightarrow &((F \oplusb B) \oplusb Y,k) \in \Pi & \text{(associativity)}\\
            \Leftrightarrow &((F \oplusb B) \oplusb \varphi(Y),k + \Delta(Y)) \in \Pi & (F \oplusb B \in \mc F, \text{ definitions of } \varphi,\Delta).
        \end{align*}
        It follows that $X \equiv_\Pi B \oplusb \varphi(Y)$, so $\varphi(X) = \varphi(B \oplusb \varphi(Y))$, and further that $\Delta(X,B \oplusb \varphi(Y)) = \Delta(Y)$. Applying the transitivity of $\Delta_\Pi$, which we showed in \Cref{lem:transposition_transitive}, we obtain that $\Delta(X) = \Delta_\Pi(X,\varphi(X)) = \Delta_\Pi(X,B \oplusb \varphi(Y)) + \Delta_\Pi(B \oplusb \varphi(Y),\varphi(X)) = \Delta(Y) + \Delta(B \oplusb \varphi(Y))$.
    \end{claimproof}

    From now on we can assume for every non-edges node $x$ with children $y$ and $z$ (where $y = z$ if $x$ has only one child), we have $\edges(x) = \emptyset$, and thus $E(G_x) = E(G_y) \cup E(G_z)$.
    
    \paragraph{Leaf node:} Let $x \in V(T)$ be a leaf node and let $X = (G_x,\bag(x),\Lambda)$ be the corresponding boundaried graph. Then, $|V(G_x)| \leq t \leq c_{\Pi(t)}$, and thus we can find the pair $(\varphi(X),\Delta(X))$ by looking it up in our hardcoded mapping.
    
    \paragraph{Forget node:} Let $x \in V(T)$ be a forget node with child $y$ and let $v$ be the forgotten vertex. Let $X = (G_x,\bag(x),\Lambda_{\bag(x)}^*)$ and $Y = (G_y,\bag(y),\Lambda_{\bag(y)}^*)$ be the boundaried graphs that correspond to $x$ and $y$, respectively. Note that we have $\bag(y) = \bag(x) \cup \{v\}$.
    We define the injective labeling $\Lambda^* \colon \bag(y) \to [t]$ as the extension of $\Lambda_{\bag(x)}^*$ to $\bag(y)$, i.e.,
    \begin{equation*}
        \Lambda^*(u) = \begin{cases}
            \Lambda_{\bag(x)}^*(u) & \text{if } u \in \bag(x)\\
            \max\limits_{u' \in \bag(x)} \Lambda_{\bag(x)}^*(u') + 1 & \text{if } u = v
        \end{cases}.
    \end{equation*}
    Let then $Y' = (G_y,\bag(y),\Lambda^*)$, $\tilde Y = \varphi(Y) = (G_y',\bag(y),\Lambda_{\bag(y)}^*)$, $\tilde Y' = (G_y',\bag(y),\Lambda^*)$, and $\tilde Y'' = (G_y', \bag(x),\Lambda_{\bag(x)}^*)$. Note that by \Cref{lem:permute_labels}, we have $Y' \equiv_\Pi \tilde Y'$ and $\Delta_\Pi(Y',\tilde Y') = \Delta_\Pi(Y,\tilde Y) = \Delta(Y)$.

    From our dynamic program, we know the boundaried graph $\tilde Y = \varphi(Y)$ together with the integer value $\Delta(Y)$. In order to show that we can determine $\varphi(X)$ and $\Delta(X)$ directly from $(\varphi(Y),\Delta(Y))$, we first show that $\varphi(X) = \varphi(\tilde Y'')$ and $\Delta(X) = \Delta(Y) + \Delta(\tilde Y'')$. Then, since $G_y' \in \mc S_{\subseteq [t]}$, and thus, $|V(G_y')| \leq c_{\Pi,\mc G}(t)$, we can look up the replacement $\varphi(\tilde Y'') = \varphi(X)$ together with the transposition constant $\Delta(\tilde Y'')$ in our hardcoded mapping. Adding the known value $\Delta(Y)$ to this transposition constant, we obtain $\Delta(X)$.


    \begin{claim}
        $\varphi(X) = \varphi(\tilde Y'')$ and $\Delta(X) = \Delta(Y) + \Delta(\tilde Y'')$.
    \end{claim}
    \begin{claimproof}
        Let $(F,k) \in \mc F \times \mathbb{Z}$ be a fixed, but arbitrary pair. We show that $(F \oplusu X, k) \in \Pi$ if and only if $(F \oplusu \tilde Y'', k + \Delta(Y)) \in \Pi$. If $\labelSet(F) \neq \labelSet(X)$ ($= \labelSet(\tilde Y'')$), we are done, so we can assume that $F = (G_F,\bag(x),\Lambda_{\bag(x)}^*)$ for some graph $G_F$. We consider the boundaried graph $F^* = (G_F \cup \{v\},\bag(y),\Lambda^*)$ and observe that $F \oplusu X = F^* \oplusu Y'$. For this, note that due to the edges nodes, we have $G_x = G_y$, and all edges that are incident to $v$, both in $F \oplusu X$ and $F^* \oplusu Y'$, are in $E(G_x) = E(G_y)$. Similarly, we also have $F \oplusu \tilde Y'' = F^* \oplusu \tilde Y'$. It follows that
        \begin{align*}
            &(F \oplusu X,k) \in \Pi\\
            \Leftrightarrow &(F^* \oplusu Y',k) \in \Pi & (F \oplusu X = F^* \oplusu Y')\\
            \Leftrightarrow &(F^* \oplusu \tilde Y', k + \Delta(Y)) \in \Pi & (Y' \equiv_\Pi \tilde Y' \text{ and } \Delta_\Pi(Y',\tilde Y') = \Delta(Y))\\
            \Leftrightarrow &(F \oplusu \tilde Y'', k + \Delta(Y)) \in \Pi & (F^* \oplusu \tilde Y' = F \oplusu \tilde Y'').
        \end{align*}
        It follows that $X \equiv_\Pi \tilde Y''$, so $\varphi(X) = \varphi(\tilde Y'')$, and $\Delta_\Pi(X,\tilde Y'') = \Delta(Y)$. Due to the transitivity of $\Delta_\Pi$ (see \Cref{lem:transposition_transitive}), this implies that $\Delta(X) = \Delta_\Pi(X,\varphi(X)) = \Delta_\Pi(X,\tilde Y'') + \Delta_\Pi(\tilde Y'', \varphi(X)) = \Delta(Y) + \Delta(\tilde Y'')$.
    
    \end{claimproof}
    
    \paragraph{Introduce node:} Let $x \in V(T)$ be an introduce node with child $y$ and let $v$ be the introduced vertex. Let $X = (G_x,\bag(x),\Lambda_{\bag(x)}^*)$ and $Y = (G_y,\bag(y),\Lambda_{\bag(y)}^*)$ be the boundaried graphs that correspond to $x$ and $y$, respectively. Note that we have $\bag(y) = \bag(x) \setminus \{v\}$.
    Conversely to the forget node, we define the injective labeling $\Lambda^* \colon \bag(y) \to [t]$ as the restriction of $\Lambda_{\bag(x)}^*$ to $\bag(y)$, i.e., $\Lambda^* = \Lambda_{\bag(x)}^*\restriction_{\bag(y)}$.
    Let $Y' = (G_y,\bag(y),\Lambda^*)$, $\tilde Y = \varphi(Y) = (G_y',\bag(y),\Lambda_{\bag(y)}^*)$, $\tilde Y' = (G_y',\bag(y),\Lambda^*)$, and $\tilde Y'' = (G_y' \cup \{v\}, \bag(x),\Lambda_{\bag(x)}^*)$. Note that by \Cref{lem:permute_labels}, we have $Y' \equiv_\Pi \tilde Y'$ and $\Delta_\Pi(Y',\tilde Y') = \Delta_\Pi(Y,\tilde Y) = \Delta(Y)$.

    Again, we show that $\varphi(X) = \varphi(\tilde Y'')$ and $\Delta(X) = \Delta(Y) + \Delta(\tilde Y'')$. Knowing the pair $(\varphi(Y),\Delta(Y))$, and in particular the graph $G_y'$, where $|V(G_y' \cup \{v\})| \leq c_{\Pi,\mc G}(t) + 1$, we can look up the replacement $\varphi(\tilde Y'')$ together with the transposition constant $\Delta(\tilde Y'')$.

    \begin{claim}
        $\varphi(X) = \varphi(\tilde Y'')$ and $\Delta(X) = \Delta(Y) + \Delta(\tilde Y'')$.
    \end{claim}
    \begin{claimproof}
        Let $(F,k) \in \mc F \times \mathbb{Z}$ be a fixed, but arbitrary pair. We show that $(F \oplusu X, k) \in \Pi$ if and only if $(F \oplusu \tilde Y'', k + \Delta(Y)) \in \Pi$. If $\labelSet(F) \neq \labelSet(X)$ ($= \labelSet(\tilde Y'')$), we are done, so we can assume that $F = (G_F,\bag(x),\Lambda_{\bag(x)}^*)$ for some graph $G_F$. We consider the boundaried graph $F^* = (G_F,\bag(y),\Lambda^*)$ and observe that $F \oplusu X = F^* \oplusu Y'$. For this, note that there are no edges incident to $v$ in $G_x$, since $v$ is the introduced vertex and $\edges(x) = \emptyset$. Similarly, since $v$ is isolated in $\tilde Y''$, we also have that $F \oplusu \tilde Y'' = F^* \oplusu \tilde Y'$. It follows that
        \begin{align*}
            &(F \oplusu X,k) \in \Pi\\
            \Leftrightarrow &(F^* \oplusu Y',k) \in \Pi & (F \oplusu X = F^* \oplusu Y')\\
            \Leftrightarrow &(F^* \oplusu \tilde Y', k + \Delta(Y)) \in \Pi & (Y' \equiv_\Pi \tilde Y' \text{ and } \Delta_\Pi(Y',\tilde Y') = \Delta(Y))\\
            \Leftrightarrow &(F \oplusu \tilde Y'', k + \Delta(Y)) \in \Pi & (F^* \oplusu \tilde Y' = F \oplusu \tilde Y'').
        \end{align*}
        It follows that $X \equiv_\Pi \tilde Y''$, so $\varphi(X) = \varphi(\tilde Y'')$, and further $\Delta_\Pi(X,\tilde Y'') = \Delta(Y)$. Due to the transitivity of $\Delta_\Pi$ (see \Cref{lem:transposition_transitive}), this implies that $\Delta(X) = \Delta_\Pi(X,\varphi(X)) = \Delta_\Pi(X,\tilde Y'') + \Delta_\Pi(\tilde Y'', \varphi(X)) = \Delta(Y) + \Delta(\tilde Y'')$.
    \end{claimproof}

    \paragraph{Join node:} Let $x \in V(T)$ be a join node with children $y$ and $z$ and let $X = (G_x,\bag(x),\Lambda_x)$, $Y = (G_y,\bag(y),\Lambda_y)$, and $Z = (G_z,\bag(z),\Lambda_z)$ be the corresponding boundaried graphs. Then, $X = Y \oplusb Z$. Again, from our dynamic program, we know the pairs $(\varphi(Y),\Delta(Y))$ and $(\varphi(Z),\Delta(Z))$, and we need to determine $(\varphi(X),\Delta(X))$. For this, we show that $\varphi(X) = \varphi(\varphi(Y) \oplusb \varphi(Z))$ and $\Delta(X) = \Delta(Y) + \Delta(Z) + \Delta(\varphi(Y) \oplusb \varphi(Z))$. Note that $\varphi(Y)$ and $\varphi(Z)$ have at most $c_{\Pi,\mc G}(t)$ vertices each, and thus $\varphi(Y) \oplusb \varphi(Z)$ has at most $2 \cdot c_{\Pi,\mc G}(t)$ vertices. Hence, we can look up the graph $\varphi(\varphi(Y) \oplusb \varphi(Z))$ together with the value $\Delta(\varphi(Y) \oplusb \varphi(Z))$ in our hardcoded mapping, and thus determine $(\varphi(X),\Delta(X))$ from the state of the two children $y$ and $z$ of $x$.
    \begin{claim}
        $\varphi(X) = \varphi(\varphi(Y) \oplusb \varphi(Z))$ and $\Delta(X) = \Delta(Y) + \Delta(Z) + \Delta(\varphi(Y) \oplusb \varphi(Z))$.
    \end{claim}
    \begin{claimproof}
        Let $(F,k) \in \mc F \times \mathbb{Z}$ be an arbitrary pair. Similarly to the edges node, we apply the definitions of $\varphi$ and $\Delta$ and use the associativity and commutativity of $\oplusb$ to obtain the following equivalences.
        
        \begin{align*}
            &(F \oplusb X, k) \in \Pi\\
            \Leftrightarrow &(F \oplusb (Y \oplusb Z), k) \in \Pi && (X = Y \oplusb Z)\\
            \Leftrightarrow &((F \oplusb Y) \oplusb Z,k) \in \Pi && \text{(associativity)}\\
            \Leftrightarrow &((F \oplusb Y) \oplusb \varphi(Z), k + \Delta(Z)) \in \Pi && (F \oplusb Y \in \mc F \text{, definitions of } \varphi, \Delta)\\
            \Leftrightarrow &((F \oplusb \varphi(Z)) \oplusb Y, k + \Delta(Z)) \in \Pi && \text{(associativity, commutativity)}\\
            \Leftrightarrow &((F \oplusb \varphi(Z)) \oplusb \varphi(Y), k + \Delta(Z) + \Delta(Y)) \in \Pi && (F \oplusb \varphi(Z) \in \mc F, \text{ definitions of } \varphi, \Delta)\\
            \Leftrightarrow &(F \oplusb (\varphi(Y) \oplusb \varphi(Z)), k + \Delta(Y) + \Delta(Z)) \in \Pi && \text{(associativity, commutativity)}
        \end{align*}
        
        It follows that $X$ and $\varphi(Y) \oplusb \varphi(Z)$ are equivalent under $\equiv_\Pi$, so $\varphi(X) = \varphi(\varphi(Y) \oplusb \varphi(Z))$. Moreover, it follows that $\Delta_\Pi(X,\varphi(Y) \oplusb \varphi(Z)) = \Delta(Y) + \Delta(Z)$. Together with \Cref{lem:transposition_transitive}, we have $\Delta(X) = \Delta_\Pi(X,\varphi(X)) = \Delta_\Pi(X,\varphi(Y) \oplusb \varphi(Z)) + \Delta_\Pi(\varphi(Y) \oplusb \varphi(Z), \varphi(X)) = \Delta(Y) + \Delta(Z) + \Delta(\varphi(Y) \oplusb \varphi(Z))$.
    \end{claimproof}
    This completes the proof.
\end{proof}

\subsection{The dynamic kernelization data structure}\label{sec:kernelization_data_structure}

In this section, we finally describe our dynamic kernelization data structure by combining our data structure from \Cref{thm:technical_main} with our protrusion replacement automaton from \Cref{lem:protrusion_replacement_automaton}. First, we describe how the protrusion replacement automaton can be used to obtain a kernel in the static case. Then, we show that we can maintain this kernel in the dynamic case by using our dynamic protrusion decomposition data structure from \Cref{thm:technical_main}.

Let $\mc G$ be a $\cmso$-definable graph class and $\Pi$ a parameterized graph problem.
Let $G \in \mc G$ be a graph together with a normal $(p,t)$-protrusion decomposition $\Tc = (T,\bag)$ with root $r$. Let $\chd(r) = \{c_1,\dots,c_q\}$ be the set of children of $r$.
For $1 \leq i \leq q$, let $G_i = (G_{c_i},\bag(c_i),\Lambda_{\bag(c_i)}^*)$ be the corresponding boundaried graph. We observe that $G = G[\bag(r)] \cup \bigcup_{i = 1}^q G_{c_i}$. By the definition of protrusion decompositions, we have $V(G_i) \cap V(G_j) \subseteq \bag(r)$ for distinct $i$ and $j$. Moreover, due to the $\edges$ function, there is no edge $uv \in E(G_i)$ with $u,v \in \bag(r)$ (for $1 \leq i \leq q$). Thus, every edge of $G$ is attributed to exactly one $G_i$ or to $G[\bag(r)]$ and when replacing one boundaried subgraph $G_i$, the other boundaried subgraphs $G_j$ for $j \neq i$ are unaffected. We further remark that since $\Tc$ is normal, we have $\bag(r) \subseteq \bigcup_{i = 1}^q \bag(c_i)$.

We define the graph $K = K(G,\Tc)$ as the graph obtained by replacing each $G_i$ with its progressive representative $\varphi_{\Pi,\mc G,[t+1]}(G_i)$ for every $1 \leq i \leq q$.
That is, we replace the boundaried subgraphs $G_i$ one by one, producing a sequence $G = G^{(0)}, G^{(1)},\dots,G^{(q)} = K$ of graphs, where $G^{(i)}$ is the graph obtained from $G^{(i-1)}$ by replacing $G_i$ with $\varphi_{\Pi,\mc G,[t+1]}(G_i)$.
In particular, if we assume $G^{(i-1)} = H_{i-1} \oplusu G_i$ for some boundaried graph $H_{i-1}$, then $G^{(i)} = H_{i-1} \oplusu \varphi_{\Pi,\mc G,[t+1]}(G_i)$. As mentioned before, this replacement does not affect the other subgraphs $G_j$, so in particular, we have $G_{c_j} \subseteq G^{(i)}$ for $0 \leq i < j \leq q$, and this sequence of replacements is well-defined. We further remark that the graph $K$ is independent of the order $c_1,\dots,c_q$ of the root-children. We additionally define $\Delta = \Delta(G,\Tc) = \sum_{i = 1}^q \Delta_\Pi(G_i,\varphi_{\Pi,\mc G,[t+1]}(G_i))$.   Now, we first show that $(K,\Delta)$ is indeed a kernel, $K \in \mc G$, $\Delta \leq 0$, and $|V(K)| \leq \OO(p)$. Afterwards, we show how we can maintain $K(G,\Tc)$ and $\Delta(G,\Tc)$ efficiently for a dynamic graph $G$ with the dynamic protrusion decomposition $\Tc$ from \Cref{thm:technical_main}.

\begin{lemma}\label{lem:kernel}
    Let $\mc G$ be a $\cmso$-definable graph class and $\Pi$ be a parameterized graph problem. Let $G \in \mc G$ be a graph together with a normal $(p,t)$-protrusion decomposition $\Tc = (T,\bag)$. Let $K = K(G,\Tc)$ and $\Delta = \Delta(G,\Tc)$ be as defined above. Then, $(K,\Delta)$ is a kernel for $\Pi$ on $G$, $K \in \mc G$, $\Delta \leq 0$, and $|V(K)| \leq \OO(p)$. The hidden factors in the $\OO$-notation depend only on $\Pi$, $\mc G$, and $t$.
\end{lemma}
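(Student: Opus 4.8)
The plan is to verify the four claims about $(K,\Delta)$ one at a time, reusing the sequence of graphs $G = G^{(0)}, G^{(1)}, \dots, G^{(q)} = K$ obtained by replacing the boundaried subgraphs $G_i$ one by one. The key observation throughout is that each replacement step is a local operation in the sense made precise before the lemma statement: if $G^{(i-1)} = H_{i-1} \oplusu G_i$, then $G^{(i)} = H_{i-1} \oplusu \varphi_{\Pi,\mc G,[t+1]}(G_i)$, and since $G_i$ has boundary $\bag(c_i) \subseteq \bag(r)$ of size at most $t+1$ and the subgraphs $G_{c_j}$ for $j \neq i$ are untouched, this replacement is well-defined. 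Note that $G_i \in \mc F_{\subseteq [t+1]}$ since $|\bag(c_i)| \le t+1$.

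First I would handle the \emph{kernel property}. By definition of $\varphi_{\Pi,\mc G,[t+1]}$ and $\Delta_\Pi$, for each $i$ and each $(F,k) \in \mc F \times \mathbb Z$ we have $(F \oplusu G_i, k) \in \Pi \Leftrightarrow (F \oplusu \varphi_{\Pi,\mc G,[t+1]}(G_i), k + \Delta_\Pi(G_i, \varphi_{\Pi,\mc G,[t+1]}(G_i))) \in \Pi$. Applying this with $F = H_{i-1}$ gives $(G^{(i-1)}, k) \in \Pi \Leftrightarrow (G^{(i)}, k + \Delta_\Pi(G_i, \varphi_{\Pi,\mc G,[t+1]}(G_i))) \in \Pi$ for every integer $k$. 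Chaining these equivalences over $i = 1, \dots, q$ yields $(G, k) \in \Pi \Leftrightarrow (K, k + \Delta) \in \Pi$ for every $k$, which is exactly the kernel property. The membership $K \in \mc G$ follows by the same telescoping argument, now using that $G_i \equiv_{\mc G} \varphi_{\Pi,\mc G,[t+1]}(G_i)$ (the progressive representative lies in the same $\equiv_{\Pi,\mc G}$-class, hence in the same $\equiv_{\mc G}$-class): from $G = G^{(0)} \in \mc G$ and $G^{(i)} = H_{i-1} \oplusu \varphi_{\Pi,\mc G,[t+1]}(G_i)$ with $G^{(i-1)} = H_{i-1} \oplusu G_i \in \mc G$, we inductively get $G^{(i)} \in \mc G$, so $K \in \mc G$.

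Next, $\Delta \le 0$: each $\varphi_{\Pi,\mc G,[t+1]}(G_i)$ is a progressive representative of the $\equiv_{\Pi,\mc G}$-class of $G_i$, so by definition $\Delta_\Pi(G_i, \varphi_{\Pi,\mc G,[t+1]}(G_i)) \le 0$, and summing gives $\Delta = \sum_i \Delta_\Pi(G_i, \varphi_{\Pi,\mc G,[t+1]}(G_i)) \le 0$. Finally, the size bound: set $c = c_{\Pi,\mc G}(t+1) = \max\{|V(G')| : G' \in \mc S_{\subseteq [t+1]}\}$, a constant depending only on $\Pi$, $\mc G$, and $t$. Each replacement graph $\varphi_{\Pi,\mc G,[t+1]}(G_i)$ has at most $c$ vertices. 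In $K$, every vertex is either in $\bag(r)$ or is a non-boundary vertex of some $\varphi_{\Pi,\mc G,[t+1]}(G_i)$ (these interior vertex sets are pairwise disjoint, as each boundary is contained in $\bag(r)$). Hence $|V(K)| \le |\bag(r)| + \sum_{i=1}^q (|V(\varphi_{\Pi,\mc G,[t+1]}(G_i))| - |\bag(c_i)|) \le p + q \cdot c \le p + p \cdot c = \OO(p)$, using $|\bag(r)| \le p$ and $q = \Delta(r) \le p$ from the definition of a $(p,t)$-protrusion decomposition.

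I do not expect a genuine obstacle here; the lemma is essentially a bookkeeping consequence of the definitions of $\varphi_{\Pi,\mc G,[t+1]}$, $\Delta_\Pi$, and progressive representatives, together with the locality of protrusion replacement. The one point requiring a little care is making the telescoping rigorous — ensuring that at step $i$ the graph $G^{(i-1)}$ really does decompose as $H_{i-1} \oplusu G_i$ with the \emph{same} labeling on $\bag(c_i)$ throughout, so that the defining equivalence of $\varphi_{\Pi,\mc G,[t+1]}$ applies verbatim; this is exactly what the discussion preceding the lemma (the disjointness $V(G_i) \cap V(G_j) \subseteq \bag(r)$, the absence of edges of $G_i$ inside $\bag(r)$, and normality ensuring $\bag(r) \subseteq \bigcup_i \bag(c_i)$) is set up to guarantee, and the use of in-order labelings $\Lambda^*_{\bag(c_i)}$ makes the labels canonical. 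The mild subtlety of whether $\Delta_\Pi$ is well-defined is already resolved by \Cref{lem:transposition_unique} and the monotone-case convention, so I can cite it freely.
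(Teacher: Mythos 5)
Your proposal is correct and follows essentially the same argument as the paper: the kernel property and $K \in \mc G$ via the telescoping replacement sequence $G^{(0)},\dots,G^{(q)}$ using the definitions of $\equiv_\Pi$ and $\equiv_{\mc G}$, the bound $\Delta \le 0$ directly from the definition of a progressive representative, and the size bound from $q \le p$, $|\bag(r)| \le p$, and the universal bound $c_{\Pi,\mc G}(t+1)$ on representative sizes. Your size-bound accounting is marginally tighter than the paper's (which simply bounds $|V(K)| \le p + q \cdot c_{\Pi,\mc G}(t+1)$ without subtracting the shared boundary vertices), but both give the same $\OO(p)$ conclusion.
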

\begin{proof}
    We start by showing that $|V(K)| \leq \OO(p)$ and $\Delta \leq 0$.
    For this, let $r$ be the root of the $(p,t)$-protrusion decomposition.
    It holds that $|\bag(r)| \leq p$ and $r$ has at most $q \leq p$ children. In $K$, each of these children corresponds to a boundaried graph $\varphi_{\Pi,\mc G,[t+1]}(G_i)$ for $1 \leq i \leq q$ on at most $c_\Pi(t+1)$ vertices. Thus, we have $|V(K)| \leq p + q \cdot c_\Pi(t+1) \le \OO(p)$, where the hidden factors in the $\OO$-notation depend only on $\Pi$, $\mc G$, and $t$.
    Moreover, by the definition of progressive representatives, we have $\Delta_\Pi(G_i,\varphi_{\Pi,\mc G,[t+1]}(G_i)) \leq 0$ for every $1 \leq i \leq q$, and thus $\Delta \leq 0$.

    We now continue with showing that $K \in \mc G$.
    
    \begin{claim}
        $K \in \mc G$.
    \end{claim}
    \begin{claimproof}
        By the definition of $\equiv_{\mc G}$, and thus $\varphi_{\Pi,\mc G,[t+1]}$, we have for every boundaried graph $F \in \mc F$ and every $1 \leq i \leq q$, $F \oplusu G_i \in \mc G$ if and only if $F \oplusu \varphi_{\Pi,\mc G,[t+1]}(G_i) \in \mc G$. In particular, we have $G^{(i-1)} = H_{i-1} \oplusu G_i \in \mc G$ if and only if $H_{i-1} \oplusu \varphi_{\Pi,\mc G,[t+1]}(G_i) = G^{(i)} \in \mc G$ for every $1 \leq i \leq q$. Thus, we have
        \[G = G^{(0)} \in \mc G \Leftrightarrow G^{(1)} \in \mc G \Leftrightarrow G^{(2)} \in \mc G \Leftrightarrow \dots \Leftrightarrow G^{(q)} = K \in \mc G.\]
        Since by assumption $G \in \mc G$, it follows that $K \in \mc G$.
    \end{claimproof}

Finally, we show that $(K,\Delta)$ is a kernel for $\Pi$ on $G$.

    \begin{claim}
        For every $k \in \mathbb{Z}$, we have $(G,k) \in \Pi$ if and only if $(K,k+\Delta) \in \Pi$.
    \end{claim}
    \begin{claimproof}
        By definition of $\varphi_{\Pi,\mc G,[t+1]}$ and $\Delta_\Pi$, we have for every boundaried graph $F \in \mc F$, every integer $k\in \mathbb{Z}$, and every $1 \leq i \leq q$, $(F \oplusu G_i,k) \in \Pi$ if and only if $(F \oplusu \varphi_{\Pi,\mc G,[t+1]}(G_i), k + \Delta_\Pi(G_i,\varphi_{\Pi,\mc G,[t+1]}(G_i)) \in \Pi$.
        Thus, it follows that for every integer $k$ and every $1 \leq i \leq q$, $(G^{(i-1)},k) = (H_{i-1} \oplusu G_i,k) \in \Pi$ if and only if $(H_{i-1} \oplusu \varphi_{\Pi,\mc G,[t+1]}(G_i), k + \Delta_\Pi(G_i,\varphi_{\Pi,\mc G,[t+1]}(G_i)) = (G^{(i)},k+\Delta_\Pi(G_i,\varphi_{\Pi,\mc G,[t+1]})) \in \Pi$.
    In total, we have for every integer $k$,
    \begin{align*}
        (G,k) \in \Pi
        \Leftrightarrow& (G^{(0)},k) \in \Pi 
        \Leftrightarrow (G^{(1)},k + \Delta_{\Pi}(G_1,\varphi_{\Pi,\mc G,[t+1]}(G_1)) \in \Pi\\ \Leftrightarrow& (G^{(2)},k + \Delta_{\Pi}(G_1,\varphi_{\Pi,\mc G,[t+1]}(G_1) + \Delta_{\Pi}(G_2,\varphi_{\Pi,\mc G,[t+1]}(G_2)) \in \Pi\\
        \Leftrightarrow& \dots
        \Leftrightarrow (G^{(q)}, k + \sum\limits_{i = 1}^q \Delta_\Pi(G_i,\varphi_{\Pi,\mc G,[t+1]}(G_i)) \in \Pi
        \Leftrightarrow (K,\Delta) \in \Pi.
    \end{align*}
\end{claimproof}
This completes the proof.
\end{proof}

Finally, we are ready to construct the kernelization data structure by combining our data structure from \Cref{thm:technical_main} with our protrusion replacement automaton from \Cref{lem:protrusion_replacement_automaton} to maintain the kernel $(K,\Delta)$ for $\Pi$ and $\mc G$ on a dynamic graph $G$.
We first give a kernelization algorithm with the size of the smallest treewidth-$\cTwMod$-modulator as parameter. Then, we use the treewidth-boundedness to lift the parameter to the solution size.

\begin{lemma}\label{lem:kernelization_tw_mod}
    Let $\mc G$ be a $\cmso$-definable graph class that excludes a topological minor, $\Pi$ a parameterized graph problem that has FII, and $\cTwMod$ an integer.

    There exists a data structure that, for a dynamic graph $G \in \mc G$, maintains a kernel $(K,\Delta)$ for $\Pi$ on $G$ so that $K \in \mc G$, $\Delta \le 0$, and $|K| \le \OO(\optTwMod(G))$.
     The data structure supports the following operations: 
    \begin{itemize}
        \item $\init(G)$: Initialize the data structure with a graph $G \in \mc G$. Outputs $(K,\Delta)$. Runs in $\OO(|G| \log |G|)$ amortized time.
        \item $\addVertex(v)$: Given a new vertex $v \notin V(G)$, add $v$ to $G$.
        \item $\deleteVertex(v)$: Given an isolated vertex $v \in V(G)$, remove $v$ from $G$. 
        \item $\addEdge(e)$: Given a new edge $e \in \binom{V(G)}{2} \setminus E(G)$, add $e$ to $G$.
        \item $\deleteEdge(e)$: Given an edge $e \in E(G)$, remove $e$ from $G$.
    \end{itemize}
    Each update takes $\OO(\log |G|)$ amortized time.
    For each operation, the changes to $K$ can be described as a sequence of (hyper)graph operations of size $\OO(1)$, which is returned.
    The change to $\Delta$ is also returned (and can be arbitrary).
    The data structure works under the promise that $G \in \mc G$ at all times.
\end{lemma}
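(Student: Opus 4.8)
The plan is to run the dynamic protrusion-decomposition data structure of \Cref{thm:technical_main} on $G$, instrument it with the protrusion-replacement automaton of \Cref{lem:protrusion_replacement_automaton}, and maintain from it precisely the kernel $(K,\Delta)=(K(G,\Tc),\Delta(G,\Tc))$ whose correctness is established by \Cref{lem:kernel}. Since $\Pi$, $\mc G$, and $\cTwMod$ are treated as true constants in this section, every $\OO_{H,\cTwMod}(\cdot)$ coming from \Cref{thm:technical_main} becomes $\OO(\cdot)$, and the substantive part is to argue that $(K,\Delta)$ can be kept up to date with only $\OO(1)$ changes per update to $G$.

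Fix a graph $H$ with $\mc G$ being $H$-topological-minor-free, and let $t=\OO(1)$ be a bound (computable from $H$ and $\cTwMod$) on the bag sizes of the protrusions produced by \Cref{thm:technical_main}. We initialize that data structure with $H$, $\cTwMod$, and the tree-decomposition automaton $\autom$ of \Cref{lem:protrusion_replacement_automaton} applied to $\mc G$, $\Pi$, and the integer $t+1$; this $\autom$ has width $t$ and constant evaluation time $\tau=\OO(1)$, so the running times of \Cref{thm:technical_main} are unchanged up to constant factors. The data structure then maintains an annotated normal $(\OO(\optTwMod(G)),\OO(1))$-protrusion decomposition $\Tc=(T,\bag,\edges)$ of $G$ rooted at $r$, the sets $\bag(r)$ and $\edges(r)$, the torso $\torso(r)$, and a run of $\autom$ on each protrusion. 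In particular, for every root-child $c\in\chd(r)$ we can query $\run_{\autom}^{\Tc}(c)$ in $\OO(1)$ time and, by \Cref{lem:protrusion_replacement_automaton}, read off from it the progressive representative $R_c=\varphi_{\Pi,\mc G,[t+1]}(G_c)$ of the boundaried graph $G_c$ with boundary $\bag(c)$ and in-order labeling, together with the transposition constant $\Delta_c=\Delta_\Pi(G_c,R_c)$.

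Alongside the above we maintain the graph $K$ built, as in \Cref{lem:kernel}, from the ``core'' $(\bag(r),\edges(r))$ together with a copy of $R_c$ glued along $\bag(c)$ for every root-child $c$, and the integer $\Delta=\sum_{c\in\chd(r)}\Delta_c$; for bookkeeping we store, indexed by the hyperedge $e_c\in E(\torso(r))$, a pointer to the copy of $R_c$ inside $K$ and the value $\Delta_c$. This is exactly $K=K(G,\Tc)$ and $\Delta=\Delta(G,\Tc)$, so \Cref{lem:kernel} gives that $(K,\Delta)$ is a kernel for $\Pi$ on $G$, that $K\in\mc G$, that $\Delta\le 0$, and that $|V(K)|\le\OO(\optTwMod(G))$; and since $K\in\mc G$ is $H$-topological-minor-free, \Cref{lem:topminorfreesparse} yields $|E(K)|\le\OO(|V(K)|)$ and hence $|K|\le\OO(\optTwMod(G))$. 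For $\init(G)$ we insert the vertices and edges of $G$ one at a time, maintaining $K$ and $\Delta$ via the update routine below, spending $\OO(|G|\log|G|)$ amortized time in total, and then output $(K,\Delta)$ in $\OO(|K|)\le\OO(|G|)$ further time.

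On an update to $G$ we forward it to \Cref{thm:technical_main}, which uses $\OO(\log|G|)$ amortized time and returns a sequence $\mc C$ of $\OO(1)$ hypergraph operations transforming the old $\torso(r)$ into the new one, together with a list of $\OO(1)$ changes to $\edges(r)$. We replay these on $K$: the $\addVertex$/$\deleteVertex$ operations in $\mc C$ and the changes to $\edges(r)$ update the core $(\bag(r),\edges(r))$; for each $\deleteHyperedge(e_c)$ in $\mc C$ we delete the stored copy of $R_c$ from $K$ and subtract the stored $\Delta_c$ from $\Delta$; and for each $\addHyperedge(e_c)$ in $\mc C$ we query $\run_{\autom}^{\Tc}(c)$, recompute $(R_c,\Delta_c)$, read $\bag(c)$ (of size at most $t$) off the maintained decomposition, glue a fresh copy of $R_c$ into $K$, add $\Delta_c$ to $\Delta$, and record the new pointer and value. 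The decisive ingredient is the guarantee of \Cref{thm:technical_main} that whenever the automaton state at a root-child $c$ is modified by the update, $\mc C$ contains $\deleteHyperedge(e_c)$; since $(R_c,\Delta_c)$ is a function of that state by \Cref{lem:protrusion_replacement_automaton}, every root-child whose contribution to $K$ or $\Delta$ could have changed is re-processed above, so $K$ and $\Delta$ stay equal to $K(G,\Tc)$ and $\Delta(G,\Tc)$, only $\OO(1)$ vertices and edges of $K$ change per update (which we return), and $\Delta$ changes by a possibly unbounded amount (which we also return). Each of the $\OO(1)$ re-processings costs $\OO(\log|G|)$, dominated by the automaton query and by reading the constant-size set $\bag(c)$ from the representation, so the amortized update time is $\OO(\log|G|)$. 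I expect this synchronization argument to be the main obstacle: the remaining claims --- that $(K,\Delta)$ is a valid kernel, that $K\in\mc G$, that $\Delta\le 0$, and the size bound --- follow immediately from \Cref{lem:kernel} and \Cref{lem:topminorfreesparse}, whereas here one must carefully combine the ``state change implies $\deleteHyperedge$'' contract of \Cref{thm:technical_main} with the fact that the automaton state determines $R_c$ and $\Delta_c$.
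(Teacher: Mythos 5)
Your proposal is correct and follows essentially the same route as the paper: run \Cref{thm:technical_main} with the automaton of \Cref{lem:protrusion_replacement_automaton}, maintain exactly $K(G,\Tc)$ and $\Delta(G,\Tc)$, rely on \Cref{lem:kernel} for correctness and on \Cref{lem:topminorfreesparse} for the size bound, and key the synchronization on the guarantee that any state change at a root-child forces a $\deleteHyperedge$ in $\mc C$. The only thing you gloss over, which the paper spells out, is the order in which the operations of $\mc C$ and the $\edges(r)$-changes must be replayed on $K$ so that every $\deleteVertex$ in $\mc C$ targets a vertex that is already isolated in $K$ (deletions from $\edges(r)$ first, then $\mc C$ in order, then insertions into $\edges(r)$, with the minimality of $\mc C$ guaranteeing that each vertex to be deleted has already lost all incident hyperedges and hence all incident $R_c$-copies); that is a small but necessary detail for the replay to be well-defined as a sequence of legal graph operations on $K$.
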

\begin{proof}
    Let $H$ be a graph so that all graphs in $\mc G$ are $H$-topological-minor-free.
    We use the data structure of \Cref{thm:technical_main} initialized with the parameters $H$ and $\cTwMod$.
    It maintains an annotated normal $(\OO_{H,\cTwMod}(\optTwMod(G),\OO_{H,\cTwMod}(1))$-protrusion decomposition $\Tc = (T,\bag,\edges)$ of $G$.
    Let $\cFinalProt = \OO_{H,\cTwMod}(1)$ be a constant that depends on $H$ and $\cTwMod$ so that the second parameter of the protrusion decomposition $\Tc$, i.e., the width of the protrusions, is upper bounded by $\cFinalProt$.
    In addition, if $r$ is the root of $\Tc$, the hypergraph $\torso(r)$ is maintained, and after every update operation to $G$, the changes to $\torso(r)$ can be described as a sequence of basic hypergraph operations of size $\OO_{H,\cTwMod}(1)$, which is returned. 
    Moreover, the set of all edges that have been removed from $\edges(r)$ or inserted into $\edges(r)$ is also returned and has size $\OO_{H,\cTwMod}(1)$, as well.

    Our goal is to maintain the kernel $(K,\Delta)$, where $K = K(G,\Tc) \in \mc G$ and $\Delta = \Delta(G,\Tc)$.
    By \Cref{lem:kernel} it is indeed a kernel, has $\Delta \le 0$, $K \in \mc G$, and $|V(K)| \leq \OO(\optTwMod(G))$.
    Since $K \in \mc G$, $K$ excludes $H$ as a topological-minor and thus is sparse (see \Cref{lem:topminorfreesparse}), so also $|K| \le \OO(\optTwMod(G))$.
    
    We now describe how we maintain $K(G,\Tc)$ and $\Delta(G,\Tc)$.
    By \Cref{lem:protrusion_replacement_automaton}, there exists a tree decomposition automaton $\autom$ of width $\cFinalProt$ that, given a node $x \in V(T) \setminus \{r\}$ determines the pair $(\varphi_{\Pi,\mc G,[\cFinalProt+1]}(X),\Delta(X,\varphi_{\Pi,\mc G,[\cFinalProt+1]}(X))$, where $X = (G_x,\bag(x),\Lambda_{\bag(x)}^*)$ is the boundaried graph that corresponds to $x$. By \Cref{thm:technical_main}, we can maintain the run $\run_{\autom}^{\Tc}(x)$ for every non-root node $x \in V(T)$ at the cost of an additional $\tau(\cFinalProt)$ factor on the running times, where $\tau(\cFinalProt)$ is the evaluation time of $\autom$ that depends only on $\Pi$, $\mc G$ and $\cFinalProt$. For the sake of convenience, we drop the subscripts in the following and set $\varphi(X) \coloneq \varphi_{\Pi,\mc G,[\cFinalProt+1]}(X)$ and $\Delta(X) \coloneq \Delta(X,\varphi_{\Pi,\mc G,[\cFinalProt+1]}(X))$ for every boundaried graph $X$.

    Then, in particular, for every $c_i \in \chd(r)$, we can in constant time determine the progressive representative $\varphi(G_i)$ together with the transposition constant $\Delta(G_i)$ by querying the run of the automaton $\autom$. 
    Then, whenever there is a ``change'' in the run $\run_{\autom}^{\Tc}(c_i)$ of $\autom$ on a root-child $c_i$, we update the replacement $\varphi(G_i)$ in $K$.
    To keep track of which vertices and edges in $K$ belong to which (boundaried) subgraph $G_i$, we additionally maintain a balanced binary search tree $\bbst$ that contains every root-child $c_i \in \chd(r)$ together with a list of all edges in $E(\varphi(G_i))$ and all vertices in $\inter(\varphi(G_i))$.

    We now describe in detail the operations of our data structure and how the kernel $(K,\Delta)$ is maintained.
    First, let us note that we can implement the initialization operation by first initializing the data structure of \Cref{thm:technical_main} with an empty graph, and then inserting edges and vertices one at the time, because if $G$ is $H$-topological-minor-free, then also every subgraph of $G$ is $H$-topological-minor-free.
    We do not need to worry about the fact that intermittently the graph may be outside of $\mc G$, because we output the kernel only at the end of all of these operations, when $G \in \mc G$ is guaranteed.

    For the other operations (vertex/edge insertion/deletion), we update the protrusion decomposition $\Tc$ with the respective operation from \Cref{thm:technical_main} in $\OO_{H,\cTwMod}(\log |G|)$ time. Then, we remove every edge that has been removed from $\edges(r)$ from $K$. Given the set of size $\OO_{H,\cTwMod}(1)$ of all such edges, this can be done in $\OO_{H,\cTwMod}(1)$ time via $\OO_{H,\cTwMod}(1)$ basic graph operations to $K$. Then, we consider the root-children $c_i \in \chd(r)$. By \Cref{thm:technical_main}, every change of the run $\run_{\autom}^{\Tc}(c_i)$ at a root-child $c_i$ corresponds to a change in $\torso(r)$. Thus, we do not need to reconsider the root-children $c_i \in \chd(r)$, where the corresponding hyperedge $e_{c_i}$ is in $\torso(r)$ before and after the update.

    Let $G$ and $G'$ denote the graph, $\Tc = (T,\bag,\edges)$ and $\Tc' = (T',\bag',\edges')$ the protrusion decomposition and $R$ and $R'$ the hypergraph $\torso(r)$ before and after the update, respectively.
    Then, by \Cref{thm:technical_main}, the changes from $R$ to $R'$ can be described as a sequence of operations $\mc C$ of size $\OO_{H,\cTwMod}(1)$ which is returned by the data structure.
    Suppose, this sequence $\mc C$ produces the following sequence of hypergraphs: $R = R^{(0)}, R^{(1)}, \dots, R^{(\zeta)} = R'$, where $\zeta = |\mc C| \le \OO_{H,\cTwMod}(1)$. We update the kernel $(K,\Delta)$ step-by-step along $\mc C$ using $\OO(1)$ basic graph operations for each operation in $\mc C$. In the end, these sequences of graph operations are concatenated to obtain the sequence of operations between the graph $K$ before and after the update, which is then returned. In the following, we describe in detail how the kernel is updated for the different types of basic hypergraph operations in $\mc C$.

    First, for every $\addVertex$ operation in $\mc C$, we also add the new vertex $v$ to $K$. Clearly, this requires only one basic graph operation, namely $\addVertex(v)$, and can be done in $\OO(1)$ time.

    We then consider the $\addHyperedge$ operations in $\mc C$. Due to the minimality of $\mc C$, if a hyperedge $e$ is added during $\mc C$, then $e$ cannot be deleted later within $\mc C$, so $e \in E(R')$. Thus, the inserted hyperedge $e$ corresponds to a root-child $c_i \in V(T') \setminus V(T)$. To update the kernel $(K,\Delta)$ we glue the graph $\varphi(G_i')$ to $K$ and the integer $\Delta(G_i')$ to $\Delta$. Both $\varphi(G_i')$ and $\Delta(G_i')$ can be retrieved from the run of our automaton $\autom$ in constant time. To add $\varphi(G_i')$ to $K$, we iterate over $\varphi(G_i')$ and apply the $\addVertex(v)$ and $\addEdge(e)$ operation for each $v \in \inter(\varphi(G_i'))$ and each edge $e \in E(\varphi(G_i'))$ to $K$. We also insert the same vertices and edges into a list and add this list to $\bbst$ with key $c_i$. Since the size of $\varphi(G_i')$ depends only on $\mc G$, $\Pi$, and $\cFinalProt = \cFinalProt(H,\cTwMod)$, this can be done in time $\OO(\log |G|)$ and the created sequence of operations to $K$ is in $\OO(1)$, where the hidden factors in the $\OO$-notation depend only on $\mc G$, $\Pi$, and $\cTwMod$.

    Next, we consider the $\deleteHyperedge$ operations in $\mc C$. Again, due to the minimality of $\mc C$, if a hyperedge $e$ is deleted during $\mc C$, then $e$ was not inserted before during the same sequence $\mc C$, so $e \in E(R)$. Thus, the hyperedge $e$ corresponds to a root-child $c_i \in V(T) \setminus V(T')$. To update $(K,\Delta)$, we remove $\varphi(G_i)$ from $K$ and subtract $\Delta(G_i)$ from $\Delta$. To do this, we need the balanced binary search tree $\bbst$. We first look up the entry for $c_i$ in $\bbst$ and obtain a list of all edges in $\varphi(G_i)$ and all vertices in $\inter(\varphi(G_i))$. We remove every such vertex and edge from $K$ by using the basic graph operations $\deleteEdge(e)$ and $\deleteVertex(v)$. Finally, we delete the entry with key $c_i$ from $\bbst$.
    Again, the size of $\varphi(G_i)$ depends only on $\mc G$, $\Pi$, and $\cFinalProt$. So, this can be done in time $\OO(\log |G|)$ and the created sequence of operations to $K$ is in $\OO(1)$, where the hidden factors in the $\OO$-notation depend only on $\mc G$, $\Pi$, and $\cTwMod$.

    Then, for every $\deleteVertex$ operation in $\mc C$, we delete the same vertex $v$ from $K$, again using only one basic graph operation, namely $\deleteVertex(v)$, which takes $\OO(1)$ time. To do this, we need to argue that $v$ is an isolated vertex in the current graph $K$. For this, note that due to the minimality of $\mc C$, every edge in the current $K$ belongs to at least one $\varphi(G_i)$ or $\varphi(G'_i)$, where $c_i \in V(T) \cup V(T')$. Furthermore, each such $\varphi(G_i)$ or $\varphi(G'_i)$ corresponds to a hyperedge $e_t \in E(R) \cup E(R')$. Now, suppose that the current $\deleteVertex$ operation is the $j^{\text{th}}$ operation in $\mc C$, i.e., it transforms $R^{(j-1)}$ into $R^{(j)}$. Then, $v$ is an isolated vertex in $R^{(j-1)}$, i.e., there is no hyperedge $e_t \in E(R^{(j-1)})$ with $v \in V(e_t)$, so every such hyperedge has been deleted in one of the previous steps together with the corresponding subgraph $\varphi(G_i)$ from $K$. Thus, $v$ is isolated in $K$ and can be removed via the $\deleteVertex$ basic graph operation.

    Finally, we need to add all the edges that have been added to $\edges(r)$ to $K$. This can again be done in $\OO_{H,\cTwMod}(1)$ time via $\OO_{H,\cTwMod}(1)$ basic graph operations given the list of all such edges.
\end{proof}

For linearly treewidth-bounding problems, we observe that we can instead parameterize by $\opt_{\Pi}(G)$.

\begin{lemma}
\label{lem:kernelization_tw_mod:v2}
Consider the setting of \Cref{lem:kernelization_tw_mod}.
If $\Pi$ is linearly treewidth-bounding on $\mc G$, then the same lemma holds, but without taking $\cTwMod$ as a parameter and instead guaranteeing that $|K| \le \OO(\opt_{\Pi}(G))$.
\end{lemma}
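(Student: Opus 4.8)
The plan is to derive this lemma as an immediate consequence of \Cref{lem:kernelization_tw_mod} together with the definition of linearly treewidth-bounding. First I would recall that, since $\Pi$ is linearly treewidth-bounding on $\mc G$, there are an integer $\cTwMod$ and a constant $c'$, both depending only on $\Pi$ and $\mc G$, such that $\optTwMod(G) \le c' \cdot \opt_{\Pi}(G)$ holds for every $G \in \mc G$ (whenever $\opt_{\Pi}(G)$ is defined, which we may assume for graphs in $\mc G$). I would then instantiate the data structure of \Cref{lem:kernelization_tw_mod} with this very constant $\cTwMod$, obtaining a data structure that maintains, for the dynamic graph $G \in \mc G$, a kernel $(K,\Delta)$ for $\Pi$ with $K \in \mc G$, $\Delta \le 0$, and $|K| \le \OO(\optTwMod(G))$, with $\OO(|G|\log|G|)$ amortized initialization time, $\OO(\log|G|)$ amortized update time, and $\OO(1)$ changes to $(K,\Delta)$ per update, where the hidden constants depend only on $\Pi$, $\mc G$, and $\cTwMod$, hence only on $\Pi$ and $\mc G$.

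To conclude, I would upgrade the size guarantee: for every $G \in \mc G$,
\[
|K| \;\le\; \OO(\optTwMod(G)) \;\le\; \OO\bigl(c' \cdot \opt_{\Pi}(G)\bigr) \;=\; \OO(\opt_{\Pi}(G)),
\]
using that $c'$ is a constant depending only on $\Pi$ and $\mc G$; all other guarantees and running times carry over verbatim. This yields exactly the claimed statement, now with $\opt_{\Pi}(G)$ as the parameter and with $\cTwMod$ no longer appearing among the inputs of the data structure.

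The only real subtlety — and what passes for ``the hard part'' here — is that the constant $\cTwMod$ witnessing linear treewidth-boundedness of $\Pi$ on $\mc G$ is in general only guaranteed to exist, without an effective way to compute it from a description of $\Pi$ and $\mc G$. This is not an obstacle in our setting: as stipulated at the start of this section, we treat $\Pi$, $\mc G$, and quantities derived from them (such as $\cTwMod$ and $c'$) as true constants, so the data structure of \Cref{lem:kernelization_tw_mod} may simply hardcode $\cTwMod$. For concrete problems such as \textsc{Dominating Set} on planar graphs, $\cTwMod$ and $c'$ are moreover known explicitly, so the resulting data structure is fully constructive.
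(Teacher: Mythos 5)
Your proposal is correct and takes essentially the same approach as the paper: instantiate $\cTwMod$ with the constant witnessing that $\Pi$ is linearly treewidth-bounding, then chain $|K| \le \OO(\optTwMod(G)) \le \OO(\opt_{\Pi}(G))$. The paper's proof is the same two-line argument; your additional remarks on non-constructivity are harmless and consistent with the conventions already set up at the start of \Cref{sec:kernelization}.
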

\begin{proof}
Because $\Pi$ is linearly treewidth-bounding, there exists a constant $\cTwMod$ so that $\optTwMod(G) \le \OO(\opt_{\Pi}(G))$.
Therefore, by choosing the $\cTwMod$ in \Cref{lem:kernelization_tw_mod} to be this constant, we get that $|K| \le \OO(\opt_{\Pi}(G))$.
\end{proof}

We then observe that \Cref{lem:kernelization_tw_mod,lem:kernelization_tw_mod:v2} form a (more precise) version of \Cref{thm:kernelization} (note that $\Delta$ should be negated), and thus this completes the proof of \Cref{thm:kernelization}.
\section{Conclusions}
\label{sec:concl}
We gave a dynamic algorithm for maintaining an approximately optimal protrusion decomposition of a dynamic topological-minor-free graph, and applied it for dynamic kernelization.
Let us discuss the applications of our results, future work, and open problems.

\paragraph{Direct applications.}
We list some concrete problems to which our dynamic kernelization meta-theorem \Cref{thm:kernelization} applies directly.

First, let us consider the problems where \Cref{thm:kernelization} applies for any topological-minor-free graph class $\mc G$.
In this setting, most of the natural problems that are known to satisfy the required conditions are vertex-deletion problems to graph classes $\mc Q$, for which it is known that graphs in $\mc G \cap \mc Q$ have bounded treewidth.
Such problems are trivially linearly treewidth-bounding, so \Cref{thm:kernelization} applies whenever they have FII.
Examples with FII include \textsc{Vertex Cover}, \textsc{Feedback Vertex Set}, more generally \textsc{Treewidth-$\cTwMod$-Deletion} for any $\cTwMod$, even more generally \textsc{$\mc F$-Minor-Free-Deletion} for any finite set of connected graphs $\mc F$ that contains a planar graph, \textsc{Cluster Vertex Deletion}, \textsc{Chordal Vertex Deletion}, \textsc{Interval Vertex Deletion}, \textsc{Proper Interval Vertex Deletion}, \textsc{Cograph Vertex Deletion}, \textsc{Connected Vertex Cover}, \textsc{Connected Cograph Vertex Deletion}, and \textsc{Connected Cluster Vertex Deletion}~\cite{Bodlaender_Fomin_Lokshtanov_Penninkx_Saurabh_Thilikos_2016,Kim_Langer_Paul_Reidl_Rossmanith_Sau_Sikdar_2015}.
Another problem which is not strictly speaking a vertex-deletion problem, but for which \Cref{thm:kernelization} applies in this setting is \textsc{Edge Dominating Set}~\cite{Kim_Langer_Paul_Reidl_Rossmanith_Sau_Sikdar_2015}.

Let us then consider problems where \Cref{thm:kernelization} applies for any minor-free graph class.
This of course includes everything mentioned above, but additionally includes ``minor-bidimensional'' problems that are linearly treewidth-bounding by non-trivial arguments~\cite{Fomin_Lokshtanov_Saurabh_2018,Fomin_Lokshtanov_Saurabh_Thilikos_2020}.
Such problems include the problem \textsc{$\mc F$-Minor-Packing} for any finite set of connected graphs $\mc F$ that contains a planar graph, which asks to pack a maximum number of vertex-disjoint minor-models of graphs from $\mc F$.
Concretely, this contains for example the problem \textsc{Cycle Packing}.

We then consider applications of \Cref{thm:kernelization} to apex-minor-free graph classes, which include planar graphs and graphs of bounded genus.
We cover all the problems above, but also ``contraction-bidimensional'' problems~\cite{Fomin_Lokshtanov_Saurabh_2018,Fomin_Lokshtanov_Saurabh_Thilikos_2020}.
Such problems include \textsc{Dominating Set}, \textsc{Induced Matching}, \textsc{$r$-Dominating Set} for any $r$, \textsc{Connected Dominating Set}, and \textsc{$r$-Scattered Set} for any $r$~\cite{Fomin_Lokshtanov_Saurabh_Thilikos_2020}.

\paragraph{Potential indirect and future applications.}
In \Cref{thm:kernelization}, we applied the data structure of \Cref{theo:main} to obtain dynamic linear kernels for problems that directly satisfy certain conditions.
However, protrusion decompositions have been applied in kernelization even more broadly.
As one example, Bodlaender et al.~\cite{Bodlaender_Fomin_Lokshtanov_Penninkx_Saurabh_Thilikos_2016} obtain polynomial (but superlinear) kernels on graphs of bounded genus for problems that do not have FII but are instead expressible in $\cmso$ in a certain way, for example, \textsc{Independent Dominating Set}.
We conjecture that our data structure could be modified to obtain dynamic versions of these results, but with additional $\OO(\poly(\opt(G)))$ factors in the update time and the number of changes to the kernel.

Another problem for which \Cref{thm:kernelization} does not apply but for which~\cite{Bodlaender_Fomin_Lokshtanov_Penninkx_Saurabh_Thilikos_2016} obtain a linear kernel on graphs of bounded genus is \textsc{Triangle Packing}, or more generally, the problem of packing connected subgraphs from any fixed family of subgraphs.
It has FII, but in order to relate it to treewidth, one must first apply a preprocessing routine that deletes all vertices not contained in any triangle.
It would be interesting to see if this preprocessing routine could be implemented dynamically, so that \Cref{thm:kernelization} could be applied also to this problem.

Besides kernelization, protrusion decompositions have been applied to design parameterized and approximation algorithms, for example, in~\cite{Fomin_Lokshtanov_Misra_Saurabh_2012} (see \Cref{sec:intro} for more references).
Such algorithms typically employ complex techniques in addition to protrusion replacement, and therefore it is not clear at all whether the data structure of \Cref{theo:main} (or its potential extensions) could help to design dynamic versions of them.
As a benchmark problem in this direction, we ask whether \textsc{Feedback Vertex Set} (on general graphs) admits a dynamic constant-factor approximation algorithm, for example, with sublinear (amortized) update time.

\paragraph{Potential improvements and future work.}
As already discussed in the introduction, the lower bound of~\cite{DBLP:journals/siamcomp/PatrascuD06} implies that the logarithmic update time of the data structure of \Cref{theo:main} is tight, and furthermore it is tight even for some cases of \Cref{thm:kernelization}, such as \textsc{Cycle Packing} on planar graphs.
It would be interesting to know if the logarithmic update time for dynamic kernelization is tight even for problems with more local flavor, such as \textsc{Dominating Set} on planar graphs.

While the logarithmic update time seems natural for problems associated with treewidth-modulators, perhaps it could be improved to constant update time for problems associated with treedepth-modulators, since graphs of bounded treedepth admit dynamic algorithms with constant update time~\cite{Dvořák_Kupec_Tůma_2014,DBLP:conf/soda/ChenCDFHNPPSWZ21}.

The fact that the update time of \Cref{theo:main} is amortized instead of worst-case comes from the techniques of~\cite{Korhonen_2025}, and we believe that if the result of~\cite{Korhonen_2025} would be improved to worst-case instead of amortized, then \Cref{theo:main} most likely also could be improved.

The data structure of \Cref{theo:main} is restricted to topological-minor-free graph classes.
This is a somewhat natural barrier because of the following \Cref{thm:topoltight}, whose proof is presented in \Cref{subsec:missingproofsconcl}.
A graph class $\mc G$ is \emph{subgraph-closed} if for all $G \in \mc G$, every subgraph of $G$ is also contained in $\mc G$.

\begin{restatable}{proposition}{topoltight}
\label{thm:topoltight}
Let $\mc G$ be a graph class that is subgraph-closed.
If $\mc G$ does not exclude a topological minor, then there is no function $f(\cTwMod)$ and a constant $c < 2$ so that for every graph $G \in \mc G$ and every $\cTwMod$, $G$ admits an $(f(\cTwMod) \cdot \optTwMod(G)^c, f(\cTwMod))$-protrusion decomposition.
\end{restatable}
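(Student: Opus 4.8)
## Proof proposal

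The plan is to exhibit, for any subgraph-closed class $\mc G$ that does not exclude a topological minor, a family of graphs in $\mc G$ witnessing that no $(f(\cTwMod)\cdot\optTwMod(G)^c, f(\cTwMod))$-protrusion decomposition can exist for $c<2$. First I would use the well-known structural fact (essentially a consequence of the grid/wall theorems, or of the characterization of classes with unbounded topological minors) that a subgraph-closed class not excluding a topological minor contains, for every $h$, a subdivision of the complete graph $K_h$ — or more conveniently, subdivisions of arbitrarily large \emph{walls} or \emph{cubic expanders}. By subgraph-closure, it then also contains all subgraphs of these, which gives us a lot of freedom to build tailored gadgets. The key point is that we want graphs $G$ on which $\optTwMod(G)$ is small (roughly $\sqrt{n}$) while every protrusion decomposition is forced to have a root-bag \emph{and} root-degree that together force the first parameter up to roughly $\optTwMod(G)^2$.

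The core construction I would use is a $t\times t$ grid (or wall) $W_t$, which lives in $\mc G$ since grids are planar and subdivisions of them appear as subgraphs of large wall-subdivisions in $\mc G$. For a fixed constant $\cTwMod$, a $t\times t$ grid has $\optTwMod(W_t) = \Theta(t)$: one can delete one row out of every $\cTwMod+1$ consecutive rows to drop the treewidth below $\cTwMod$, and a matching lower bound follows because the grid has $\Omega(t^2)$ vertices of ``high local treewidth'' that must be hit — more precisely, $W_t$ contains $\Omega(t)$ disjoint copies of a $(\cTwMod+1)\times(\cTwMod+1)$ grid arranged so that any treewidth-$\cTwMod$-modulator must contain $\Omega(t)$ vertices (pack $\Theta(t)$ vertex-disjoint subgrids of side $\cTwMod+2$; each needs at least one modulator vertex). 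Thus $\optTwMod(W_t)^c = \Theta(t^c)$. Now suppose $W_t$ admits a $(k, c')$-protrusion decomposition with $c' = f(\cTwMod)$ a constant and $k \le f(\cTwMod)\cdot t^c$. The root-bag has size $\le k$ and at most $k$ children, each child subtree being a $c'$-protrusion, hence a subgraph of $W_t$ of treewidth $\le c'$. Deleting the root-bag from $W_t$ leaves a graph whose components each have treewidth $\le c'$ (they are unions of protrusions glued only along the root-bag), so the root-bag is itself a treewidth-$c'$-modulator; but a $t\times t$ grid has no treewidth-$c'$-modulator of size $o(t)$ (same packing argument with boxes of side $c'+2$), giving $k = \Omega(t)$. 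That alone is not enough to beat $c<2$, so the second, crucial step is to observe that the $n = t^2$ vertices of the grid must be covered by the root-bag together with the $\le k$ protrusions, and a single $c'$-protrusion, being a bounded-treewidth subgraph of the grid, can contain only $\OO_{c'}(1)\cdot(\text{something})$ grid vertices in a way that respects the tree-decomposition structure — actually the right bound is that a protrusion with $|\partial P|\le c'$ encloses a region of the planar grid whose size is controlled; but to make this robust I would instead pick the construction so that \emph{every} large low-treewidth subgraph has small boundary forcing many pieces: we need $k \cdot (\text{max size of a }c'\text{-protrusion with small boundary that can appear as a child}) \ge n - k$. In the grid, a child-protrusion $P$ with $|\partial P| \le c'$ and $\tw(G[P]) \le c'$ must be ``thin'': by planarity and the fact that its boundary has $\le c'$ vertices, $P$ together with the rest of the grid is separated by $\le c'$ vertices, and a $c'$-vertex separator in a $t\times t$ grid separates off at most $\OO(c'\cdot t)$ vertices unless one side is small; combined with $\tw(G[P])\le c'$ forcing $P$ to be contained in an $\OO(c')$-width strip, each protrusion holds $\OO_{c'}(t)$ vertices. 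Hence $k \cdot \OO_{c'}(t) \ge t^2 - k$, giving $k = \Omega_{c'}(t)$ — still only linear.

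Since the naive grid bound only yields a linear lower bound, the actual obstacle — and the step I expect to be hardest — is choosing the right witnessing family so that the first parameter is genuinely forced to be \emph{quadratic}, matching the $c<2$ threshold in the statement. The natural candidate is to take, instead of a grid, a graph built from $\Theta(t)$ disjoint bounded-treewidth ``blobs'' each of which has a prescribed boundary of size $\Theta(t)$ attached to a common core — i.e., a graph $G$ where $\optTwMod(G) = \Theta(t)$ but every protrusion decomposition is forced to put $\Omega(t^2)$ vertices into the root-bag because the $\Theta(t)$ blobs pairwise share $\Theta(t)$ worth of ``interface'' that no single bounded-size bag-less arrangement can localize. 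Concretely I would use a subdivision of $K_{t}$ (available in $\mc G$): here $\optTwMod$ is $\Theta(t)$ (delete $\Theta(t)$ branch vertices), but I claim any $(k,c')$-protrusion decomposition needs $k = \Omega(t^2/c')$: a child-protrusion of treewidth $\le c'$ and boundary $\le c'$ inside a $K_t$-subdivision can contain only $\OO(c')$ of the $\binom{t}{2}$ subdivided edges (since $c'$ boundary vertices can ``see'' only $\OO(c'^2)$ original branch vertices, and the internal part being low-treewidth can't route many paths), so we need $k \cdot \OO(c'^2) \ge \binom{t}{2} - k\cdot c'$, i.e. $k = \Omega_{c'}(t^2) = \Omega_{c'}(\optTwMod(G)^2)$. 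This contradicts $k \le f(\cTwMod) \cdot \optTwMod(G)^c$ for any $c<2$ once $t$ is large, because $f(\cTwMod)$ and the $c'$-dependent constants are fixed while $t^{2-c}\to\infty$. The remaining work is then purely verification: confirming $K_t$-subdivisions (hence all their subgraphs) lie in $\mc G$ via the hypothesis that $\mc G$ does not exclude a topological minor and is subgraph-closed, pinning down $\optTwMod$ of a $K_t$-subdivision for the fixed constant $\cTwMod$ (upper bound by deleting branch vertices along a spanning-tree-complement argument; lower bound by a routing/bramble argument), and carefully proving the ``a $c'$-protrusion sees $\OO_{c'}(1)$ subdivided edges'' lemma using that $\partial P$ has $\le c'$ vertices and $G[P]$ has treewidth $\le c'$ so cannot contain many internally-disjoint paths between boundary vertices. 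I would present it as: fix $\cTwMod$; suppose $f,c$ exist; take $t$ large with $t^{2-c} > (\text{product of relevant constants})$; derive the contradiction from the protrusion decomposition of the $K_t$-subdivision.
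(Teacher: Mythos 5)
Your final construction---a subdivision of $K_t$, with the observation that a bounded-treewidth, bounded-boundary child protrusion can intersect only $\OO_{f(\cTwMod)}(1)$ of the $\binom{t}{2}$ subdivision paths (because too many fully-contained paths would force a large clique as a topological minor inside the protrusion, contradicting its bounded treewidth)---is exactly the argument the paper gives. The lengthy grid detour at the start is correctly discarded and plays no role; what you land on, including the final $\binom{t}{2}$-versus-$t^c$ counting and the choice to fix one convenient value of $\cTwMod$, matches the paper's proof of \Cref{thm:topoltight}.
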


\Cref{thm:topoltight} tells that topological-minor-free graph classes are the most general subgraph-closed classes that admit a linear relation between treewidth-modulators and protrusion decompositions.
However, one could still imagine a version of \Cref{theo:main} that directly maintains an approximately-optimal protrusion decomposition, regardless of the size of the minimum treewidth-modulator.
Such a version of \Cref{theo:main} could be useful for dynamic versions of the applications of protrusion decompositions beyond topological-minor-free graph classes, e.g.~\cite{Fomin_Lokshtanov_Misra_Saurabh_2012}.

Currently, the dynamic kernelization algorithms of \Cref{thm:kernelization} provide no method of lifting a solution in the kernel $K$ into a solution in the original graph.
It would be interesting if such a method could be implemented, for example, lifting a dominating set of $K$ of size $\opt(K)$ into a dominating set of $G$ of size $\opt(G)$ in time $\OO(\opt(G) \cdot \log n)$.

\bibliographystyle{alpha}
\bibliography{refs}
\appendix
\section{Missing proofs}
\label{sec:missingproofs}

\subsection{Proofs missing from \Cref{sec:preliminaries}}

We prove \Cref{lem:well_linked_torso} using the following lemma from~\cite{Korhonen_2025}. We denote by $e_\perp$ the hyperedge with $V(e_\perp) = \emptyset$. For a hypergraph $G$ with a hyperedge $e \in E(G)$, an $e$-rooted superbranch decomposition of $G$, where the root is the leaf $\Lc^{-1}(e)$.

\begin{lemma}[{\cite[Lemma 4.3]{Korhonen_2025}}]\label{lem:well_linked_trans}
    Let $G$ be a hypergraph, $e_\perp \in E(G)$, and $\Tc = (T,\Lc)$ an $e_\perp$-rooted superbranch decomposition of $G$. Let also $t \in V_\inter(T)$ be a node with parent $p$, so that $\Lc[c]$ is well-linked for every child $c$ of $t$. Let $e_p \in E(\torso(t))$ be the hyperedge of $\torso(t)$ that corresponds to $p$. Then, a set $A \subseteq E(\torso(t)) \setminus \{e_p\}$ is well-linked in $\torso(t)$ if and only if $A \expand \Tc$ is well-linked in $G$.
\end{lemma}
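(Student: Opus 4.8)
The plan is to transport well-linkedness across the operation $\expand$, exploiting that this operation does not disturb boundaries. Write $C \coloneq A \expand \Tc = \bigcup_{c \in \mathcal{C}_A} \Lc[c]$, where $\mathcal{C}_A = \{c \in \chd(t) \mid e_c \in A\}$. The first step is a purely structural claim: for \emph{every} $D \subseteq E(\torso(t)) \setminus \{e_p\}$ one has $\bd_{\torso(t)}(D) = \bd_G(D \expand \Tc)$, and hence $\lambda_{\torso(t)}(D) = \lambda_G(D \expand \Tc)$. To prove it I would unfold the definition of the torso: a vertex $v$ lies in $V(e_c) = \adh(ct) = \bd(\Lc[c])$ precisely when $v$ is incident in $G$ both to an edge of $\Lc[c]$ and to an edge outside $\Lc[c]$, while $V(e_p) = \bd(\Lc[t])$. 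A short case analysis on a witnessing edge of $\overline{D \expand \Tc}$ incident to a candidate boundary vertex $v$ — it either lies inside $\Lc[t]$, and then belongs to some sibling block $\Lc[c']$ with $e_{c'} \notin D$, or lies outside $\Lc[t]$, and then $v \in \bd(\Lc[t]) = V(e_p)$ — gives both inclusions, with pairwise disjointness of the blocks $\Lc[c]$ handling the bookkeeping. This step uses no well-linkedness.

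Given this, one direction is immediate. If $C$ is well-linked in $G$ and $(A_1,A_2)$ is a bipartition of $A$, then $(A_1 \expand \Tc, A_2 \expand \Tc)$ is a bipartition of $C$, so $\lambda_G(A_i \expand \Tc) \ge \lambda_G(C)$ for some $i \in [2]$; applying the structural claim to $A_i$ and to $A$ turns this into $\lambda_{\torso(t)}(A_i) \ge \lambda_{\torso(t)}(A)$. Hence $A$ is well-linked in $\torso(t)$.

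For the converse — the main content — I would argue by contradiction: suppose $A$ is well-linked in $\torso(t)$ but $C$ is not well-linked in $G$, so there is a bipartition $(C_1,C_2)$ of $C$ with $\lambda_G(C_1) < \lambda_G(C)$ and $\lambda_G(C_2) < \lambda_G(C)$ (this forces $\lambda_G(C) > 0$, which we may assume). The idea is to ``uncross'' $(C_1,C_2)$ with each block $W = \Lc[c]$, $c \in \mathcal{C}_A$, one at a time, so that at the end every block lies entirely on one side; the resulting bipartition corresponds to a bipartition $(A_1,A_2)$ of $A$, and the structural claim then contradicts well-linkedness of $A$. The invariant to maintain is: $(C_1,C_2)$ is a bipartition of $C$ respecting all already-processed blocks, with both $\lambda_G(C_1) < \lambda_G(C)$ and $\lambda_G(C_2) < \lambda_G(C)$. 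For the next block $W = \Lc[c]$ — which is well-linked in $G$ by downwards well-linkedness — I apply \Cref{lem:uncrossing} to $W$ with $B = C_1$ and with $B = C_2$, obtaining for each side $i$ one of the two options ``move $W$ onto side $i$'' that does not increase $\lambda_G(C_i)$. If both uncrossings point to the same side we are done; in the remaining (``conflicting'') case one additionally bounds the $\lambda$-value of the side that loses $W$ by submodularity: with $W_i = W \cap C_i$, one computes $\lambda_G(C_i \cup \overline{W}) = \lambda_G(W_{3-i})$ and therefore
\[\lambda_G(C_i \setminus W) \;\le\; \lambda_G(C_i) + \lambda_G(W) - \lambda_G(W_{3-i}),\]
so whichever of $\lambda_G(W_1) \ge \lambda_G(W)$, $\lambda_G(W_2) \ge \lambda_G(W)$ is guaranteed by well-linkedness of $W$ picks out an option keeping both strict inequalities. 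Since blocks are pairwise disjoint, moving a whole block never disturbs an already-processed block, so the invariant survives, and after all blocks are processed we obtain the desired block-respecting violating bipartition.

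The step I expect to be the main obstacle is exactly this last one: checking that in the conflicting cases one can still keep \emph{both} $\lambda_G(C_1) < \lambda_G(C)$ and $\lambda_G(C_2) < \lambda_G(C)$, which is where the case analysis combining \Cref{lem:uncrossing}, submodularity of $\lambda$, and well-linkedness of $W$ is essential; everything else is definitional. This argument follows the structure of the proof of \cite[Lemma~4.3]{Korhonen_2025}.
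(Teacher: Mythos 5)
The paper does not prove this lemma; it is imported verbatim from \cite[Lemma~4.3]{Korhonen_2025} (the appendix only derives the slightly more general \Cref{lem:well_linked_torso} from it). So there is no in-paper proof to compare against, and your proposal must be judged on its own merits. The overall plan is sound and does follow the expected structure: a boundary-preservation claim for $\expand$, the easy direction by pulling back a bipartition, and the hard direction by uncrossing a hypothetical violating bipartition of $C = A\expand\Tc$ block by block.

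The structural claim $\bd_{\torso(t)}(D) = \bd_G(D\expand\Tc)$ for $D \subseteq E(\torso(t))\setminus\{e_p\}$ is correct, and the case analysis you sketch (witness edge inside $\Lc[t]$ lands in a sibling block, witness edge outside $\Lc[t]$ forces $v \in V(e_p)$) suffices; the disjointness of the blocks $\Lc[c]$ is indeed the bookkeeping needed for the reverse containment too. The easy direction then follows immediately.

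In the uncrossing step there is a genuine (though small and easily repairable) gap. You write only the bound
\[
\lambda_G(C_i\setminus W) \le \lambda_G(C_i) + \lambda_G(W) - \lambda_G(W_{3-i}),
\]
obtained from submodularity applied to $C_i$ and $\overline{W}$. But the conflicting cases also require the symmetric bound on the side that \emph{gains} $W$, namely
\[
\lambda_G(C_j\cup W) \le \lambda_G(C_j) + \lambda_G(W) - \lambda_G(W_j),
\]
from submodularity applied directly to $C_j$ and $W$. Concretely, if \Cref{lem:uncrossing} returns the ``$\setminus$'' option for both $C_1$ and $C_2$ (both would rather shed $W$), then after choosing the destination $j$ via $\lambda(W_j) \ge \lambda(W)$, the quantity you must control is $\lambda(C_j \cup W)$, which your written inequality does not address. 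Once both inequalities are in place, the argument closes — and in fact \Cref{lem:uncrossing} becomes redundant: well-linkedness of $W$ yields $\lambda(W_j) \ge \lambda(W)$ for some $j$, and the two submodularity bounds then directly give $\lambda(C_j \cup W) \le \lambda(C_j) < \lambda(C)$ and $\lambda(C_{3-j}\setminus W) \le \lambda(C_{3-j}) < \lambda(C)$, so $W$ can be moved wholesale onto side $j$ with no case split on what \Cref{lem:uncrossing} returns. The rest of the argument (disjointness of blocks preserving already-processed blocks, and transferring the final block-respecting bipartition back to $\torso(t)$ via the structural claim) is correct.
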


Now, \Cref{lem:well_linked_torso} is an easy generalization of \Cref{lem:well_linked_trans}.

\wellLinkedTrans*
\begin{proof}
    Let $G' = G \cup \{e_\perp\}$ and let $\Tc' = (T',\Lc')$ be the rooted superbranch decomposition of $G'$ obtained from $\Tc$ by adding a node $r'$ to $T$ as a leaf-child of $r$, and setting $\Lc'(r') = e_\perp$. We first note that a set $B \subseteq E(G)$ is well-linked in $G$ if and only if it is well-linked in $G'$. For a node $t \in V_\inter(T')$, we denote by $\torso'(t)$ the torso of $t$ in $\Tc'$. We then further observe that for every node $t \in V_\inter(T') \setminus \{r\} = \Vint(T) \setminus \{r\}$, $\torso(t) = \torso'(t)$, while $\torso'(r) = \torso(r) \cup e_\perp$. Therefore, for every node $t \in \Vint(T)$, a set $A \subseteq E(\torso(t))$ is well-linked in $\torso(t)$ if and only if it is well-linked in $\torso'(t)$. Applying \Cref{lem:well_linked_trans}, the lemma follows.
\end{proof}

\subsection{Proofs missing from \Cref{sec:kernelization}}

\progRepr*
\begin{proof}
    Let $\mc{C}$ be an equivalence class of $\equiv_{\Pi,\mc G}$.
    Without loss of generality (by flipping the $\in \Pi$ relation if needed), we assume that $(G,k) \notin \Pi$ for every graph $G$ and $k < 0$.
    If $\mc C$ is a class of monotone graphs, i.e., for every boundaried graph $G \in \mc C$, every $F \in \mc F$, and every $k \in \mathbb{Z}$, $(F \oplusu G,k) \notin \Pi$, then $\Delta_\Pi(G_1,G_2) = 0$ for any two graphs $G_1,G_2 \in \mc C$, and thus every graph in $\mc C$ is a progressive representative.
    
    Otherwise, there exist $G_0 \in \mc C$, $F_0 \in \mc F$, and a (non-negative) integer $k_0$ such that $(F_0 \oplusu G_0,k_0) \in \Pi$. Among all those triples, we choose $(G_0,F_0,k_0)$ such that $k_0$ is minimized. We now show that $G_0$ is a progressive representative of $\mc C$. For this, let $G \in \mc C$ and suppose that $\Delta_\Pi(G_0,G) < 0$. For every $(F,k) \in \mc F \times \mathbb{Z}$, we have $(F \oplusu G_0,k) \in \Pi$ if and only if $(F \oplusu G,k + \Delta_\Pi(G_0,G)) \in \Pi$. Thus, since $(F_0 \oplusu G_0,k_0) \in \Pi$, we have $(F_0 \oplusu G,k_0 + \Delta(G_0,G)) \in \Pi$, which is a contradiction to the minimality of $(G_0,F_0,k_0)$. It follows that $\Delta_\Pi(G_0,G)\geq 0$, and since $G$ was chosen arbitrarily from $\mc C$, $G_0$ is a progressive representative of $\mc C$.
\end{proof}

\subsection{Proofs missing from \Cref{sec:concl}}
\label{subsec:missingproofsconcl}
We re-state \Cref{thm:topoltight} and prove it.

\topoltight*
\begin{proof}
Let $\mc G$ be a graph that is subgraph-closed and does not exclude a topological minor.
For integers $r,t \ge 1$, a graph is an \emph{$(\ge r)$-subdivided $t$-clique} if it is obtained from the $t$-clique by replacing each edge by a path of $\ge r$ edges.
We call the $t$ ``non-subdivision'' vertices of such a graph the \emph{junction vertices}.
We observe that because graphs in $\mc G$ contain all graphs as topological minors, they particularly contain an $(\ge 2)$-subdivided $t$-clique as a topological minor for every $t \ge 1$, which by subgraph-closedness implies that $\mc G$ contains an $(\ge 2)$-subdivided $t$-clique for every $t \ge 1$.

For the sake of contradiction, suppose that there is a function $f(\cTwMod)$ and a constant $c < 2$, so that for every graph $G \in \mc G$ and every $\cTwMod$, $G$ admits an $(f(\cTwMod) \cdot \optTwMod(G)^c, f(\cTwMod))$-protrusion decomposition.
Let $G \in \mc G$ be a $(\ge 2)$-subdivided $t$-clique for a large enough $t$ that we will choose later.
By the assumption, it admits an $(f(1) \cdot t^c, f(1))$-protrusion-decomposition $(T,\bag)$, rooted at a node $r \in V(T)$.
For a child $c$ of $r$, denote by $\Tc_c = (T_c, \bag_c)$ the tree decomposition rooted at $c$, and by $G_c$ the subgraph of $G$ induced by the bags of $\Tc_c$, and note that $\tw(G_c) \le f(1)-1$.

A \emph{subdivision path} of $G$ is a maximal path in $G$ containing no junction vertices.
In particular, $G$ has exactly $\binom{t}{2}$ subdivision paths, which are pairwise disjoint.

\begin{claim}
For each $c$, $G_c$ contains vertices from at most $\binom{4 \cdot f(1)}{2} + 2 \cdot f(1)$ subdivision paths.
\end{claim}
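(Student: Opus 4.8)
The plan is to argue that if $G_c$ met vertices of too many subdivision paths, then $G_c$ would contain a large clique as a topological minor, contradicting $\tw(G_c) \le f(1)-1$.

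\medskip

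First I would set up the following dichotomy for each subdivision path $Q$ of $G$ (there are $\binom{t}{2}$ of these, one per edge of the underlying $t$-clique, and they are pairwise vertex-disjoint except possibly at junction vertices, which we can handle separately). Say that $c$ \emph{fully covers} $Q$ if $V(Q) \subseteq V(G_c)$, and \emph{partially covers} $Q$ if $V(Q) \cap V(G_c) \neq \emptyset$ but $V(Q) \not\subseteq V(G_c)$. The claimed bound has the shape $\binom{4 f(1)}{2} + 2 f(1)$, which strongly suggests that the number of \emph{fully} covered subdivision paths is bounded by $\binom{4 f(1)}{2}$ (via a clique-topological-minor argument) and the number of \emph{partially} covered ones is bounded by $2 f(1)$ (via the adhesion/boundary of $G_c$ to the rest of $G$). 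Indeed, for the partial case: a subdivision path $Q$ that is only partially covered must contain a vertex of $G_c$ and a vertex not in $G_c$, and since $Q$ is a path, it must contain a vertex $v \in V(G_c)$ that has a neighbor outside $V(G_c)$; this neighbor lies in $\bag(r)$ by the tree-decomposition connectivity property (the only way an edge can leave the subtree $T_c$ is through the adhesion $\adh(cr) = \bag(c) \cap \bag(r) \subseteq \bag(c)$, which has size at most $f(1)$). Actually each such $Q$ contributes at least one vertex of $\bag(c) \cap \bag(r)$, and since a path has at most $2$ endpoints crossing out, a counting argument over the $\le f(1)$ vertices of the adhesion gives at most $2 f(1)$ partially covered paths. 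Since the subdivision paths are (essentially) disjoint, distinct partially covered $Q$'s use distinct adhesion vertices, except for the junction vertices which I would account for with an additive slack absorbed into the constant.

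\medskip

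The main step, and the step I expect to be the real obstacle, is the bound on the number of \emph{fully covered} subdivision paths. Here the idea is: suppose $c$ fully covers strictly more than $\binom{4 f(1)}{2}$ subdivision paths. Each fully covered subdivision path $Q$ connects two junction vertices $x_Q, y_Q$ of $G$. Contracting each such path down to a single edge realizes, inside $G_c$ (after also adding the junction vertices, using that $Q$ is fully inside $G_c$ so at least its two endpoints' worth of structure is available — I would need to be slightly careful about whether the junction vertices themselves lie in $V(G_c)$, but this only costs an additive constant), a graph $G'$ on the relevant junction vertices with more than $\binom{4 f(1)}{2}$ edges. A graph with more than $\binom{m}{2}$ edges has more than $m$ vertices of positive degree, but more usefully: a graph with average degree $\ge d$ contains a $K_{\Omega(d)}$ topological minor (Bollobás–Thomason / Komlós–Szemerédi), or more elementarily, a graph with more than $\binom{m}{2}$ edges simply cannot be $K_{m+1}$-minor-free in a trivial way — actually the cleanest route is: if $G'$ has more than $\binom{4f(1)}{2}$ edges then $|E(G')| > \binom{4f(1)}{2}$, and since $G'$ is (via the $s_i$-outside-path / subdivision-path structure) a topological minor of $G_c$, and $G_c$ has treewidth $\le f(1) - 1$, we get a contradiction with $\wl(\Hc(G_c)) \le 3 f(1)$ (Lemma \ref{lem:well_linked_number_treewidth}) combined with the fact that a graph with many edges on few vertices has large well-linkedness, or even more simply with the fact that treewidth $\le f(1)-1$ implies $|E(G_c)| \le f(1) \cdot |V(G_c)|$ but here the point is the clique-minor: $\tw(G_c) \le f(1)-1$ forbids $K_{f(1)+1}$ as a minor, and a graph on a vertex set of size $m$ with more than $\binom{m}{2}$ "edges" would need $m$ to be large, forcing a $K_{f(1)+1}$-subdivision. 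I would pick the constant $4 f(1)$ precisely so that $\binom{4f(1)}{2}$ full paths force $K_{f(1)+1}$ as a topological minor of $G_c$ (using that the subdivision paths, being $(\ge 2)$-subdivided, give genuine internally-disjoint paths suitable for a topological-minor model). The delicate part is matching the exact constant $4 f(1)$ in the binomial, and making sure the "outside paths" between junction vertices can be rerouted to avoid each other — but since the subdivision paths of $G$ are pairwise disjoint, these paths are automatically internally disjoint, so the topological-minor model is immediate once we know $G_c$ contains enough of them.

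\medskip

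So, to summarize the plan: (i) classify subdivision paths as fully or partially covered by $G_c$; (ii) bound partially covered ones by $2 f(1)$ using that each contributes a vertex of the adhesion $\bag(c) \cap \bag(r)$, $|\bag(c)| \le f(1)$, and paths cross the boundary at most twice; (iii) bound fully covered ones by $\binom{4 f(1)}{2}$ by observing that more than that many would yield a $K_{f(1)+1}$ (or similar) topological-minor model inside $G_c$ — using the disjointness of subdivision paths of the subdivided clique — contradicting $\tw(G_c) \le f(1)-1$; (iv) add the two bounds. The obstacle is entirely in step (iii): getting the clique-minor threshold to line up with the stated constant, and cleanly reducing "$G_c$ has treewidth $< f(1)$" to "$G_c$ has no large clique topological minor" — for which I would invoke that bounded treewidth implies bounded clique minors, or directly Lemma \ref{lem:well_linked_number_treewidth} together with an observation that a subdivided large clique has large well-linked number.
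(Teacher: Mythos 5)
Your step (iii) has a genuine gap. Having more than $\binom{4f(1)}{2}$ fully covered subdivision paths means your graph $G'$ (on the junction vertices, with one edge per fully covered path) has more than $\binom{4f(1)}{2}$ edges, but this does \emph{not} force a $K_{f(1)+1}$ topological minor. Since $G'$ is a topological minor of $G_c$, it has treewidth at most $f(1)-1$, hence average degree $O(f(1))$; nothing stops $G'$ from being, say, a tree on arbitrarily many junction vertices, which has arbitrarily many edges yet no $K_3$ subdivision. You cannot conclude a large clique minor from an edge count alone, and the Bollobás--Thomason / Komlós--Szemerédi route needs average degree $\Omega(k^2)$ to force $K_k$, far above what a bounded-treewidth $G'$ can supply. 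Counting vertices is equally inconclusive: $G'$ may well have more than $4f(1)$ vertices, but that does not pairwise connect any $f(1)+1$ of them.

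The missing idea is a pruning/pigeonhole step that upgrades ``many edges'' to ``a complete graph on a subset''. The paper does this as follows. Having $> \binom{4f(1)}{2}+f(1)$ fully contained paths forces $> \binom{4f(1)}{2}$ of them to lie entirely in $G_c \setminus \bag(c)$ (the paths are disjoint, so at most $|\bag(c)| \le f(1)$ of them can meet $\bag(c)$), which in turn forces $G_c$ to contain $> 4f(1)$ junction vertices, hence $G_c \setminus \bag(c)$ contains a set $J$ of $> 3f(1)$ junction vertices. Now comes the structural observation you are missing: for \emph{every} pair $u,v \in J$, the subdivision path between $u$ and $v$ is at least partially contained in $G_c$ (because $u,v \notin \bag(c)$ forces their neighbors on the path into $G_c$). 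Since at most $f(1)$ paths are only partially contained, deleting at most $2f(1)$ vertices from $J$ (two per partially contained path) leaves $J'$ with $|J'| > f(1)$ such that \emph{all} paths between pairs in $J'$ are fully contained in $G_c$. Those disjoint paths give a genuine $K_{f(1)+1}$ topological minor model in $G_c$, contradicting $\tw(G_c) \le f(1)-1$. Your parts (i), (ii), (iv), and your observation that the subdivision paths are automatically internally disjoint, are all fine and match the paper; the ``count edges, get a clique'' shortcut in (iii) is the flaw.
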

\begin{claimproof}
Let $P$ be a subdivision path of $G$.
We say that $G_c$ \emph{partially contains} $P$ if some vertices of $P$ are in $G_c$ and some are not, and \emph{fully contains} $P$ if all vertices of $P$ are in $G_c$.
We observe that if $G_c$ partially contains $P$, then at least one vertex of $P$ must be in $\bag(c)$, and therefore $G_c$ can partially contain at most $f(1)$ subdivision paths.

Suppose that $G_c$ fully contains more than $\binom{4 \cdot f(1)}{2} + f(1)$ subdivision paths.
If follows that $G_c \setminus \bag(c)$ fully contains more than $\binom{4 \cdot f(1)}{2}$ subdivision paths, which implies that $G_c$ contains more than $4 \cdot f(1)$ junction vertices, which implies that $G_c \setminus \bag(c)$ contains more than $3 \cdot f(1)$ junction vertices.
Denote these junction vertices by $J$.

Each subdivision path between a pair of junction vertices in $J$ must be partially or fully contained in $G_c$, so there is a subset $J' \subseteq J$ of size $|J'| > f(1)$ so that all subdivision paths between vertices in $J'$ are fully contained in $G_c$.
However, then $G_c$ contains an $(f(1)+1)$-clique as a (topological) minor, implying that $\tw(G_c) \ge f(1)$, which is a contradiction.
\end{claimproof}

Now, the graphs $G_c$ over all $c$ can contain vertices from at most $f(1) \cdot t^c \cdot \left(\binom{4 \cdot f(1)}{2} + 2 \cdot f(1)\right)$ subdivision paths, and the root-bag $\bag(r)$ can contain vertices from at most $f(1) \cdot t^c$ subdivision paths.
By choosing $t$ large enough depending on $f(1)$ and $c$, we get that \[\binom{t}{2} > f(1) \cdot t^c \cdot \left(\binom{4 \cdot f(1)}{2} + 2 \cdot f(1) + 1\right),\] so there would be a subdivision path that appears nowhere in the protrusion decomposition $(T,\bag)$, which is a contradiction.
\end{proof}

\section{Tree decomposition automata}
\label{sec:automata}

In this section, following~\cite{Korhonen_Majewski_Nadara_Pilipczuk_Sokołowski_2023,Korhonen_2025}, we formally define tree decomposition automata. We assume that the vertices of all graphs that we process come from a countable, totally ordered universe $\Omega$ (e.g., $\Omega = \N$)

\begin{definition}[\cite{Korhonen_Majewski_Nadara_Pilipczuk_Sokołowski_2023}]\label{def:automaton}
    A \emph{(deterministic) tree decomposition automaton of width $\ell$} is a tuple $\autom = (Q,F,\iota,\delta)$, where
    \begin{itemize}
        \item $Q$ is a (possibly infinite) set of \emph{states}, which we assume to contain the \enquote{null state} $\perp$,
        \item $F \subseteq Q$ is a set of \emph{accepting states},
        \item $\iota$ is an \emph{initial mapping} that assigns to each boundaried graph $G$ with at most $\ell + 1$ vertices a state $\iota(G) \in Q$,
        \item $\delta \colon 2^\Omega \times 2^\Omega \times 2^\Omega \times 2^{\binom{\Omega}{2}} \times Q \times Q \to Q$ is a \emph{transition mapping} that describes the transitions.
    \end{itemize}
\end{definition}

\begin{definition}[\cite{Korhonen_Majewski_Nadara_Pilipczuk_Sokołowski_2023}]\label{def:run}
    Let $\autom = (Q,F,\iota,\delta)$ be a tree decomposition automaton and $\Tc = (T,\bag,\edges)$ be an annotated tree decomposition of a graph $G$. The \emph{run} $\run^{\Tc}_{\autom} \colon V(T) \to Q$ of $\autom$ on $\Tc$ is the unique labeling satisfying the following:
    \begin{itemize}
        \item For every leaf $l$ of $T$: $\run^{\Tc}_{\autom}(l) = \iota(G_l)$.
        \item For every non-leaf node $x \in V(T)$ with one child $y$: \[\run^{\Tc}_{\autom}(x) = \delta(\bag(x),\bag(y),\emptyset,\edges(x),\run^{\Tc}_{\autom}(y),\perp).\] 
        \item For every non-leaf node $x \in V(T)$ with two children $y$ and $z$: \[\run^{\Tc}_{\autom}(x) = \delta(\bag(x),\bag(y),\bag(z),\edges(x),\run^{\Tc}_{\autom}(y),\run^{\Tc}_{\autom}(z)).\]
    \end{itemize}
    A tree decomposition automaton $\autom$ \emph{accepts} a tree decomposition $(T,\bag,\edges)$ with root $r$ if $\run^{\Tc}_{\autom}(r) \in F$.
\end{definition}

\begin{observation}\label{obs:run_subtree}
    Let $\autom$ be a tree decomposition automaton and $\Tc = (T,\bag,\edges)$ be an annotated tree decomposition of a graph $G$. For a given node $t \in V(T)$ with parent $p$, the run $\run_{\autom}^{\Tc}(t)$ depends only on $\Tc\restriction_{V(T_t)}$ and $\adh(tp)$.
\end{observation}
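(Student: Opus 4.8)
The plan is to unroll the recursive definition of the run (\Cref{def:run}) in a bottom-up fashion and observe that every ingredient it consults at the nodes of $T_t$ is already contained in $\Tc\restriction_{V(T_t)}$. This shows the a priori stronger statement that $\run_{\autom}^{\Tc}(t)$ is determined by $\Tc\restriction_{V(T_t)}$ alone; the observation then follows trivially since one may freely also be handed $\adh(tp)$.

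First I would record the structural facts about the restriction that make this work. Since $V(T_t)=\desc(t)$ is a downward-closed subtree of $T$, the tree $T[V(T_t)]=T_t$ is rooted at $t$, for every $s\in V(T_t)$ the children of $s$ in $T_t$ coincide with its children in $T$, and the leaves of $T_t$ are exactly the nodes of $T_t$ that are leaves of $T$. Moreover, for each $s\in V(T_t)$ the restriction does not change the stored data: $\bag\restriction_{V(T_t)}(s)=\bag(s)$ and $\edges\restriction_{V(T_t)}(s)=\edges(s)$, and for every leaf $l$ of $T_t$ the (boundaried) graph $G_l=(V_l,E_l)$ with $V_l=\bag(l)$ and $E_l=\edges(l)$ is determined by $\bag(l)$ and $\edges(l)$, hence by $\Tc\restriction_{V(T_t)}$.

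Next I would prove, by bottom-up induction on $s\in V(T_t)$, that $\run_{\autom}^{\Tc}(s)$ depends only on $\Tc\restriction_{V(T_t)}$. For the base case, if $s$ is a leaf of $T_t$ then it is a leaf of $T$, so $\run_{\autom}^{\Tc}(s)=\iota(G_s)$, which depends only on $G_s$ and hence only on $\bag(s),\edges(s)$. For the inductive step, if $s$ is an internal node of $T_t$ with children $y$ (and possibly $z$) — the same children as in $T$ — then by \Cref{def:run} the value $\run_{\autom}^{\Tc}(s)$ is obtained from $\bag(s)$, $\bag(y)$ (and $\bag(z)$), $\edges(s)$, and $\run_{\autom}^{\Tc}(y)$ (and $\run_{\autom}^{\Tc}(z)$) through the transition mapping $\delta$; the bags and edge-sets involved are contained in $\Tc\restriction_{V(T_t)}$, and the children's runs depend only on $\Tc\restriction_{V(T_t)}$ by the induction hypothesis, so the same holds for $\run_{\autom}^{\Tc}(s)$. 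Applying this with $s=t$ yields that $\run_{\autom}^{\Tc}(t)$ is determined by $\Tc\restriction_{V(T_t)}$, and in particular by the pair $\big(\Tc\restriction_{V(T_t)},\adh(tp)\big)$, as claimed.

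I do not expect any genuine obstacle: the only point requiring a moment's care is that restricting $\Tc$ to $V(T_t)$ alters neither the child/leaf structure within $T_t$ nor the bag and $\edges$ values at those nodes — which is exactly where downward-closedness of $V(T_t)$ in $T$ is used — so the recursion computing the run sees precisely the same data whether evaluated on $\Tc$ or on $\Tc\restriction_{V(T_t)}$.
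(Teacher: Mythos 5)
Your proposal is correct, and since the paper states this as an \emph{Observation} with no accompanying proof, the intended argument is exactly the one you give: unroll \Cref{def:run} bottom-up over $T_t$ and notice that every quantity consulted at a node $s\in V(T_t)$ (the bag of $s$, the bags of its children, $\edges(s)$, the leaf graph $G_l$, and the children's states) is already present in $\Tc\restriction_{V(T_t)}$, using that $V(T_t)=\desc(t)$ is downward-closed so children and leaves are preserved under restriction. Your argument in fact establishes the slightly stronger statement that $\run_{\autom}^{\Tc}(t)$ is determined by $\Tc\restriction_{V(T_t)}$ alone, with $\adh(tp)$ playing no role; this is consistent with how the observation is actually invoked in the proof of \Cref{lem:internal_treewidth_automaton}, where only equality of the restricted decompositions is used to conclude equality of runs, and the adhesion never enters.
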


We say that an automaton has \emph{evaluation time} $\tau$ if the functions $\iota$ and $\delta$ can be evaluated in time $\tau$, and, given a state $q \in Q$, it can be decided whether $q \in F$ in time $\tau$ as well.



In~\cite{Korhonen_Majewski_Nadara_Pilipczuk_Sokołowski_2023}, a treewidth automaton was introduced that, given a tree decomposition of width $\ell$ of a graph $G$ together with an integer $k \leq \ell$, decides whether $\tw(G) \leq k$. The algorithm is based on the linear-time dynamic programming algorithm by Bodlaender and Kloks~\cite{Bodlaender_Kloks_1996}, and thus called \emph{Bodlaender-Kloks automaton}.

\begin{lemma}[Lemma A.3 in~\cite{Korhonen_Majewski_Nadara_Pilipczuk_Sokołowski_2023}]\label{lem:treewidth_automaton}
    For every pair of integers $k \leq \ell$, there is a tree decomposition automaton $\mc{BK}_{k,\ell}$ of width $\ell$ with the following property: For any graph $G$ and its annotated tree decomposition $(T,\bag,\edges)$ of width at most $\ell$, $\mc{BK}_{k,\ell}$ accepts $(T,\bag,\edges)$ if and only if the treewidth of $G$ is at most $k$. The state space of $\mc{BK}_{k,\ell}$ is of size $\OO_{k,\ell}(1)$ and can be computed in time $\OO_{k,\ell}(1)$. The evaluation time of $\mc{BK}_{k,\ell}$ is $\OO_{k,\ell}(1)$ as well.
\end{lemma}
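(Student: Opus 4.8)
The plan is as follows. \Cref{lem:treewidth_automaton} is a known result, imported as Lemma~A.3 of~\cite{Korhonen_Majewski_Nadara_Pilipczuk_Sokołowski_2023}, so what I would actually do is recall why the classical Bodlaender--Kloks dynamic program for treewidth~\cite{Bodlaender_Kloks_1996} can be packaged as a tree decomposition automaton. Fix $k \le \ell$. For a node $t$ of an annotated tree decomposition $\Tc=(T,\bag,\edges)$ of width $\le \ell$, write $G_t$ for the graph induced by the subtree below $t$, so $|\bag(t)| \le \ell+1$; the \emph{state} assigned to $t$ is the Bodlaender--Kloks \emph{characteristic} of the boundaried graph $(G_t,\bag(t))$ at width $k$ --- the set, reduced modulo the typical-sequence equivalence, of the ``restrictions to $\bag(t)$'' of the width-$\le k$ tree decompositions of $G_t$. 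The entire content imported from~\cite{Bodlaender_Kloks_1996} is that (i) this set is computable, (ii) its size is bounded by a function of $k$ and $\ell$ only --- so the state space $Q$ has size $\OO_{k,\ell}(1)$ and is computable in time $\OO_{k,\ell}(1)$ --- and (iii) $G_t$ has a width-$\le k$ tree decomposition iff the characteristic is non-empty. Point (iii) fixes the accepting set $F$ (all non-empty characteristics), with the empty characteristic playing the role of the null state $\perp$; and since $G_r=G$ at the root $r$, the automaton accepts iff $\tw(G)\le k$.

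It then remains to verify that the characteristic of $t$ depends only on what the transition mapping may read: $\bag(t)$, the bags of its one or two children, $\edges(t)$, and the children's characteristics. I would first reduce to the nice case --- conceptually subdivide the edge between $t$ and each child into a path of introduce/forget nodes that morphs the child's bag into $\bag(t)$, and prepend an ``edge'' step that adds the edges of $\edges(t)$, each joining two vertices of $\bag(t)$. The Bodlaender--Kloks update rules for introduce, forget, add-edge and join then compute the characteristic step by step; each such path has length $\le|\bag(t)|\le\ell+1$ and each elementary update costs $\OO_{k,\ell}(1)$, so the composed transition $\delta$ --- and likewise the initial mapping $\iota$, which on a leaf simply runs the same rules starting from the $(\le\ell+1)$-vertex graph $G_\ell$ --- runs in $\OO_{k,\ell}(1)$ time, giving an automaton of width $\ell$ with the stated bounds. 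The only genuinely hard ingredient is the correctness and boundedness of the characteristics (the typical-sequences argument of~\cite{Bodlaender_Kloks_1996}), which I would cite rather than reprove.

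Finally, I would note that this automaton is precisely the tool needed for the internal-treewidth automaton $\mc{IBK}_{k,\ell}$ of \Cref{lem:internal_treewidth_automaton} that this section is meant to supply: since $\itw(G)=\tw(G\setminus\bd(G))$ and, in a boundaried tree decomposition, $\bd(G)$ lies in the root bag --- so any vertex of $\bd(G)$ occurring in the subtree rooted at a node $t$ already occurs in $\bag(t)$ --- one can let the state at $t$ store, for every subset $S\subseteq\bag(t)$, the $\mc{BK}_{k,\ell}$-state of $\Tc\restriction_{V(T_t)}$ with $S$ deleted from all bags and from $\edges$. The transitions are obtained by applying $\delta_{\mc{BK}}$ to the ``reduced'' bags and edge sets coordinatewise in $S$, with parent--child consistency of the deleted subsets guaranteed by the containment just noted, and acceptance being read off the coordinate corresponding to $\bd(G)$ at the root. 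The main care in that follow-up step goes into the consistency claim for the deleted subsets along tree edges; everything else is bookkeeping on top of $\mc{BK}_{k,\ell}$.
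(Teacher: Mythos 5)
The paper does not prove \Cref{lem:treewidth_automaton} at all; it is imported verbatim as Lemma~A.3 of Korhonen, Majewski, Nadara, Pilipczuk, and Sokołowski, and the paper uses it only as a black box inside the proof of \Cref{lem:internal_treewidth_automaton}. Your sketch — states are Bodlaender--Kloks characteristics of $(G_t,\bag(t))$, acceptance is non-emptiness at the root, the transition is composed from introduce/forget/add-edge/join steps along a conceptual ``locally nice'' path of length $\le |\bag(t)|$, and the size/time bounds come from the typical-sequences machinery of Bodlaender--Kloks — is a faithful recollection of how that automaton is built in the cited source, so it is consistent with the paper's usage. One small imprecision: the null state $\perp$ in \Cref{def:automaton} is a placeholder passed to $\delta$ when a node has only one child, not the same object as the empty characteristic (which is a legitimate, non-accepting state); conflating them does not affect the argument but is worth keeping separate. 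Your closing paragraph on $\mc{IBK}_{k,\ell}$ also matches the paper's actual proof of \Cref{lem:internal_treewidth_automaton} in spirit: the paper likewise keeps, for each $S\subseteq\bag(t)$, the $\mc{BK}_{k,\ell}$-state on $\Tc_S\restriction_{V(T_t)}$, and the parent--child consistency you flag is exactly what the paper establishes via \Cref{obs:run_subtree} and the observation that $S\cap V_y = (S\cap\bag(y))\cap V_y$ for a child $y$.
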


We now use this Bodlaender-Kloks automaton to define an internal treewidth automaton, that is, an automaton that, given a boundaried tree decomposition of width $\ell$ of a boundaried graph $G$ and an integer $k$, decides whether $\itw(G) \leq k$. Recall that $\itw(G) = \tw(G \setminus \bd(G))$. The high-level idea of our automaton is to keep a run of the Bodlaender-Kloks automaton for $G \setminus B$ for every set $B \subseteq V(G)$ that could be the boundary of $G$.

\internalTreewidthAutomaton*
\begin{proof}
    Let $\mc{BK}_{k,\ell} = (Q,F,\iota,\delta)$ be the automaton from \Cref{lem:treewidth_automaton}. In the following, we define a tree decomposition automaton $\mc{IBK}_{k,\ell}$ that, given a graph $G$ with its annotated tree decomposition $\Tc = (T,\bag,\edges)$ of width at most $\ell$, decides whether $\itw(G) \leq k$.

    For this, we introduce the following notation. Let $G$ be a graph and $\Tc = (T,\bag,\edges)$ be its annotated tree decomposition. For a set $S \subseteq V(G)$, we denote by $E_S \coloneq \{uv \in E(G) \mid u \in S\}$ the set of edges with at least one endpoint in $S$. Further, given a set $S \subseteq V(G)$, we define the annotated tree decomposition $\Tc_S = (T,\bag_S,\edges_S)$ with $\bag_S(x) = \bag(x) \setminus S$ and $\edges_S(x) = \edges(x) \setminus E_S$ for every node $x \in V(T)$. Note that $\Tc_S$ is an annotated tree decomposition of the graph $G - S$.
    
    Our goal is to define the automaton $\mc{IBK}_{k,\ell} = (Q^{2^{\ell+1}}, Q^{2^{\ell+1} - 1} \times F, \iota^*, \delta^*)$ in such a way that for any node $x \in V(T)$, the run $\run^{\Tc}_{\mc{IBK}_{k,\ell}}(x)$ is a list that contains for every set $S \subseteq \bag(x)$ the run $\run^{\Tc_S}_{\mc{BK}_{k,\ell}}(x)$ of the automaton $\mc{BK}_{k,\ell}$ on the tree decomposition $\Tc_S$ for the same node $x$. Specifically, when defining $\mc{IBK}_{k,\ell}$, we want to keep the following invariant for every graph $G$ and every annotated tree decomposition $\Tc = (T,\bag,\edges)$ of $G$.

    \begin{invariant}\label{inv:itw_automaton}
        For every node $x \in V(T)$, $\run^{\Tc}_{\mc{IBK}_{k,\ell}}(x)$ is a list that contains for every set $S \subseteq \bag(x)$, the run $\run_{\mc{BK}_{k,\ell}}^{\Tc_{S}}(x)$.
    \end{invariant}
    
    So, formally, a state of $\mc{IBK}_{k,\ell}$ is a list of (at most) $2^{\ell+1}$ states of $\mc{BK}_{k,\ell}$, each corresponding to a set $S \subseteq \Omega$ of size at most $\ell + 1$. Given a node $x \in V(T)$, the internal treewidth of the boundaried graph $G_x$, i.e., the treewidth of $G_x \setminus \bag(x)$, is at most $k$ if and only if $\run^{\Tc_{\bag(x)}}_{\mc{BK}_{k,\ell}}(x) \in F$, so our automaton $\mc{IBK}_{k,\ell}$ accepts exactly in this case. Technically, we realize this by defining an order $\prec$ on the sets $S \subseteq \Omega$ so that $|S_1| \leq |S_2|$ implies $S_1 \prec S_2$ for all sets $S_1,S_2 \in \Omega$, and sorting our ``state-lists'' accordingly (and filling up with $\perp$-entries from the left). Then, for every node $x \in V(T)$, the run $\run^{\Tc_{\bag(x)}}_{\mc{BK}_{k,\ell}}(x)$ corresponding to the set $S = \bag(x)$ will be the last entry in the list $\run^{\Tc}_{\mc{IBK}_{k,\ell}}(x)$, so taking $Q^{2^{\ell+1} - 1} \times F$ as the set of accepting states of $\mc{IBK}_{k,\ell}$ does the trick.

    Now, let us define the initial and transition mapping $\iota^*$ and $\delta^*$ of $\mc{IBK}_{k,\ell}$. We start with the initial mapping $\iota^*$. 
    For a graph $G$ with at most $\ell + 1$ vertices, we define $\iota^*(G)$ to be a list containing $\iota(G-S)$ for every set $S \subseteq V(G)$ (ordered with respect to $\prec$).

    \begin{claim}\label{claim:itw_automaton_leaf}
        Every leaf node $x \in V(T)$ satisfies \Cref{inv:itw_automaton}.
    \end{claim}
    \begin{claimproof}
        Let $x \in V(T)$ be a leaf node. Then, $\run^{\Tc}_{\mc{IBK}_{k,\ell}}(x) = \iota^*(G_x)$. By the definition of $\iota^*$, $\iota^*(G_x)$ is a list containing for every subset $S \subseteq V(G_x) = \bag(x)$ the state $\iota(G_x - S) \in Q$. Since $x$ is also a leaf node of $\Tc_{S}$ and by \Cref{def:run}, we have $\iota(G_x - S) = \run_{\mc{BK}_{k,\ell}}^{\Tc_{S}}(x)$, so \Cref{inv:itw_automaton} is satisfied.
    \end{claimproof}

    Next, we define the transition mapping $\delta^*$.
    Let $(X,Y,Z,J,q',q'')$  be a $6$-tuple, where  $X, Y$, and $Z$ are sets of size at most $\ell + 1$, $J \subseteq \binom{\Omega}{2} \setminus \binom{X}{2}$, and $q',q'' \in Q^{2^{\ell+1}}$. Let further $q'$ (and $q''$ if $q'' \neq \perp)$ be a list that contains for every set $S' \subseteq Y$ (or $S' \subseteq Z$) a state $q'_{S'} \in Q$ (or $q''_{S'} \in Q)$. Then, we define $\delta^*(X,Y,Z,J,q',q'')$ to be a list that contains for every set $S \subseteq B$ the state $\delta(X \setminus S, Y \setminus S, Z \setminus S, J \setminus E_S, q'_{S \cap Y}, q''_{S \cap Z})$. We will later discuss how we can determine this list within our running time. But first, let us prove that this definition satisfies \Cref{inv:itw_automaton}.

    \begin{claim}
        Every node $x \in V(T)$ satisfies \Cref{inv:itw_automaton}.
    \end{claim}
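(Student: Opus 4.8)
The plan is to prove the claim by bottom-up induction on $T$, the leaf case being exactly \Cref{claim:itw_automaton_leaf}. So I would fix an internal node $x \in V(T)$, assume \Cref{inv:itw_automaton} holds at each child of $x$, and deduce it at $x$; I will write the argument for the two-child case (node $x$ with children $y,z$), the one-child case being identical after replacing the third bag-argument by $\emptyset$ and the second child-state by $\perp$. By the definition of a run, $\run^{\Tc}_{\mc{IBK}_{k,\ell}}(x) = \delta^{*}(\bag(x),\bag(y),\bag(z),\edges(x),q',q'')$ with $q' = \run^{\Tc}_{\mc{IBK}_{k,\ell}}(y)$ and $q'' = \run^{\Tc}_{\mc{IBK}_{k,\ell}}(z)$; the inductive hypothesis says these are lists holding $\run^{\Tc_{S'}}_{\mc{BK}_{k,\ell}}(y)$ for every $S' \subseteq \bag(y)$ and $\run^{\Tc_{S'}}_{\mc{BK}_{k,\ell}}(z)$ for every $S' \subseteq \bag(z)$, so $\delta^{*}$ is applicable and, by its definition, the entry of $\run^{\Tc}_{\mc{IBK}_{k,\ell}}(x)$ indexed by a set $S \subseteq \bag(x)$ equals $\delta\bigl(\bag(x)\setminus S,\ \bag(y)\setminus S,\ \bag(z)\setminus S,\ \edges(x)\setminus E_S,\ \run^{\Tc_{S\cap\bag(y)}}_{\mc{BK}_{k,\ell}}(y),\ \run^{\Tc_{S\cap\bag(z)}}_{\mc{BK}_{k,\ell}}(z)\bigr)$.

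The key point, and the one I expect to be the crux, is that $\run^{\Tc_{S\cap\bag(y)}}_{\mc{BK}_{k,\ell}}(y) = \run^{\Tc_{S}}_{\mc{BK}_{k,\ell}}(y)$, and symmetrically for $z$. I would prove this by observing that any vertex $v \in S$ lying in some bag of the subtree $T_y$ must already lie in $\bag(y)$: since $v \in S \subseteq \bag(x)$, the connected subtree $\{t \mid v \in \bag(t)\}$ contains $x$, and if it also meets $T_y$ then, $x$ being the parent of $y$, it must contain $y$. Hence deleting $S$ and deleting only $S \cap \bag(y)$ have the same effect on every bag and every edge-set inside $T_y$, i.e.\ $\Tc_{S}\restriction_{V(T_y)} = \Tc_{S\cap\bag(y)}\restriction_{V(T_y)}$, and moreover $\adh_{\Tc_S}(yx) = \adh(yx)\setminus S = \adh_{\Tc_{S\cap\bag(y)}}(yx)$; by \Cref{obs:run_subtree} the two runs of $\mc{BK}_{k,\ell}$ at $y$ coincide. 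Alongside this I would record the routine bookkeeping: $\bag_S(x) = \bag(x)\setminus S$ and $\edges_S(x) = \edges(x)\setminus E_S$ by definition of $\Tc_S$; $\Tc_S$ is an annotated tree decomposition of $G-S$ (deleting $S$ changes neither which nodes contain a given pair of surviving vertices nor removes any edge with both endpoints outside $S$, so the unique shallowest-node assignment is unchanged for edges of $G-S$); and $\Tc_S$ has width at most $\ell$ since bags only shrink, so $\mc{BK}_{k,\ell}$ is legitimately applied to it.

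Combining these, the $S$-indexed entry of $\run^{\Tc}_{\mc{IBK}_{k,\ell}}(x)$ equals $\delta\bigl(\bag_S(x),\bag_S(y),\bag_S(z),\edges_S(x),\run^{\Tc_S}_{\mc{BK}_{k,\ell}}(y),\run^{\Tc_S}_{\mc{BK}_{k,\ell}}(z)\bigr)$, which by the defining recursion of $\run^{\cdot}_{\mc{BK}_{k,\ell}}$ (\Cref{def:run}) applied to the annotated tree decomposition $\Tc_S$ of $G-S$ is exactly $\run^{\Tc_S}_{\mc{BK}_{k,\ell}}(x)$. As $S \subseteq \bag(x)$ was arbitrary, this is precisely \Cref{inv:itw_automaton} at $x$, which closes the induction. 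A small detail I would also spell out is that the abstract ``list of $\mc{BK}$-states indexed by subsets of the current bag'' is realized as the fixed-length tuple described before the claim, sorted by $\prec$ and padded with $\perp$'s, and that every manipulation above respects that encoding; this is purely mechanical. The only genuinely non-trivial step is the reconciliation of the global deletion $\Tc_S$ with the local data $S \cap \bag(y)$ available at a child, handled by the connectivity argument above.
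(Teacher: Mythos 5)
Your proof is correct and follows essentially the same inductive approach as the paper: you reduce to the key equality $\run^{\Tc_{S\cap\bag(y)}}_{\mc{BK}_{k,\ell}}(y) = \run^{\Tc_{S}}_{\mc{BK}_{k,\ell}}(y)$ via \Cref{obs:run_subtree}, just as the paper does. You spell out a bit more than the paper — in particular, the connectedness argument showing that any $v\in S$ appearing in a bag of $T_y$ must already lie in $\bag(y)$ (which the paper asserts more tersely as "$v\notin V_y$"), and the routine checks that $\Tc_S$ is a valid width-$\le\ell$ annotated tree decomposition and that the adhesions agree — but these are the same underlying facts, not a different route.
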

    \begin{claimproof}
        We show this claim by induction on the nodes of $T$.
        By \Cref{claim:itw_automaton_leaf}, every leaf node of $T$ satisfies \Cref{inv:itw_automaton}. So, let $x \in V(T)$ be a non-leaf node. We distinguish the two cases whether $x$ has one or two children.

        First, assume that $x$ has only one child $y$. By \Cref{def:run}, we have
        \begin{equation*}
            \run^\Tc_{\mc{IBK}_{k,\ell}}(x) = \delta^*(\bag(x),\bag(y),\emptyset,\edges(x),\run^\Tc_{\mc{IBK}_{k,\ell}}(y),\perp).
        \end{equation*}
        
        By induction, \Cref{inv:itw_automaton} is satisfied for $y$, i.e., $\run^{\Tc}_{\mc{IBK}_{k,\ell}}(y)$ is a list that contains for every $S \subseteq \bag(y)$, the state $\run_{\mc{BK}_{k,\ell}}^{\Tc_{S}}(y)$. Thus, $\delta^*$ is defined on this $6$-tuple and we can apply it to obtain $\run^\Tc_{\mc{IBK}_{k,\ell}}(x)$, which is a list that contains for every $S \subseteq \bag(x)$ the state
        \begin{equation*}
            \delta(\bag(x) \setminus S, \bag(y) \setminus S, \emptyset, \edges(x) \setminus E_S, \run_{\mc{BK}_{k,\ell}}^{\Tc_{S \cap \bag(y)}}(y), \perp)
        \end{equation*}

        If there is a vertex $v \in S \setminus \bag(y)$, then $v \notin V_y$, i.e., $v$ is not in the bag of a node in the subtree of $\Tc$ (or $\Tc_S$ or $\Tc_{S \cap \bag(y)}$) rooted at $y$. It follows that $\Tc_S\restriction_{V_y} = \Tc_{S \cap \bag(y)}\restriction_{V_y}$, so by \Cref{obs:run_subtree} $\run_{\mc{BK}_{k,\ell}}^{\Tc_{S \cap \bag(y)}}(y) = \run_{\mc{BK}_{k,\ell}}^{\Tc_{S}}(y)$. Therefore, we have
        
        \begin{equation*}
            \begin{split}
                &\delta(\bag(x) \setminus S, \bag(y) \setminus S, \emptyset, \edges(x) \setminus E_S, \run_{\mc{BK}_{k,\ell}}^{\Tc_{S \cap \bag(y)}}(y), \perp)\\
                = &\delta(\bag_S(x), \bag_S(y), \emptyset, \edges_S(x), \run_{\mc{BK}_{k,\ell}}^{\Tc_{S}}(y), \perp)\\
                = &\run_{\mc{BK}_{k,\ell}}^{\Tc_{S}}(x),
            \end{split}
        \end{equation*}
        so for every $S \subseteq \bag(x)$, the list $\run^\Tc_{\mc{IBK}_{k,\ell}}(x)$ contains $\run_{\mc{BK}_{k,\ell}}^{\Tc_{S}}(x)$ and \Cref{inv:itw_automaton} is satisfied.

        Now, assume that $x$ has two children $y,z$. By \Cref{def:run}, we have
        \begin{equation*}
            \run^\Tc_{\mc{IBK}_{k,\ell}}(x) = \delta^*(\bag(x),\bag(y),\bag(z),\edges(x),\run^\Tc_{\mc{IBK}_{k,\ell}}(y),\run^\Tc_{\mc{IBK}_{k,\ell}}(z)).
        \end{equation*}
        By induction, \Cref{inv:itw_automaton} is satisfied for $y$ and $z$, i.e., $\run^{\Tc}_{\mc{IBK}_{k,\ell}}(y)$ and $\run^{\Tc}_{\mc{IBK}_{k,\ell}}(z)$ are both lists that contain for every $S \subseteq \bag(y)$ and $S \subseteq \bag(z)$ the state $\run_{\mc{BK}_{k,\ell}}^{\Tc_{S}}(y)$ and $\run_{\mc{BK}_{k,\ell}}^{\Tc_{S}}(z)$, respectively. Thus, $\delta^*$ is again defined on this input and we can apply it to obtain $\run^\Tc_{\mc{IBK}_{k,\ell}}(x)$, which is a list that contains for every $S \subseteq \bag(x)$ the state
        \begin{equation*}
            \begin{split}
                &\delta(\bag(x) \setminus S, \bag(y) \setminus S, \bag(z) \setminus S, \edges(x) \setminus E_S, \run_{\mc{BK}_{k,\ell}}^{\Tc_{S \cap \bag(y)}}(y), \run_{\mc{BK}_{k,\ell}}^{\Tc_{S \cap \bag(z)}}(z))\\
                = &\delta(\bag_S(x), \bag_S(y), \bag_S(z), \edges_S(x), \run_{\mc{BK}_{k,\ell}}^{\Tc_{S}}(y), \run_{\mc{BK}_{k,\ell}}^{\Tc_{S}}(z))\\
                = &\run_{\mc{BK}_{k,\ell}}^{\Tc_{S}}(x),
            \end{split}
        \end{equation*}
        where we used the same arguments as before to show that $\run_{\mc{BK}_{k,\ell}}^{\Tc_{S \cap \bag(y)}}(y) = \run_{\mc{BK}_{k,\ell}}^{\Tc_{S}}(y)$ and $\run_{\mc{BK}_{k,\ell}}^{\Tc_{S \cap \bag(z)}}(z) = \run_{\mc{BK}_{k,\ell}}^{\Tc_{S}}(z)$. Again, this shows that \Cref{inv:itw_automaton} is satisfied, and thus finishes the proof of the claim.
    \end{claimproof}

    This completes the description of the tree decomposition automaton $\mc{IBK}_{k,\ell}$ and, as argued before, proves the correctness of the automaton. Note that the states of $\mc{IBK}_{k,\ell}$ are lists of length $2^{\ell + 1}$ consisting of states from $\mc{BK}_{k,\ell}$. Since the state space of $\mc{BK}_{k,\ell}$ is of size $\OO_{k,\ell}(1)$ and can be computed in time $\OO_{k,\ell}(1)$, the the same holds for $\mc{IBK}_{k,\ell}$. Moreover, the evaluation of $\delta^*$ requires $2^{\ell+1}$ evaluations of $\delta$, where determining the input for each of these evaluations is possible in time $\OO(\ell^2)$. To decide whether a state of $\mc{IBK}_{k,\ell}$ is accepting, we only need to look at the last state in the list and let $\mc{BK}_{k,\ell}$ decide, which takes time $\OO_{k,\ell}(1)$. Thus, the evaluation time of $\mc{IBK}_{k,\ell}$ is $\OO_{k,\ell}(1)$ as well.
\end{proof}

\end{document}